\documentclass[11pt,letterpaper]{article}

\usepackage[margin=1in]{geometry}
\usepackage[T1]{fontenc}
\usepackage[utf8]{inputenc}
\usepackage{lmodern}

\usepackage{amsmath,amssymb,amsthm,mathtools}
\usepackage{bm}

\usepackage{xcolor}
\usepackage{booktabs}
\usepackage{tabularx}
\usepackage{graphicx}
\usepackage{enumitem}
\usepackage{tikz}
\usetikzlibrary{arrows.meta,positioning,calc}

\usepackage{pgfplots}
\usepgfplotslibrary{groupplots}
\pgfplotsset{compat=1.18}
\usepackage{natbib}
\usepackage{hyperref}
\usepackage[nameinlink,capitalize]{cleveref}
\usepackage{bibunits}                 % separate reference list for the supplement
\DeclareMathOperator{\Range}{Ran}     % matches the main text's \operatorname{Ran}
\newcommand{\cgf}{\Lambda}            % cumulant generating function
\newcommand{\lamcrit}{\lambda_{\mathrm{crit}}}
\definecolor{darkblue}{rgb}{0,0,0.55}

\hypersetup{
  colorlinks=true,
  linkcolor=darkblue,
  citecolor=blue,
  urlcolor=blue
}

\usepackage{authblk}

\usepackage{aliascnt}

\newtheorem{theorem}{Theorem}[section]

\newaliascnt{lemma}{theorem}
\newtheorem{lemma}[lemma]{Lemma}
\aliascntresetthe{lemma}

\newaliascnt{proposition}{theorem}
\newtheorem{proposition}[proposition]{Proposition}
\aliascntresetthe{proposition}

\newaliascnt{corollary}{theorem}
\newtheorem{corollary}[corollary]{Corollary}
\aliascntresetthe{corollary}

\theoremstyle{definition}

\newaliascnt{definition}{theorem}
\newtheorem{definition}[definition]{Definition}
\aliascntresetthe{definition}

\newaliascnt{assumption}{theorem}
\newtheorem{assumption}[assumption]{Assumption}
\aliascntresetthe{assumption}

\newaliascnt{condition}{theorem}

\aliascntresetthe{condition}

\newaliascnt{example}{theorem}
\newtheorem{example}[example]{Example}
\aliascntresetthe{example}

\theoremstyle{remark}

\newaliascnt{remark}{theorem}
\newtheorem{remark}[remark]{Remark}
\aliascntresetthe{remark}

\crefname{theorem}{Theorem}{Theorems}
\crefname{lemma}{Lemma}{Lemmas}
\crefname{proposition}{Proposition}{Propositions}
\crefname{corollary}{Corollary}{Corollaries}
\crefname{definition}{Definition}{Definitions}
\crefname{assumption}{Assumption}{Assumptions}
\crefname{condition}{Condition}{Conditions}
\crefname{example}{Example}{Examples}
\crefname{remark}{Remark}{Remarks}
\crefname{equation}{Equation}{Equations}
\crefname{figure}{Figure}{Figures}
\crefname{table}{Table}{Tables}
\crefname{section}{Section}{Sections}
\crefname{subsection}{Section}{Sections}
\crefname{subsubsection}{Section}{Sections}
\crefname{appendix}{Appendix}{Appendices}

\newcommand{\R}{\mathbb{R}}

\newcommand{\X}{\mathcal{X}}
\newcommand{\Y}{\mathcal{Y}}
\newcommand{\Th}{\Theta}

\newcommand{\E}{\mathbb{E}}
\newcommand{\Prob}{\mathbb{P}}

\DeclareMathOperator{\Var}{Var}
\DeclareMathOperator{\Cov}{Cov}

\newcommand{\Normal}{\mathcal{N}}

\DeclareMathOperator{\Lap}{Lap}
\DeclareMathOperator{\Logistic}{Logistic}
\DeclareMathOperator{\Unif}{Unif}

\DeclareMathOperator*{\argmin}{arg\,min}
\DeclareMathOperator{\supp}{supp}

\newcommand{\ind}{\mathbf{1}}

\newcommand{\varphiN}{\varphi}

\newcommand{\virt}{J}
\newcommand{\virtt}{J_\theta}
\newcommand{\Psival}{\Psi}

\newcommand{\qstar}{q^{*}}

\newcommand{\lamstar}{\lambda^{*}}

\newcommand{\xlo}{\underline{x}}
\newcommand{\xhi}{\bar{x}}

\title{\textbf{Public Good Provision under Locally Private Signals}}

\author{
Behrooz Moosavi Ramezanzadeh
\thanks{Department of Economics, University of Pittsburgh.
Email: \texttt{bem159@pitt.edu} (Corresponding Author)}
\quad
Jordan Awan%
\thanks{Department of Statistics, University of Pittsburgh.
Email: \texttt{jaa557@pitt.edu}}
}

\date{\today}

\begin{document}

\maketitle

\begin{abstract}
We study public-good provision when a planner observes agents' preferences only
through a fixed local-privacy channel that randomizes each report before it
reaches the planner. We characterize the optimal reduced-form allocation: the
project is implemented when an aggregate posterior score is positive, where each
agent's score combines the posterior expected valuation and posterior virtual
value. Privacy enters through these posterior objects, muting the responsiveness
of provision to private preferences and, under weak monotone likelihood ratios,
potentially generating pooling. We then distinguish the optimal reduced-form
allocation from its implementation through signal-measurable transfers: the
required transfers solve a Fredholm integral equation whose solution is unique
under completeness when it exists, while existence requires a separate range
condition. Maximum reduced-form revenue exhibits three population regimes: it is
asymptotically linear, of square-root order, or exponentially small according as
the lower endpoint of the valuation distribution is positive, zero, or negative.
Finally, welfare comparisons depend on the privacy calibration. At a common
noise scale, Laplace Blackwell-dominates logistic noise, while under a common
tight \(\mu\)-GDP calibration the ordering reverses for the maximally separated
binary endpoint experiment. Thus the preferred privacy channel depends on the
standard used to hold privacy fixed.
\end{abstract}

\medskip\noindent\textbf{Keywords:} local differential privacy; mechanism design;
public-good provision; Myerson virtual values; large deviations.

\medskip\noindent\textbf{JEL classification:} D82, D44, C72, H41.

\bigskip
\noindent\textbf{Credit authorship contribution statement:}

\medskip

\noindent\textbf{Behrooz Moosavi:}
Review \& editing, Writing original draft, Methodology, Investigation,
Conceptualization.

\medskip

\noindent\textbf{Jordan Awan:}
Review \& editing, Writing original draft, Methodology, Investigation,
Conceptualization.

\medskip

\noindent\textbf{Conflict of Interest:}
The authors declare that they have no conflict of interest.

\medskip

\noindent\textbf{Funding:}
This work was supported in part by NSF award number SES-2610910.

\medskip

\newpage

% ============================================================================
\section{Introduction}
\label{sec:introduction}
Classical mechanism design relies on the sensitivity of outcomes to individual
reports. Vickrey--Clarke--Groves payments
\citep{vickrey1961counterspeculation,clarke1971multipart,groves1973incentives}
charge each agent the externality it imposes on the others; the Myerson envelope
theorem \citep{myerson1981optimal} recovers information rents from the interim
allocation rule, since an agent's equilibrium utility has derivative equal to its
allocation probability and rents are the integral of that allocation over types.
Differential privacy demands the opposite: the distribution of a released output
must change only slightly when any single input changes
\citep{dwork2006calibrating,DworkRoth2014}. Local differential privacy enforces
this by randomizing each input before it reaches the collector.

\medskip

This architecture is already deployed at the scale of the world's largest data
collectors. Apple has used local differential privacy in iOS to gather usage
statistics from hundreds of millions of devices, with randomization performed on
the user's device before any data reaches its servers
\citep{apple2017learning,apple2017dp}, and Google's RAPPOR system applied the
same local model in Chrome \citep{erlingsson2014rappor}. In both, the collector
is untrusted and observes only a randomized signal, never the true input. These
systems perform statistical estimation, with no transfers, incentive
constraints, or collective decision at stake; the conflict with mechanism design
arises only when such a channel is placed in front of a planner who must elicit
preferences, make a public decision, and raise revenue from privatized signals
alone. There, neither the standard allocation rule nor the standard payment
identity survives unchanged.

\medskip

This paper characterizes the optimal reduced-form mechanism and the asymptotic
behavior of maximum reduced-form revenue in such an environment. A planner must
decide whether to implement a binary public project---build or not build---when
\(n\) agents hold privately known valuations \(X_1,\ldots,X_n\). The planner does
not observe these valuations. Each agent submits a value to a fixed trusted mediator (privacy channel), which randomizes the submission before transmitting it to the planner;
the agent controls the value it submits but neither observes nor controls the
realized noise, and cannot bypass the protocol. The planner observes only the
privatized signal profile \((Y_1,\ldots,Y_n)\), so both the project decision and
the transfer schedule must be measurable functions of these signals alone. The
mechanism must be Bayesian incentive compatible, so that truthful submission is
optimal for every type; we analyze the resulting truthful equilibrium, along
which the submitted value equals the valuation and the privatized signal is
distributed as
\[
Y_i\sim K(\,\cdot\mid X_i).
\]
It must also be individually rational and meet an ex-ante revenue target.

\medskip

This formulation fits settings in which preference-relevant information is
generated or measured locally and passed through a prescribed privacy device
before reaching an untrusted aggregator: federated rollout decisions based on
privatized quality or demand reports, consortium or standard-adoption decisions
in which members privatize their preferences from the aggregator, and more
generally any binary collective decision mediated by a fixed local randomization
protocol. Within this class we ask three questions. How strong can the privacy
guarantee become before the project can no longer be financed, and how does this
relate to the classical difficulty of public-good provision with many privately
informed agents \citep{mailath1990asymmetric}? What form does the optimal
decision rule take when the planner observes only privatized signals? And, once
channels are calibrated to a common privacy benchmark, does the choice of noise
distribution---Gaussian, Laplace, or logistic---affect welfare and revenue, or
only the statistical precision of the planner's information?

\bigskip

\paragraph{Our Contributions.}
This paper makes four contributions.

\begin{itemize}[leftmargin=1.5em,itemsep=4pt]

\item \textbf{Privacy reshapes how the good is provided.}
The planner never sees agents' true values, only reports passed through a
fixed, privacy-protecting noise channel. We show that the planner should
nonetheless act on a simple per-agent index that blends how much each agent is
expected to value the good with how much revenue can be collected from them,
and should provide the good when these indices sum to a positive total. The
amount of noise the channel adds controls how strongly provision responds to
people's underlying preferences; coarser channels can leave the planner unable
to tell apart agents whose reports are extreme.

\item \textbf{Knowing what to provide is not the same as being able to pay for
it.} We distinguish the allocation the planner would choose from the payments
needed to make truth-telling optimal. Recovering payments that depend only on
the noisy reports is a separate and harder problem---formally, solving an
integral (Fredholm) equation---that may have no exact solution. When an exact
solution exists it is unique, but existence is not automatic. The Online
Supplement therefore also constructs payments that work approximately, with
explicit bounds on how far incentives, participation, and revenue can be off.

\item \textbf{Privacy makes public goods harder to finance as the market
grows.} The revenue a privacy-respecting mechanism can raise is governed by the
lower endpoint of the valuation distribution---the smallest value an agent
could have. As the number of agents grows, revenue grows in proportion to the
market when that lower endpoint is positive, grows only at a slow square-root
rate when it is exactly zero, and shrinks to almost nothing when it is
negative. The change between a fundable and an unfundable
good is sharp, so a small tightening of the revenue requirement can switch
provision off.

\item \textbf{The better privacy technology depends on how privacy is measured.}
Comparing two natural noise designs, Laplace and logistic, we find no design is
better in every case. Holding the noise scale fixed---a common pure-\(\epsilon\)
calibration---the Laplace design gives the planner more usable information;
holding fixed instead a stricter hypothesis-testing privacy standard (Gaussian
differential privacy), the ranking reverses in the hardest case, that of telling
apart the highest- and lowest-value agents. Which
design a planner prefers thus depends on the privacy standard used to compare
them, and this reversal does not extend to a blanket ranking across all agents.

\end{itemize}

\bigskip
\paragraph{Related literature.}
Our work lies at the intersection of differential privacy, mechanism design,
and public-good provision. The privacy-aware mechanism-design literature begins
with \citet{McSherryTalwar2007}, who use the stability created by differential
privacy to obtain approximate truthfulness. We study a different problem: the
local privacy channel is fixed exogenously, privacy does not enter agents'
payoffs, and the planner must achieve exact Bayesian incentive compatibility
using allocation and transfer rules that depend only on privatized signals. The
classical transfer identity \citep{myerson1981optimal} therefore becomes an
inverse problem: recovering a signal-contingent transfer requires solving a
Fredholm equation of the first kind, where completeness of the channel kernel
gives uniqueness when a solution exists and existence is a separate range
condition. Recent work also studies privacy-constrained mechanism and
information design: \citet{EilatEliazMu2021} incorporate belief-based privacy
costs into mechanism design and show that privacy concerns can rationalize
coarse mechanisms, while \citet{StrackYang2024} characterize signals that
conceal a protected attribute through garbling and mean-preserving contraction
and \citet{HeSandomirskiyTamuz2026} study signals that reveal information about a
common state without revealing other agents' signals. We share their use of
garbling and the Blackwell order, but our planner does not design the
information structure: it faces a fixed local differential-privacy channel and
optimizes a public-good mechanism conditional on the information that channel
produces. Two further strands bear on our setting and are developed in the
Online Supplement: privacy as a payoff argument or a good to be compensated
\citep{GhoshRoth2011,NissimOrlandiSmorodinsky2012,ChenChongKashMoranVadhan2013},
and the statistical limits of local privacy
\citep{duchi2013local,duchi2018minimax}, where our concern is the
privacy--revenue rather than the privacy--estimation frontier.

\medskip

The paper most closely related to ours is \citet{PourbabaeeEchenique2025}, who
study binary public-good provision with \(\pm1\) types transmitted through a
randomized-response flip channel. They characterize Boolean decision rules
trading off revenue, surplus, and noise sensitivity. We allow continuous
valuations and a broader class of channels, so monotone likelihood ratios govern
implementability and generate features without a Boolean analogue, including the
distinction between strict and weak MLRP and Laplace tail pooling. We also
characterize transfers through a Fredholm equation, allow a hierarchical prior,
and derive three large-population revenue regimes---linear, square-root, and
exponentially small---according to the sign of the lower endpoint of the
valuation distribution.

\medskip

Finally, our asymptotic analysis relates to the literature on public-good
provision in large economies
\citep{GuthHellwig1986,mailath1990asymmetric,Hellwig2003}, where private
information obstructs efficient provision and, under budget balance, the
obstruction intensifies as the population grows. \citet{XiXie2023} identify a
square-root boundary in the growth rate of project costs for asymptotically
efficient provision. Their boundary concerns cost scaling and is therefore
analogous to, but distinct from, our knife-edge regime, in which maximal
reduced-form revenue is itself of square-root order. In our model the local
privacy channel further determines the speed at which revenue feasibility
deteriorates. The hierarchical extension is also related to
\citet{CremerMcLean1988}: a common latent parameter generates dependence across
types, while privacy weakens the information each signal conveys about that
parameter without eliminating the induced correlation.
\bigskip
\paragraph{Outline}
The paper is organized as follows. Section~\ref{sec:background} reviews the
public-good problem and the privacy channels; Section~\ref{sec:setup} sets up
the mediated reporting model and the posterior-score objects it relies on.
Section~\ref{sec:main_results} gives the main results: the optimal threshold
rule, the Fredholm implementation problem, the linear/square-root/exponential
revenue regimes, the hierarchical extension, and the channel welfare reversal.
Section~\ref{sec:numerics} illustrates some of the theoretical results derived in section~\ref{sec:main_results}. The appendices collect the
proofs and additional numerical results, and the online supplement develops
exact and approximate signal-measurable implementation, the revelation and
completeness arguments, and the full numerical methodology.

\section{Background}
\label{sec:background}
% ============================================================================

This section sets out the economic environment, the reporting and privacy
architecture, and the privacy criteria. The planner's mechanism-design
problem is formulated in Section~\ref{sec:setup}. Trade-off functions,
privacy calibrations, and channel-specific likelihood-ratio calculations are
collected in Appendix~\ref{app:background_derivations}.

% ------------------------------------------------------------------
\subsection{Agents and Preferences}
\label{subsec:agents}
% ------------------------------------------------------------------

A planner faces \(n\) agents indexed by \(i\in\{1,\ldots,n\}\). Agent \(i\)
privately observes a valuation
\[
X_i\in\X=[\xlo,\xhi]\subset\R.
\]
In the known-prior benchmark, the valuations are independently and
identically distributed according to a commonly known distribution \(F\)
with density \(f>0\) on \(\X\).

The planner decides whether to implement a binary public project,
\(q\in\{0,1\}\), and assigns transfers \(t_i\in\R\). Preferences are
quasilinear:
\begin{equation}
\label{eq:utility}
u_i(q,t_i\mid x_i)
=
q\,x_i-t_i.
\end{equation}
The valuation \(x_i\) is the net value of implementation, inclusive of any
exogenously assigned cost share. If the planner observed the valuation
profile \(x=(x_1,\ldots,x_n)\), the efficient allocation would be
\begin{equation}
\label{eq:first_best}
q^{\mathrm{FB}}(x)
=
\ind\!\left\{
\sum_{i=1}^{n}x_i\ge0
\right\}.
\end{equation}

\begin{example}[Shared community project]
\label{ex:community}
Suppose households decide whether to build a shared facility, such as a
community solar array. The value \(x_i\) is household \(i\)'s benefit net of
its assigned cost share. Thus \(x_i>0\) for a net beneficiary and \(x_i<0\)
for a net payer, while \(\xlo<0\) permits some households to be net losers.
The decision \(q=1\) means that the project is built, and \(t_i\) is the fee
charged to household \(i\). By \eqref{eq:first_best}, the efficient rule
builds exactly when aggregate net value is nonnegative. The planner does not
observe these valuations directly, which motivates the privacy channel
below.
\end{example}

% ------------------------------------------------------------------
\subsection{Strategic Reports and the Privacy Channel}
\label{subsec:channel}
% ------------------------------------------------------------------

After observing \(X_i=x_i\), agent \(i\) submits a report
\(R_i=r_i\in\X\) to a trusted local mediator. Conditional on the report, the
mediator generates a privatized signal through a fixed Markov kernel:
\begin{equation}
\label{eq:channel}
Y_i\mid R_i=r_i
\sim
K(\cdot\mid r_i).
\end{equation}
When the kernel admits a density, it is denoted by \(k(y_i\mid r_i)\).
Signals take values in a measurable space \(\Y\subseteq\R\), with
\(\Y=\R\) for the additive channels considered below.

The agent controls the report entering the channel but not the realized
channel noise. The planner observes only the signal profile
\[
Y=(Y_1,\ldots,Y_n)
\]
and observes neither \(X\) nor \(R\).

A randomized signal-measurable mechanism is a measurable map
\begin{equation}
\label{eq:mechanism_def}
(q,t):
\Y^n
\longrightarrow
[0,1]\times\R^n,
\qquad
y
\longmapsto
\bigl(q(y),t_1(y),\ldots,t_n(y)\bigr),
\end{equation}
where \(q(y)\) is the probability of implementation and \(t_i(y)\) is
agent \(i\)'s payment. Transfers are assumed measurable and integrable under
the induced signal laws. The realized project decision is binary;
\(q(y)\in[0,1]\) denotes its implementation probability conditional on the
signal profile.

We study direct reporting mechanisms, in which the value an agent submits is
itself the input to the fixed privacy channel. Bayesian incentive
compatibility, imposed in Section~\ref{subsec:planner_problem}, requires
truthful reporting \(R_i=X_i\) to be optimal within this reporting game.
Along the resulting truthful equilibrium path,
\begin{equation}
\label{eq:truthful_signal_law}
Y_i\mid X_i=x_i
\sim
K(\cdot\mid x_i).
\end{equation}
The report variable is retained to formulate deviations, whereas posterior
beliefs, welfare, revenue, and allocation are evaluated under the truthful
law \eqref{eq:truthful_signal_law}.

Conditional independence of the mediator's randomizations gives
\begin{equation}
\label{eq:truthful_joint_channel}
Y\mid X=x
\sim
\prod_{i=1}^{n}K(\cdot\mid x_i).
\end{equation}
The truthful-path marginal density of one signal is%
\footnote{Throughout, \(m(\cdot)\) denotes the one-signal marginal density.
The mean of the posterior score introduced later is denoted by
\(\mu_S(\lambda)\).}
\begin{equation}
\label{eq:marginal_signal_density}
m(y)
=
\int_{\X}k(y\mid x)f(x)\,dx.
\end{equation}
Under the independent known-prior benchmark,
\begin{equation}
\label{eq:joint_signal_density}
f_Y(y)
=
\prod_{i=1}^{n}m(y_i),
\qquad
f_{Y_{-i}}(y_{-i})
=
\prod_{j\ne i}m(y_j).
\end{equation}

% ------------------------------------------------------------------
\subsection{Privacy Criteria}
\label{subsec:privacy_defs}
% ------------------------------------------------------------------

Privacy is measured through hypothesis-testing trade-off functions, with
Gaussian differential privacy providing the common cross-family benchmark
\citep{dong2022gaussian}. For probability measures \(P\) and \(Q\) on a
common measurable signal space, define
\begin{equation}
\label{eq:tradeoff_function}
\mathcal T(P,Q)(\alpha)
=
\inf_{\psi}
\left\{
1-\E_Q[\psi]:
\E_P[\psi]\le\alpha
\right\},
\qquad
\alpha\in[0,1],
\end{equation}
where the infimum is over measurable randomized tests
\(\psi:\Y\to[0,1]\). Thus, \(\mathcal T(P,Q)(\alpha)\) is the smallest
attainable type-II error among tests whose type-I error is at most
\(\alpha\). A larger trade-off function corresponds to a less informative
binary experiment and therefore stronger hypothesis-testing privacy.

\begin{definition}[\(\mu\)-Gaussian local differential privacy]
\label{def:gdp}
For \(\mu\ge0\), define
\begin{equation}
\label{eq:gaussian_tradeoff}
G_\mu(\alpha)
=
\Phi\!\left(
\Phi^{-1}(1-\alpha)-\mu
\right),
\qquad
\alpha\in[0,1].
\end{equation}
A channel \(K\) satisfies \(\mu\)-Gaussian local differential privacy,
abbreviated \(\mu\)-GDP, if
\begin{equation}
\label{eq:gdp}
\mathcal T\!\left(
K(\cdot\mid r),
K(\cdot\mid r')
\right)(\alpha)
\ge
G_\mu(\alpha)
\end{equation}
for every \(r,r'\in\X\) and every \(\alpha\in[0,1]\).
\end{definition}

Smaller values of \(\mu\) correspond to stronger privacy. The Gaussian
shift experiment
\[
\Normal(0,1)
\quad\text{versus}\quad
\Normal(\mu,1)
\]
has trade-off function \(G_\mu\).

\begin{definition}[Approximate local differential privacy]
\label{def:approx_ldp}
For \(\epsilon\ge0\) and \(\delta\in[0,1]\), a channel \(K\) satisfies
\((\epsilon,\delta)\)-local differential privacy if
\begin{equation}
\label{eq:approx_ldp}
K(S\mid r)
\le
e^\epsilon K(S\mid r')
+
\delta
\end{equation}
for every \(r,r'\in\X\) and every measurable \(S\subseteq\Y\).
\end{definition}

The parameter \(\epsilon\) bounds multiplicative distinguishability, while
\(\delta\) allows an additive exceptional probability
\citep{dwork2006calibrating,DworkRoth2014}. A \(\mu\)-GDP channel satisfies
\((\epsilon,\delta_\mu(\epsilon))\)-LDP for every \(\epsilon\ge0\), where
\begin{equation}
\label{eq:gdp_to_approx}
\delta_\mu(\epsilon)
=
\Phi\!\left(
-\frac{\epsilon}{\mu}+\frac{\mu}{2}
\right)
-
e^\epsilon
\Phi\!\left(
-\frac{\epsilon}{\mu}-\frac{\mu}{2}
\right)
\end{equation}
\citep{dong2022gaussian}.

\begin{remark}[Pure local differential privacy]
\label{rem:pure_ldp}
Pure \(\epsilon\)-LDP is the case \(\delta=0\):
\begin{equation}
\label{eq:pure_ldp}
K(S\mid r)
\le
e^\epsilon K(S\mid r')
\qquad
\forall r,r'\in\X,
\quad
\forall\text{ measurable }S\subseteq\Y.
\end{equation}
When conditional densities exist, this is equivalent to
\begin{equation}
\label{eq:pure_ldp_density}
k(y\mid r)
\le
e^\epsilon k(y\mid r')
\qquad
\text{for a.e. }y,
\quad
\forall r,r'\in\X
\end{equation}
\citep{AwanKenneyReimherrSlavkovic2019}.

Pure LDP, approximate LDP, and GDP are distinct restrictions. Equal values
of \(\epsilon\), equal noise variances, or equal noise scales do not
generally equalize the complete testing trade-off functions of different
channel families.
\end{remark}

% ------------------------------------------------------------------
\subsection{Channel Families}
\label{subsec:channel_families}
% ------------------------------------------------------------------

We study additive location channels of the form
\begin{equation}
\label{eq:additive_channel}
Y_i
=
R_i+\eta_i,
\end{equation}
where the noise variables are independent across agents and independent of
\((X,R)\). Since reports lie in \([\xlo,\xhi]\), the largest separation
between any two reports is the type-space width
\begin{equation}
\label{eq:type_sensitivity}
\Delta_{\X}
=
\xhi-\xlo,
\end{equation}
which plays the role of the sensitivity in the privacy calibrations below.

\begin{table}[htbp]
\centering
\renewcommand{\arraystretch}{1.35}
\setlength{\tabcolsep}{6pt}
\begin{tabularx}{\textwidth}{@{}l *{3}{>{\raggedright\arraybackslash}X}@{}}
\toprule
\textbf{Channel}
&
\textbf{Signal model}
&
\textbf{Conditional density}
&
\textbf{Standard calibration}
\\
\midrule
Gaussian
&
\(Y_i^{G}=R_i+\eta_i^{G}\)\newline
\(\eta_i^{G}\sim\Normal(0,\sigma^2)\)
&
\(\dfrac{1}{\sigma}\,
\varphiN\!\left(\dfrac{y-r}{\sigma}\right)\)
&
Exact \(\mu_G(\sigma)\)-GDP,\newline
\(\mu_G(\sigma)=\Delta_{\X}/\sigma\);\newline
no finite pure \(\epsilon\)
\\
\addlinespace
Laplace
&
\(Y_i^{L}=R_i+\eta_i^{L}\)\newline
\(\eta_i^{L}\sim\Lap(0,b_L)\)
&
\(\dfrac{1}{2b_L}e^{-|y-r|/b_L}\)
&
Pure \(\epsilon\)-LDP,\newline
\(b_L=\Delta_{\X}/\epsilon\)
\\
\addlinespace
Logistic
&
\(Y_i^{\mathrm{Log}}=R_i+\eta_i^{\mathrm{Log}}\)\newline
\(\eta_i^{\mathrm{Log}}\sim\Logistic(0,\beta)\)
&
\(\dfrac{e^{-(y-r)/\beta}}
{\beta[1+e^{-(y-r)/\beta}]^2}\)
&
Pure \(\epsilon\)-LDP,\newline
\(\beta=\Delta_{\X}/\epsilon\)
\\
\bottomrule
\end{tabularx}
\caption{
Additive location channels, with
\(\Delta_{\X}=\xhi-\xlo\). The Gaussian channel is naturally calibrated
through GDP, whereas the Laplace and logistic channels admit finite
pure-LDP calibrations. The derivations are in
Appendix~\ref{app:background_derivations}.
}
\label{tab:channel_families}
\end{table}

Along the truthful path,
\begin{equation}
\label{eq:truthful_channel_summary}
Y_i^{G}
=
X_i+\eta_i^{G},
\qquad
Y_i^{L}
=
X_i+\eta_i^{L},
\qquad
Y_i^{\mathrm{Log}}
=
X_i+\eta_i^{\mathrm{Log}}.
\end{equation}

Privacy limits the distinguishability of reports. The monotone likelihood
ratio property introduced in Section~\ref{subsec:assumptions} is a distinct
property: it governs how signals order posterior beliefs and therefore how
the mechanism responds to signal realizations.

\newpage
\section{Setup}
\label{sec:setup}

This section formulates the planner's problem, states the maintained
assumptions, and records the reduced-form identities used in
Section~\ref{sec:main_results}. General incentive, integrability, posterior,
and monotonicity arguments are collected in
Appendix~\ref{app:auxiliary}; mechanism-specific proofs are collected in
Appendix~\ref{app:main_results_proofs}.

% ------------------------------------------------------------------
\subsection{The Planner's Problem}
\label{subsec:planner_problem}
% ------------------------------------------------------------------

Figure~\ref{fig:ldp_architecture} summarizes the timing. Agent \(i\)
observes \(X_i\), submits \(R_i\), the mediator draws
\(Y_i\sim K(\cdot\mid R_i)\), and the planner chooses \(q(Y)\) and \(t(Y)\).

\begin{figure}[htbp]
\centering

\tikzset{
  person/.pic={
    \draw[thick] (0,0.52) circle[radius=0.13];
    \draw[thick] (0,0.39) -- (0,0.06);
    \draw[thick] (-0.19,0.27) -- (0.19,0.27);
    \draw[thick] (0,0.06) -- (-0.15,-0.18);
    \draw[thick] (0,0.06) -- (0.15,-0.18);
  }
}

\begin{tikzpicture}[
  >={Stealth[length=2.4mm]},
  font=\small,
  mediator/.style={
    rectangle,
    rounded corners=6pt,
    draw=violet!65,
    fill=violet!8,
    align=center,
    inner sep=7pt,
    minimum width=3.8cm,
    minimum height=2.6cm
  },
  planner/.style={
    rectangle,
    rounded corners=6pt,
    draw=orange!70,
    fill=orange!8,
    align=center,
    inner sep=7pt,
    minimum width=3.8cm,
    minimum height=2.6cm
  },
  note/.style={
    rectangle,
    rounded corners=4pt,
    draw=blue!50,
    fill=blue!5,
    align=center,
    font=\scriptsize,
    inner sep=5pt
  },
  arr/.style={->,thick},
  darr/.style={->,thick,dashed,gray}
]

\pic at (0,3.0) {person};
\node[font=\scriptsize,align=center] at (0,2.50)
{\textbf{Agent 1}\\ type \(X_1\)};

\pic at (0,1.4) {person};
\node[font=\scriptsize,align=center] at (0,0.90)
{\textbf{Agent 2}\\ type \(X_2\)};

\node at (0,0.35) {\(\vdots\)};

\pic at (0,-0.8) {person};
\node[font=\scriptsize,align=center] at (0,-1.30)
{\textbf{Agent \(n\)}\\ type \(X_n\)};

\node[mediator] (med) at (5.1,1.0)
{\textbf{Trusted local mediator}\\[4pt]
 receives \(R_1,\ldots,R_n\)\\
 draws \(Y_i\sim K(\cdot\mid R_i)\)\\
 transmits only \(Y_1,\ldots,Y_n\)};

\node[planner] (plan) at (13.6,1.0)
{\textbf{Planner}\\[4pt]
 observes only \(Y=(Y_1,\ldots,Y_n)\)\\[3pt]
 chooses \(q(Y)\) and\\
 \(t_1(Y),\ldots,t_n(Y)\)};

\draw[arr]
(0.34,3.05)
--
node[above,sloped,font=\scriptsize]{\(R_1\)}
(med.north west);

\draw[arr]
(0.34,1.45)
--
node[above,font=\scriptsize]{\(R_2\)}
(med.west);

\draw[arr]
(0.34,-0.75)
--
node[below,sloped,font=\scriptsize]{\(R_n\)}
(med.south west);

\draw[arr]
(med.east)
--
node[above,font=\scriptsize]{signals \(Y_1,\ldots,Y_n\)}
(plan.west);

\draw[dashed,thick,gray]
(7.85,2.8)
--
(7.85,-0.9);

\node[gray,font=\scriptsize,align=center] at (7.85,3.15)
{privacy\\[-1pt]boundary};

\node[note] (bicnote) at (2.7,-2.0)
{BIC makes \(R_i=X_i\) optimal};

\draw[darr]
(bicnote.north)
--
(med.south west);

\end{tikzpicture}

\caption{
Strategic local-privacy architecture. Agent \(i\) observes \(X_i\), chooses
a report \(R_i\), and is randomized by the trusted mediator. The planner
observes only the privatized signals and conditions the allocation and
transfers on those signals.
}
\label{fig:ldp_architecture}
\end{figure}
\clearpage
For a signal-measurable mechanism, define agent \(i\)'s interim allocation
and interim payment after report \(r_i\), holding the other agents to
truthful reporting, by
\begin{align}
Q_i(r_i)
&=
\int_{\Y^n}
q(y)\,
k(y_i\mid r_i)\,
f_{Y_{-i}}(y_{-i})\,dy,
\label{eq:interim_allocation}
\\
T_i(r_i)
&=
\int_{\Y^n}
t_i(y)\,
k(y_i\mid r_i)\,
f_{Y_{-i}}(y_{-i})\,dy.
\label{eq:interim_transfer}
\end{align}
A type \(x_i\) reporting \(r_i\) obtains
\begin{equation}
\label{eq:deviation_utility}
U_i(x_i,r_i)
=
x_iQ_i(r_i)-T_i(r_i),
\end{equation}
and truthful interim utility is
\begin{equation}
\label{eq:truthful_interim_utility}
U_i(x_i)
=
U_i(x_i,x_i)
=
x_iQ_i(x_i)-T_i(x_i).
\end{equation}

The planner maximizes expected surplus subject to incentive, participation,
and revenue constraints:
\begin{subequations}
\label{eq:primal}
\begin{align}
\sup_{q,t}\quad
&
\E\!\left[
q(Y)\sum_{i=1}^{n}X_i
\right]
\notag
\\
\text{subject to}\quad
&
x_iQ_i(x_i)-T_i(x_i)
\ge
x_iQ_i(r_i)-T_i(r_i),
&&
\forall x_i,r_i\in\X,
\label{eq:bic}
\\
&
U_i(x_i)\ge0,
&&
\forall x_i\in\X,
\label{eq:ir}
\\
&
\E\!\left[
\sum_{i=1}^{n}t_i(Y)
\right]
=
\sum_{i=1}^{n}\E[T_i(X_i)]
\ge R,
&&
\label{eq:revenue}
\\
&
q:\Y^n\to[0,1],
\qquad
t_i:\Y^n\to\R
\text{ measurable and integrable.}
&&
\notag
\end{align}
\end{subequations}
The constraints are Bayesian incentive compatibility, interim individual
rationality, and an ex-ante revenue requirement. The case \(R=0\) imposes
expected non-deficit
\citep{mas1995microeconomic,myerson1981optimal}.

% ------------------------------------------------------------------
\subsection{Structural Assumptions}
\label{subsec:assumptions}
% ------------------------------------------------------------------

\begin{assumption}[Regularity]
\label{ass:regular}
The distribution \(F\) is absolutely continuous on
\(\X=[\xlo,\xhi]\), with continuously differentiable density \(f>0\).
The virtual value
\begin{equation}
\label{eq:virtual_value}
\virt(x)
=
x-\frac{1-F(x)}{f(x)}
\end{equation}
is weakly increasing.
\end{assumption}

This is the standard one-dimensional regularity condition
\citep{myerson1981optimal}. Log-concavity of \(1-F\) is sufficient
\citep{bagnoli2005log}.

\begin{assumption}[Monotone likelihood ratio property]
\label{ass:mlrp}
The kernel has a strictly positive density \(k(y\mid x)\) on a common
support \(\Y\), and for every \(x_2>x_1\),
\begin{equation}
\label{eq:mlrp}
y
\longmapsto
\frac{k(y\mid x_2)}
{k(y\mid x_1)}
\end{equation}
is weakly increasing.
\end{assumption}

\begin{remark}[Strict and weak MLRP]
\label{rmk:strict_weak_mlrp}
Strict MLRP requires the likelihood ratio in \eqref{eq:mlrp} to be strictly
increasing, whereas weak MLRP permits flat regions. Posterior expectations
of increasing functions are weakly increasing under MLRP and strictly
increasing under strict MLRP under the corresponding nondegeneracy
conditions \citep{milgrom1981good}.

The Gaussian and logistic location channels satisfy strict MLRP. The
Laplace location channel satisfies weak MLRP because its likelihood ratio
is constant in both outer tails. On a bounded type space, this produces
exact posterior pooling in the global signal tails. The resulting posterior
score remains weakly increasing, so the allocation threshold remains
monotone; flat regions generate pooling or ties rather than requiring
ironing. The general argument is given in
Lemma~\ref{lem:mlrp_fosd} and
Remark~\ref{rmk:weak_mlrp_no_ironing} of
Appendix~\ref{app:auxiliary}. The channel-specific calculations are in
Appendix~\ref{app:background_derivations}.
\end{remark}

\begin{assumption}[Statistical completeness and transfer space]
\label{ass:completeness}
Let
\[
\mathcal H
\subseteq
L^1\!\bigl(\Y,m(y)\,dy\bigr)
\]
be a Banach space of admissible opponent-averaged transfers. Suppose that
there exists \(C_{\mathcal H}<\infty\) such that
\begin{equation}
\label{eq:transfer_space_domination}
\sup_{x\in\X}
\int_{\Y}
|g(y)|k(y\mid x)\,dy
\le
C_{\mathcal H}\|g\|_{\mathcal H}
\qquad
\forall g\in\mathcal H.
\end{equation}
Suppose further that bounded truncations of every \(g\in\mathcal H\)
belong to \(\mathcal H\) and converge to \(g\) in the
\(\mathcal H\)-norm.

If \(g\in\mathcal H\) satisfies
\begin{equation}
\label{eq:completeness}
\int_{\Y}
g(y)k(y\mid x)\,dy
=
0
\qquad
\forall x\in\X,
\end{equation}
then
\[
g=0
\qquad
m(y)\,dy\text{-almost everywhere}.
\]
\end{assumption}

The domination condition
\eqref{eq:transfer_space_domination} makes the conditional-expectation
operator bounded on the maintained transfer space, while completeness
\eqref{eq:completeness} makes that operator injective. These are distinct
properties. Boundedness guarantees that the operator is well defined and
continuous; completeness identifies the opponent-averaged transfer whenever
an implementation exists. Neither property implies that the desired interim
payment belongs to the range of the operator, so existence remains a separate
range condition.

Assumption~\ref{ass:completeness} is used only for signal-measurable
transfer implementation. It is not required for the reduced-form allocation
or revenue characterizations. Completeness is maintained here as a
high-level condition on the chosen transfer space \(\mathcal H\); verifying
it for particular channel families and function spaces is a separate
operator-specific question.

\begin{assumption}[Compactness and kernel continuity]
\label{ass:compact_spaces}
The type space \(\X=[\xlo,\xhi]\) is compact, \(\Y\subseteq\R\) is Borel,
and the kernel has a jointly measurable density satisfying
\begin{equation}
\label{eq:l1_kernel_continuity}
\lim_{x'\to x}
\int_{\Y}
\left|
k(y\mid x')-k(y\mid x)
\right|\,dy
=
0
\qquad
\forall x\in\X.
\end{equation}
Allocations satisfy \(0\le q\le1\), transfers are integrable under every
admissible report profile, and the feasible set is nonempty.
\end{assumption}

The signal space may be unbounded. The continuity and Fubini--Tonelli
consequences of this assumption are collected in
Appendix~\ref{app:auxiliary}. Additional boundedness and strict-feasibility
conditions used only for existence and multiplier arguments are stated in
Appendix~\ref{app:main_results_proofs}.

% ------------------------------------------------------------------
\subsection{Interim Reduction and Posterior Objects}
\label{subsec:myerson_reduction}
% ------------------------------------------------------------------

Although the privacy channel changes the mapping from reports to the
planner's observations, each agent's deviation problem remains
one-dimensional. By \eqref{eq:deviation_utility}, a type \(x_i\) that
reports \(r_i\) obtains
\[
x_iQ_i(r_i)-T_i(r_i),
\]
which is the standard one-dimensional screening payoff associated with
interim allocation \(Q_i\) and interim payment \(T_i\).

Bayesian incentive compatibility is therefore equivalent to weak
monotonicity of \(Q_i\) together with the envelope identity
\begin{equation}
\label{eq:envelope}
U_i(x_i)
=
U_i(\xlo)
+
\int_{\xlo}^{x_i}Q_i(z)\,dz,
\end{equation}
or, equivalently,
\begin{equation}
\label{eq:transfer_identity}
T_i(x_i)
=
x_iQ_i(x_i)
-
\int_{\xlo}^{x_i}Q_i(z)\,dz
-
U_i(\xlo).
\end{equation}
Since \(Q_i\ge0\), truthful utility is weakly increasing, and interim IR is
equivalent to \(U_i(\xlo)\ge0\). For a fixed allocation, expected payments
are maximized by imposing the zero-rent normalization
\begin{equation}
\label{eq:lowest_type_zero_rent}
U_i(\xlo)
=
0.
\end{equation}
These claims are established in
Proposition~\ref{prop:app_interim_bic} of
Appendix~\ref{app:auxiliary}.

Under the zero-rent normalization, expected payment equals expected
allocated virtual surplus:
\begin{equation}
\label{eq:revenue_virtual}
\E[T_i(X_i)]
=
\E\!\left[
Q_i(X_i)\virt(X_i)
\right].
\end{equation}
The proof is given in
Lemma~\ref{lem:expected_transfer_identity} of
Appendix~\ref{app:auxiliary}.

The planner conditions on privatized signals rather than valuations. Define
the posterior mean and posterior virtual value by
\begin{equation}
\label{eq:posterior_objects_setup}
\widehat x_i(y_i)
=
\E[X_i\mid Y_i=y_i],
\qquad
\widehat J_i(y_i)
=
\E[\virt(X_i)\mid Y_i=y_i].
\end{equation}
These functions are common across agents under symmetry. Independence of
types and channel draws implies
\[
\E[X_i\mid Y]
=
\E[X_i\mid Y_i],
\qquad
\E[\virt(X_i)\mid Y]
=
\E[\virt(X_i)\mid Y_i].
\]
Consequently,
\begin{align}
\E\!\left[
q(Y)\sum_{i=1}^{n}X_i
\right]
&=
\E\!\left[
q(Y)\sum_{i=1}^{n}\widehat x_i(Y_i)
\right],
\label{eq:posterior_welfare}
\\
\E\!\left[
\sum_{i=1}^{n}t_i(Y)
\right]
&=
\E\!\left[
q(Y)\sum_{i=1}^{n}\widehat J_i(Y_i)
\right]
\label{eq:posterior_revenue}
\end{align}
under the zero-rent normalization. For arbitrary lowest-type rents, the
right-hand side of \eqref{eq:posterior_revenue} is reduced by
\(\sum_iU_i(\xlo)\). These identities are proved in
Lemma~\ref{lem:posterior_representations} of
Appendix~\ref{app:auxiliary}.

Privacy therefore enters the reduced-form problem through the posterior
expectations induced by the channel. The planner evaluates each signal using
the posterior mean \(\widehat x_i(Y_i)\) and posterior virtual value
\(\widehat J_i(Y_i)\).

For \(\lambda\ge0\), define the generalized virtual value
\begin{equation}
\label{eq:generalized_virtual_value}
\Psival(x,\lambda)
=
x+\lambda\virt(x)
\end{equation}
and the posterior score
\begin{equation}
\label{eq:posterior_score_setup}
S_i(y_i,\lambda)
=
\widehat x_i(y_i)
+
\lambda\widehat J_i(y_i)
=
\E\!\left[
\Psival(X_i,\lambda)
\mid
Y_i=y_i
\right].
\end{equation}
The score is the per-agent contribution to the planner's weighted
welfare--revenue objective. Regularity makes
\(\Psival(\cdot,\lambda)\) weakly increasing, and MLRP implies that
\(S_i(\cdot,\lambda)\) is weakly increasing. Under strict MLRP, if
\(\Psival(\cdot,\lambda)\) is nonconstant on a set of positive prior
probability, then \(S_i(\cdot,\lambda)\) is strictly increasing. The general
argument is given in Lemma~\ref{lem:mlrp_fosd} of
Appendix~\ref{app:auxiliary}.

A useful identity for the large-population analysis is
\begin{equation}
\label{eq:lower_endpoint}
\E[\virt(X)]
=
\E[\widehat J_i(Y_i)]
=
\xlo.
\end{equation}
The proof is given in Lemma~\ref{lem:lower_endpoint} of
Appendix~\ref{app:auxiliary}. Thus, \(\xlo\) is the mean posterior virtual
value. When \(\xlo>0\), maximum reduced-form revenue has positive drift.
When \(\xlo<0\), positive aggregate virtual surplus is a large-deviation
event. The case \(\xlo=0\) is the central-limit boundary between these two
regimes.

Define the reduced-form monotone allocation class by
\begin{equation}
\label{eq:monotone_allocation_class}
\mathcal Q^{\mathrm{mon}}
=
\left\{
q:\Y^n\to[0,1]:
q\text{ is measurable and }
Q_i^q\text{ is weakly increasing on }\X
\text{ for every }i
\right\}.
\end{equation}
When the dependence on the channel matters, the same class is denoted by
\(\mathcal Q^{\mathrm{mon}}(K)\). For \(\lambda\ge0\), define the aggregate
posterior score
\begin{equation}
\label{eq:aggregate_score_def}
G_\lambda(y)
=
\sum_{i=1}^{n}S_i(y_i,\lambda).
\end{equation}
The threshold characterization in Section~\ref{sec:main_results} selects
the project when \(G_\lambda>0\). If the aggregate score places positive
probability on zero, the allocation on the zero-score set creates a separate
selection problem: a tie-breaking rule must satisfy both the revenue
requirement and the monotonicity restrictions. To obtain an unambiguous
almost-sure threshold characterization, we impose the following condition.

\begin{assumption}[No mass at the score threshold]
\label{ass:aggregate_score_atomless}
For every \(\lambda\ge0\) that arises as a supporting multiplier of the
reduced-form problem formally stated in
\eqref{eq:main_reduced_problem},
\begin{equation}
\label{eq:aggregate_score_atomless}
\Prob\!\left(G_\lambda(Y)=0\right)
=
0.
\end{equation}
\end{assumption}

A sufficient condition is that, at each relevant multiplier, at least one
coordinate score \(S_i(Y_i,\lambda)\) have an atomless distribution and be
independent of the sum of the remaining coordinate scores.

Continuity of the
underlying signals alone is not sufficient. In particular, Laplace tail
pooling creates atoms in the distribution of each individual posterior score
(Remark~\ref{rmk:strict_weak_mlrp}). Although these atoms need not generate
positive mass at zero in the aggregate score, they prevent aggregate
atomlessness from being inferred automatically.

Assumption~\ref{ass:aggregate_score_atomless} must therefore either be
verified for the channel and prior under study or maintained as an additional
regularity condition.

The preceding identities characterize the reduced-form allocation and
interim-payment requirements. Whether the required interim payment schedule
can be generated by an actual signal-contingent transfer is a separate inverse
problem, formulated as a Fredholm equation in
Proposition~\ref{prop:fredholm}.
\bigskip

\section{Main Results}
\label{sec:main_results}

This section characterizes the optimal reduced-form allocation under a known
prior, the implementation of its interim payment schedule through
signal-measurable transfers, the asymptotic behavior of maximum reduced-form
revenue, the hierarchical-prior extension, and welfare comparisons across
privacy channels. The supporting functional-analytic and probabilistic
arguments are collected in Appendix~\ref{app:main_results_proofs}.

% ------------------------------------------------------------------
\subsection{Optimal Allocation under a Known Prior}
\label{subsec:known_prior}
% ------------------------------------------------------------------

Under the zero-rent normalization \(U_i(\xlo)=0\), the posterior welfare and
revenue representations give
\begin{align}
\mathsf W(q)
&=
\E\!\left[
q(Y)\sum_{i=1}^{n}\widehat x_i(Y_i)
\right],
\label{eq:main_welfare_functional}
\\
\mathsf V(q)
&=
\E\!\left[
q(Y)\sum_{i=1}^{n}\widehat J_i(Y_i)
\right].
\label{eq:main_revenue_functional}
\end{align}
The reduced-form problem is therefore
\begin{equation}
\label{eq:main_reduced_problem}
\sup_{q\in\mathcal Q^{\mathrm{mon}}}
\left\{
\mathsf W(q):
\mathsf V(q)\ge R
\right\}.
\end{equation}
The Lagrangian associated with the revenue constraint is
\begin{equation}
\label{eq:main_known_prior_lagrangian}
\mathcal L(q,\lambda)
=
\E[q(Y)G_\lambda(Y)]
-
\lambda R,
\end{equation}
with \(G_\lambda\) the aggregate posterior score
\eqref{eq:aggregate_score_def}. Regularity makes
\(x\mapsto x+\lambda\virt(x)\) weakly increasing, and MLRP makes each
\(S_i(\cdot,\lambda)\) weakly increasing, so \(G_\lambda\) is coordinatewise
weakly increasing.

\begin{theorem}[Optimal posterior-score allocation]
\label{thm:optimal_allocation}
Suppose Assumptions~\ref{ass:regular}, \ref{ass:mlrp},
\ref{ass:compact_spaces}, and \ref{ass:aggregate_score_atomless} hold,
together with the strict reduced-form revenue-feasibility condition
\begin{equation}
\label{eq:main_strict_feasibility}
\exists\,q^\circ\in\mathcal Q^{\mathrm{mon}}
\quad\text{such that}\quad
\mathsf V(q^\circ)>R.
\end{equation}
Then there exists a finite multiplier \(\lamstar\ge0\) such that
\begin{equation}
\label{eq:optimal_allocation}
\qstar(y)
=
\ind\!\left\{
G_{\lamstar}(y)>0
\right\}
\qquad
\text{almost surely under the truthful signal law}
\end{equation}
solves the reduced-form problem \eqref{eq:main_reduced_problem}. The
allocation \(\qstar\) is coordinatewise weakly increasing in the signal
profile and induces weakly increasing interim allocations; it is therefore
reduced-form Bayesian incentive compatible. Moreover,
\begin{equation}
\label{eq:optimal_complementary_slackness}
\lamstar
\left(
\mathsf V(\qstar)-R
\right)
=
0.
\end{equation}
\end{theorem}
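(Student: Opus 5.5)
The plan is a Lagrangian duality argument for a linear program over the convex set \(\mathcal Q^{\mathrm{mon}}\), followed by a pointwise maximization. The objective \(\mathsf W\) and the constraint \(\mathsf V\) are linear in \(q\). The class \(\mathcal Q^{\mathrm{mon}}\) is convex, since interim allocations are linear in \(q\) and monotonicity is preserved under convex combinations. Both functionals are bounded on it, because \(\widehat x_i\) and \(\widehat J_i\) are bounded: \(\X\) is compact and \(\virt\) is continuous there because \(f\) is \(C^1\) and positive. So the value \(w(R)=\sup\{\mathsf W(q):\mathsf V(q)\ge R\}\) is a concave, finite function of \(R\).

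Under the Slater condition \eqref{eq:main_strict_feasibility}, the standard Lagrange duality theorem for convex programs with one inequality constraint gives a finite \(\lamstar\ge0\) such that
\[
w(R)=\sup_{q\in\mathcal Q^{\mathrm{mon}}}\bigl\{\mathsf W(q)+\lamstar(\mathsf V(q)-R)\bigr\}=\sup_{q\in\mathcal Q^{\mathrm{mon}}}\mathcal L(q,\lamstar).
\]
This can be obtained by separating the convex set \(\{(\mathsf V(q)-R-s,\mathsf W(q)-u): s,u\ge0\}\) from the point \((0,w(R))\) in \(\R^2\). This step needs no topology on \(\mathcal Q^{\mathrm{mon}}\).

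The next step maximizes \(\mathcal L(\cdot,\lamstar)\). I would first relax it to all measurable \(q:\Y^n\to[0,1]\). Pointwise, \(\E[q\,G_{\lamstar}]\le\E[G_{\lamstar}^+]\), with equality exactly when \(q=\ind\{G_{\lamstar}>0\}\) almost surely off the zero set \(\{G_{\lamstar}=0\}\). By Assumption~\ref{ass:aggregate_score_atomless} that set is null, so \(\qstar\) is the almost-sure unique relaxed maximizer.

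I then check that \(\qstar\in\mathcal Q^{\mathrm{mon}}\). By Lemma~\ref{lem:mlrp_fosd}, each \(S_i(\cdot,\lamstar)\) is weakly increasing, since \(\Psival(\cdot,\lamstar)\) is weakly increasing under Assumption~\ref{ass:regular}. Hence \(G_{\lamstar}\) is coordinatewise increasing, and so is \(\qstar\). For fixed \(y_{-i}\), the map \(y_i\mapsto\qstar(y)\) is increasing, so \(\int\qstar(y)k(y_i\mid r)\,dy_i\) is increasing in \(r\). This uses the fact that MLRP implies first-order stochastic dominance of \(K(\cdot\mid r)\) in \(r\), again by Lemma~\ref{lem:mlrp_fosd}. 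Integrating against \(f_{Y_{-i}}\) gives \(Q_i^{\qstar}\) weakly increasing, which is reduced-form BIC by Proposition~\ref{prop:app_interim_bic}. Since the relaxed maximum is attained inside \(\mathcal Q^{\mathrm{mon}}\), \(\qstar\) maximizes \(\mathcal L(\cdot,\lamstar)\) over \(\mathcal Q^{\mathrm{mon}}\), and every other maximizer equals \(\qstar\) almost surely.

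The main obstacle is showing that this Lagrangian maximizer is primal feasible and satisfies \eqref{eq:optimal_complementary_slackness}; in general a Lagrangian maximizer need not be optimal. I would argue as follows. The primal supremum is attained by some \(\bar q\). Boundedness of \(\mathsf W\) and \(\mathsf V\) plus weak-* compactness of \(\{q:0\le q\le1\}\) in \(L^\infty\) gives this, because \(\mathsf W\) and \(\mathsf V\) are weak-* continuous, being integrals against \(L^1\) densities. Monotonicity of \(Q_i\) is also weak-* closed, since \(Q_i(r)\) is an integral of \(q\) against the \(L^1\) function \(k(y_i\mid r)f_{Y_{-i}}\). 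For this optimal \(\bar q\),
\[
w(R)=\mathsf W(\bar q)\le\mathsf W(\bar q)+\lamstar(\mathsf V(\bar q)-R)\le\sup\mathcal L(\cdot,\lamstar)=w(R).
\]
So \(\bar q\) maximizes \(\mathcal L(\cdot,\lamstar)\), and \(\lamstar(\mathsf V(\bar q)-R)=0\). By almost-sure uniqueness, \(\bar q=\qstar\) almost surely. Hence \(\qstar\) is feasible, optimal, and satisfies complementary slackness.

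In this route, Assumption~\ref{ass:aggregate_score_atomless} does the crucial work: without it, tie-breaking on \(\{G_{\lamstar}=0\}\) would be needed to meet the revenue constraint. Assumption~\ref{ass:compact_spaces}, through \eqref{eq:l1_kernel_continuity} and the Fubini arguments in Appendix~\ref{app:auxiliary}, justifies the integrability and interchange steps.
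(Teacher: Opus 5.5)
Your proposal is correct and follows essentially the same route as the paper. The steps match: weak-$^*$ compactness of $\mathcal Q^{\mathrm{mon}}$ gives a primal optimum, a supporting-hyperplane multiplier $\lamstar$ comes from the Slater condition, pointwise maximization with the null tie set forces every primal optimum to equal $\ind\{G_{\lamstar}>0\}$ almost surely, and regularity plus MLRP deliver monotonicity. Two small points: you make explicit that $\qstar\in\mathcal Q^{\mathrm{mon}}$ is what lets the unrestricted pointwise maximizer characterize maximizers over $\mathcal Q^{\mathrm{mon}}$, a step the paper uses only implicitly; and the stochastic ordering of $K(\cdot\mid r)$ in $r$ follows directly from Assumption~\ref{ass:mlrp}, not from Lemma~\ref{lem:mlrp_fosd}, which concerns posteriors.
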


\noindent
\textit{Proof.} Appendix~\ref{app:proof_optimal_allocation}.

The theorem has a simple interpretation. The planner selects the project
whenever the sum of the agents' posterior welfare--revenue scores is
positive. If the revenue constraint is slack, complementary slackness
implies \(\lamstar=0\), and the rule reduces to the
posterior-surplus-maximizing allocation. When \(\lamstar>0\), the revenue
constraint binds and posterior virtual values receive positive weight.
Assumption~\ref{ass:aggregate_score_atomless} ensures that the zero-score
set is null, so no separate tie-breaking rule is needed. A binding
constraint may, in a degenerate case, still be supported by a zero
multiplier, so the converse implication need not hold.

\begin{remark}[Tie sets]
\label{rmk:optimal_allocation_ties}
Under Assumption~\ref{ass:aggregate_score_atomless} the threshold set
\(\{G_{\lamstar}=0\}\) is null, so \(\qstar=\ind\{G_{\lamstar}\ge0\}\) almost
surely and the value assigned on the threshold is immaterial. If instead
\(\Prob(G_{\lamstar}(Y)=0)>0\), a constant randomization on the tie set need
not move revenue in the required direction; exact revenue selection then
requires a separately constructed measurable tie rule that also preserves
coordinatewise monotonicity, as discussed in
Remark~\ref{rmk:positive_mass_tie}. No such conclusion follows from convex
duality alone.
\end{remark}

\begin{remark}[Strict score monotonicity]
\label{rmk:strict_score_monotonicity}
Suppose the channel satisfies strict MLRP and
\(x\mapsto x+\lamstar\virt(x)\) is nonconstant on a set of positive prior
probability. Then \(S_i(\cdot,\lamstar)\) is strictly increasing. If, in
addition, the distribution of the opponents' aggregate score assigns
positive probability to every relevant threshold neighborhood, then the
induced interim allocation is strictly increasing over the corresponding
report region.
\end{remark}

\begin{example}[Uniform prior with Gaussian noise]
\label{ex:uniform_gaussian_main}
Suppose
\[
X_i\sim\Unif[-1,1],
\qquad
Y_i=X_i+\eta_i,
\qquad
\eta_i\sim\Normal(0,\sigma^2),
\]
independently across agents. Then \(\virt(x)=2x-1\). Let
\(h(y,\sigma)=\E[X_i\mid Y_i=y]\). It follows that
\[
\widehat x_i(y)=h(y,\sigma),
\qquad
\widehat J_i(y)=2h(y,\sigma)-1,
\]
and therefore
\begin{equation}
\label{eq:uniform_gaussian_score}
S_i(y,\lambda)
=
(1+2\lambda)h(y,\sigma)-\lambda.
\end{equation}
Consequently, the optimal rule thresholds the aggregate posterior mean,
\begin{equation}
\label{eq:uniform_gaussian_threshold}
\qstar(y)
=
\ind\!\left\{
\sum_{i=1}^{n}h(y_i,\sigma)
>
\frac{n\lamstar}{1+2\lamstar}
\right\}.
\end{equation}
The explicit formula for \(h(y,\sigma)\) and its strict monotonicity are
derived in Appendix~\ref{app:background_derivations}.
\end{example}

% ------------------------------------------------------------------
\subsection{Implementation through Signal-Measurable Transfers}
\label{subsec:transfer_implementation}
% ------------------------------------------------------------------

Theorem~\ref{thm:optimal_allocation} characterizes the optimal allocation and
its interim envelope payments. The existence of an actual transfer
\(t_i:\Y^n\to\R\) generating those interim payments is a distinct inverse
problem. Let \(\mathcal H\) be the Banach space of admissible
opponent-averaged transfers specified in
Appendix~\ref{app:channel_operator}, and define the channel operator
\begin{equation}
\label{eq:main_channel_operator}
\mathcal K:\mathcal H\to C(\X),
\qquad
(\mathcal Kg)(x)
=
\int_{\Y}g(y)k(y\mid x)\,dy.
\end{equation}

\begin{proposition}[Fredholm equation for implementing transfers]
\label{prop:fredholm}
Suppose Assumptions~\ref{ass:completeness} and \ref{ass:compact_spaces}
hold, let \(q\) induce a Bayesian incentive-compatible interim allocation,
and impose the zero-rent normalization \(U_i(\xlo)=0\). A signal-measurable
transfer \(t_i:\Y^n\to\R\) implements the envelope payment associated with
\(q\) if and only if, for every \(x_i\in\X\),
\begin{equation}
\label{eq:fredholm}
\begin{aligned}
&
\int_{\Y^n}
t_i(y)\,
k(y_i\mid x_i)\,
f_{Y_{-i}}(y_{-i})\,dy
\\
&\qquad =
\int_{\Y^n}
q(y)
\left[
x_i k(y_i\mid x_i)
-
\int_{\xlo}^{x_i}
k(y_i\mid z)\,dz
\right]
f_{Y_{-i}}(y_{-i})\,dy.
\end{aligned}
\end{equation}
Equivalently, define the opponent-averaged transfer under the truthful
signal law by
\[
\bar t_i(y_i)
=
\E\!\left[
t_i(Y)\mid Y_i=y_i
\right],
\]
and let \(\tau_i^q\) denote the required interim payment in
\eqref{eq:fredholm}, viewed as a function of \(x_i\). Then
\begin{equation}
\label{eq:fredholm_operator_form}
\mathcal K\bar t_i
=
\tau_i^q.
\end{equation}
An opponent-averaged transfer exists if and only if
\[
\tau_i^q\in\operatorname{Ran}(\mathcal K).
\]
When it exists, completeness makes \(\bar t_i\) unique up to
\(m(y_i)\,dy_i\)-almost-everywhere equality. In particular, the optimal
allocation \(\qstar\) of Theorem~\ref{thm:optimal_allocation} admits a
signal-measurable transfer implementation if and only if
\[
\tau_i^{\qstar}
\in
\operatorname{Ran}(\mathcal K)
\qquad
\text{for every }i.
\]
The full ex-post transfer is not unique. If \(t_i^0\) is one admissible
implementation, then every other admissible implementation has the form
\begin{equation}
\label{eq:main_expost_transfer_class}
t_i(y)
=
t_i^0(y)+r_i(y),
\qquad
\E\!\left[
r_i(Y)\mid Y_i
\right]
=
0
\quad
\text{under the truthful signal law},
\end{equation}
where \(r_i\) must remain measurable and integrable under every admissible
report profile. Conversely, every such admissible perturbation leaves the
interim payment schedule unchanged.

\end{proposition}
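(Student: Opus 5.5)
The plan is to reduce everything to the transfer identity \eqref{eq:transfer_identity} and then read it as a linear equation in the opponent-averaged transfer. First, since \(q\) induces a BIC interim allocation and \(U_i(\xlo)=0\), Proposition~\ref{prop:app_interim_bic} together with \eqref{eq:transfer_identity} shows that the envelope payment is uniquely determined as
\[
\tau_i^q(x_i)=x_iQ_i(x_i)-\int_{\xlo}^{x_i}Q_i(z)\,dz .
\]
A transfer \(t_i\) therefore implements it exactly when \(T_i(x_i)=\tau_i^q(x_i)\) for every \(x_i\). Substituting \eqref{eq:interim_allocation} for \(Q_i(x_i)\) and \(Q_i(z)\), and exchanging the \(dz\) and \(dy\) integrals by Fubini--Tonelli, gives the right-hand side of \eqref{eq:fredholm}. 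Fubini--Tonelli applies because \(0\le q\le1\), \(\X\) is compact, and each \(k(\cdot\mid z)\) is a probability density (Assumption~\ref{ass:compact_spaces}). Substituting \eqref{eq:interim_transfer} on the left gives the left-hand side, which proves the first equivalence.

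For the operator form, disintegrate the truthful joint signal law. Under independence, the conditional law of \(Y_{-i}\) given \(Y_i=y_i\) has density \(f_{Y_{-i}}\), so \(\bar t_i(y_i)=\int t_i(y_i,y_{-i})f_{Y_{-i}}(y_{-i})\,dy_{-i}\). Integrating over \(y_i\) against \(k(y_i\mid x_i)\) (Fubini again, using integrability of \(t_i\) under every report profile) gives \(T_i(x_i)=(\mathcal K\bar t_i)(x_i)\). Hence \eqref{eq:fredholm} is equivalent to \(\mathcal K\bar t_i=\tau_i^q\). Two facts about \(\mathcal K\) are also needed:
\begin{enumerate}
\item \emph{It is well defined into \(C(\X)\).} Continuity of \(\mathcal K g\) follows from \eqref{eq:l1_kernel_continuity} for bounded \(g\). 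For general \(g\in\mathcal H\), pass to bounded truncations, using the uniform bound \eqref{eq:transfer_space_domination} to get uniform convergence.
\item \emph{\(\tau_i^q\) lies in \(C(\X)\).} \(Q_i\) is continuous by the same \(L^1\)-continuity argument, so \(\tau_i^q\) is continuous.
\end{enumerate}

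The step I expect to be most delicate is the admissibility bookkeeping: the phrase ``exists iff \(\tau_i^q\in\operatorname{Ran}(\mathcal K)\)'' requires that \(\bar t_i\) lie in \(\mathcal H\). The forward direction holds by definition of admissibility. For the converse, given \(g\in\mathcal H\) with \(\mathcal Kg=\tau_i^q\), set \(t_i(y)=g(y_i)\). This is signal-measurable, and \eqref{eq:transfer_space_domination} makes it integrable under every report profile, since its integral under report \(r_i\) is bounded by \(C_{\mathcal H}\|g\|_{\mathcal H}\). Uniqueness is then immediate from linearity: if \(\mathcal K g_1=\mathcal K g_2\), then \(\mathcal K(g_1-g_2)=0\), and Assumption~\ref{ass:completeness} gives \(g_1=g_2\) \(m\)-a.e. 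The statement about \(\qstar\) follows by applying this to \(q=\qstar\), which Theorem~\ref{thm:optimal_allocation} shows induces weakly increasing, hence BIC, interim allocations.

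Finally, for the ex-post class, let \(t_i^0\) and \(t_i\) both implement, and set \(r_i=t_i-t_i^0\). Then \(\mathcal K\,\overline{r}_i=0\), so completeness gives \(\E[r_i(Y)\mid Y_i]=0\) a.s. Conversely, if \(\E[r_i(Y)\mid Y_i]=0\) under the truthful law and \(r_i\) is admissible, the disintegration above gives the interim payment perturbation \(\int \E[r_i\mid Y_i=y_i]\,k(y_i\mid x_i)\,dy_i=0\) for every report \(x_i\). This requires that \(m\)-null sets be \(k(\cdot\mid x_i)\)-null; that holds because \(k>0\) on the common support and \(m=\int k f\) with \(f>0\), together with the \(L^1\)-continuity of the kernel, so \(k(\cdot\mid x)\ll m\) for every \(x\). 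The interim schedule, and hence implementation, is therefore unchanged.
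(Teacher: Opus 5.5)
Your proposal is correct and follows the paper's proof essentially step for step. Like the paper, you apply Fubini to the envelope formula to obtain the right-hand side of \eqref{eq:fredholm}, use the opponent-average identity \(T_i=\mathcal K\bar t_i\), construct the converse via \(t_i(y)=\bar t_i(y_i)\), and get existence and uniqueness from the range definition and completeness-as-injectivity (Theorem~\ref{thm:app_transfer_operator}). Your extra bookkeeping fills in points the paper's proof only asserts: integrability of \(g(y_i)\) from the domination bound \eqref{eq:transfer_space_domination}, an explicit appeal to completeness for the ``every other implementation'' direction, and the check \(k(\cdot\mid x)\ll m\) for the converse (which follows from \(f>0\) and \eqref{eq:l1_kernel_continuity} alone, without the positivity in Assumption~\ref{ass:mlrp}, which is not a hypothesis here).
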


\noindent
\textit{Proof.} Appendix~\ref{app:channel_operator}.

\begin{remark}[Existence versus identification]
\label{rmk:fredholm_existence_identification}
Completeness gives injectivity of the channel operator and therefore
identification of the opponent-averaged transfer. It does not imply that the
desired interim payment lies in the range of the operator. Transfer existence
is a separate range condition, so reduced-form revenue and revenue in the
original mechanism problem coincide only when that condition holds.
\end{remark}

\begin{remark}[Ill-posedness]
\label{rmk:fredholm_illposedness}
If \(\operatorname{Ran}(\mathcal K)\) is not closed in \(C(\X)\), the inverse
is unbounded on its range, so small perturbations of the interim payment
schedule may require large changes in the implementing transfer. This is the
standard ill-posedness of a first-kind Fredholm equation
\citep{kress2014linear}.
\end{remark}

% ------------------------------------------------------------------
\subsection{Maximum Reduced-Form Revenue}
\label{subsec:revenue_behavior}
% ------------------------------------------------------------------

Define aggregate posterior virtual surplus by
\begin{equation}
\label{eq:aggregate_posterior_virtual_surplus}
V_n(Y)
=
\sum_{i=1}^{n}\widehat J_i(Y_i).
\end{equation}
Since each \(\widehat J_i\) is weakly increasing under regularity and MLRP,
the allocation
\begin{equation}
\label{eq:revenue_maximizing_rule}
q^{\mathrm{rev}}(y)
=
\ind\!\left\{
V_n(y)>0
\right\}
\end{equation}
is coordinatewise weakly increasing and therefore belongs to
\(\mathcal Q^{\mathrm{mon}}\). Because the pointwise maximizer of
\(aV_n(y)\) over \(a\in[0,1]\) equals one when \(V_n(y)>0\), zero when
\(V_n(y)<0\), and is immaterial to the objective when \(V_n(y)=0\), it
follows that
\begin{equation}
\label{eq:max_reduced_revenue}
R_n^{*,\mathrm{red}}(K)
=
\sup_{q\in\mathcal Q^{\mathrm{mon}}}
\E[q(Y)V_n(Y)]
=
\E[(V_n(Y))_+].
\end{equation}
By the lower-endpoint identity \(\E[\widehat J_i(Y_i)]=\xlo\), the asymptotic
behavior of maximum reduced-form revenue is governed by the sign of \(\xlo\).

\begin{theorem}[Maximum reduced-form revenue: three regimes]
\label{thm:three_regimes_known}
Suppose Assumptions~\ref{ass:regular}, \ref{ass:mlrp}, and
\ref{ass:compact_spaces} hold, and suppose
\[
\sigma_J^2
=
\Var(\widehat J_i(Y_i))
\in(0,\infty).
\]
Then the following statements hold.
\begin{enumerate}[label=(\roman*),leftmargin=2em,itemsep=4pt]
\item If \(\xlo>0\), there exist constants \(c,C>0\) such that
\[
\left|
R_n^{*,\mathrm{red}}(K)-n\xlo
\right|
\le
Cne^{-cn}.
\]
\item If \(\xlo=0\), then
\[
R_n^{*,\mathrm{red}}(K)
=
\frac{\sigma_J}{\sqrt{2\pi}}\sqrt n
+
o(\sqrt n).
\]
\item If \(\xlo<0\), define the Cram\'er rate function
\begin{equation}
\label{eq:known_prior_rate_function}
I_K(a)
=
\sup_{t\in\R}
\left\{
ta-\log\E[e^{t\widehat J_i(Y_i)}]
\right\}.
\end{equation}
If \(0\) lies in the interior of the convex hull of
\(\supp\widehat J_i(Y_i)\), then
\[
\lim_{n\to\infty}
\frac1n
\log R_n^{*,\mathrm{red}}(K)
=
-I_K(0)
<
0.
\]
\end{enumerate}
\end{theorem}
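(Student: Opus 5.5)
The starting point is the representation \eqref{eq:max_reduced_revenue}, $R_n^{*,\mathrm{red}}(K)=\E[(V_n)_+]$, where $V_n=\sum_{i=1}^n Z_i$ and $Z_i=\widehat J_i(Y_i)$ are i.i.d. By \eqref{eq:lower_endpoint} they have mean $\xlo$. I would first record that $Z_i$ is bounded. Under Assumption~\ref{ass:regular}, $\virt$ is continuous on the compact $\X$, so $\virt$ is bounded. Since $Z_i$ is a posterior average of $\virt$, it takes values in $[\min\virt,\max\virt]$. This gives $\sigma_J^2<\infty$ automatically, and it makes the cumulant generating function $\cgf(t)=\log\E[e^{tZ_i}]$ finite and smooth on all of $\R$. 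All three regimes then reduce to classical facts about bounded i.i.d. sums, applied through the identity $\E[(V_n)_+]=\E[V_n]+\E[(V_n)_-]=n\xlo+\E[(V_n)_-]$.

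\textbf{Case (i), $\xlo>0$.} We have $0\le R_n^{*,\mathrm{red}}-n\xlo=\E[(V_n)_-]$. Writing $M=\max|\virt|$, we get $(V_n)_-\le nM\,\ind\{V_n<0\}$. Hoeffding's inequality gives $\Prob(V_n<0)\le e^{-2n\xlo^2/(4M^2)}$. Together these yield $Cne^{-cn}$ with $C=M$ and $c=\xlo^2/(2M^2)$.

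\textbf{Case (ii), $\xlo=0$.} The $Z_i$ are centered with variance $\sigma_J^2\in(0,\infty)$, so by the CLT $V_n/\sqrt n\Rightarrow\Normal(0,\sigma_J^2)$. The family $\{(V_n/\sqrt n)_+\}$ is uniformly integrable, since its second moments are bounded by $\sigma_J^2$. Hence $\E[(V_n)_+]/\sqrt n\to\E[(\sigma_J\xi)_+]=\sigma_J/\sqrt{2\pi}$ for $\xi\sim\Normal(0,1)$, which is exactly the claim with an $o(\sqrt n)$ remainder.

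\textbf{Case (iii), $\xlo<0$.} I expect this case to be the main obstacle, because the statement is at the level of expectations rather than probabilities.

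For the upper bound, note that $(V_n)_+\le nM\,\ind\{V_n\ge0\}$. The Chernoff bound gives $\Prob(V_n\ge0)\le e^{-nI_K(0)}$. The factor $nM$ is subexponential, so $\limsup\frac1n\log R_n^{*,\mathrm{red}}\le -I_K(0)$.

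For the lower bound, fix $\delta>0$. Then $\E[(V_n)_+]\ge n\delta\,\Prob(V_n\ge n\delta)$. Cramér's theorem (lower bound on the open set $(\delta,\infty)$) gives $\liminf\frac1n\log R_n^{*,\mathrm{red}}\ge-\inf_{a>\delta}I_K(a)$. It remains to let $\delta\downarrow0$ and argue that $\inf_{a>\delta}I_K(a)\to I_K(0)$. This requires continuity of $I_K$ at $0$, and that is where the hypothesis that $0$ lies in the interior of the convex hull of $\supp Z_i$ is used. Under it, $I_K$ is finite and convex on a neighborhood of $0$, hence continuous there. Equivalently, $\cgf$ is strictly convex and steep over the relevant range, so the infimum $\inf_t\cgf(t)$ defining $-I_K(0)$ is attained at a finite $t^*>0$. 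The sign $t^*>0$ follows from $\cgf'(0)=\xlo<0$.

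Alternatively, one can apply an exponential tilt by $t^*$. Under the tilted law the $Z_i$ have mean $0$, and the CLT gives $\E[(V_n)_+]=e^{n\cgf(t^*)}\,\tilde\E[(V_n)_+e^{-t^*V_n}]$ with the tilted factor of polynomial order. This in fact yields the Bahadur–Rao refinement, but the coarser log-limit only needs the first argument.

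Finally, $I_K(0)>0$ holds because $\cgf(0)=0$ and $\cgf'(0)=\xlo<0$, so $\inf_{t}\cgf(t)<0$.
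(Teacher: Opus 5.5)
Your proposal is correct and follows essentially the same route as the paper: the identity $(V_n)_+=V_n+(V_n)_-$ with Hoeffding for $\xlo>0$, the CLT with uniform integrability from bounded second moments for $\xlo=0$, and, for $\xlo<0$, the bound $(V_n)_+\le nM\,\ind\{V_n\ge0\}$ with a Chernoff/Cram\'er upper bound, a Cram\'er lower bound on a small positive window, and continuity of $I_K$ at $0$ from the interior-of-hull hypothesis. The only loose point is your side remark on exponential tilting: the plain CLT under the tilted law yields only a subexponential factor, and a polynomial-order factor or the Bahadur--Rao refinement would need a Berry--Esseen or local limit argument, but your main argument does not rely on this.
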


\noindent
\textit{Proof.} Appendix~\ref{app:proof_three_regimes_known}.

The theorem concerns maximum reduced-form revenue. Equality with revenue in
the original mechanism problem requires that the envelope payment generated by
\(q^{\mathrm{rev}}\) satisfy the range condition in
Proposition~\ref{prop:fredholm}.

\begin{remark}[Probability and revenue in the negative-drift regime]
\label{rmk:probability_revenue_rate}
When \(\xlo<0\), Cram\'er's theorem also gives
\[
\lim_{n\to\infty}
\frac1n
\log\Prob(V_n\ge0)
=
-I_K(0).
\]
Thus the probability of positive aggregate posterior virtual surplus and
maximum reduced-form revenue have the same exponential decay exponent
\(I_K(0)\), although they are distinct finite-sample quantities.
\end{remark}

% ------------------------------------------------------------------
\subsection{Revenue of a Fixed Posterior-Score Rule}
\label{subsec:fixed_score_revenue}
% ------------------------------------------------------------------

For \(\lambda\ge0\), let
\(W_i(\lambda)=\widehat x_i(Y_i)+\lambda\widehat J_i(Y_i)=S_i(Y_i,\lambda)\)
and let
\begin{equation}
\label{eq:mean_score_def}
\mu_S(\lambda)
=
\E[S_i(Y_i,\lambda)]
=
\E[X]+\lambda\,\xlo
\end{equation}
denote the per-agent mean posterior score; consider
\(q_\lambda(Y)=\ind\{\sum_iW_i(\lambda)\ge0\}\).

\begin{proposition}[Revenue of a fixed posterior-score rule]
\label{prop:asymptotic_revenue}
Fix \(\lambda\ge0\) with \(\mu_S(\lambda)>0\). Then there exist constants
\(c_\lambda,C_\lambda>0\) such that
\begin{equation}
\label{eq:fixed_rule_acceptance}
\Prob(q_\lambda(Y)=0)
\le
C_\lambda e^{-c_\lambda n},
\end{equation}
and
\begin{equation}
\label{eq:fixed_rule_revenue_asymptotics}
\E[V_n(Y)q_\lambda(Y)]
=
n\xlo
+
\mathcal O\!\left(ne^{-c_\lambda n}\right).
\end{equation}
Consequently, the rule generates positive linear-order revenue when
\(\xlo>0\) and a negative linear-order deficit when \(\xlo<0\).
\end{proposition}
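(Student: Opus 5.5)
The plan is to treat \(\sum_i W_i(\lambda)\) as a sum of i.i.d.\ bounded random variables with positive mean \(\mu_S(\lambda)\), and to obtain \eqref{eq:fixed_rule_acceptance} from Hoeffding's inequality. Under the known-prior benchmark the signals \(Y_1,\ldots,Y_n\) are i.i.d.\ with density \(m\). Each \(W_i(\lambda)=\E[\Psival(X_i,\lambda)\mid Y_i]\) is a conditional expectation of a function bounded on the compact type space \(\X\). Indeed, \(\virt\) is continuous on \([\xlo,\xhi]\) because \(f\) is continuous and strictly positive (Assumption~\ref{ass:regular}), so there is a constant \(B_\lambda=\sup_{x\in\X}|\Psival(x,\lambda)|<\infty\) with \(|W_i(\lambda)|\le B_\lambda\) almost surely. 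By the lower-endpoint identity \eqref{eq:lower_endpoint}, the mean is \(\E[W_i(\lambda)]=\E[X]+\lambda\xlo=\mu_S(\lambda)>0\). Hoeffding then gives
\[
\Prob(q_\lambda(Y)=0)=\Prob\Bigl(\textstyle\sum_i\bigl(W_i(\lambda)-\mu_S(\lambda)\bigr)<-n\mu_S(\lambda)\Bigr)\le \exp\!\bigl(-n\mu_S(\lambda)^2/(2B_\lambda^2)\bigr),
\]
so we may take \(c_\lambda=\mu_S(\lambda)^2/(2B_\lambda^2)\) and \(C_\lambda=1\).

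For \eqref{eq:fixed_rule_revenue_asymptotics}, write
\[
\E[V_n q_\lambda]=\E[V_n]-\E[V_n(1-q_\lambda)],
\]
where \(\E[V_n]=n\xlo\) by \eqref{eq:lower_endpoint} and linearity. Since \(|\widehat J_i|\le \sup_{\X}|\virt|=:B_J\), we have \(|V_n|\le nB_J\). The error term is therefore bounded by
\[
|\E[V_n(1-q_\lambda)]|\le nB_J\,\Prob(q_\lambda=0)\le nB_JC_\lambda e^{-c_\lambda n},
\]
which is the claimed \(\mathcal O(ne^{-c_\lambda n})\). A Cauchy--Schwarz bound would also work, but boundedness is cleaner.

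The concluding sentence follows directly. When \(\xlo\neq0\), the term \(n\xlo\) dominates the exponentially small correction, so revenue is asymptotically \(n\xlo\): linear and positive if \(\xlo>0\), a linear deficit if \(\xlo<0\).

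The only step that needs care is the boundedness of \(\widehat x_i\) and \(\widehat J_i\). This requires the posterior to be a proper probability measure on \(\X\), which holds because \(k>0\) (Assumption~\ref{ass:mlrp}) and \(m(y)>0\). It also requires \(\virt\) to be bounded, which uses \(f\) bounded away from zero on the compact interval. If one did not want to rely on continuity of \(f\) up to the endpoints, the fallback would be to assume \(\Psival(\cdot,\lambda)\in L^\infty(F)\), or to replace Hoeffding by a Cram\'er--Chernoff bound under a finite moment generating function of \(W_i(\lambda)\) near zero. The same exponential rate would then follow from \(\mu_S(\lambda)>0\).
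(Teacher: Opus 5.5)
Your proposal is correct and follows the paper's proof essentially step for step. Both apply Hoeffding's inequality to the i.i.d.\ bounded scores \(W_i(\lambda)\) with mean \(\mu_S(\lambda)>0\), then use the decomposition \(\E[V_nq_\lambda]=\E[V_n]-\E[V_n(1-q_\lambda)]\) with \(\E[V_n]=n\xlo\) and \(|V_n|\le nM\). Your explicit constants \(c_\lambda=\mu_S(\lambda)^2/(2B_\lambda^2)\) and \(C_\lambda=1\), and your derivation of the boundedness of \(\virt\) from the continuity and positivity of \(f\) on the compact interval, fill in details the paper asserts without comment.
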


\noindent
\textit{Proof.} Appendix~\ref{app:proof_asymptotic_revenue}.

\begin{remark}[Knife-edge multiplier]
\label{rmk:knife_edge_multiplier}
Suppose \(\xlo\ne0\). The mean score vanishes at
\(\lambda_{\mathrm{crit}}=-\E[X]/\xlo\), provided this is nonnegative. If
\(\xlo<0\), then \(\mu_S(\lambda)\) is decreasing, so implementation converges
to one for fixed \(\lambda<\lambda_{\mathrm{crit}}\) and to zero for fixed
\(\lambda>\lambda_{\mathrm{crit}}\); if \(\xlo>0\) the direction is reversed.
A nondegenerate local transition requires a sequence
\(\lambda_n-\lambda_{\mathrm{crit}}=\mathcal O(n^{-1/2})\) and follows from a
separate central-limit calculation rather than directly from
Proposition~\ref{prop:asymptotic_revenue}.
\end{remark}

\begin{proposition}[Local transition at the mean-score boundary]
\label{prop:local_multiplier_transition}
Suppose \(\xlo\ne0\), let
\[
\lambda_{\mathrm{crit}}
=
-\frac{\E[X]}{\xlo}
\ge0,
\]
and assume
\[
\sigma_{\mathrm{crit}}^2
=
\Var\!\left(
S_i(Y_i,\lambda_{\mathrm{crit}})
\right)
\in(0,\infty).
\]
For a fixed \(h\in\R\), let
\[
\lambda_n
=
\lambda_{\mathrm{crit}}
+
\frac{h}{\sqrt n}.
\]
Then
\begin{equation}
\label{eq:local_multiplier_transition}
\Prob\!\left(
\sum_{i=1}^{n}
S_i(Y_i,\lambda_n)
\ge0
\right)
\longrightarrow
\Phi\!\left(
\frac{h\xlo}{\sigma_{\mathrm{crit}}}
\right).
\end{equation}
Equivalently, defining
\[
u_n
=
-\frac{
\sqrt n\,\xlo
(\lambda_n-\lambda_{\mathrm{crit}})
}{
\sigma_{\mathrm{crit}}
},
\]
the implementation probability converges to \(\Phi(-u)\) whenever
\(u_n\to u\).
\end{proposition}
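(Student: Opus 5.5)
The argument is a triangular-array central limit theorem. The only thing special here is that the summands depend on \(n\) through \(\lambda_n\), and that dependence is linear.

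Since \(S_i(Y_i,\lambda)=\widehat x_i(Y_i)+\lambda\widehat J_i(Y_i)\), we can write
\[
S_i(Y_i,\lambda_n)=S_i(Y_i,\lamcrit)+\frac{h}{\sqrt n}\,\widehat J_i(Y_i).
\]
By \eqref{eq:mean_score_def} and the lower-endpoint identity \eqref{eq:lower_endpoint}, \(\E[S_i(Y_i,\lamcrit)]=\E[X]+\lamcrit\xlo=0\) and \(\E[\widehat J_i(Y_i)]=\xlo\). Hence
\[
\frac{1}{\sqrt n}\sum_{i=1}^n S_i(Y_i,\lambda_n)
=\frac{1}{\sqrt n}\sum_{i=1}^n S_i(Y_i,\lamcrit)
+h\,\xlo
+\frac{h}{n}\sum_{i=1}^n\bigl(\widehat J_i(Y_i)-\xlo\bigr).
\]

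The plan has three steps.

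\textbf{Step 1.} The first term is a normalized sum of i.i.d. mean-zero variables with variance \(\sigma_{\mathrm{crit}}^2\in(0,\infty)\). The Lindeberg--L\'evy CLT gives convergence in distribution to \(\Normal(0,\sigma_{\mathrm{crit}}^2)\).

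\textbf{Step 2.} The last term tends to zero in probability by the weak law of large numbers. For this we need \(\widehat J_i(Y_i)\) to be integrable. That holds because \(\virt\) is continuous on the compact \(\X\) (Assumption~\ref{ass:regular}, with \(f\) continuous and positive), so \(\widehat J_i\) is bounded. The same boundedness shows \(\widehat x_i\) is bounded, which makes the variance hypothesis automatic apart from positivity.

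\textbf{Step 3.} By Slutsky, the normalized sum converges to \(\Normal(h\xlo,\sigma_{\mathrm{crit}}^2)\). The event \(\{\sum_i S_i(Y_i,\lambda_n)\ge0\}\) coincides with \(\{n^{-1/2}\sum_i S_i(Y_i,\lambda_n)\ge0\}\). The limit law is continuous because \(\sigma_{\mathrm{crit}}>0\), so the boundary \(\{0\}\) has limit probability zero and the portmanteau theorem applies. The probability therefore converges to
\[
\Prob\bigl(Z\ge -h\xlo/\sigma_{\mathrm{crit}}\bigr)=\Phi\bigl(h\xlo/\sigma_{\mathrm{crit}}\bigr),\qquad Z\sim\Normal(0,1),
\]
which is \eqref{eq:local_multiplier_transition}.

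For the equivalent formulation, put \(u_n=-h_n\xlo/\sigma_{\mathrm{crit}}\) with \(h_n=\sqrt n(\lambda_n-\lamcrit)\). If \(u_n\to u\), then \(h_n\to h:=-u\sigma_{\mathrm{crit}}/\xlo\); this uses \(\xlo\neq0\). The decomposition above holds with \(h_n\) in place of \(h\), the drift \(h_n\xlo\to h\xlo\), and the WLLN term is still negligible since \(h_n\) is bounded. Slutsky then gives the limit \(\Phi(h\xlo/\sigma_{\mathrm{crit}})=\Phi(-u)\).

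There is no real obstacle. The only points requiring care are justifying the mean-zero identity at \(\lamcrit\) via \eqref{eq:lower_endpoint}, which is assumed from the appendix, and checking that the limit distribution puts no mass at the threshold, which is exactly where \(\sigma_{\mathrm{crit}}>0\) is used.
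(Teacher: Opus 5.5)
Your proposal is correct and follows essentially the same route as the paper's proof: the linear decomposition $S_i(Y_i,\lambda_n)=S_i(Y_i,\lamcrit)+(h/\sqrt n)\widehat J_i(Y_i)$, the CLT for the mean-zero critical score, the law of large numbers giving $n^{-1}\sum_i\widehat J_i(Y_i)\to\xlo$, Slutsky, and continuity of the normal limit at zero. Your separate treatment of general sequences with $h_n=\sqrt n(\lambda_n-\lamcrit)\to h$ also covers the ``$u_n\to u$'' reformulation, which the paper states but does not argue separately.
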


\noindent
\textit{Proof.} Appendix~\ref{app:proof_local_multiplier_transition}.
\subsection{Hierarchical Prior}
\label{subsec:hierarchical}
% ------------------------------------------------------------------

Let \(\theta\sim\pi\). Conditional on \(\theta\), suppose
\(X_1,\ldots,X_n\overset{\mathrm{iid}}{\sim}F_\theta\) on the common support
\(\X=[\xlo,\xhi]\). Agents observe \(\theta\), while the planner observes only
\(Y\); incentive compatibility and individual rationality are imposed
conditional on \(\theta\). For every \(\theta\), define
\(\widehat x^\theta(y_i)=\E[X_i\mid\theta,Y_i=y_i]\),
\(\widehat J^\theta(y_i)=\E[\virt_\theta(X_i)\mid\theta,Y_i=y_i]\), and the
hierarchical per-agent and aggregate scores
\begin{equation}
\label{eq:hierarchical_score}
\Psi_i(y,\lambda)
=
\E_{\theta\mid y}
\left[
\widehat x^\theta(y_i)
+
\lambda\widehat J^\theta(y_i)
\right],
\qquad
G_\lambda^{\mathrm H}(y)
=
\sum_{i=1}^{n}\Psi_i(y,\lambda).
\end{equation}
Unlike the known-prior score, \(\Psi_i(y,\lambda)\) generally depends on the
entire signal profile through the posterior of \(\theta\).
We first study a conditional reduced-form relaxation. In this relaxation,
for each value of \(\theta\), the allocation must induce a conditionally
monotone interim allocation and therefore admits a
\(\theta\)-indexed envelope-payment schedule. Because the planner does not
observe \(\theta\), these conditional payment schedules need not be jointly
implementable by a single signal-measurable ex-post transfer. Full
implementation therefore requires the additional simultaneous Fredholm
compatibility condition described in
Remark~\ref{rmk:hierarchical_transfer}.
\begin{theorem}[Hierarchical posterior-score allocation]
\label{thm:hierarchical_allocation}
Suppose Assumptions~\ref{ass:hierarchical_regular},
\ref{ass:hierarchical_score_monotonicity}, and
\ref{ass:hierarchical_score_atomless} of
Appendix~\ref{app:hierarchical_proof} hold. Suppose further that the
hierarchical reduced-form revenue requirement is
strictly feasible: there exists an allocation
\(q_{\mathrm H}^{\circ}\) that induces a weakly increasing conditional
interim allocation for every \(\theta\) and satisfies
\[
\mathsf V_{\mathrm H}(q_{\mathrm H}^{\circ})
>
R.
\]
Then there exists a finite multiplier \(\lamstar\ge0\) such that
\begin{equation}
\label{eq:hierarchical_optimal_allocation}
q_{\mathrm H}^*(y)
=
\ind\!\left\{
G_{\lamstar}^{\mathrm H}(y)>0
\right\}
\qquad
\text{almost surely under the hierarchical truthful signal law}
\end{equation}
solves the hierarchical conditional reduced-form allocation problem.
For every \(\theta\), it induces a weakly increasing conditional interim
allocation and therefore admits a \(\theta\)-indexed conditional envelope
payment schedule. Moreover,
\[
\lamstar
\left(
\mathsf V_{\mathrm H}(q_{\mathrm H}^*)-R
\right)
=
0.
\]
\end{theorem}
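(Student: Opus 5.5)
The plan is to repeat the Lagrangian argument behind Theorem~\ref{thm:optimal_allocation}, now on the hierarchical relaxation, and to replace the known-prior posterior identities with their $\theta$-conditional versions followed by averaging over $\theta$.

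\emph{Step 1: reduced form.} For each $\theta$, the $\theta$-conditional interim allocation $Q_i^\theta(r_i)$ is defined exactly as in \eqref{eq:interim_allocation}, with $F_\theta$ replacing $F$. Conditional BIC and IR, with the zero-rent normalization imposed type by type in $\theta$, are handled by Proposition~\ref{prop:app_interim_bic} and Lemma~\ref{lem:expected_transfer_identity} applied to the prior $F_\theta$. Conditional expected revenue is then $\E[Q_i^\theta(X_i)\virt_\theta(X_i)\mid\theta]$.

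Conditional on $\theta$, the pairs $(X_i,Y_i)$ are independent. Hence the conditional analogue of Lemma~\ref{lem:posterior_representations} gives
\[
\E\!\left[q(Y)\textstyle\sum_i\virt_\theta(X_i)\mid\theta\right]
=\E\!\left[q(Y)\textstyle\sum_i\widehat J^\theta(Y_i)\mid\theta\right],
\]
and the same identity holds for welfare with $\widehat x^\theta$. Integrating over $\pi$ and using the tower property through $\E[\cdot\mid Y]$ gives
\[
\mathsf W_{\mathrm H}(q)+\lambda\mathsf V_{\mathrm H}(q)
=\E\!\left[q(Y)\,G^{\mathrm H}_\lambda(Y)\right].
\]
This step is valid because $q$ is $Y$-measurable, so $\E[q(Y)\widehat J^\theta(Y_i)]=\E[q(Y)\E_{\theta\mid Y}\widehat J^\theta(Y_i)]$. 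The reduced problem is therefore: maximize $\mathsf W_{\mathrm H}$ subject to $\mathsf V_{\mathrm H}\ge R$, over the set $\mathcal Q^{\mathrm{mon}}_{\mathrm H}$ of allocations whose conditional interim allocations are monotone for every $\theta$.

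\emph{Step 2: existence of a multiplier.} The set $\mathcal Q^{\mathrm{mon}}_{\mathrm H}$ is convex: monotonicity of $Q_i^\theta$ is linear in $q$, so it is an intersection of convex constraints. Both $\mathsf W_{\mathrm H}$ and $\mathsf V_{\mathrm H}$ are linear in $q$. The value is finite because $|q|\le1$ and the scores are integrable on the compact type space. Slater's condition holds by the assumed strict feasibility of $q_{\mathrm H}^\circ$. Lagrangian duality for a linear program with one inequality constraint over a convex set then yields $\lamstar\ge0$ such that any primal optimum maximizes $\E[qG^{\mathrm H}_{\lamstar}]$ over $\mathcal Q^{\mathrm{mon}}_{\mathrm H}$, and complementary slackness holds.

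To get a primal optimum at all, I would use weak-* compactness of $\{q:0\le q\le1\}$ in $L^\infty$ of the signal law. Both objective functionals and the monotonicity constraints are weak-* continuous, by the $L^1$ kernel continuity in Assumption~\ref{ass:compact_spaces}, which is inherited conditionally on $\theta$ in the hierarchical assumptions. So the feasible set is weak-* closed and an optimum exists.

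\emph{Step 3: pointwise maximization and monotonicity.} Over all measurable $q$ with values in $[0,1]$, $\E[qG^{\mathrm H}_{\lamstar}]$ is maximized by $\ind\{G^{\mathrm H}_{\lamstar}>0\}$. Assumption~\ref{ass:hierarchical_score_atomless} makes the tie set null, so the maximizer is almost surely unique. It remains to show that this unconstrained maximizer lies in $\mathcal Q^{\mathrm{mon}}_{\mathrm H}$; once it does, it solves the Lagrangian problem over the restricted class, and duality transfers optimality to the primal problem.

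\emph{Main obstacle.} The hard step is verifying that the pointwise maximizer lies in $\mathcal Q^{\mathrm{mon}}_{\mathrm H}$. Unlike the known-prior case, $\Psi_i$ depends on the whole profile through $\theta\mid y$, so coordinatewise monotonicity of $G^{\mathrm H}_\lambda$ is not automatic. I would invoke Assumption~\ref{ass:hierarchical_score_monotonicity}, which presumably states exactly that $G^{\mathrm H}_\lambda$ is coordinatewise weakly increasing in $y$. Given that, $y_i\mapsto\ind\{G^{\mathrm H}_{\lamstar}(y)>0\}$ is weakly increasing for each fixed $y_{-i}$.

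Fix $\theta$. Under conditional MLRP, $K(\cdot\mid r_i)$ increases in first-order stochastic dominance as $r_i$ increases (Lemma~\ref{lem:mlrp_fosd}). With $Y_{-i}$ drawn from the $\theta$-conditional truthful law, independent of $Y_i$ given $\theta$, the map $r_i\mapsto Q_i^\theta(r_i)$ is then weakly increasing. Conditional monotonicity gives, through the conditional envelope formula \eqref{eq:transfer_identity}, a $\theta$-indexed payment schedule. This completes the argument.
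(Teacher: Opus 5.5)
Your proposal is correct and follows the paper's proof essentially step for step. The shared steps are: the conditional posterior representations that turn the Lagrangian into $\E[q(Y)G^{\mathrm H}_{\lambda}(Y)]$; weak-* compactness of the conditionally monotone class, with a finite multiplier coming from strict feasibility; pointwise maximization, with the tie set null by Assumption~\ref{ass:hierarchical_score_atomless}; and conditional monotonicity from Assumption~\ref{ass:hierarchical_score_monotonicity} combined with conditional MLRP. You obtain the multiplier by Slater-type Lagrange duality where the paper uses a supporting hyperplane to the attainable set, which is immaterial.

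One small correction: weak-* continuity of $q\mapsto Q^q_{i,\theta}(x)$ (and of $\mathsf W_{\mathrm H}$, $\mathsf V_{\mathrm H}$) does not come from $L^1$ continuity of the kernel in $x$. It comes from the likelihood-ratio representers $\ell_{i,\theta,x}$ (and the posterior scores) belonging to $L^1(\nu_{\mathrm H})$, as posited in Assumption~\ref{ass:hierarchical_regular}.
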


\noindent
\textit{Proof.} Appendix~\ref{app:hierarchical_proof}.

\begin{remark}[Scope of the hierarchical theorem]
\label{rmk:hierarchical_scope}
Theorem~\ref{thm:hierarchical_allocation} solves a conditional reduced-form
relaxation. It does not by itself establish the existence of a single
signal-measurable transfer \(t_i(Y)\) that implements all of the
\(\theta\)-indexed conditional envelope-payment schedules simultaneously.
Accordingly, the value of the relaxation is an upper bound on the value of
the fully implementable hierarchical mechanism problem unless the
simultaneous range condition in
Remark~\ref{rmk:hierarchical_transfer} is verified.

The theorem also assumes a common conditional support
\([\xlo,\xhi]\) for all \(\theta\). A model in which \(\theta\) is itself
an unknown endpoint, such as
\(X_i\mid\theta\sim\Unif[\xlo,\theta]\), has
\(\theta\)-dependent support and is nonregular; it may be studied
numerically but is not covered by
Theorem~\ref{thm:hierarchical_allocation}.
\end{remark}
\begin{remark}[Hierarchical transfer implementation]
\label{rmk:hierarchical_transfer}
Full implementation requires a single signal-measurable transfer
\(t_i:\Y^n\to\R\) to satisfy the conditional Fredholm equation for every
\(\theta\). Conditional completeness identifies the corresponding
conditional opponent average whenever a solution exists, but it does not
imply that the family of required conditional interim payments lies in the
joint range generated by one common ex-post transfer. Thus conditional
envelope implementability for each \(\theta\) separately is necessary but
not sufficient for implementation in the original hierarchical mechanism.
\end{remark}

% ------------------------------------------------------------------
\subsection{Welfare Comparisons across Channels}
\label{subsec:channel_ordering}
% ------------------------------------------------------------------

Let \(\mathcal Q^{\mathrm{mon}}(K)\) denote the reduced-form allocation rules
that induce weakly increasing interim allocations under channel \(K\), and
define
\begin{equation}
\label{eq:reduced_value_channel}
W^{\mathrm{red}}(R;K)
=
\sup_{q\in\mathcal Q^{\mathrm{mon}}(K)}
\left\{
\E\!\left[
q(Y)\sum_{i=1}^{n}\widehat x_i(Y_i)
\right]:
\E\!\left[
q(Y)\sum_{i=1}^{n}\widehat J_i(Y_i)
\right]
\ge R
\right\}.
\end{equation}

\begin{proposition}[Blackwell monotonicity]
\label{prop:blackwell_monotone}
If \(K_1\succeq_{\mathrm B}K_2\), then
\begin{equation}
\label{eq:blackwell_value_monotonicity}
W^{\mathrm{red}}(R;K_1)
\ge
W^{\mathrm{red}}(R;K_2)
\end{equation}
for every prior, population size, and revenue target that is reduced-form
feasible under \(K_2\).
\end{proposition}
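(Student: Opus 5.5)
The plan is a garbling (replication) argument that works entirely at the level of interim allocations. I take $K_1\succeq_{\mathrm B}K_2$ to mean that there is a Markov kernel $\Gamma(dy'\mid y)$ from $\Y$ to $\Y$ with $K_2(\cdot\mid x)=\int_{\Y}\Gamma(\cdot\mid y)\,K_1(dy\mid x)$ for every $x\in\X$. The first observation is that both reduced-form functionals depend on $q$ only through its interim allocations. By \eqref{eq:posterior_welfare}, \eqref{eq:posterior_revenue} and Lemma~\ref{lem:posterior_representations}, $\mathsf W(q)=\E[q(Y)\sum_iX_i]=\sum_i\E[Q_i^q(X_i)X_i]$ and $\mathsf V(q)=\E[q(Y)\sum_i\virt(X_i)]=\sum_i\E[Q_i^q(X_i)\virt(X_i)]$. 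Here the second equality in each case follows by conditioning on $X_i$ and using the independence of $X_i$ from $Y_{-i}$ under the known prior. Consequently it suffices to show the following: for every $q_2\in\mathcal Q^{\mathrm{mon}}(K_2)$ there is some $q_1\in\mathcal Q^{\mathrm{mon}}(K_1)$ whose interim allocations under $K_1$ coincide with those of $q_2$ under $K_2$.

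Next I construct $q_1$ by applying the garbling independently to each coordinate:
\[
q_1(y)=\int_{\Y^n}q_2(y')\prod_{i=1}^{n}\Gamma(dy_i'\mid y_i).
\]
This map is measurable in $y$ and takes values in $[0,1]$, by standard properties of Markov kernels. I then verify that $Q_i^{q_1,K_1}(r)=Q_i^{q_2,K_2}(r)$ for every $r\in\X$, using Fubini--Tonelli, which is justified because $0\le q_2\le1$ (see Assumption~\ref{ass:compact_spaces} and Appendix~\ref{app:auxiliary}). Integrating $y_i$ against $k_1(y_i\mid r)$ and then applying $\Gamma$ yields $K_2(\cdot\mid r)$. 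For each $j\ne i$, integrating $y_j$ against the $K_1$-marginal $m_1$ and then applying $\Gamma$ yields $\int\Gamma(\cdot\mid y)\int k_1(y\mid x)f(x)\,dx\,dy=\int K_2(\cdot\mid x)f(x)\,dx$, which is exactly the $K_2$-marginal $m_2$. The product structure \eqref{eq:joint_signal_density} of the opponents' signal law is preserved because the garblings act independently across coordinates.

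It follows that $Q_i^{q_1,K_1}$ is weakly increasing, so $q_1\in\mathcal Q^{\mathrm{mon}}(K_1)$, and that $\mathsf W$ and $\mathsf V$ take the same values under $(q_1,K_1)$ as under $(q_2,K_2)$. Every value attainable under $K_2$ subject to $\mathsf V\ge R$ is therefore attainable under $K_1$. Taking suprema gives \eqref{eq:blackwell_value_monotonicity}, and the feasibility hypothesis under $K_2$ ensures that the right-hand side is not $-\infty$.

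I expect the main obstacle to be bookkeeping rather than a conceptual difficulty. First, one must check that the garbling is applied agent-by-agent, so that the opponents remain independent with the correct marginal $m_2$. Second, one must ensure the posterior representations are applied in their type-based form, since the posterior objects $\widehat x_i$ and $\widehat J_i$ themselves differ across the two channels. A further caveat is that the definitions in \eqref{eq:interim_allocation} are written with densities. If $\Gamma$ has no density, I would phrase the whole argument directly with measures, or alternatively verify that $K_2$ has a density whenever $K_1$ does, which holds because $K_2(\cdot\mid x)$ is then absolutely continuous with respect to the push-forward of the measure that dominates $K_1$.
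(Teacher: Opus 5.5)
Your proposal is correct and follows essentially the same route as the paper: you garble each coordinate of the $K_1$ signal independently through the Blackwell kernel and set $q_1(y)=\E[q_2(Z)\mid Y=y]$. This reproduces the $K_2$ interim allocations (hence monotonicity) and the $K_2$ welfare and revenue values, so every $K_2$-feasible pair is attainable under $K_1$. The only difference is presentational: the paper equates welfare and revenue directly from the fact that $(X,Z)$ has the $K_2$ joint law, whereas you route this through the type-based interim representations $\sum_i\E[Q_i^q(X_i)X_i]$ and $\sum_i\E[Q_i^q(X_i)\virt(X_i)]$ and spell out the opponents' marginal bookkeeping that the paper leaves implicit.
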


\noindent
\textit{Proof.} Appendix~\ref{app:proof_blackwell_monotone}.

\begin{remark}[Reduced-form versus full implementation]
\label{rmk:blackwell_reduced_form_scope}
Proposition~\ref{prop:blackwell_monotone} compares reduced-form values.
Equality with the value of the original mechanism problem additionally
requires transfer implementation under the relevant channel.
\end{remark}

% ..................................................................
\subsubsection{Equal-Scale Laplace--Logistic Comparison}
\label{subsubsec:equal_scale}
% ..................................................................

\begin{proposition}[Laplace dominates logistic at equal scale]
\label{prop:laplace_logistic_blackwell}
For every \(s>0\),
\begin{equation}
\label{eq:equal_scale_blackwell_order}
K_{\mathrm{Lap},s}
\succeq_{\mathrm B}
K_{\mathrm{Log},s},
\end{equation}
and hence
\(W^{\mathrm{red}}(R;K_{\mathrm{Lap},s})\ge
W^{\mathrm{red}}(R;K_{\mathrm{Log},s})\). Since both channels have pure-LDP
frontier scale \(s=\Delta_{\X}/\epsilon\), the same ordering holds under a
common pure-\(\epsilon\) calibration.
\end{proposition}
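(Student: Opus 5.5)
The plan is to exhibit an explicit garbling: I will show that logistic noise of scale \(s\) is distributed as Laplace noise of scale \(s\) plus an independent, report-free noise variable \(Z\). Given such a decomposition, the channel \(K_{\mathrm{Log},s}\) is obtained from \(K_{\mathrm{Lap},s}\) by the Markov kernel \(y\mapsto y+Z\). This kernel does not depend on the report \(r\). That is exactly the definition of \(K_{\mathrm{Lap},s}\succeq_{\mathrm B}K_{\mathrm{Log},s}\), and it holds for every prior and every \(n\), since the garbling acts coordinatewise and independently across agents. The welfare inequality then follows directly from Proposition~\ref{prop:blackwell_monotone}.

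The decomposition will be done with characteristic functions. The \(\Lap(0,s)\) noise has characteristic function \(\phi_L(t)=1/(1+s^2t^2)\). The \(\Logistic(0,s)\) noise has characteristic function \(\phi_{\mathrm{Log}}(t)=\pi st/\sinh(\pi st)\). I will use Euler's product \(\sinh(\pi z)/(\pi z)=\prod_{k\ge1}(1+z^2/k^2)\) with \(z=st\). This gives
\[
\phi_{\mathrm{Log}}(t)=\frac{1}{1+s^2t^2}\prod_{k\ge2}\frac{1}{1+s^2t^2/k^2}
=\phi_L(t)\,\phi_Z(t),
\]
where \(\phi_Z\) is the characteristic function of \(Z=\sum_{k\ge2}\xi_k\) with independent \(\xi_k\sim\Lap(0,s/k)\).

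Next I will verify that \(Z\) is a well-defined random variable whose characteristic function is the infinite product. The \(\xi_k\) are independent, mean zero, and have summable variances \(\sum_{k\ge2}2s^2/k^2<\infty\). By Kolmogorov's two-series theorem (or \(L^2\) convergence) the series converges almost surely. By independence and dominated convergence, the characteristic function of the limit is the limit of the partial products, which is \(\phi_Z\). Uniqueness of characteristic functions then gives \(\eta^{\mathrm{Log}}\overset{d}{=}\eta^{L}+Z\) with \(Z\perp\eta^L\). Consequently \(Y^{\mathrm{Log}}=r+\eta^{L}+Z\) is a report-independent garbling of \(Y^{L}=r+\eta^L\).

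I expect the main obstacle to be this step: justifying the infinite-product factorization rigorously, meaning convergence of the series and identification of its characteristic function. The algebra via Euler's product is routine; the care lies in showing the limiting object is a genuine noise variable independent of the report. The final sentence of the statement is bookkeeping. By Table~\ref{tab:channel_families}, both families attain pure \(\epsilon\)-LDP at the common scale \(s=\Delta_{\X}/\epsilon\). Applying the dominance at that \(s\) therefore yields the ordering under a common pure-\(\epsilon\) calibration.
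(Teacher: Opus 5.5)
Your proposal is correct and matches the paper's proof. The paper also uses Euler's product to split the logistic characteristic function into the $\Lap(0,s)$ factor times $\prod_{k\ge2}(1+s^2t^2/k^2)^{-1}$, realizes that product as $\sum_{k\ge2}\Lap(0,s/k)$ (convergent because the variances are summable), reads the logistic channel as a report-independent garbling of the Laplace channel, and then invokes Proposition~\ref{prop:blackwell_monotone}; your dominated-convergence step identifying the characteristic function of the limiting series fills in a detail the paper leaves implicit.
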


\noindent
\textit{Proof.} Appendix~\ref{app:equal_scale_blackwell}.

% ..................................................................
\subsubsection{Common Tight-\texorpdfstring{\(\mu\)}{mu} Calibration}
\label{subsubsec:common_mu}
% ..................................................................

Let \(\mu_K(s)\) denote the smallest GDP parameter for which the channel at
scale \(s\) satisfies \(\mu_K(s)\)-GDP.

\begin{proposition}[Common tight-\(\mu\) frontier scales]
\label{prop:common_mu_calibration}
For every \(\mu>0\), the Gaussian, Laplace, and logistic frontier scales are
\begin{equation}
\label{eq:common_mu_scales}
\sigma^*(\mu)
=
\frac{\Delta_{\X}}{\mu},
\qquad
b_L^*(\mu)
=
\frac{\Delta_{\X}}
{-2\log\!\bigl(2\Phi(-\mu/2)\bigr)},
\qquad
\beta^*(\mu)
=
\frac{\Delta_{\X}}
{2\log\!\bigl(
\Phi(\mu/2)/\Phi(-\mu/2)
\bigr)}.
\end{equation}
They satisfy
\(\mu_G(\sigma^*(\mu))=\mu_L(b_L^*(\mu))=\mu_{\mathrm{Log}}(\beta^*(\mu))=\mu\)
and
\begin{equation}
\label{eq:main_scale_ordering}
\beta^*(\mu)
<
b_L^*(\mu).
\end{equation}
For each family, the least-private pair over \(\X\) is \((\xlo,\xhi)\).
\end{proposition}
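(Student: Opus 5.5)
The plan rests on one reduction. For an additive location channel \(K(\cdot\mid r)=P_0(\cdot-r)\), the trade-off function \(\mathcal T(K(\cdot\mid r),K(\cdot\mid r'))\) depends only on the shift \(d=|r-r'|\) and the scale. Once we show it is decreasing in \(d\), the tight GDP parameter is determined by the single worst pair \((\xlo,\xhi)\), with separation \(\Delta_{\X}\).

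\textbf{Step 1: monotonicity in the shift.} For the three families, each noise density is log-concave, so the location family has MLRP, and Neyman--Pearson tests are threshold tests \(\ind\{y>c\}\). The type-II error at level \(\alpha\) is then \(F_0(F_0^{-1}(1-\alpha)-d/s)\), which is decreasing in \(d\). This proves the final claim: the least-private pair is \((\xlo,\xhi)\). Hence \(\mu_K(s)\) is the smallest \(\mu\) with \(\mathcal T(P_0,P_0(\cdot-\Delta_{\X}/s))\ge G_\mu\).

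\textbf{Step 2: computing \(\mu_K\).}
\begin{itemize}
\item \emph{Gaussian.} The trade-off function equals \(G_{\Delta_{\X}/\sigma}\) exactly, so \(\mu_G(\sigma)=\Delta_{\X}/\sigma\).
\item \emph{Laplace and logistic.} These are symmetric experiments whose trade-off curves have a kink or differ in shape from \(G_\mu\). For symmetric trade-off functions, I would argue (following \citet{dong2022gaussian}) that \(f\ge G_\mu\) is governed by the point \(\alpha^*\) with \(f(\alpha^*)=\alpha^*\), i.e. the test with equal errors. This holds because each relevant curve touches \(G_\mu\) first at its symmetric point. For Laplace this is the standard fact that \(f_{\epsilon}\)-type pure-DP curves (piecewise linear, with a kink at the diagonal) dominate \(G_\mu\) iff the kink lies above \(G_\mu\). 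For logistic I would verify it via the log-likelihood ratio: check that \(f(\alpha)-G_\mu(\alpha)\) has its minimum at the diagonal, by comparing slopes.
\end{itemize}
At the symmetric point with threshold \(c=d/2\), the error is \(F_0(-d/(2s))\), while \(G_\mu\) has diagonal value \(\Phi(-\mu/2)\). The tight calibration therefore solves \(F_0(-\Delta_{\X}/(2s))=\Phi(-\mu/2)\).
\begin{itemize}
\item Laplace, \(F_0(-u)=\tfrac12e^{-u/b}\): this gives \(\tfrac12e^{-\Delta_{\X}/(2b)}=\Phi(-\mu/2)\), i.e. \(b=\Delta_{\X}/(-2\log(2\Phi(-\mu/2)))\).
\item Logistic, \(F_0(-u)=1/(1+e^{u/\beta})\): this gives \(e^{\Delta_{\X}/(2\beta)}=\Phi(\mu/2)/\Phi(-\mu/2)\), which is the stated \(\beta^*\).
\end{itemize}
These match the display, and plugging them back in verifies \(\mu_K=\mu\).

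\textbf{Step 3: the ordering.} We need \(2\log(\Phi(\mu/2)/\Phi(-\mu/2))>-2\log(2\Phi(-\mu/2))\). Writing \(p=\Phi(-\mu/2)\in(0,\tfrac12)\), this is \((1-p)/p>1/(2p)\), i.e. \(1-p>\tfrac12\), which is true. Hence \(\beta^*<b_L^*\).

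\textbf{Main obstacle.} The hard step is the claim in Step 2 that domination \(f\ge G_\mu\) is decided at the diagonal point for the Laplace and logistic curves. For Laplace I expect it to follow from the explicit piecewise form and concavity comparisons with \(G_\mu\). For logistic I would first try showing that \(G_\mu^{-1}\circ f\) is, in Gaussian coordinates \(\alpha\mapsto\Phi^{-1}\), a curve whose distance from the diagonal is maximized at the symmetric point. This reduces to a log-concavity or tail-comparison inequality between the logistic and normal quantile functions. If that fails, I would fall back on a direct numerical-analytic check of \(\partial_\alpha\) of the difference.
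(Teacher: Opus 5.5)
Your surrounding steps are correct and agree with the paper. Threshold tests remain Neyman--Pearson optimal even under the weak Laplace MLRP, so the type-II error $F_0(F_0^{-1}(1-\alpha)-d/s)$ decreases in $d$ and $(\xlo,\xhi)$ is least private. Your equal-error computations reproduce $b_L^*(\mu)$ and $\beta^*(\mu)$. The ordering reduces to $p=\Phi(-\mu/2)<1/2$ exactly as in the paper.

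The gap is the step you flag yourself, and it is the substance of the proposition. Let $p$ be the channel's equal-error level, $\mathcal T(p)=p$. Requiring $\mathcal T\ge G_\mu$ at $\alpha=p$ only yields the lower bound $\mu_K\ge-2\Phi^{-1}(p)$. Tightness needs $\mathcal T(\alpha)\ge G_\mu(\alpha)$ for \emph{every} $\alpha$ when $\mu=-2\Phi^{-1}(p)$. ``Touching first at the symmetric point'' is not a general property of symmetric trade-off functions. For example, a symmetric $(\epsilon,\delta)$ curve with $\delta>0$ equals $1-\delta<1$ at $\alpha=0$, so it is $\mu$-GDP for no finite $\mu$. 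The property must therefore be proved family by family, and your proposal proves it for neither family.

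\textbf{Laplace.} Your justification misidentifies the curve: the Laplace trade-off function is not the piecewise-linear pure-DP curve $f_\epsilon$. Its middle branch is the hyperbola $p^2/\alpha$ on $[2p^2,1/2]$. With $\epsilon=\Delta_{\X}/b$, the kink of $f_\epsilon$ lies at $1/(1+e^{\epsilon})<\tfrac12e^{-\epsilon/2}=p$. So applying the kink argument to the valid bound $\mathcal T_L\ge f_\epsilon$ gives only a strictly larger, non-tight $\mu$.

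This step is repairable. On the two linear branches, $G_\mu-\mathcal T_L$ is convex, so only the branch endpoints matter. There the difference vanishes ($\alpha\in\{0,1\}$) or the endpoint is covered by the middle-branch bound. On the middle branch you need $\alpha G_\mu(\alpha)\le p^2$. Writing $\alpha=\Phi(-z)$, this becomes $\Phi(-z)\Phi(z-\mu)\le\Phi(-\mu/2)^2$. That follows from log-concavity of $\Phi$ and the symmetry $z\mapsto\mu-z$, and it is equivalent to the paper's monotonicity of $A(u)=u/\varphi(\Phi^{-1}(u))$.

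\textbf{Logistic.} In log-odds coordinates the curve is the line $\log\frac{\alpha}{1-\alpha}+\log\frac{\mathcal T}{1-\mathcal T}=-\rho$. The required inequality is that $\ell\mapsto\Phi^{-1}\bigl((1+e^{-\ell})^{-1}\bigr)$ has an even derivative that is nonincreasing in $|\ell|$. Equivalently, $B(u)=u(1-u)/\varphi(\Phi^{-1}(u))$ must be increasing on $(0,1/2)$. This is a genuinely nontrivial Gaussian-isoperimetric inequality, the paper's normal-quantile monotonicity lemma, proved via the sign of $2\varphi(\Phi^{-1}(u))^2-u(1-u)$. Naming it as the main obstacle, with a numerical fallback, does not establish it.

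For reference, the paper handles both families through $D_T(\alpha)=-\Phi^{-1}(\alpha)-\Phi^{-1}(\mathcal T(\alpha))$, whose supremum is the tight GDP index. It uses self-inverseness to reduce to $\alpha<p$, and shows $D_T'>0$ there via $A$ (Laplace) and $B$ (logistic).
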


\noindent
\textit{Proof.} Appendix~\ref{app:tight_gdp_indices}.

Because \(\beta^*(\mu)<b_L^*(\mu)\), the common-\(\mu\) comparison is not an
equal-scale comparison, so Proposition~\ref{prop:laplace_logistic_blackwell}
does not determine the ordering under this calibration.

% ..................................................................
\subsubsection{Endpoint Dominance under Common Tight \texorpdfstring{\(\mu\)}{mu}}
\label{subsubsec:endpoint}
% ..................................................................

Restrict the report space to the binary endpoint state space
\(\{\xlo,\xhi\}\), and let
\(K_{\mathrm{Lap},\mu}^{\mathrm{end}}\) and
\(K_{\mathrm{Log},\mu}^{\mathrm{end}}\) denote the tightly calibrated endpoint
experiments.

\begin{theorem}[Endpoint dominance at common \(\mu\)-GDP]
\label{thm:endpoint_roc}
For every \(\mu>0\),
\begin{equation}
\label{eq:endpoint_tradeoff_order}
\mathcal T_{\mathrm{Log},\mu}^{\mathrm{end}}(\alpha)
\le
\mathcal T_{\mathrm{Lap},\mu}^{\mathrm{end}}(\alpha)
\qquad
\forall\alpha\in[0,1],
\end{equation}
equivalently
\(K_{\mathrm{Log},\mu}^{\mathrm{end}}\succeq_{\mathrm B}
K_{\mathrm{Lap},\mu}^{\mathrm{end}}\). Therefore, for every endpoint-state
decision problem,
\begin{equation}
\label{eq:endpoint_welfare_order}
W_{\mathrm{end}}^{\mathrm{red}}
\!\left(
R;K_{\mathrm{Log},\mu}^{\mathrm{end}}
\right)
\ge
W_{\mathrm{end}}^{\mathrm{red}}
\!\left(
R;K_{\mathrm{Lap},\mu}^{\mathrm{end}}
\right)
\end{equation}
whenever the revenue target is feasible under the Laplace endpoint
experiment.
\end{theorem}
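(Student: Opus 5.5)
The plan is to compute both endpoint trade-off functions explicitly and compare them in odds coordinates. The key observation is that the two tightly calibrated endpoint experiments share a common symmetric point.

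\emph{Step 1 (reduction to threshold tests).} Both channels satisfy MLRP (Remark~\ref{rmk:strict_weak_mlrp}). For the binary experiment $K(\cdot\mid\xlo)$ versus $K(\cdot\mid\xhi)$, the Neyman--Pearson lemma therefore says the optimal tests are thresholds on $y$, randomized on flat regions of the Laplace likelihood ratio. Shifting so that $\xlo=0$ and writing $\Delta=\Delta_{\X}$ and $F_s$ for the noise CDF, the trade-off function is the lower convex curve traced by
\[
\alpha=1-F_s(c),\qquad \mathcal T=F_s(c-\Delta).
\]
Both curves are symmetric about the diagonal, because the noises are symmetric. Hence it suffices to prove \eqref{eq:endpoint_tradeoff_order} for $\alpha\le p$, where $(p,p)$ is the fixed point.

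\emph{Step 2 (common fixed point).} Let $p=\Phi(-\mu/2)$.
For the logistic channel, the threshold $c=\Delta/2$ gives $\alpha=\mathcal T=1/(1+e^{\Delta/(2\beta)})$, and the calibration $\beta^*(\mu)$ of Proposition~\ref{prop:common_mu_calibration} makes this equal to $p$.
For the Laplace channel, the same threshold gives $\alpha=\mathcal T=\tfrac12 e^{-\Delta/(2b)}$, and $b_L^*(\mu)$ makes $e^{-\Delta/(2b)}=2p$, so again $\alpha=\mathcal T=p$.
So both tight curves pass through $(p,p)$, which is also the symmetric point of $G_\mu$.

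\emph{Step 3 (logistic curve in odds form).} Write $o(a)=a/(1-a)$. For the logistic channel, $o(\alpha)=e^{-c/\beta}$ and $o(\mathcal T)=e^{(c-\Delta)/\beta}$. Hence
\[
o(\alpha)\,o(\mathcal T_{\mathrm{Log}})=e^{-\Delta/\beta}=o(p)^2 .
\]
The logistic trade-off is thus the hyperbola $o(\mathcal T)=o(p)^2/o(\alpha)$. Since $o$ is increasing, it suffices to show
\[
o(\alpha)\,o(\mathcal T_{\mathrm{Lap}}(\alpha))\ge p^2/(1-p)^2\qquad\text{for }\alpha\le p .
\]

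\emph{Step 4 (Laplace curve, two regions).} Put $\epsilon=\Delta/b$, so $e^{-\epsilon}=4p^2$. The standard piecewise formula for the Laplace shift trade-off gives two cases on $\alpha\le p$.

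On the middle region $2p^2\le\alpha\le p$, we have $\mathcal T_{\mathrm{Lap}}=e^{-\epsilon}/(4\alpha)=p^2/\alpha$. The required inequality reduces to
\[
(1-p)^2\ge(1-\alpha)(1-p^2/\alpha),
\]
that is, $\alpha+p^2/\alpha\ge2p$. This is AM--GM.

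On the outer region $\alpha\le 2p^2$, we have $\mathcal T_{\mathrm{Lap}}=1-\alpha/(4p^2)$. The odds product becomes $(4p^2-\alpha)/(1-\alpha)$, which is decreasing in $\alpha$ because $4p^2<1$. It therefore suffices to check the endpoint $\alpha=2p^2$. There the required inequality is $2p^2/(1-2p^2)\ge p^2/(1-p)^2$, which reduces to $(2p-1)^2\ge0$.

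Symmetry then covers $\alpha\ge p$, and the boundary values at $\alpha\in\{0,1\}$ are trivial.

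\emph{Step 5 (Blackwell order and welfare).} The trade-off ordering is equivalent to Blackwell dominance of $K_{\mathrm{Log},\mu}^{\mathrm{end}}$ over $K_{\mathrm{Lap},\mu}^{\mathrm{end}}$ for binary experiments, by the standard garbling characterization. The welfare inequality \eqref{eq:endpoint_welfare_order} then follows exactly as in Proposition~\ref{prop:blackwell_monotone}, applied on the endpoint state space.

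The main obstacle is Step 4: pinning down the correct piecewise Laplace trade-off, including the randomization on its flat likelihood-ratio tails, and confirming that its kink sits at $\alpha=e^{-\epsilon}/2=2p^2$. Once that formula is in place, the odds-coordinate comparison reduces to the two elementary inequalities above.
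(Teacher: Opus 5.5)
Your proposal is correct and follows essentially the same route as the paper: both use the explicit calibrated Laplace and logistic endpoint trade-off functions and compare them branch by branch with elementary inequalities. Your middle-branch AM--GM step is exactly the paper's factorization $T_L-T_{\mathrm{Log}}=(\alpha-p)^2/(\alpha(1+B\alpha))$. Both proofs then conclude via the binary-experiment equivalence between trade-off ordering and Blackwell dominance, plus Proposition~\ref{prop:blackwell_monotone}. The only differences are bookkeeping. You work in odds coordinates, where the logistic curve is the hyperbola $o(\alpha)\,o(\mathcal T)=o(p)^2$, and you use self-inversion through the common fixed point $(p,p)$ to skip the branch $\alpha\ge\tfrac12$; the paper instead factors $T_L-T_{\mathrm{Log}}$ directly on all three branches.
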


\noindent
\textit{Proof.} Appendix~\ref{app:endpoint_common_mu}.

\begin{remark}[Scope of the endpoint theorem]
\label{rmk:endpoint_scope}
Theorem~\ref{thm:endpoint_roc} applies only to the binary subexperiment
generated by the maximally separated reports \(\xlo\) and \(\xhi\). It does
not establish a Blackwell ordering between the tightly calibrated Laplace and
logistic channels on the full continuous report space.
\end{remark}

\begin{remark}[Gaussian benchmark]
\label{rmk:channel_scope}
At \(\sigma^*(\mu)=\Delta_{\X}/\mu\), the Gaussian endpoint experiment has
trade-off function \(G_\mu\) at every testing level. The tightly calibrated
Laplace and logistic endpoint trade-off functions satisfy
\(\mathcal T(\alpha)\ge G_\mu(\alpha)\), with equality at
\(\alpha\in\{0,\Phi(-\mu/2),1\}\) and strict inequality at every other
interior testing level.
\end{remark}

Outside the exact comparisons above, no universal ranking of the Gaussian,
Laplace, and logistic channels is asserted. The numerical comparisons in
Section~\ref{subsec:sim_channel_welfare} are therefore specification dependent
unless supported by Proposition~\ref{prop:laplace_logistic_blackwell} or
Theorem~\ref{thm:endpoint_roc}.

\section{Numerical Illustrations}
\label{sec:numerics}

This section~\footnote{\noindent\textbf{Note:}
 The code for all numerical results, together with the
additional material, is available at
\url{https://github.com/BehroozMoosavi/Codes/tree/main/private_mechanism_design}.} connects the main results to simulation. The four experiments
trace, in turn, the price the planner pays for privacy in the responsiveness of provision,
the knife-edge at which budget pressure switches a public good from always
funded to never funded, the difficulty of financing provision when the marginal
agent is break-even, and the channel-choice problem a regulator faces once
privacy is budgeted in an economically meaningful unit. Throughout we use the
three additive channels of Table~\ref{tab:channel_families}---Gaussian
\(Y=X+\eta^{G}\), Laplace \(Y=X+\eta^{L}\), and logistic
\(Y=X+\eta^{\mathrm{Log}}\)---and a uniform type distribution, with the support
chosen in each experiment to probe the relevant regime. Posterior moments
\(\widehat x(y)=\E[X\mid Y=y]\) and \(\widehat J(y)=\E[\virt(X)\mid Y=y]\) are
evaluated by quadrature against the prior; implementation frequencies and
revenue are estimated by Monte Carlo with common random numbers across channels,
so that channel comparisons are paired. The priors and privacy calibrations used
in each experiment are recorded in Appendix~\ref{app:numerics}.

% ---------------------------------------------------------------------
\subsection{The Price of Privacy in the Provision Rule}
\label{subsec:sim_scores}
% ---------------------------------------------------------------------
The optimal rule of Theorem~\ref{thm:optimal_allocation} provides the public
good when the aggregate posterior score \(G_\lambda=\sum_i S_i\) is positive
(under the no-mass condition the threshold set is null, so the value assigned on
\(\{G_\lambda=0\}\) is immaterial). Economically, each per-agent score
\[
  S_i(y_i,\lambda)
  =
  \underbrace{\widehat x_i(y_i)}_{\text{expected value}}
  +
  \lambda\,
  \underbrace{\widehat J_i(y_i)}_{\text{revenue net of rents}},
  \qquad
  \virt(x)=x-\frac{1-F(x)}{f(x)},
\]
is the planner's case for building attributable to agent \(i\): the agent's
expected value for the good plus the revenue that can be extracted from that
agent net of the information rent the agent commands, the two priced against each
other by the shadow value \(\lambda\ge0\) of the revenue requirement. The
privacy channel enters only through how sharply this case responds to the
agent's noisy report.

Figure~\ref{fig:scores} plots \(S(\cdot,\lambda)\) for each channel on a uniform
prior. The Gaussian and logistic scores rise strictly with the signal: every
report, however extreme, still moves the provision decision, so the good remains
responsive to the private intensity of an agent's preference. Under Laplace noise, the
score varies over the interior region \(y\in(\xlo,\xhi)\) and is exactly flat on
the global tails \(y\le\xlo\) and \(y\ge\xhi\): the privacy mechanism has pooled all sufficiently
enthusiastic (or sufficiently reluctant) reporters into a single
indistinguishable group, because the bounded prior makes the Laplace likelihood
ratio constant in the tails (Remark~\ref{rmk:strict_weak_mlrp}). For those
agents, expressing more enthusiasm buys no additional weight in the decision. The
economic content of the channel is thus an elasticity: privacy noise governs how
strongly public provision can track private values, and a coarser channel caps
that responsiveness precisely for the types whose preferences are most extreme.

\begin{figure}[t]
  \centering
  \includegraphics[width=0.8\linewidth]{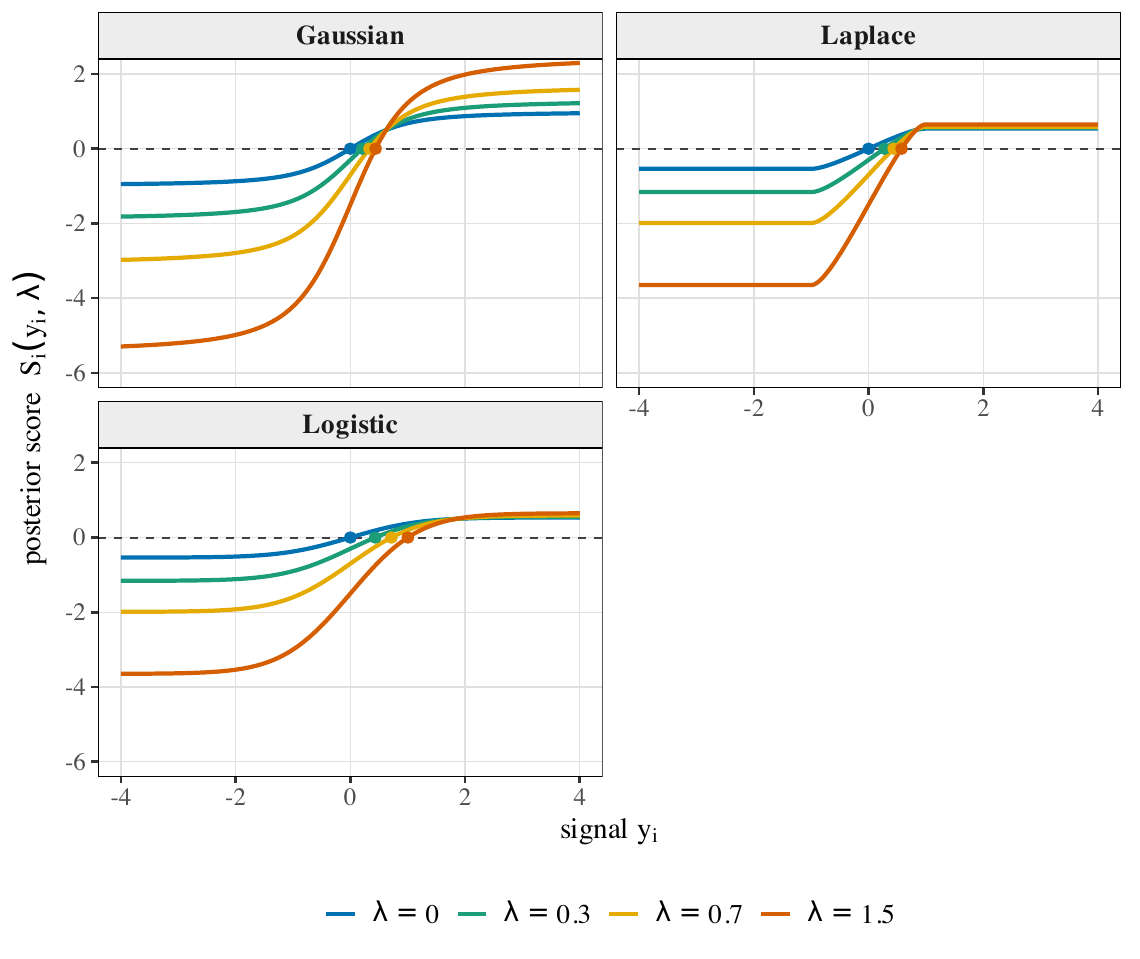}
  \caption{The per-agent score \(S(y,\lambda)=\widehat x(y)+\lambda\widehat
  J(y)\)---the planner's case for building attributable to one agent---against
  that agent's report \(y\), for the three channels and several values of the
  revenue price \(\lambda\). Gaussian and logistic respond to every report
  (strict MLRP), so provision stays sensitive to private value; Laplace is flat
  outside \([\xlo,\xhi]\) (tail pooling), so privacy caps how far provision can
  respond for the most extreme types. This score is the object thresholded by
  the optimal rule of Theorem~\ref{thm:optimal_allocation}.}
  \label{fig:scores}
\end{figure}

% ---------------------------------------------------------------------
\subsection{When Budget Pressure Switches the Good Off}
\label{subsec:sim_knife_edge}
% ---------------------------------------------------------------------
The multiplier \(\lambda\) is the weight the planner places on revenue relative
to welfare, and Proposition~\ref{prop:asymptotic_revenue} shows that whether the
good is funded is governed by the per-agent mean score
\[
  \mu_S(\lambda)=\E[X]+\lambda\,\xlo,
\]
the average net social value the planner perceives once revenue is priced at
\(\lambda\). The decisive primitive is \(\xlo=\E[\widehat J_i(Y_i)]\)
(Lemma~\ref{lem:lower_endpoint}), the lower endpoint of the type distribution,
equal to the mean posterior virtual value, i.e.\ the average virtual surplus
per agent. The mean score
changes sign at the critical price
\[
  \lambda_{\mathrm{crit}}
  =
  -\,\E[X]/\xlo,
\]
where \(\mu_S(\lambda_{\mathrm{crit}})=0\). In the numerical design \(X\sim\Unif[-0.5,1.5]\), so
\(\xlo<0\) and \(\mu_S'(\lambda)=\xlo<0\); hence \(\mu_S(\lambda)>0\) for
\(\lambda<\lambda_{\mathrm{crit}}\) and \(\mu_S(\lambda)<0\) for \(\lambda>\lambda_{\mathrm{crit}}\), and by the
law of large numbers the good is funded with probability tending to one for
\(\lambda<\lambda_{\mathrm{crit}}\) and to zero for \(\lambda>\lambda_{\mathrm{crit}}\) (the two directions
exchange when \(\xlo>0\)). The economic reading is stark: as the population grows
the public good is essentially always provided while the budget is loose and
essentially never provided once it tightens past \(\lambda_{\mathrm{crit}}\), with no smooth
interior trade-off. A small increase in fiscal pressure can flip a project from
``always built'' to ``never built.''

How privacy bears on this cliff is the content of the central-limit refinement.
A separate local calculation around \(\lambda_{\mathrm{crit}}\), applied to sequences satisfying
\(\sqrt n\,(\lambda_n-\lambda_{\mathrm{crit}})=O(1)\), gives a funding probability of
approximately \(\Phi\!\big(\sqrt n\,\mu_S(\lambda)/\sigma_S(\lambda_{\mathrm{crit}})\big)\),
where \(\sigma_S^2(\lambda_{\mathrm{crit}})=\Var\!\big(S_i(Y_i,\lambda_{\mathrm{crit}})\big)\). Writing the
standardized budget pressure
\[
  u
  =
  -\,\frac{\sqrt n\,\mu_S(\lambda)}{\sigma_S(\lambda_{\mathrm{crit}})}
  =
  -\,\frac{\sqrt n\,\xlo\,(\lambda-\lambda_{\mathrm{crit}})}{\sigma_S(\lambda_{\mathrm{crit}})},
\]
which absorbs the constant slope \(\mu_S'(\lambda_{\mathrm{crit}})=\xlo\), the probability
converges to \(\Phi(-u)\) for every channel. Figure~\ref{fig:knife} confirms
this: the curves for \(n\in\{25,\dots,500\}\) collapse onto the common Gaussian
limit. For nondegenerate channels satisfying
\(\sigma_S^2(\lambda_{\mathrm{crit}})>0\), the privacy channel enters the
local transition through the scale
\(\sigma_S(\lambda_{\mathrm{crit}})\). It stretches or compresses the
transition, sharpening or blurring how precisely the planner can locate the
funding margin, but it does not move the mean-score boundary
\(\lambda_{\mathrm{crit}}\). The boundary is channel invariant because
posterior expectations preserve the unconditional means
\(\E[X]\) and \(\E[\widehat J(Y)]=\xlo\). Whether a public good is fundable is set by the mean posterior virtual value
\(\xlo\), not by the privacy technology.

\begin{figure}[t]
  \centering
  \includegraphics[width=\linewidth]{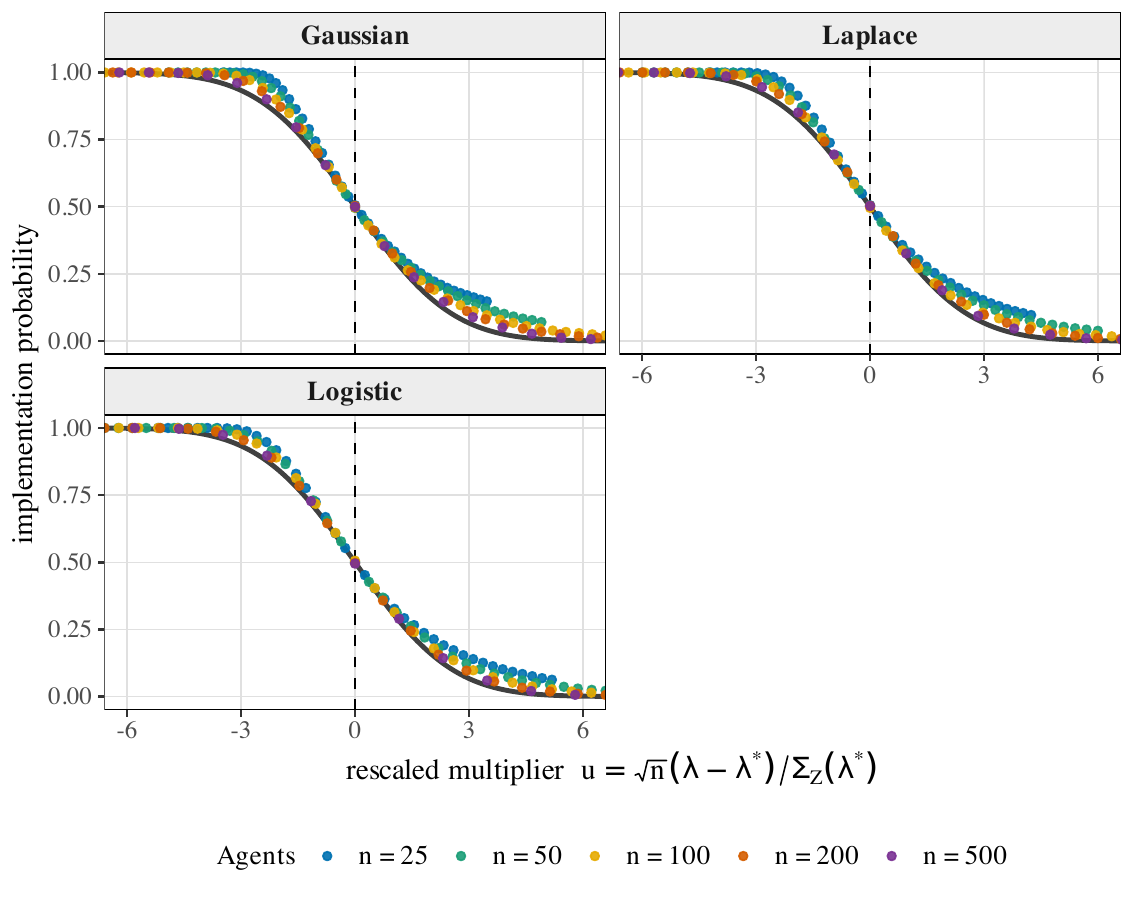}
  \caption{Probability the good is funded, against standardized budget pressure
  \(u=-\sqrt n\,\xlo\,(\lambda-\lambda_{\mathrm{crit}})/\sigma_S(\lambda_{\mathrm{crit}})=-\sqrt n\,
  \mu_S(\lambda)/\sigma_S(\lambda_{\mathrm{crit}})\). Across population sizes and channels the
  probabilities collapse onto \(\Phi(-u)\) (solid). Provision flips from
  near-certain to near-impossible across the knife-edge \(\lambda_{\mathrm{crit}}\) (the
  crossing at \(u=0\)); the privacy channel enters only through the scale
  \(\sigma_S\), which sets how sharp the transition is, not where it sits.}
  \label{fig:knife}
\end{figure}

% ---------------------------------------------------------------------
\subsection{Financing a Good at the Break-Even Margin}
\label{subsec:sim_sqrtn}
% ---------------------------------------------------------------------
Theorem~\ref{thm:three_regimes_known} classifies optimal reduced-form revenue
\(R_n^{*,\mathrm{red}}\) by the sign of the mean posterior virtual value \(\xlo\)
of the previous subsection: revenue grows linearly with the market when
\(\xlo>0\), decays when \(\xlo<0\), and---in the boundary case \(\xlo=0\)---grows
only at square-root order
\[
  R_n^{*,\mathrm{red}}
  \sim
  \frac{\sigma_J}{\sqrt{2\pi}}\,\sqrt n,
  \qquad
  \sigma_J^2=\Var\!\big(\widehat J_i(Y_i)\big).
\]
This boundary is the economically critical case: the lower endpoint equals
zero, so average posterior virtual value has zero drift and the mechanism
collects no systematic revenue. Per-capita revenue
\(R_n^{*,\mathrm{red}}/n\to0\), so a privacy-respecting public-good mechanism
cannot raise funds that scale with the market; what it collects comes entirely
from the statistical luck of drawing a realized virtual surplus above its mean.
The magnitude of that luck---and hence the financing the planner can hope
for---is set by \(\sigma_J\), the informativeness of the privacy channel about
types: a sharper channel raises the constant \(\sigma_J/\sqrt{2\pi}\), but no
channel changes the \(\sqrt n\) rate. Figure~\ref{fig:sqrtn} verifies both the
rate and the constant, with the normalized revenue
\(R_n^{*,\mathrm{red}}/\sqrt n\) flattening onto each channel's level
\(\sigma_J/\sqrt{2\pi}\). The message is that financing a public good from a
population whose marginal member breaks even is fundamentally hard: privacy
technology can buy a better constant, but not a better order of magnitude. We
display \(\xlo=0\) because it gives the sharpest quantitative check; the funded
(\(\xlo>0\)) regime is governed by its leading term \(n\xlo\), and the starved
(\(\xlo<0\)) regime follows from the same theorem and is illustrated in
Appendix~\ref{app:numerics}.

\begin{figure}[t]
  \centering
  \includegraphics[width=0.74\linewidth]{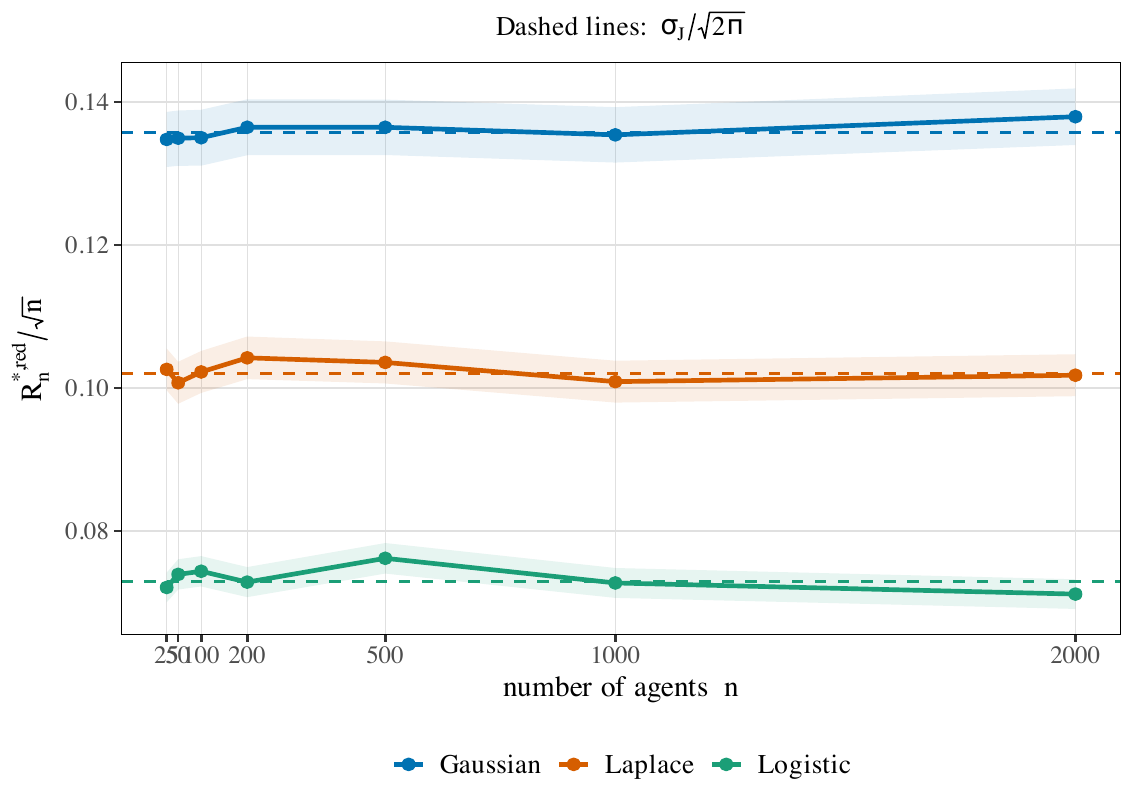}
  \caption{Normalized optimal revenue \(R_n^{*,\mathrm{red}}/\sqrt n\) at the
  break-even boundary \(\xlo=0\), where per-capita revenue vanishes and what is
  raised is pure statistical fluctuation in virtual surplus. Each channel
  converges to its constant \(\sigma_J/\sqrt{2\pi}\) (dashed): the channel's
  informativeness \(\sigma_J\) fixes how much can be financed, but the
  \(\sqrt n\) rate is common (Theorem~\ref{thm:three_regimes_known}).}
  \label{fig:sqrtn}
\end{figure}

% ---------------------------------------------------------------------
\subsection{Choosing a Privacy Technology: the Welfare Reversal}
\label{subsec:sim_channel_welfare}
% ---------------------------------------------------------------------
Which privacy technology should a planner adopt to maximize welfare?
Figure~\ref{fig:welfare} shows the answer is not a property of the noise
distribution alone---it depends on the privacy-accounting standard used to hold
privacy fixed. Raw noise scale, pure \(\epsilon\)-LDP, and \(\mu\)-GDP are
distinct standards. For the Laplace--logistic comparison, a common noise scale is equivalent to
a common tight pure-\(\epsilon\)-LDP calibration. Under this calibration,
the Laplace channel Blackwell-dominates the logistic channel
(Proposition~\ref{prop:laplace_logistic_blackwell}), so
\[
  W^{\mathrm{red}}(R;K_{\mathrm{Lap},s})
  \ge
  W^{\mathrm{red}}(R;K_{\mathrm{Log},s}).
\]
Recalibrating both channels under a common tight \(\mu\)-GDP guarantee changes
their relative scales and reverses the information ordering on the maximally
separated endpoint experiment \(\{\xlo,\xhi\}\): there the logistic channel takes
the smaller scale and is the more informative experiment, so the welfare ranking
flips (Theorem~\ref{thm:endpoint_roc}).

The left panel of Figure~\ref{fig:welfare} plots the equal-scale welfare gap
\((W_{\mathrm{Lap}}-W_{\mathrm{Log}})/W_{\mathrm{FB}}\ge0\) against the
pure-\(\epsilon\) budget; the right panel plots the common-\(\mu\) endpoint
welfare gap \((W_{\mathrm{Log}}-W_{\mathrm{Lap}})/W_{\mathrm{FB}}\ge0\) against
the high-type probability, for three privacy levels \(\mu\). In each calibration
the dominant channel's welfare advantage is nonnegative, and the dominant
channel flips from Laplace to logistic as the accounting standard changes from
scale to GDP. The comparative reading is that no privacy channel is universally
welfare-best; the welfare ranking depends on the privacy-accounting standard
under which the channels are compared. We stress the scope. The reversal is
established for the maximally separated endpoint experiment \(\{\xlo,\xhi\}\)
and does not lift to a Blackwell ranking over the full continuum of types
(Remark~\ref{rmk:endpoint_scope}); a planner therefore cannot conclude that
logistic dominates Laplace under GDP for an arbitrary preference distribution.
The sharp ROC-level form of the endpoint comparison, and the near-zero
continuous-type gap that delimits its scope, are reported in
Appendix~\ref{app:numerics}.

\begin{figure}[t]
  \centering
  \begin{minipage}{0.5\linewidth}\centering
    \includegraphics[width=\linewidth]{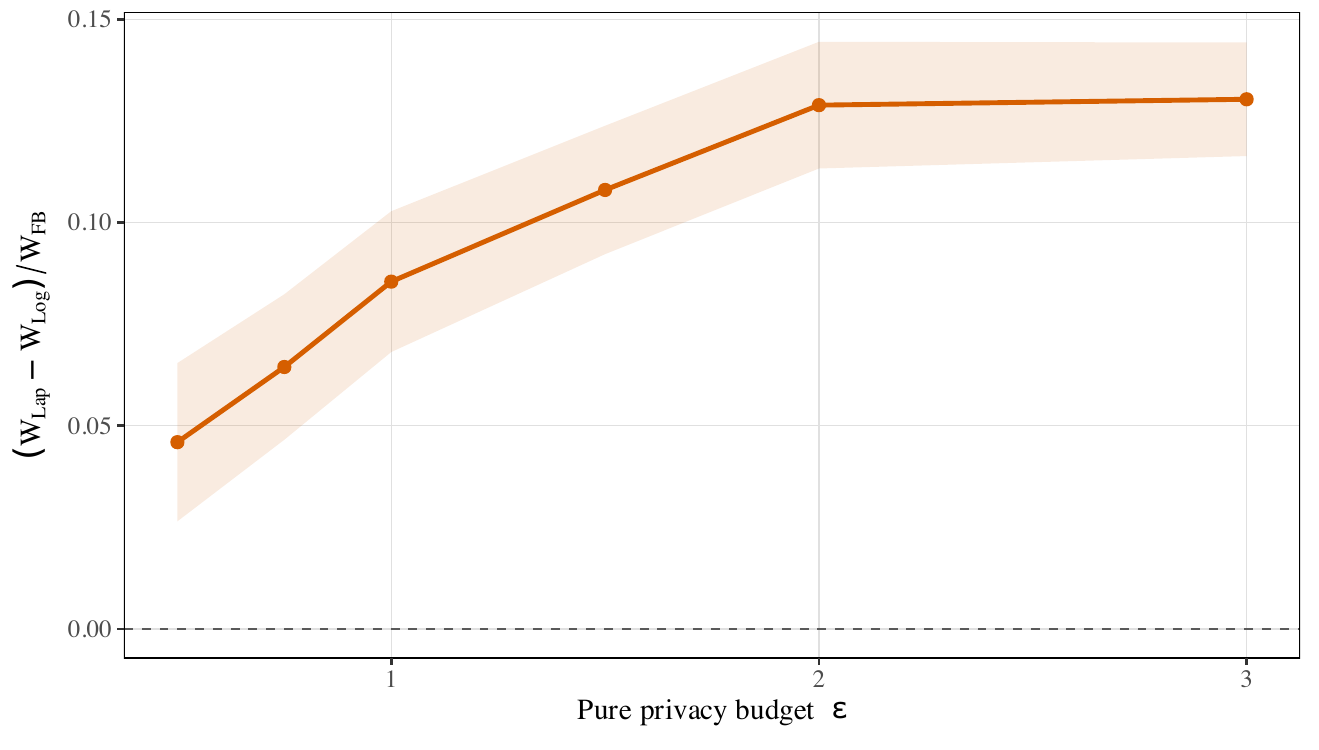}
  \end{minipage}\hfill
  \begin{minipage}{0.49\linewidth}\centering
    \includegraphics[width=\linewidth]{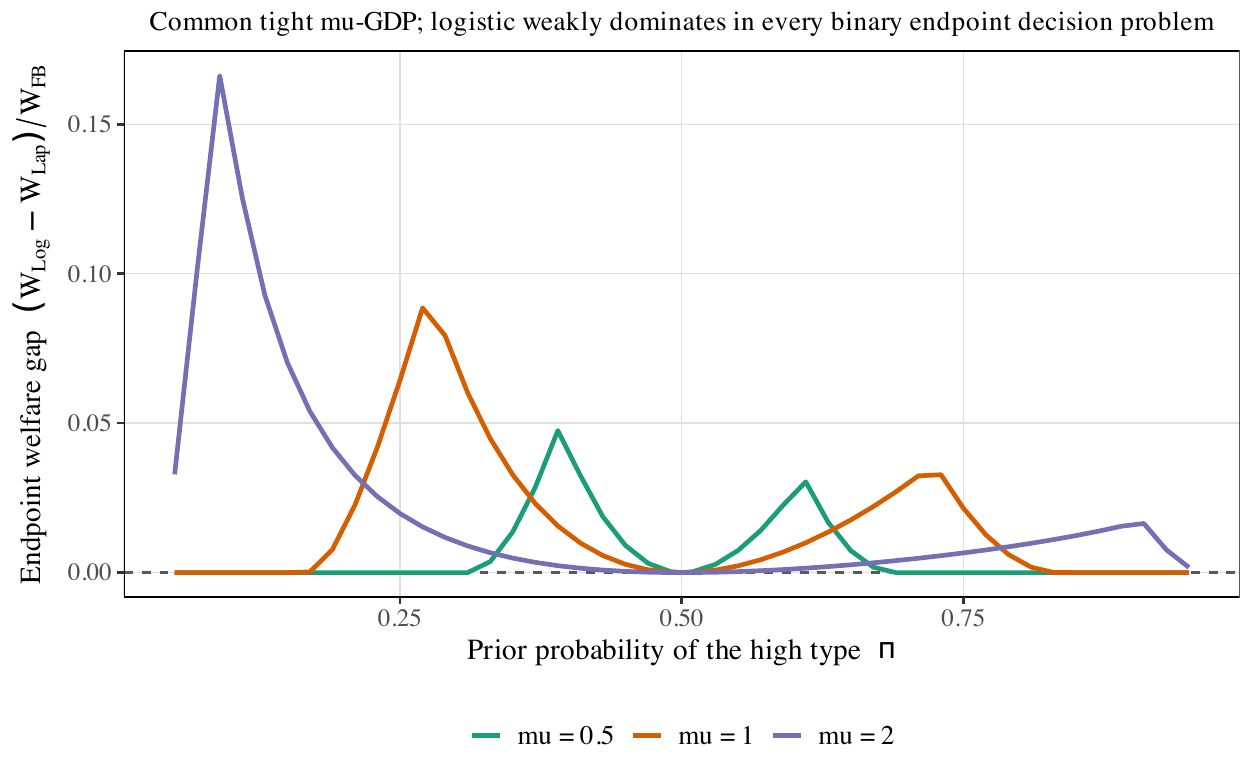}
  \end{minipage}
  \caption{The welfare-best channel depends on how privacy is budgeted. Left:
  under a common scale (pure \(\epsilon\)-LDP), Laplace holds a nonnegative
  welfare advantage \((W_{\mathrm{Lap}}-W_{\mathrm{Log}})/W_{\mathrm{FB}}\), the
  welfare form of the Blackwell dominance in
  Proposition~\ref{prop:laplace_logistic_blackwell}. Right: under a common
  \(\mu\)-GDP level, on the extreme-type experiment \(\{\xlo,\xhi\}\), logistic
  holds the advantage \((W_{\mathrm{Log}}-W_{\mathrm{Lap}})/W_{\mathrm{FB}}\)
  (Theorem~\ref{thm:endpoint_roc}). The dominant channel flips with the
  accounting standard; the effect is an endpoint statement
  (Remark~\ref{rmk:endpoint_scope}), not a ranking over all type profiles.}
  \label{fig:welfare}
\end{figure}

\section{Conclusion}
\label{sec:conclusion}

We have studied binary public-good provision when a planner observes agents
only through a fixed local-privacy channel. The privacy channel reshapes the
information on which optimal provision is based: the optimal reduced-form rule
thresholds an aggregate posterior score combining, for each agent, the
posterior expected valuation and posterior virtual value. The channel enters
through the responsiveness of this score to the agent's privatized report.

\medskip

Three economic conclusions emerge. First, the allocation the planner would
like to choose and the payments needed to implement it are distinct objects.
Implementing the envelope payments through transfers measurable in the
privatized signals is an inverse problem governed by a Fredholm range condition:
completeness gives uniqueness when a solution exists but does not guarantee
existence. The Online Supplement complements this exact characterization with
approximate implementations carrying explicit incentive, participation, and
revenue bounds.

\medskip

Second, maximum reduced-form revenue follows three asymptotic regimes governed
by the lower endpoint of the valuation distribution. It is asymptotically
linear, of square-root order, or exponentially small according as that endpoint
is positive, zero, or negative. The privacy channel affects the quantitative
difficulty within these regimes through the square-root constant and the
large-deviation rate. For a fixed posterior-score rule, provision also changes
sharply from asymptotically certain to asymptotically negligible at a
mean-score boundary.

\medskip

Third, welfare comparisons between privacy channels depend on how privacy is
calibrated. For Laplace and logistic noise, a common scale is equivalent to the common
tight pure-\(\epsilon\) calibration considered here, and Laplace
Blackwell-dominates logistic noise on the full report space. Under a common
tight \(\mu\)-GDP calibration, the ordering reverses for the maximally
separated binary endpoint experiment. Thus the preferred channel may depend on
the privacy standard used for comparison, while the endpoint reversal does not
imply a general ordering over the full continuous type space.

\medskip

The limits of these results also point to several directions for future work.
The allocation and revenue results are reduced-form statements; connecting them
to fully implemented mechanisms requires verifying the relevant transfer range
condition. Characterizing which economically important envelope-payment
schedules belong to the range of smoothing channel operators remains open. The
hierarchical extension relies on a high-level coordinatewise-monotonicity
condition that need not follow automatically once a latent common component
links agents' values. Identifying primitive conditions for this monotonicity,
and studying the scope for surplus extraction under privatized correlated
information, are natural next steps.

\medskip

The channel comparisons leave a further boundary. Equal-scale
Laplace-over-logistic dominance holds on the full report space, but the
common-\(\mu\) reversal is established only for the endpoint experiment.
Whether comparable welfare orderings hold on continuous type spaces, and how
the conclusions change for nonadditive or nonsmoothing channels, remain open.
More broadly, the framework suggests treating the privacy standard itself as
part of the design problem: when welfare comparisons depend on the calibration
used to hold privacy fixed, privacy accounting is not merely a technical
convention.

\bibliographystyle{apalike}
\bibliography{reference}

\newpage

\appendix

\section{Privacy and Channel Derivations}
\label{app:background_derivations}

This appendix develops the privacy, likelihood-ratio, posterior, and
comparison-of-experiments calculations used in the main text. We first
derive the complete Neyman--Pearson trade-off functions for the Gaussian,
Laplace, and logistic location channels. We then obtain their pure-LDP and
tight GDP calibrations, establish the ordering of the common-\(\mu\) scales,
describe posterior pooling under Laplace noise, derive the Gaussian posterior
formula used in Example~\ref{ex:uniform_gaussian_main}, and prove the two
Blackwell comparisons stated in Section~\ref{subsec:channel_ordering}.

Throughout,
\[
\Delta_{\X}
=
\xhi-\xlo.
\]
For two reports \(r<r'\), write
\[
d=r'-r\in[0,\Delta_{\X}].
\]
For an additive location family, translation invariance reduces the binary
experiment generated by \((r,r')\) to the experiment generated by locations
\((0,d)\).

For probability measures \(P\) and \(Q\) on a common measurable space,
recall the testing trade-off function
\begin{equation}
\label{eq:app_tradeoff_definition}
\mathcal T(P,Q)(\alpha)
=
\inf_{\psi}
\left\{
1-\E_Q[\psi]:
\E_P[\psi]\le\alpha
\right\},
\qquad
\alpha\in[0,1],
\end{equation}
where the infimum is taken over measurable randomized tests
\(\psi:\Y\to[0,1]\). The Neyman--Pearson lemma implies that, for each pair
of simple hypotheses, the lower envelope is generated by likelihood-ratio
tests.

% ---------------------------------------------------------------------
\subsection{Gaussian Location Channel}
\label{app:gaussian_channel}
% ---------------------------------------------------------------------

Let
\[
P_{0,\sigma}
=
\Normal(0,\sigma^2),
\qquad
P_{d,\sigma}
=
\Normal(d,\sigma^2).
\]
The log-likelihood ratio is
\begin{equation}
\label{eq:app_gaussian_log_lr}
\log
\frac{dP_{d,\sigma}}{dP_{0,\sigma}}(y)
=
\frac{d}{\sigma^2}y
-
\frac{d^2}{2\sigma^2},
\end{equation}
which is strictly increasing in \(y\). Hence the most powerful test of
\(P_{0,\sigma}\) against \(P_{d,\sigma}\) rejects for sufficiently large
values of \(Y\).

For a level-\(\alpha\) test, choose \(c_\alpha\) such that
\[
\alpha
=
P_{0,\sigma}(Y\ge c_\alpha)
=
1-\Phi(c_\alpha/\sigma).
\]
Thus
\[
c_\alpha
=
\sigma\Phi^{-1}(1-\alpha).
\]
The corresponding type-II error is
\[
P_{d,\sigma}(Y<c_\alpha)
=
\Phi\!\left(
\frac{c_\alpha-d}{\sigma}
\right).
\]
Therefore,
\begin{equation}
\label{eq:app_gaussian_tradeoff}
\mathcal T_{G,d,\sigma}(\alpha)
=
\Phi\!\left(
\Phi^{-1}(1-\alpha)-\frac{d}{\sigma}
\right)
=
G_{d/\sigma}(\alpha).
\end{equation}

\begin{proposition}[Gaussian privacy frontier]
\label{prop:app_gaussian_privacy}
For the Gaussian additive channel with scale \(\sigma\),
\begin{equation}
\label{eq:app_gaussian_tight_mu}
\mu_G(\sigma)
=
\frac{\Delta_{\X}}{\sigma}.
\end{equation}
The channel satisfies strict MLRP but does not satisfy finite pure
\(\epsilon\)-LDP.
\end{proposition}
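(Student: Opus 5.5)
Three claims need proof: the tight GDP parameter equals \(\Delta_{\X}/\sigma\), the channel has strict MLRP, and no finite pure \(\epsilon\)-LDP holds. All three follow from the trade-off computation \eqref{eq:app_gaussian_tradeoff} and the log-likelihood ratio \eqref{eq:app_gaussian_log_lr} derived just above.

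\emph{Tight parameter.} By translation invariance, any pair of reports \(r<r'\) in \(\X\) reduces to the experiment \(\Normal(0,\sigma^2)\) versus \(\Normal(d,\sigma^2)\) with \(d=r'-r\in[0,\Delta_{\X}]\). Its trade-off function is \(G_{d/\sigma}\). For the reversed order \((r',r)\), the symmetry of the Gaussian family gives the same \(G_{d/\sigma}\): reflect \(y\mapsto -y\) and translate. The family \(G_\mu\) is pointwise decreasing in \(\mu\), since \(\Phi\) is increasing and \(\Phi^{-1}(1-\alpha)-\mu\) decreases in \(\mu\). Hence \(G_{d/\sigma}\ge G_{\Delta_{\X}/\sigma}\) for every admissible pair, so the channel is \((\Delta_{\X}/\sigma)\)-GDP.

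For tightness, take the pair \((\xlo,\xhi)\). For any \(\mu<\Delta_{\X}/\sigma\) and any interior \(\alpha\), we have \(G_{\Delta_{\X}/\sigma}(\alpha)<G_\mu(\alpha)\), because \(\Phi\) is strictly increasing and \(\Phi^{-1}(1-\alpha)\) is finite. So \eqref{eq:gdp} fails, and the smallest admissible parameter is \(\mu_G(\sigma)=\Delta_{\X}/\sigma\). The same step identifies \((\xlo,\xhi)\) as the least-private pair.

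\emph{Strict MLRP.} For \(x_2>x_1\), the ratio \(k(y\mid x_2)/k(y\mid x_1)\) equals \(\exp\{(x_2-x_1)y/\sigma^2-(x_2^2-x_1^2)/(2\sigma^2)\}\). This is strictly increasing in \(y\), and the density is strictly positive on all of \(\R\). This verifies Assumption~\ref{ass:mlrp} in its strict form.

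\emph{No finite pure \(\epsilon\).} Fix \(r=\xlo\) and \(r'=\xhi\), with \(\Delta_{\X}>0\). The density ratio \(k(y\mid\xhi)/k(y\mid\xlo)\) is continuous and unbounded as \(y\to\infty\). Given any \(\epsilon\), the set \(S_\epsilon\) where this ratio exceeds \(e^\epsilon\) is a nonempty half-line, so it has positive Lebesgue measure. This contradicts \eqref{eq:pure_ldp_density}. Alternatively, integrate over \(S_\epsilon\) to get \(K(S_\epsilon\mid\xhi)>e^\epsilon K(S_\epsilon\mid\xlo)\), contradicting \eqref{eq:pure_ldp}.

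The only delicate point is stating tightness correctly, namely that the infimum over admissible \(\mu\) is attained and equals \(\Delta_{\X}/\sigma\). This needs the strict pointwise monotonicity of \(G_\mu\) at interior \(\alpha\), which is immediate. The rest is routine.
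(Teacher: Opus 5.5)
Your proposal is correct and follows essentially the same route as the paper: the pairwise trade-off function \(G_{d/\sigma}\) is pointwise decreasing in the separation \(d\) and hence smallest at \(d=\Delta_{\X}\), the explicit Gaussian likelihood ratio gives strict MLRP, and the unboundedness of that ratio as \(y\to\infty\) rules out any finite pure \(\epsilon\). You also spell out details the paper leaves implicit, namely the reversed-order pair, the strict monotonicity of \(G_\mu\) at interior \(\alpha\) that gives tightness, and the positive-measure set on which the pure-LDP density bound fails.
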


\begin{proof}
Equation~\eqref{eq:app_gaussian_tradeoff} shows that the binary experiment
generated by reports separated by \(d\) has GDP parameter \(d/\sigma\).
The least-private report pair has the maximal separation
\(d=\Delta_{\X}\), proving \eqref{eq:app_gaussian_tight_mu}.

For \(x_2>x_1\),
\begin{equation}
\label{eq:app_gaussian_mlr}
\frac{k_G(y\mid x_2)}{k_G(y\mid x_1)}
=
\exp\!\left\{
\frac{x_2-x_1}{\sigma^2}y
-
\frac{x_2^2-x_1^2}{2\sigma^2}
\right\},
\end{equation}
which is strictly increasing in \(y\). Hence strict MLRP holds.

The likelihood ratio in \eqref{eq:app_gaussian_mlr} is unbounded as
\(y\to\infty\), so no finite constant \(\epsilon\) can satisfy the
pointwise likelihood-ratio bound required by pure LDP.
\end{proof}

% ---------------------------------------------------------------------
\subsection{Laplace Location Channel}
\label{app:laplace_channel}
% ---------------------------------------------------------------------

Let
\[
P_{0,b}
=
\Lap(0,b),
\qquad
P_{d,b}
=
\Lap(d,b),
\qquad
\rho
=
\frac{d}{b}.
\]
The likelihood ratio is
\begin{equation}
\label{eq:app_laplace_lr}
\frac{k_L(y\mid d)}{k_L(y\mid0)}
=
\exp\!\left\{
\frac{|y|-|y-d|}{b}
\right\}.
\end{equation}
Equivalently,
\begin{equation}
\label{eq:app_laplace_lr_piecewise}
\frac{k_L(y\mid d)}{k_L(y\mid0)}
=
\begin{cases}
e^{-\rho},
&
y\le0,
\\[3pt]
e^{(2y-d)/b},
&
0<y<d,
\\[3pt]
e^\rho,
&
y\ge d.
\end{cases}
\end{equation}
The ratio is weakly increasing but constant in each outer tail. Thus the
Laplace channel satisfies weak, but not strict, MLRP.

\begin{proposition}[Laplace trade-off function]
\label{prop:app_laplace_tradeoff}
For \(\rho=d/b\),
\begin{equation}
\label{eq:app_laplace_tradeoff}
\mathcal T_{L,\rho}(\alpha)
=
\begin{cases}
1-e^\rho\alpha,
&
0\le\alpha\le \frac12e^{-\rho},
\\[8pt]
\dfrac{e^{-\rho}}{4\alpha},
&
\frac12e^{-\rho}\le\alpha\le\frac12,
\\[10pt]
e^{-\rho}(1-\alpha),
&
\frac12\le\alpha\le1.
\end{cases}
\end{equation}
Its unique interior fixed point is
\begin{equation}
\label{eq:app_laplace_fixed_point}
p_L(\rho)
=
\frac12e^{-\rho/2}.
\end{equation}
\end{proposition}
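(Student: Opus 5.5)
Since the likelihood ratio \eqref{eq:app_laplace_lr_piecewise} is weakly increasing in \(y\), the Neyman--Pearson lemma says the lower envelope in \eqref{eq:app_tradeoff_definition} is traced out by tests that reject \(P_{0,b}\) for large \(y\). On the two flat tails, where the ratio is constant, these tests are randomized. The plan is therefore to compute the ROC curve of threshold tests \(\psi_c=\ind\{Y\ge c\}\) in closed form and then fill in the flat pieces by randomization. Since \(\mathcal T\) is invariant to the scale \(b\), we may normalize \(b=1\) and \(d=\rho\). We use the Laplace tails \(P_0(Y\ge c)=\tfrac12e^{-c}\) for \(c\ge0\) and \(P_\rho(Y<c)=\tfrac12e^{-(\rho-c)}\) for \(c\le\rho\), together with the reflected formulas on the other sides.

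The argument splits into three regimes according to where the threshold \(c\) sits.

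\emph{Middle region \(0\le c\le\rho\).} Here the likelihood ratio is strictly increasing. The threshold tests give type-I error \(\alpha=\tfrac12e^{-c}\), which ranges over \([\tfrac12e^{-\rho},\tfrac12]\), and type-II error \(\beta=\tfrac12e^{c-\rho}\). Eliminating \(c\) yields \(\alpha\beta=\tfrac14e^{-\rho}\), which is the middle branch \(e^{-\rho}/(4\alpha)\).

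\emph{Right tail \(y\ge\rho\), small \(\alpha\).} The ratio equals the constant \(e^\rho\) there. Any test supported in \(\{y\ge\rho\}\) gains power at exactly \(e^\rho\) times its size. Starting from the trivial test, this gives the linear segment \(\beta=1-e^\rho\alpha\) for \(\alpha\le P_0(Y\ge\rho)=\tfrac12e^{-\rho}\). It agrees with the middle branch at the junction, since \(1-\tfrac12=\tfrac12\).

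\emph{Left tail \(y\le0\), large \(\alpha\).} The ratio equals \(e^{-\rho}\). Once the test already rejects on \(\{y>0\}\), which has size \(\tfrac12\) and type-II error \(P_\rho(Y\le0)=\tfrac12e^{-\rho}\), it can only add rejection mass from \(\{y\le0\}\). Doing so lowers \(\beta\) by \(e^{-\rho}\) per unit of \(\alpha\). This gives \(\beta=\tfrac12e^{-\rho}-e^{-\rho}(\alpha-\tfrac12)=e^{-\rho}(1-\alpha)\).

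With the three branches in hand, Neyman--Pearson optimality shows that the curve is the infimum in \eqref{eq:app_tradeoff_definition}: a randomized likelihood-ratio test exists at each level and is most powerful. The step that needs the most care is making sure the randomized tail tests are genuinely optimal. On each flat set the likelihood ratio is constant, so every test with the same size there has the same power. Any such test is therefore an NP test, provided its threshold \(k\) is set to that constant ratio.

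For the fixed point, solve \(\mathcal T(\alpha)=\alpha\) on each branch.

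\emph{Middle branch.} The equation becomes \(\alpha^2=\tfrac14e^{-\rho}\), so \(\alpha=\tfrac12e^{-\rho/2}\). This value lies in \([\tfrac12e^{-\rho},\tfrac12]\) because \(\rho>0\).

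\emph{Outer branches.} The first branch gives \(\alpha=1/(1+e^\rho)\). This is at least \(\tfrac12e^{-\rho}\) (equivalently \(2e^\rho\ge 1+e^\rho\)), with equality only at \(\rho=0\), so it falls outside the branch's range. The third branch gives \(\alpha=e^{-\rho}/(1+e^{-\rho})<\tfrac12\), again outside its range.

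Since \(\mathcal T(\alpha)-\alpha\) is strictly decreasing, there is at most one interior fixed point. Hence \(p_L(\rho)=\tfrac12e^{-\rho/2}\) is the unique one.
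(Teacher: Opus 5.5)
Your proposal is correct and follows the paper's Neyman--Pearson route: you parametrize the likelihood-ratio tests by their rejection threshold, read off the right-tail linear branch and the middle branch $\alpha\beta=\tfrac14e^{-\rho}$, and then solve $\mathcal T(\alpha)=\alpha$ on the middle branch. The differences are minor: you get the third branch by directly randomizing on the flat left tail, whereas the paper derives it from the self-inverse identity $\mathcal T_{L,\rho}(\mathcal T_{L,\rho}(\alpha))=\alpha$ (reflection about $d/2$), and you prove uniqueness of the fixed point explicitly via strict monotonicity of $\mathcal T(\alpha)-\alpha$, which the paper leaves implicit.
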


\begin{proof}
By \eqref{eq:app_laplace_lr_piecewise}, the likelihood ratio is constant on
\((-\infty,0]\), strictly increasing on \((0,d)\), and constant on
\([d,\infty)\). The Neyman--Pearson tests therefore proceed by first using
the right tail, then moving the rejection threshold through \((0,d)\), and
finally randomizing in the left flat-likelihood-ratio tail.

If the rejection threshold satisfies \(c\ge d\), then
\[
\alpha
=
P_{0,b}(Y\ge c)
=
\frac12e^{-c/b},
\]
while
\[
1-\mathcal T_{L,\rho}(\alpha)
=
P_{d,b}(Y\ge c)
=
\frac12e^{-(c-d)/b}
=
e^\rho\alpha.
\]
This gives the first branch.

If \(0<c<d\), then
\[
\alpha
=
\frac12e^{-c/b},
\]
and
\[
\mathcal T_{L,\rho}(\alpha)
=
P_{d,b}(Y<c)
=
\frac12e^{-(d-c)/b}.
\]
Since \(e^{c/b}=1/(2\alpha)\), this becomes
\[
\mathcal T_{L,\rho}(\alpha)
=
\frac{e^{-\rho}}{4\alpha},
\]
which gives the middle branch.

Reflection around the midpoint \(d/2\) exchanges the two hypotheses
\(P_{0,b}\) and \(P_{d,b}\). The resulting binary experiment is symmetric,
so its testing trade-off function is self-inverse:
\[
\mathcal T_{L,\rho}
\bigl(
\mathcal T_{L,\rho}(\alpha)
\bigr)
=
\alpha.
\]
Applying this identity to the first branch gives the final branch
\[
\mathcal T_{L,\rho}(\alpha)
=
e^{-\rho}(1-\alpha),
\qquad
\frac12\le\alpha\le1.
\]
Solving \(T(\alpha)=\alpha\) on the middle branch yields
\[
\alpha^2=\frac14e^{-\rho},
\]
and hence \eqref{eq:app_laplace_fixed_point}.
\end{proof}

\begin{corollary}[Laplace pure-LDP frontier]
\label{cor:app_laplace_pure_ldp}
The Laplace location channel with scale \(b\) satisfies tight pure
\(\epsilon\)-LDP with
\[
\epsilon
=
\frac{\Delta_{\X}}{b}.
\]
Equivalently,
\begin{equation}
\label{eq:app_laplace_epsilon_scale}
b_L^*(\epsilon)
=
\frac{\Delta_{\X}}{\epsilon}.
\end{equation}
\end{corollary}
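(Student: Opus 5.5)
Since tight pure \(\epsilon\)-LDP is a supremum of log-likelihood ratios, the plan is to compute that supremum directly. By Remark~\ref{rem:pure_ldp}, a channel with strictly positive densities satisfies pure \(\epsilon\)-LDP if and only if
\[
\sup_{r,r'\in\X}\ \operatorname*{ess\,sup}_{y}\ \log\frac{k_L(y\mid r')}{k_L(y\mid r)}\le\epsilon .
\]
The tight parameter is therefore this double supremum.

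The first step is the inner supremum. By translation invariance we may take the locations to be \(0\) and \(d=|r'-r|\). We also use the reflection symmetry noted in the proof of Proposition~\ref{prop:app_laplace_tradeoff}, which makes the ratio in the reverse direction behave identically. The piecewise form \eqref{eq:app_laplace_lr_piecewise} shows that the likelihood ratio lies in \([e^{-d/b},e^{d/b}]\). Both bounds are attained on sets of positive Lebesgue measure, namely the outer tails \(y\le0\) and \(y\ge d\). Hence the essential supremum of the log-ratio for the pair equals \(d/b\) exactly.

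The second step is the outer supremum. The quantity \(d/b\) is increasing in \(d\), and over \(r,r'\in[\xlo,\xhi]\) it is maximized at \(d=\Delta_{\X}\), i.e.\ at the pair \((\xlo,\xhi)\). This gives tightness at \(\epsilon=\Delta_{\X}/b\). If \(\epsilon'<\Delta_{\X}/b\), then on the set \(\{y\ge\xhi\}\) the density inequality \eqref{eq:pure_ldp_density} fails for the pair \((\xlo,\xhi)\); this set has positive probability, so the integrated condition \eqref{eq:pure_ldp} fails there too. Inverting \(\epsilon=\Delta_{\X}/b\) yields \eqref{eq:app_laplace_epsilon_scale}.

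As a cross-check, I would also read the result off the trade-off function \eqref{eq:app_laplace_tradeoff}. Its first branch has slope \(-e^{\rho}\) at \(\alpha=0\), and the pure-\(\epsilon\) trade-off function \(\max\{0,1-e^\epsilon\alpha,e^{-\epsilon}(1-\alpha)\}\) lies below it exactly when \(\epsilon\ge\rho\). There is no real obstacle here; the only point needing care is that tightness requires the extreme ratio to hold on a non-null set, not just pointwise. The flat Laplace tails supply exactly that.
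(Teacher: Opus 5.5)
Your proposal is correct and follows the paper's proof: bound the piecewise Laplace likelihood ratio between $e^{-d/b}$ and $e^{d/b}$, note that the upper bound is attained on the right tail, and maximize the separation at $d=\Delta_{\X}$. Your explicit positive-probability tightness argument and the trade-off-function cross-check are sound refinements of the same route.
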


\begin{proof}
Equation~\eqref{eq:app_laplace_lr_piecewise} gives
\[
e^{-d/b}
\le
\frac{k_L(y\mid r')}{k_L(y\mid r)}
\le
e^{d/b}.
\]
The worst report separation is \(d=\Delta_{\X}\), and the upper bound is
attained in the right tail.
\end{proof}

% ---------------------------------------------------------------------
\subsection{Logistic Location Channel}
\label{app:logistic_channel}
% ---------------------------------------------------------------------

The centered logistic distribution with scale \(\beta>0\) has cdf
\[
F_\beta(y)
=
\frac{1}{1+e^{-y/\beta}}
\]
and density
\begin{equation}
\label{eq:app_logistic_density}
g_\beta(y)
=
\frac{e^{-y/\beta}}
{\beta(1+e^{-y/\beta})^2}.
\end{equation}
Let
\[
P_{0,\beta}
=
\Logistic(0,\beta),
\qquad
P_{d,\beta}
=
\Logistic(d,\beta),
\qquad
\rho
=
\frac{d}{\beta}.
\]

\begin{proposition}[Logistic trade-off function]
\label{prop:app_logistic_tradeoff}
For \(\rho=d/\beta\),
\begin{equation}
\label{eq:app_logistic_tradeoff}
\mathcal T_{\mathrm{Log},\rho}(\alpha)
=
\frac{1-\alpha}
{1-\alpha+e^\rho\alpha}.
\end{equation}
Its unique interior fixed point is
\begin{equation}
\label{eq:app_logistic_fixed_point}
p_{\mathrm{Log}}(\rho)
=
\frac{1}{1+e^{\rho/2}}.
\end{equation}
\end{proposition}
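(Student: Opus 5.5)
The plan mirrors the Laplace case in Proposition~\ref{prop:app_laplace_tradeoff}: compute the likelihood ratio, apply Neyman--Pearson, and eliminate the threshold. By translation invariance we work with \(P_{0,\beta}\) versus \(P_{d,\beta}\). Writing \(u=e^{-y/\beta}\), the ratio is
\[
\frac{g_\beta(y-d)}{g_\beta(y)}
=
e^{\rho}\,\frac{(1+e^{-y/\beta})^2}{(1+e^{\rho}e^{-y/\beta})^2}.
\]
Each factor \((1+u)/(1+e^\rho u)\) is decreasing in \(u\) for \(\rho>0\), and \(u\) is decreasing in \(y\). Hence the ratio is strictly increasing in \(y\), i.e.\ strict MLRP, and it is continuous. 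So for every \(\alpha\in(0,1)\) the most powerful level-\(\alpha\) test is deterministic and rejects when \(Y\ge c\). Since \(Y\) is continuous, no randomization is needed, and the endpoints \(\alpha\in\{0,1\}\) follow by continuity.

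Next we parametrize the tests by the threshold \(c\). The type-I error is \(\alpha=1-F_\beta(c)=e^{-c/\beta}/(1+e^{-c/\beta})\), so \(e^{-c/\beta}=\alpha/(1-\alpha)\). The type-II error is
\[
F_\beta(c-d)=\frac{1}{1+e^{\rho}e^{-c/\beta}}=\frac{1}{1+e^{\rho}\alpha/(1-\alpha)}=\frac{1-\alpha}{1-\alpha+e^\rho\alpha},
\]
which is \eqref{eq:app_logistic_tradeoff}. As \(c\) ranges over \(\R\), \(\alpha\) covers \((0,1)\), so the formula holds on the whole unit interval.

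For the fixed point, set \(T(\alpha)=\alpha\). This gives \(1-\alpha=\alpha(1-\alpha)+e^\rho\alpha^2\), i.e.\ \((1-\alpha)^2=e^\rho\alpha^2\). Taking the positive root (both sides are nonnegative on \([0,1]\)) gives \(1-\alpha=e^{\rho/2}\alpha\), hence \(\alpha=1/(1+e^{\rho/2})\). Uniqueness follows because \(T(\alpha)-\alpha\) is strictly decreasing: \(T\) is nonincreasing and \(\alpha\) is strictly increasing. Alternatively, the alternative root \(1-\alpha=-e^{\rho/2}\alpha\) has no solution in \([0,1]\).

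The only step needing care is justifying that threshold tests on \(Y\) attain the infimum in \eqref{eq:app_tradeoff_definition} at every level. This reduces to the strict monotonicity of the likelihood ratio together with the atomless law of \(Y\), so that every \(\alpha\) is attained exactly by a deterministic test. The remaining steps are algebra.
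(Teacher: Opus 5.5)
Your proposal is correct and follows the paper's proof essentially step for step. You use Neyman--Pearson threshold tests justified by MLRP, eliminate the threshold through $e^{-c/\beta}=\alpha/(1-\alpha)$, and solve $(1-\alpha)^2=e^{\rho}\alpha^2$ for the fixed point; along the way you explicitly verify strict monotonicity of the likelihood ratio and prove uniqueness of the interior fixed point, two points the paper only asserts.
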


\begin{proof}
The logistic location family has monotone likelihood ratio, so the
Neyman--Pearson test rejects for \(Y\ge c\). Under the null,
\[
\alpha
=
1-F_\beta(c).
\]
Hence
\[
e^{c/\beta}
=
\frac{1-\alpha}{\alpha}.
\]
Under the alternative, the type-II error is
\[
F_\beta(c-d)
=
\frac{1}
{1+\exp\{-(c-d)/\beta\}}.
\]
Substituting the expression for \(e^{c/\beta}\) gives
\[
F_\beta(c-d)
=
\frac{1-\alpha}
{1-\alpha+e^\rho\alpha}.
\]
Solving \(T(\alpha)=\alpha\) yields
\[
\frac{1-\alpha}{1-\alpha+e^\rho\alpha}
=
\alpha,
\]
whose unique interior solution is
\[
\alpha
=
\frac{1}{1+e^{\rho/2}}.
\]
\end{proof}

\begin{proposition}[Logistic MLRP and pure-LDP frontier]
\label{prop:app_logistic_privacy}
The logistic location channel satisfies strict MLRP. Its tight pure-LDP
parameter over \(\X\) is
\[
\epsilon
=
\frac{\Delta_{\X}}{\beta},
\]
or equivalently
\begin{equation}
\label{eq:app_logistic_epsilon_scale}
\beta^*(\epsilon)
=
\frac{\Delta_{\X}}{\epsilon}.
\end{equation}
\end{proposition}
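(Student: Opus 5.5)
The proof has two parts: strict MLRP comes from an explicit computation of the likelihood ratio, and the tight pure-LDP parameter comes from the sharp range of that ratio combined with the density characterization \eqref{eq:pure_ldp_density}. By translation invariance of the location family, it suffices to compare reports \(r<r'\) with separation \(d=r'-r\in(0,\Delta_{\X}]\), that is, locations \(0\) and \(d\).

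\textbf{Strict MLRP.} Using \eqref{eq:app_logistic_density}, I would write the ratio in closed form:
\[
L_d(y)=\frac{g_\beta(y-d)}{g_\beta(y)}
=e^{d/\beta}\left(\frac{1+e^{-y/\beta}}{1+e^{-(y-d)/\beta}}\right)^{2}.
\]
The inner fraction can be rewritten as \((e^{y/\beta}+1)/(e^{y/\beta}+e^{d/\beta})\). With \(u=e^{y/\beta}\), this is the map \(u\mapsto (u+1)/(u+e^{d/\beta})\). Its derivative is \((e^{d/\beta}-1)/(u+e^{d/\beta})^2>0\) because \(d>0\). Since \(u\) is strictly increasing in \(y\), the ratio \(L_d\) is strictly increasing, which gives strict MLRP.

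\textbf{Range of the ratio.} Still with \(u=e^{y/\beta}\), the inner fraction tends to \(e^{-d/\beta}\) as \(u\to0\) and to \(1\) as \(u\to\infty\). Therefore
\[
e^{-d/\beta}<L_d(y)<e^{d/\beta}
\qquad\text{for all } y,
\]
and both bounds are approached in the tails. By symmetry the same bounds hold for the inverse ratio.

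\textbf{Sufficiency.} By \eqref{eq:pure_ldp_density}, \(\epsilon\)-LDP holds whenever \(e^{d/\beta}\le e^{\epsilon}\) for every admissible separation. The largest separation is \(d=\Delta_{\X}\), so \(\epsilon=\Delta_{\X}/\beta\) suffices.

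\textbf{Tightness.} This is the only delicate point, because the supremum of the ratio is not attained. I would take \(r=\xlo\), \(r'=\xhi\) and any \(\epsilon'<\Delta_{\X}/\beta\). Then \(L_{\Delta_{\X}}(y)>e^{\epsilon'}\) on a half-line \(S=[c,\infty)\), which has positive probability. Integrating over \(S\) gives \(K(S\mid\xhi)>e^{\epsilon'}K(S\mid\xlo)\), which violates \eqref{eq:pure_ldp}. Hence \(\Delta_{\X}/\beta\) is the smallest valid \(\epsilon\). Inverting this relation gives \(\beta^*(\epsilon)=\Delta_{\X}/\epsilon\), which is \eqref{eq:app_logistic_epsilon_scale}.
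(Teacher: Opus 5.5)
Your proof is correct, but it reaches both conclusions by a different computational route than the paper.

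\textbf{The paper's route.} The paper never writes down the likelihood ratio. It differentiates the centered log density, $\frac{d}{dy}\log g_\beta(y)=-\frac1\beta+\frac{2}{\beta(1+e^{y/\beta})}$, and notes that this score is strictly decreasing (strict log-concavity). Strict MLRP then follows from $\frac{d}{dy}\log g_\beta(y-x_2)-\frac{d}{dy}\log g_\beta(y-x_1)>0$. For privacy it uses $\bigl|\frac{d}{dy}\log g_\beta\bigr|\le 1/\beta$ and the mean-value theorem in the location parameter to get $\bigl|\log\bigl(k(y\mid r)/k(y\mid r')\bigr)\bigr|\le |r-r'|/\beta$. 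It then asserts tightness on the grounds that the bound is approached in the tails.

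\textbf{Your route.} You compute $L_d$ in closed form and reduce monotonicity to the M\"obius map $u\mapsto (u+1)/(u+e^{d/\beta})$. You then read off the exact range $(e^{-d/\beta},e^{d/\beta})$ from its limits.

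\textbf{What each buys.} The paper's argument is structural. It applies verbatim to any location family whose log density is strictly concave and $c$-Lipschitz, giving MLRP and a pure-LDP bound $c\,\Delta_{\X}$. Yours is specific to the logistic kernel, but it delivers the sharp range of the ratio directly. It also makes the tightness step fully rigorous. Because the supremum $e^{\Delta_{\X}/\beta}$ is not attained, one needs, for each $\epsilon'<\Delta_{\X}/\beta$, a set of positive probability on which the ratio exceeds $e^{\epsilon'}$. Your half-line argument supplies exactly this, whereas the paper's phrase ``approached in the tails'' leaves it implicit.
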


\begin{proof}
Differentiating the centered log density gives
\[
\frac{d}{dy}\log g_\beta(y)
=
-\frac1\beta
+
\frac{2}{\beta(1+e^{y/\beta})}.
\]
This derivative is strictly decreasing. Therefore, for \(x_2>x_1\),
\[
\frac{d}{dy}
\log
\frac{k_{\mathrm{Log}}(y\mid x_2)}
{k_{\mathrm{Log}}(y\mid x_1)}
=
\frac{d}{dy}\log g_\beta(y-x_2)
-
\frac{d}{dy}\log g_\beta(y-x_1)
>
0.
\]
Hence strict MLRP holds.

Moreover,
\[
\left|
\frac{d}{dy}\log g_\beta(y)
\right|
\le
\frac1\beta.
\]
By the mean-value theorem,
\[
\left|
\log
\frac{k_{\mathrm{Log}}(y\mid r)}
{k_{\mathrm{Log}}(y\mid r')}
\right|
\le
\frac{|r-r'|}{\beta}.
\]
The bound is approached in the tails, so it is tight. Maximizing
\(|r-r'|\) over \(\X\) gives the result.
\end{proof}

% ---------------------------------------------------------------------
\subsection{Tight GDP Indices and Common-\texorpdfstring{\(\mu\)}{mu} Scales}
\label{app:tight_gdp_indices}
% ---------------------------------------------------------------------

For a trade-off function \(T\), define its Gaussian separation profile by
\begin{equation}
\label{eq:app_gaussian_separation}
D_T(\alpha)
=
-\Phi^{-1}(\alpha)
-
\Phi^{-1}(T(\alpha)),
\qquad
\alpha\in(0,1).
\end{equation}
Because
\[
G_\mu(\alpha)
=
\Phi\!\left(
-\Phi^{-1}(\alpha)-\mu
\right),
\]
the inequality
\[
T(\alpha)\ge G_\mu(\alpha)
\]
is equivalent to
\[
D_T(\alpha)\le\mu.
\]
Consequently, the tight GDP index of \(T\) is
\begin{equation}
\label{eq:app_tight_mu_functional}
\mu(T)
=
\sup_{\alpha\in(0,1)}D_T(\alpha).
\end{equation}

\begin{lemma}[Normal-quantile monotonicity]
\label{lem:app_normal_quantile_ratios}
Let
\[
I(u)
=
\varphiN\!\left(\Phi^{-1}(u)\right),
\qquad
u\in(0,1),
\]
and define
\[
A(u)
=
\frac{u}{I(u)},
\qquad
B(u)
=
\frac{u(1-u)}{I(u)}.
\]
Then \(A\) is strictly increasing on \((0,1)\). Moreover, \(B\) is
symmetric about \(1/2\) and strictly increasing on \((0,1/2)\).

These properties are closely related to classical monotonicity and
functional inequalities for the Gaussian Mills ratio; see, for example,
\citet{baricz2008mills}.
\end{lemma}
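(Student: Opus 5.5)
We prove the lemma by reducing each claim to a statement about the Gaussian Mills ratio. Throughout, substitute \(u=\Phi(z)\) with \(z\in\R\), so that \(I(u)=\varphiN(z)\) and \(u\mapsto z\) is a strictly increasing bijection from \((0,1)\) onto \(\R\). Strict monotonicity in \(u\) is then the same as strict monotonicity in \(z\).

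For \(A\), we have \(A(\Phi(z))=\Phi(z)/\varphiN(z)\), which is the reversed Mills ratio \(R(-z)\), where \(R(t)=(1-\Phi(t))/\varphiN(t)\). Using \(\varphiN'(z)=-z\varphiN(z)\), differentiation gives
\[
\frac{d}{dz}\frac{\Phi(z)}{\varphiN(z)}=1+z\frac{\Phi(z)}{\varphiN(z)}.
\]
We need \(\varphiN(z)+z\Phi(z)>0\) for all \(z\).
\begin{itemize}
\item For \(z\ge0\) the inequality is immediate, since both terms are nonnegative and \(\varphiN(z)>0\).
\item For \(z<0\), write \(t=-z>0\). The inequality becomes \(t\,(1-\Phi(t))<\varphiN(t)\), i.e.\ \(R(t)<1/t\). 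This is the standard Mills upper bound.
\end{itemize}
To prove the Mills bound, set \(h(t)=\varphiN(t)-t(1-\Phi(t))\). Then \(h'(t)=-(1-\Phi(t))<0\) and \(h(t)\to0\) as \(t\to\infty\), so \(h>0\). Hence the derivative of \(\Phi(z)/\varphiN(z)\) is strictly positive, and \(A\) is strictly increasing.

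For the symmetry of \(B\), use \(\Phi^{-1}(1-u)=-\Phi^{-1}(u)\) together with evenness of \(\varphiN\). These give \(I(1-u)=I(u)\), while \(u(1-u)\) is invariant under \(u\mapsto1-u\). Hence \(B(1-u)=B(u)\).

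For monotonicity of \(B\) on \((0,1/2)\), i.e.\ \(z<0\), write \(B(\Phi(z))=\Phi(z)(1-\Phi(z))/\varphiN(z)\) and differentiate \(\log B\) in \(z\):
\[
\frac{d}{dz}\log B=\frac{\varphiN(z)}{\Phi(z)}-\frac{\varphiN(z)}{1-\Phi(z)}+z.
\]
Set \(t=-z>0\), let \(\lambda(t)=\varphiN(t)/(1-\Phi(t))\) be the inverse Mills ratio, and let \(\kappa(t)=\varphiN(t)/\Phi(t)\). The condition becomes
\[
\lambda(t)-t>\kappa(t),\qquad t>0.
\]
Both sides are positive: \(\lambda(t)>t\) is again the Mills bound. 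At \(t=0\) the two sides are equal, with \(\lambda(0)=\kappa(0)=2\varphiN(0)\).

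This inequality is the main obstacle. A first approach is to use \(\kappa(t)\le\varphiN(t)/\tfrac12=2\varphiN(t)\) and then show \(\lambda(t)-t>2\varphiN(t)\)... but that fails near \(0\) only at second order, so we refine. Define
\[
g(t)=\lambda(t)-t-\kappa(t),\qquad g(0)=0,
\]
and aim to show \(g'(t)>0\), or else a sign argument after multiplying through by \(\Phi(t)(1-\Phi(t))\). Write \(D(t)=\varphiN(t)\bigl[\Phi(t)-(1-\Phi(t))\bigr]-t\,\Phi(t)(1-\Phi(t))\). We need \(D>0\) for \(t>0\), and we have \(D(0)=0\).

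Step 1 is to compute \(D'\). We get
\[
D'(t)=-t\varphiN(t)(2\Phi(t)-1)+2\varphiN(t)^2-\Phi(t)(1-\Phi(t))-t\varphiN(t)(1-2\Phi(t)),
\]
in which the \(t\varphiN\) terms cancel, leaving
\[
D'(t)=2\varphiN(t)^2-\Phi(t)(1-\Phi(t)).
\]
At \(t=0\) this is \(1/\pi-1/4>0\), but it becomes negative for large \(t\), so \(D'\) alone does not settle the sign.

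Step 2 therefore switches to a comparison of the two factors. We show that \(D\) is increasing as long as \(D'>0\), and then handle the remaining region separately:
\begin{itemize}
\item For large \(t\), \(D\approx\varphiN(t)-t(1-\Phi(t))>0\) by the Mills function \(h\) above, using \(\Phi(t)\to1\).
\item For the intermediate range, fall back on the known concavity results for \(\log\) of the Mills ratio cited in \citet{baricz2008mills}, namely that \(\lambda(t)-t\) is decreasing and convex, compared against \(\kappa\).
\end{itemize}
The delicate part is the region where \(D'\) changes sign. If the direct calculus becomes unwieldy there, we will invoke the Baricz inequality directly.
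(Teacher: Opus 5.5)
Your arguments for $A$ and for the symmetry of $B$ are correct and essentially the paper's. You prove the Mills bound $t(1-\Phi(t))<\varphi(t)$ by differentiating $h$ rather than by the paper's integral comparison, which makes no difference. Your reduction of the monotonicity of $B$ on $(0,1/2)$ to $D(t)>0$ for $t>0$ is also correct, as is the formula $D'(t)=2\varphi(t)^2-\Phi(t)(1-\Phi(t))$. Under $u=\Phi(-t)$, your $D$ and $D'$ are exactly the paper's $W(u)$ and $H(u)$.

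However, the proof stops exactly where the work begins, so as written the monotonicity of $B$ is not established.

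\begin{itemize}
\item The large-$t$ bullet gives no lower bound. In fact
\[
D(t)-\bigl[\varphi(t)-t(1-\Phi(t))\bigr]=(1-\Phi(t))\bigl[t(1-\Phi(t))-2\varphi(t)\bigr]<0,
\]
so $D<h$, and positivity of $h$ says nothing about $D$. No explicit threshold is given either.
\item The intermediate bullet is not an argument. Knowing that $\lambda(t)-t$ is decreasing and convex yields no comparison with $\kappa$, which is also decreasing and coincides with it at $t=0$. The phrase ``invoke the Baricz inequality'' does not name any inequality that implies $\lambda(t)-t>\kappa(t)$.
\end{itemize}

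The missing idea is a boundary-plus-single-crossing argument for $D$. It needs no external Mills-ratio result.

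First, $D$ vanishes at both ends: $D(0)=0$, and $|D(t)|\le\varphi(t)+t(1-\Phi(t))<2\varphi(t)\to0$.

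Second, $D'$ changes sign exactly once, from positive to negative. Differentiate once more to get $D''(t)=\varphi(t)k(t)$ with $k(t)=2\Phi(t)-1-4t\varphi(t)$. Since $k(0)=0$, $k'(t)=(4t^2-2)\varphi(t)$, and $k(t)\to1$, there is some $t_1$ with $k<0$ on $(0,t_1)$ and $k>0$ on $(t_1,\infty)$. So $D'$ strictly decreases and then strictly increases. On the increasing stretch $D'<0$, because $D'(t)\to0$ as $t\to\infty$. Since $D'(0)=1/\pi-1/4>0$, $D'$ has a unique zero $t_*\in(0,t_1)$.

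Hence $D$ rises strictly from $0$ on $(0,t_*)$ and falls strictly back to its limit $0$ on $(t_*,\infty)$, so $D>0$ on $(0,\infty)$. This is precisely the paper's proof written in your variable: their $H''(u)=4\Phi^{-1}(u)^2-2$ equals $k'(t)/\varphi(t)$.
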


\begin{proof}
Write
\[
z=\Phi^{-1}(u).
\]
Since
\[
\frac{dz}{du}
=
\frac{1}{\varphiN(z)}
=
\frac{1}{I(u)}
\]
and
\[
\varphiN'(z)=-z\varphiN(z),
\]
the Gaussian isoperimetric function satisfies
\begin{equation}
\label{eq:app_gaussian_isoperimetric_derivatives}
I'(u)=-z,
\qquad
I''(u)=-\frac{1}{I(u)}.
\end{equation}

We first prove the monotonicity of \(A\). Differentiation gives
\[
A'(u)
=
\frac{I(u)-uI'(u)}{I(u)^2}
=
\frac{\varphiN(z)+uz}{\varphiN(z)^2}.
\]
If \(z\ge0\), the numerator is strictly positive. If \(z=-x<0\), then
\(x>0\), \(u=\Phi(-x)\), and the numerator becomes
\[
\varphiN(x)-x\Phi(-x).
\]
Moreover,
\[
x\Phi(-x)
=
x\int_x^\infty \varphiN(t)\,dt
<
\int_x^\infty t\varphiN(t)\,dt
=
\varphiN(x),
\]
where the strict inequality follows from \(t>x\) on a set of positive
Lebesgue measure and the final equality follows from
\(\varphiN'(t)=-t\varphiN(t)\). Thus \(A'(u)>0\) for every
\(u\in(0,1)\).

We next consider \(B\). Since
\[
\Phi^{-1}(1-u)=-\Phi^{-1}(u)
\]
and \(\varphiN\) is even,
\[
I(1-u)=I(u).
\]
Consequently,
\[
B(1-u)=B(u),
\]
so \(B\) is symmetric about \(1/2\).

It remains to prove that \(B\) is strictly increasing on \((0,1/2)\).
Let
\[
J(u)=u(1-u)
\]
and define
\[
W(u)
=
J'(u)I(u)-J(u)I'(u).
\]
Because
\[
B'(u)
=
\frac{J'(u)I(u)-J(u)I'(u)}{I(u)^2}
=
\frac{W(u)}{I(u)^2},
\]
it is enough to prove
\[
W(u)>0,
\qquad
0<u<\frac12.
\]

Introduce the auxiliary function
\[
H(u)
=
2I(u)^2-J(u).
\]
Using \eqref{eq:app_gaussian_isoperimetric_derivatives}, we obtain
\[
H'(u)
=
4I(u)I'(u)-(1-2u)
\]
and
\begin{align}
H''(u)
&=
4\left\{
I'(u)^2+I(u)I''(u)
\right\}+2
\notag\\
&=
4\left\{
\Phi^{-1}(u)^2-1
\right\}+2
\notag\\
&=
4\Phi^{-1}(u)^2-2.
\label{eq:app_H_second_derivative}
\end{align}
Set
\[
u_0
=
\Phi\!\left(-\frac{1}{\sqrt2}\right).
\]
Equation~\eqref{eq:app_H_second_derivative} shows that \(H'\) is strictly
increasing on \((0,u_0)\) and strictly decreasing on
\((u_0,1/2)\).

As \(u\downarrow0\),
\[
I(u)\longrightarrow0
\]
and
\[
u\left|\Phi^{-1}(u)\right|
\longrightarrow0.
\]
Indeed, writing \(u=\Phi(-x)\) with \(x\to\infty\), the inequality already
proved above gives
\[
0\le x\Phi(-x)<\varphiN(x)\longrightarrow0.
\]
It follows that
\[
\lim_{u\downarrow0}H(u)=0
\]
and
\[
\lim_{u\downarrow0}H'(u)=-1.
\]
At the midpoint,
\[
H\!\left(\frac12\right)
=
2\varphiN(0)^2-\frac14
=
\frac1\pi-\frac14
>
0,
\]
and
\[
H'\!\left(\frac12\right)=0.
\]

Because \(H'\) is strictly decreasing on \((u_0,1/2)\) and ends at zero,
\[
H'(u)>0,
\qquad
u_0\le u<\frac12.
\]
Since \(H'\) starts at \(-1\) and is strictly increasing on
\((0,u_0)\), it has exactly one zero in \((0,u_0)\). Thus \(H\) first
decreases and then strictly increases on \((0,1/2)\). Since
\[
H(0+)=0,
\qquad
H\!\left(\frac12\right)>0,
\]
there exists a unique
\[
u_*\in\left(0,\frac12\right)
\]
such that
\[
H(u)<0
\quad\text{for }0<u<u_*,
\qquad
H(u)>0
\quad\text{for }u_*<u<\frac12.
\]

Differentiating \(W\) and using
\eqref{eq:app_gaussian_isoperimetric_derivatives} yields
\begin{align}
W'(u)
&=
J''(u)I(u)-J(u)I''(u)
\notag\\
&=
-2I(u)+\frac{J(u)}{I(u)}
\notag\\
&=
-\frac{H(u)}{I(u)}.
\label{eq:app_W_derivative}
\end{align}
Therefore,
\[
W'(u)>0
\quad\text{for }0<u<u_*,
\qquad
W'(u)<0
\quad\text{for }u_*<u<\frac12.
\]

Finally,
\[
\lim_{u\downarrow0}W(u)=0.
\]
Indeed, \(J'(u)I(u)\to0\), while
\[
|J(u)I'(u)|
=
u(1-u)\left|\Phi^{-1}(u)\right|
\longrightarrow0.
\]
Also,
\[
W\!\left(\frac12\right)=0,
\]
because
\[
J'\!\left(\frac12\right)=0
\qquad\text{and}\qquad
I'\!\left(\frac12\right)=0.
\]
Thus \(W\) increases strictly from zero on \((0,u_*)\) and then decreases
strictly back to zero on \((u_*,1/2)\). Hence
\[
W(u)>0,
\qquad
0<u<\frac12.
\]
Since \(I(u)>0\), it follows that
\[
B'(u)
=
\frac{W(u)}{I(u)^2}
>
0,
\qquad
0<u<\frac12.
\]
Therefore \(B\) is strictly increasing on \((0,1/2)\).
\end{proof}

\begin{lemma}[Tight endpoint GDP indices]
\label{lem:app_tight_endpoint_indices}
For the Laplace and logistic endpoint experiments,
\begin{align}
\mu_L(\rho)
&=
-2\Phi^{-1}
\left(
\frac12e^{-\rho/2}
\right),
\label{eq:app_laplace_mu_index}
\\
\mu_{\mathrm{Log}}(\rho)
&=
-2\Phi^{-1}
\left(
\frac{1}{1+e^{\rho/2}}
\right).
\label{eq:app_logistic_mu_index}
\end{align}
\end{lemma}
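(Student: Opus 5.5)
The plan is to compute the tight index through the functional \eqref{eq:app_tight_mu_functional}, $\mu(T)=\sup_{\alpha\in(0,1)}D_T(\alpha)$, using the explicit trade-off functions of Propositions~\ref{prop:app_laplace_tradeoff} and~\ref{prop:app_logistic_tradeoff}. Both endpoint experiments are symmetric, so $T$ is self-inverse. The candidate maximizer is therefore the fixed point $p=T(p)$. There
\[
D_T(p)=-2\Phi^{-1}(p),
\]
and substituting $p_L(\rho)=\tfrac12e^{-\rho/2}$ and $p_{\mathrm{Log}}(\rho)=1/(1+e^{\rho/2})$ gives exactly \eqref{eq:app_laplace_mu_index} and \eqref{eq:app_logistic_mu_index}. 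What remains is to show that $D_T$ is maximized at the fixed point.

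\textbf{Reduction to one half.} I would first use symmetry to restrict attention to $\alpha\le p$. If $\beta=T(\alpha)$, then $T(\beta)=\alpha$, and so $D_T(\beta)=D_T(\alpha)$. Since $T$ is decreasing, the map $\alpha\mapsto T(\alpha)$ sends $(0,p]$ onto $[p,1)$. It therefore suffices to show that $D_T$ is nondecreasing on $(0,p]$. Writing $z=\Phi^{-1}(\alpha)$ and $w=\Phi^{-1}(T(\alpha))$, and setting $I(u)=\varphiN(\Phi^{-1}(u))$ as in Lemma~\ref{lem:app_normal_quantile_ratios}, we have
\[
D_T'(\alpha)=-\frac{1}{I(\alpha)}-\frac{T'(\alpha)}{I(T(\alpha))}.
\]
Hence the required condition is
\[
-T'(\alpha)\,I(\alpha)\ \ge\ I(T(\alpha)),\qquad \alpha\in(0,p).
\]

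\textbf{Logistic case.} Put $a=\alpha$ and $t=T(a)$. Differentiating \eqref{eq:app_logistic_tradeoff} gives
\[
-T'(a)=\frac{e^\rho}{(1-a+e^\rho a)^2}.
\]
Using $1-a+e^\rho a=(1-a)/t$ and $e^\rho=(1-a)(1-t)/(a\,t)$, this simplifies to
\[
-T'(a)=\frac{t(1-t)}{a(1-a)}.
\]
The condition therefore becomes $B(a)\le B(t)$, where $B(u)=u(1-u)/I(u)$. For $a<p<1/2$ we have $t>p$, and $t\le 1-a$ because $e^\rho\ge1$. So $\min(t,1-t)\ge a$. Lemma~\ref{lem:app_normal_quantile_ratios} says $B$ is symmetric about $1/2$ and increasing on $(0,1/2)$, which gives $B(t)\ge B(a)$.

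\textbf{Laplace case.} Here $p_L=\tfrac12e^{-\rho/2}$ lies in the middle branch, so I would split $(0,p]$ into two pieces.
\begin{itemize}
\item \emph{First branch}, $\alpha\le\tfrac12e^{-\rho}$. We have $-T'=e^\rho$ and $1-t=e^\rho\alpha$. The condition reads $(1-t)/I(1-t)\ge \alpha/I(\alpha)$, i.e.\ $A(1-t)\ge A(\alpha)$ with $A(u)=u/I(u)$. This follows from $1-t\ge\alpha$ and the fact that $A$ is increasing (Lemma~\ref{lem:app_normal_quantile_ratios}).
\item \emph{Middle branch}, $\tfrac12e^{-\rho}\le\alpha\le p$. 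We have $T(\alpha)=e^{-\rho}/(4\alpha)$, so $-T'(\alpha)=t/\alpha$. The condition becomes $A(t)\ge A(\alpha)$. Since $t\ge p\ge\alpha$ on this range, monotonicity of $A$ again suffices.
\end{itemize}

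\textbf{Main obstacle.} The delicate step is the logistic algebra that turns $-T'$ into $t(1-t)/[a(1-a)]$. It is exactly this form that lets the $B$-monotonicity of Lemma~\ref{lem:app_normal_quantile_ratios} apply. A secondary point is confirming that $\sup_\alpha D_T(\alpha)$ is attained at the fixed point rather than approached as $\alpha\to0$. This is handled by the monotonicity just established together with the symmetry reduction.
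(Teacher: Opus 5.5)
Your proposal is correct and follows the paper's proof essentially step for step. Self-inversion reduces the problem to the region before the fixed point; the sign of $D_T'$ then comes from the monotonicity of $A$ on both Laplace branches, and from $B$ in the logistic case after rewriting $-T'=t(1-t)/[a(1-a)]$. Your explicit check that $t\le 1-a$ (from $e^\rho\ge 1$), so that $\min(t,1-t)\ge a$, fills a step the paper leaves implicit when it appeals to the symmetry of $B$.
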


\begin{proof}
Both trade-off functions are self-inverse:
\[
T(T(\alpha))
=
\alpha.
\]
Hence
\[
D_T(T(\alpha))
=
D_T(\alpha).
\]
It is therefore enough to show that \(D_T\) increases up to the unique
interior fixed point.

At differentiability points,
\begin{equation}
\label{eq:app_D_derivative}
D_T'(\alpha)
=
-\frac{1}
{\varphiN(\Phi^{-1}(\alpha))}
-
\frac{T'(\alpha)}
{\varphiN(\Phi^{-1}(T(\alpha)))}.
\end{equation}

For the Laplace trade-off function, on the first branch define
\[
u=e^\rho\alpha=1-T_L(\alpha).
\]
Since \(u>\alpha\), normal symmetry and
Lemma~\ref{lem:app_normal_quantile_ratios} give
\[
D_{T_L}'(\alpha)
=
\frac1\alpha
\bigl(
A(u)-A(\alpha)
\bigr)
>
0.
\]
On the middle branch,
\[
T_L(\alpha)
=
\frac{p_L(\rho)^2}{\alpha},
\qquad
-T_L'(\alpha)
=
\frac{T_L(\alpha)}{\alpha},
\]
so
\[
D_{T_L}'(\alpha)
=
\frac1\alpha
\bigl(
A(T_L(\alpha))-A(\alpha)
\bigr)
>
0
\]
before the fixed point. Self-inversion then gives strict decrease after the
fixed point. Therefore the supremum occurs at \(p_L(\rho)\), giving
\eqref{eq:app_laplace_mu_index}.

For the logistic trade-off function,
\[
-T_{\mathrm{Log}}'(\alpha)
=
\frac{
T_{\mathrm{Log}}(\alpha)
\{1-T_{\mathrm{Log}}(\alpha)\}
}{
\alpha(1-\alpha)
}.
\]
Substitution into \eqref{eq:app_D_derivative} yields
\[
D_{T_{\mathrm{Log}}}'(\alpha)
=
\frac{
B(T_{\mathrm{Log}}(\alpha))-B(\alpha)
}{
\alpha(1-\alpha)
}.
\]
Before the fixed point,
\[
T_{\mathrm{Log}}(\alpha)>\alpha.
\]
Using the symmetry and strict increase of \(B\) on \((0,1/2)\), the
numerator is positive. Hence the supremum occurs at the fixed point
\(p_{\mathrm{Log}}(\rho)\), proving
\eqref{eq:app_logistic_mu_index}.
\end{proof}

Solving \(\mu_L(\rho)=\mu\) gives
\begin{equation}
\label{eq:app_laplace_rho_mu}
\rho_L^*(\mu)
=
-2\log\!\bigl(2\Phi(-\mu/2)\bigr),
\end{equation}
while solving \(\mu_{\mathrm{Log}}(\rho)=\mu\) gives
\begin{equation}
\label{eq:app_logistic_rho_mu}
\rho_{\mathrm{Log}}^*(\mu)
=
2\log\!\left(
\frac{\Phi(\mu/2)}
{\Phi(-\mu/2)}
\right).
\end{equation}

\begin{lemma}[Ordering of common-\(\mu\) scales]
\label{lem:app_scale_order}
For every \(\mu>0\),
\[
\beta^*(\mu)
<
b_L^*(\mu).
\]
\end{lemma}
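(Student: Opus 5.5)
Since $\beta^*(\mu)=\Delta_{\X}/\rho_{\mathrm{Log}}^*(\mu)$ and $b_L^*(\mu)=\Delta_{\X}/\rho_L^*(\mu)$, the claim is equivalent to
\[
\rho_{\mathrm{Log}}^*(\mu)>\rho_L^*(\mu)\qquad\forall\mu>0 .
\]
I will prove this by comparing the two tight endpoint indices of Lemma~\ref{lem:app_tight_endpoint_indices}.

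The plan is to show, for every $\rho>0$,
\[
\mu_{\mathrm{Log}}(\rho)<\mu_L(\rho).
\]
Both indices are continuous and strictly increasing in $\rho$, since $-2\Phi^{-1}$ is decreasing and the fixed points decrease in $\rho$. Both are $0$ at $\rho=0$ and tend to infinity as $\rho\to\infty$.

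Granting the pointwise inequality, fix $\mu>0$ and set $\rho=\rho_L^*(\mu)$. Then
\[
\mu_{\mathrm{Log}}(\rho)<\mu_L(\rho)=\mu .
\]
Because $\mu_{\mathrm{Log}}$ is strictly increasing with $\mu_{\mathrm{Log}}(\rho_{\mathrm{Log}}^*(\mu))=\mu$, it follows that $\rho_{\mathrm{Log}}^*(\mu)>\rho_L^*(\mu)$, which is the claim.

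By the formulas in Lemma~\ref{lem:app_tight_endpoint_indices} and the monotonicity of $\Phi^{-1}$, the inequality $\mu_{\mathrm{Log}}(\rho)<\mu_L(\rho)$ is equivalent to the fixed-point comparison
\[
\frac{1}{1+e^{\rho/2}}>\frac12e^{-\rho/2}.
\]
Set $s=e^{\rho/2}>1$. The comparison becomes $2s>1+s$, i.e.\ $s>1$, which holds for every $\rho>0$. So the main work is already done by the earlier lemma identifying the suprema at the fixed points.

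As a cross-check, I can compare the explicit closed forms \eqref{eq:app_laplace_rho_mu} and \eqref{eq:app_logistic_rho_mu} directly. Write $p=\Phi(-\mu/2)\in(0,1/2)$. Then
\[
\rho_{\mathrm{Log}}^*=2\log\frac{1-p}{p},\qquad \rho_L^*=-2\log(2p).
\]
The inequality $\rho_{\mathrm{Log}}^*>\rho_L^*$ reduces to $(1-p)/p>1/(2p)$, i.e.\ $1-p>1/2$. This is exactly $p<1/2$, which holds because $\mu>0$.

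I expect no real obstacle. The only delicate point is that the tight indices are attained at the fixed points, and that is supplied by Lemma~\ref{lem:app_tight_endpoint_indices}. Positivity of $\Delta_{\X}$ is needed to invert the scale relation, and the strictness of the final inequality uses $\mu>0$.
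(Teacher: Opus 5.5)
Your proposal is correct, and your cross-check is exactly the paper's proof: with \(p=\Phi(-\mu/2)\in(0,1/2)\), the paper computes \(\rho_{\mathrm{Log}}^*(\mu)-\rho_L^*(\mu)=2\log\bigl(2(1-p)\bigr)>0\) and then uses scale \(=\Delta_{\X}/\rho\). Your primary route is the same fixed-point comparison applied to the forward maps \(\rho\mapsto\mu_L(\rho),\mu_{\mathrm{Log}}(\rho)\), so it needs an extra monotonicity-and-inversion step that the closed forms \eqref{eq:app_laplace_rho_mu}--\eqref{eq:app_logistic_rho_mu} make unnecessary.
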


\begin{proof}
Let
\[
p=\Phi(-\mu/2)\in(0,1/2).
\]
Then
\[
\rho_L^*(\mu)
=
-2\log(2p)
\]
and
\[
\rho_{\mathrm{Log}}^*(\mu)
=
2\log\!\left(
\frac{1-p}{p}
\right).
\]
Therefore,
\[
\rho_{\mathrm{Log}}^*(\mu)
-
\rho_L^*(\mu)
=
2\log\!\bigl(2(1-p)\bigr)
>
0.
\]
Since the physical scale equals \(\Delta_{\X}\) divided by the standardized
separation, the logistic scale is smaller.
\end{proof}

\begin{lemma}[Endpoint pair is least private]
\label{lem:app_endpoint_least_private}
For the Gaussian, Laplace, and logistic location families, the trade-off
function is pointwise nonincreasing in the report separation
\[
d=|r-r'|.
\]
Consequently, the least-private report pair over \(\X\) is
\((\xlo,\xhi)\).
\end{lemma}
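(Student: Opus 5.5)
Translation invariance reduces the binary experiment generated by reports \(r<r'\) to the pair \(P_0\) versus \(P_d\), with \(d=r'-r\). For the Gaussian family this gives \(\mathcal T=G_{d/\sigma}\) by \eqref{eq:app_gaussian_tradeoff}. For the Laplace and logistic families it gives \(\mathcal T_{L,\rho}\) and \(\mathcal T_{\mathrm{Log},\rho}\) with \(\rho=d/b\) or \(\rho=d/\beta\), by Propositions~\ref{prop:app_laplace_tradeoff} and \ref{prop:app_logistic_tradeoff}. The plan is to show that each explicit trade-off function is pointwise nonincreasing in \(\rho\) for every fixed \(\alpha\in[0,1]\). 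Since \(\rho\) is proportional to \(d\), and \(d\le\Delta_{\X}\) with equality only at \((\xlo,\xhi)\), the endpoint pair then minimizes the trade-off function pointwise. It is therefore the least-private pair: it attains the tight GDP index \(\mu(T)=\sup_\alpha D_T(\alpha)\) and also the largest distinguishability for every level.

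The Gaussian case is immediate, because \(\Phi(\Phi^{-1}(1-\alpha)-d/\sigma)\) is decreasing in \(d\). The logistic case is also direct: \((1-\alpha)/(1-\alpha+e^\rho\alpha)\) has a denominator that is increasing in \(\rho\) for every \(\alpha\in(0,1]\), and the expression is constant at \(\alpha=0\).

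The Laplace case needs care because the branch boundaries \(\tfrac12e^{-\rho}\) move with \(\rho\). I would avoid branch bookkeeping with a structural argument. The family \(\{\Lap(d,b)\}_d\) has weak MLRP, so for \(d_1<d_2\) the experiment \((P_0,P_{d_2})\) is Blackwell more informative than \((P_0,P_{d_1})\). Concretely, translating by \(d_2-d_1\) and applying a suitable garbling would work, but the cleanest route is the general fact that in any location family with monotone likelihood ratio (Lehmann) the optimal type-II error at level \(\alpha\), namely \(\inf\{1-\E_{P_d}\psi:\E_{P_0}\psi\le\alpha\}\), is nonincreasing in \(d\). The reason is that the Neyman--Pearson test for \(P_0\) versus \(P_{d_1}\) is a threshold (possibly randomized) test, whose power is nondecreasing in the location by stochastic ordering. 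Fixing that test at level \(\alpha\) under \(P_0\), its power under \(P_{d_2}\) is at least its power under \(P_{d_1}\), so \(\mathcal T(P_0,P_{d_2})(\alpha)\le\mathcal T(P_0,P_{d_1})(\alpha)\). This argument covers all three families uniformly, so I would present it as the main proof and the explicit formulas as checks.

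The main obstacle is making the threshold-test claim rigorous under weak MLRP, where the Laplace likelihood ratio has flat tails and the optimal tests require randomization on the tails. I would handle this by noting that any test of the form \(\psi=\ind\{y>c\}+\gamma\ind\{y=c\}\), or more generally any test nondecreasing in \(y\), has \(\E_{P_d}\psi\) nondecreasing in \(d\) because \(P_d\) is stochastically increasing in \(d\). It then suffices to know that some Neyman--Pearson optimal test at level \(\alpha\) is nondecreasing in \(y\). That holds because the likelihood ratio is weakly increasing, so \(\{\text{LR}>k\}\) is an upper set and randomization can be placed on an interval.

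As a direct sanity check for the Laplace case, one can differentiate each branch of \eqref{eq:app_laplace_tradeoff} in \(\rho\). The branches \(1-e^\rho\alpha\), \(e^{-\rho}/(4\alpha)\), and \(e^{-\rho}(1-\alpha)\) are all decreasing in \(\rho\), and \(\mathcal T_{L,\rho}\) is continuous in \((\alpha,\rho)\). Piecewise monotonicity combined with continuity across the moving boundaries therefore confirms the claim.
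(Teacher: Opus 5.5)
Your proof is correct, but its main line differs from the paper's. The paper reads monotonicity off the closed forms it has already derived. The Gaussian trade-off is \(G_{d/\sigma}\). The Laplace and logistic trade-offs of Propositions~\ref{prop:app_laplace_tradeoff} and~\ref{prop:app_logistic_tradeoff} depend on \(d\) only through \(\rho=d/s\), and the paper asserts that each is pointwise decreasing in \(\rho\). The maximal separation \(\Delta_{\X}\) then finishes the proof.

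You instead take a Neyman--Pearson test for \(P_0\) versus \(P_{d_1}\) that is nondecreasing in \(y\). Weak MLRP makes this available: put constant randomization on the interval where the likelihood ratio equals the critical value. You keep that test's size under \(P_0\), and first-order stochastic dominance of \(P_{d_2}\) over \(P_{d_1}\) shows its power can only rise. Hence \(\mathcal T(P_0,P_{d_2})\le\mathcal T(P_0,P_{d_1})\) pointwise.

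What your route buys is generality and transparency. It covers every location family with monotone likelihood ratio (equivalently, log-concave noise) without any closed form. It also shows that MLRP is the hypothesis doing the work; for, say, multimodal noise the trade-off need not be monotone in \(d\).

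What the paper's route buys is brevity, given the formulas, and strict decrease at interior levels directly. It does not spell out that the Laplace branch boundary \(\tfrac12e^{-\rho}\) moves with \(\rho\). Your continuity check handles this correctly: both pieces equal \(\tfrac12\) at \(\rho=-\log(2\alpha)\).

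Your aside about ``translating by \(d_2-d_1\) and applying a suitable garbling'' is not justified as stated and is not needed. Drop it and let the test argument carry the proof.
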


\begin{proof}
For the Gaussian family, the trade-off function is \(G_{d/\sigma}\), which
decreases pointwise in \(d\). For the Laplace and logistic families, the
trade-off functions depend on \(d\) only through the standardized separation
\(\rho=d/s\), and each is pointwise decreasing in \(\rho\). The maximal
separation over \(\X\) is \(\Delta_{\X}\).
\end{proof}

\begin{proof}[Proof of Proposition~\ref{prop:common_mu_calibration}]
For the Gaussian family,
\[
\mu_G(\sigma)
=
\frac{\Delta_{\X}}{\sigma},
\]
so
\[
\sigma^*(\mu)
=
\frac{\Delta_{\X}}{\mu}.
\]
For the Laplace and logistic families, combine
Lemma~\ref{lem:app_tight_endpoint_indices} with
\eqref{eq:app_laplace_rho_mu} and
\eqref{eq:app_logistic_rho_mu}. Because scale equals
\(\Delta_{\X}/\rho\), this gives
\[
b_L^*(\mu)
=
\frac{\Delta_{\X}}
{-2\log(2\Phi(-\mu/2))}
\]
and
\[
\beta^*(\mu)
=
\frac{\Delta_{\X}}
{2\log(\Phi(\mu/2)/\Phi(-\mu/2))}.
\]
Lemma~\ref{lem:app_scale_order} gives the scale ordering, while
Lemma~\ref{lem:app_endpoint_least_private} identifies the least-private
report pair.
\end{proof}

% ---------------------------------------------------------------------
\subsection{Laplace Posterior Pooling}
\label{app:laplace_pooling}
% ---------------------------------------------------------------------

Suppose
\[
X\in[\xlo,\xhi],
\qquad
Y=X+\eta,
\qquad
\eta\sim\Lap(0,b).
\]

\begin{proposition}[Exact tail pooling under Laplace noise]
\label{prop:app_laplace_tail_pooling}
For every \(y\le\xlo\),
\begin{equation}
\label{eq:app_laplace_left_posterior}
\pi(dx\mid y)
=
\frac{
e^{-x/b}f(x)\,dx
}{
\int_{\X}e^{-z/b}f(z)\,dz
}.
\end{equation}
For every \(y\ge\xhi\),
\begin{equation}
\label{eq:app_laplace_right_posterior}
\pi(dx\mid y)
=
\frac{
e^{x/b}f(x)\,dx
}{
\int_{\X}e^{z/b}f(z)\,dz
}.
\end{equation}
Thus every posterior expectation is constant on each global signal tail.
\end{proposition}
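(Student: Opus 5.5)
The plan is to apply Bayes' rule directly and observe that, on each global tail, the Laplace kernel factors into a piece depending only on \(y\) and a piece depending only on \(x\). The posterior of \(X\) given \(Y=y\) has density
\[
\pi(x\mid y)=\frac{k_L(y\mid x)f(x)}{m(y)},
\qquad
k_L(y\mid x)=\frac{1}{2b}e^{-|y-x|/b}.
\]
The marginal \(m(y)=\int_{\X}k_L(y\mid z)f(z)\,dz\) is finite and strictly positive because the kernel is bounded and positive.

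\emph{Left tail.} Fix \(y\le\xlo\). For every \(x\in\X=[\xlo,\xhi]\) we have \(x\ge\xlo\ge y\), so \(|y-x|=x-y\). Hence
\[
k_L(y\mid x)=\frac{1}{2b}e^{y/b}e^{-x/b}.
\]
The factor \(\frac{1}{2b}e^{y/b}\) does not depend on \(x\). It appears in both the numerator and in \(m(y)=\frac{1}{2b}e^{y/b}\int_{\X}e^{-z/b}f(z)\,dz\), so it cancels. This gives \eqref{eq:app_laplace_left_posterior}.

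\emph{Right tail.} Fix \(y\ge\xhi\). Now \(|y-x|=y-x\) for all \(x\in\X\), so \(k_L(y\mid x)=\frac{1}{2b}e^{-y/b}e^{x/b}\). The same cancellation yields \eqref{eq:app_laplace_right_posterior}.

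\emph{Constancy of posterior expectations.} The posterior measure on each tail is the same for every \(y\) in that tail. Therefore, for any integrable \(h\) — in particular \(h(x)=x\), \(h=\virt\), or \(h=\Psival(\cdot,\lambda)\) — the posterior expectation \(\E[h(X)\mid Y=y]\) is constant on \((-\infty,\xlo]\) and on \([\xhi,\infty)\). Integrability is automatic: the weights \(e^{\pm x/b}f(x)\) are bounded on the compact set \(\X\), and \(h\) is continuous on \(\X\) for the cases of interest.

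The only care needed is at the boundary points \(y=\xlo\) and \(y=\xhi\). There the sign identity for \(|y-x|\) still holds, with equality at a single point of measure zero, so the endpoints can be included in the closed tails. I expect no substantive obstacle beyond this; the argument is a direct factorization.
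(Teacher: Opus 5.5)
Your proof is correct and follows the paper's argument: on each global tail the Laplace kernel factors into a $y$-only term times $e^{\mp x/b}$, and that $y$-only term cancels in Bayes' formula. Your endpoint caveat is unnecessary. At $y=\xlo$ the identity $|y-x|=x-y$ holds exactly for every $x\in\X$, including $x=\xlo$, and likewise at $y=\xhi$ the identity $|y-x|=y-x$ holds exactly for every $x\in\X$.
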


\begin{proof}
If \(y\le\xlo\), then \(y\le x\) for every \(x\in\X\), and hence
\[
k_L(y\mid x)
=
\frac{1}{2b}e^{-(x-y)/b}
=
\frac{e^{y/b}}{2b}e^{-x/b}.
\]
The factor depending on \(y\) cancels in Bayes' formula, yielding
\eqref{eq:app_laplace_left_posterior}.

If \(y\ge\xhi\), then \(y\ge x\) for every \(x\in\X\), so
\[
k_L(y\mid x)
=
\frac{1}{2b}e^{-(y-x)/b}
=
\frac{e^{-y/b}}{2b}e^{x/b}.
\]
Again, the \(y\)-dependent factor cancels, yielding
\eqref{eq:app_laplace_right_posterior}.
\end{proof}

\begin{corollary}[Atoms in Laplace posterior scores]
\label{cor:app_laplace_score_atoms}
Let \(h:\X\to\R\) be integrable, and define
\[
M_h(y)
=
\E[h(X)\mid Y=y].
\]
Then \(M_h(Y)\) has an atom at each distinct pooled-tail value whose
corresponding signal tail has positive probability.
\end{corollary}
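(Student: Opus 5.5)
By Proposition~\ref{prop:app_laplace_tail_pooling}, the map \(y\mapsto M_h(y)\) is constant on each global signal tail. On the left tail \(\{y\le\xlo\}\) it equals
\[
c_-=\frac{\int_{\X}h(x)e^{-x/b}f(x)\,dx}{\int_{\X}e^{-z/b}f(z)\,dz}.
\]
On the right tail \(\{y\ge\xhi\}\) it equals
\[
c_+=\frac{\int_{\X}h(x)e^{x/b}f(x)\,dx}{\int_{\X}e^{z/b}f(z)\,dz}.
\]
Both constants are finite. Integrability of \(h\) against \(f\), combined with boundedness of \(e^{\pm x/b}\) on the compact set \(\X\), makes each numerator finite. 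Each denominator is strictly positive because \(f>0\).

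The argument then reduces to a direct inclusion of events. Fix a pooled-tail value \(c\in\{c_-,c_+\}\) and let \(A_c\) be the union of the signal tails on which \(M_h\) takes the value \(c\). This is one tail if \(c_-\ne c_+\) and both tails otherwise; the latter covers the "distinct value" clause. Then
\[
\{Y\in A_c\}\subseteq\{M_h(Y)=c\},
\qquad\text{so}\qquad
\Prob(M_h(Y)=c)\ge\Prob(Y\in A_c).
\]
The right-hand side is positive by hypothesis, so \(c\) is an atom of the law of \(M_h(Y)\).

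Finally, I would note that for the Laplace channel the hypothesis is in fact always satisfied. The marginal density \(m(y)=\int k_L(y\mid x)f(x)\,dx\) is strictly positive on all of \(\R\), so each tail \((-\infty,\xlo]\) and \([\xhi,\infty)\) has positive probability under \(m\). Hence \(M_h(Y)\) always has atoms at \(c_-\) and \(c_+\).

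No step is a real obstacle. The only points needing care are the well-definedness of \(c_\pm\) and the bookkeeping when \(c_-=c_+\), where the two tail events merge into a single atom.
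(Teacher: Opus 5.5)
Your proof is correct and follows the same route as the paper: Proposition~\ref{prop:app_laplace_tail_pooling} makes \(M_h\) constant on each global signal tail, and those tails have positive probability (as you note, this always holds because the Laplace marginal density is strictly positive), so the pooled values are atoms. Your explicit formulas for \(c_\pm\), the finiteness check, and the handling of the case \(c_-=c_+\) spell out details the paper leaves implicit.
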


\begin{proof}
Proposition~\ref{prop:app_laplace_tail_pooling} shows that \(M_h(y)\) is
constant on \((-\infty,\xlo]\) and on \([\xhi,\infty)\). Each of these
signal events has positive probability under Laplace noise.
\end{proof}

% ---------------------------------------------------------------------
\subsection{Uniform Prior with Gaussian Noise}
\label{app:uniform_gaussian}
% ---------------------------------------------------------------------

Suppose
\[
X\sim\Unif[-1,1],
\qquad
Y=X+\eta,
\qquad
\eta\sim\Normal(0,\sigma^2).
\]
Conditional on \(Y=y\), \(X\) has a \(\Normal(y,\sigma^2)\) law truncated
to \([-1,1]\). Define
\[
a(y)
=
\frac{-1-y}{\sigma},
\qquad
b(y)
=
\frac{1-y}{\sigma}.
\]

\begin{proposition}[Gaussian posterior mean under a uniform prior]
\label{prop:app_uniform_gaussian_mean}
The posterior mean is
\begin{equation}
\label{eq:uniform_gaussian_posterior_mean}
h(y,\sigma)
=
y
+
\sigma
\frac{
\varphiN(a(y))-\varphiN(b(y))
}{
\Phi(b(y))-\Phi(a(y))
}.
\end{equation}
Moreover, \(h(\cdot,\sigma)\) is strictly increasing.
\end{proposition}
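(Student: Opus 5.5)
The plan has two parts: first derive the formula from Bayes' rule and the truncated-normal mean, then prove strict monotonicity by a variance argument. Since the prior density is \(\tfrac12\ind\{x\in[-1,1]\}\) and \(k(y\mid x)=\sigma^{-1}\varphiN((y-x)/\sigma)\), the posterior density of \(X\) given \(Y=y\) is proportional to \(\varphiN((x-y)/\sigma)\ind\{|x|\le1\}\). This is the \(\Normal(y,\sigma^2)\) law truncated to \([-1,1]\). Substituting \(x=y+\sigma u\) turns the truncation limits into \(a(y)\) and \(b(y)\). The normalizing constant is \(\Phi(b(y))-\Phi(a(y))\), and the identity \(\int_a^b u\varphiN(u)\,du=\varphiN(a)-\varphiN(b)\), which follows from \(\varphiN'(u)=-u\varphiN(u)\), gives \eqref{eq:uniform_gaussian_posterior_mean} directly.

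For strict monotonicity I will avoid differentiating the closed form and use an exponential-family argument instead. The posterior density can be written as
\[
\pi(x\mid y)\propto \exp\{xy/\sigma^2\}\,e^{-x^2/(2\sigma^2)}\ind\{|x|\le1\},
\]
which is a one-parameter exponential family in the natural parameter \(\theta=y/\sigma^2\) with sufficient statistic \(x\). Let \(Z(y)=\int_{-1}^1 e^{xy/\sigma^2-x^2/(2\sigma^2)}\,dx\). Then \(h(y,\sigma)=\sigma^2\,\partial_y\log Z(y)\). Differentiation under the integral is justified because the integration domain is compact and the integrand is smooth. Differentiating once more gives
\[
\partial_y h(y,\sigma)=\sigma^{-2}\Var(X\mid Y=y).
\]
The conditional variance is strictly positive, since the posterior has a positive density on an interval of positive length. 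Hence \(\partial_y h>0\) for every \(y\), and \(h(\cdot,\sigma)\) is strictly increasing.

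An alternative route is to note that the Gaussian channel satisfies strict MLRP (Proposition~\ref{prop:app_gaussian_privacy}) and apply Lemma~\ref{lem:mlrp_fosd} to the nonconstant increasing function \(x\mapsto x\). The variance identity is more self-contained, however, and it yields an explicit positive derivative, so I will use it. The only real care needed is the interchange of derivative and integral, which is routine here. The main obstacle would only arise if one tried to show monotonicity by differentiating the ratio of Mills-type terms in \eqref{eq:uniform_gaussian_posterior_mean}; the exponential-family representation sidesteps that computation entirely.
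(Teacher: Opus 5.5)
Your proof is correct. The derivation of the closed form matches the paper, which simply cites the standard truncated-normal mean formula; you carry out the substitution \(x=y+\sigma u\) and the identity \(\int_a^b u\varphiN(u)\,du=\varphiN(a)-\varphiN(b)\) explicitly. For strict monotonicity the paper uses the route you list as an alternative: strict MLRP of the Gaussian location family (Proposition~\ref{prop:app_gaussian_privacy}) combined with the strict clause of Lemma~\ref{lem:mlrp_fosd}.

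Your exponential-family argument is a genuinely different route. You write the posterior as a tilt of \(e^{-x^2/(2\sigma^2)}\ind\{|x|\le1\}\) with natural parameter \(y/\sigma^2\), and obtain \(\partial_y h(y,\sigma)=\sigma^{-2}\Var(X\mid Y=y)\). This buys three things. First, it is self-contained and does not rely on the strict-inequality part of Lemma~\ref{lem:mlrp_fosd}, which the paper asserts only briefly. Second, it gives a strictly positive derivative at every \(y\), which is stronger than strict increase. Third, it quantifies how noise mutes responsiveness: since \(X\in[-1,1]\), the posterior variance is at most one, so \(0<\partial_y h\le\sigma^{-2}\). The argument also extends verbatim to any nondegenerate prior on a bounded interval, because the prior only changes the base measure.

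The paper's route buys generality in the channel. The same lemma covers any strict-MLRP kernel, for example the logistic channel, whose log-likelihood is not of the form \(xy\) plus separable terms. It also covers any increasing function of \(X\), which is how the paper obtains monotonicity of the scores \(S_i(\cdot,\lambda)\) throughout.
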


\begin{proof}
The first claim is the standard mean formula for a truncated normal
distribution. Strict monotonicity follows from strict MLRP of the Gaussian
location family and Lemma~\ref{lem:mlrp_fosd}.
\end{proof}

For \(X\sim\Unif[-1,1]\),
\[
F(x)=\frac{x+1}{2},
\qquad
f(x)=\frac12,
\]
so
\[
\virt(x)
=
x-\frac{1-F(x)}{f(x)}
=
2x-1.
\]
Therefore,
\begin{equation}
\label{eq:uniform_gaussian_posterior_virtual}
\widehat J(y)
=
2h(y,\sigma)-1,
\end{equation}
and
\begin{equation}
\label{eq:uniform_gaussian_posterior_score_appendix}
S(y,\lambda)
=
(1+2\lambda)h(y,\sigma)-\lambda.
\end{equation}

% ---------------------------------------------------------------------
\subsection{Equal-Scale Laplace--Logistic Ordering}
\label{app:equal_scale_blackwell}
% ---------------------------------------------------------------------

The characteristic functions of centered Laplace and logistic random
variables with scale \(s\) are
\[
\phi_{\mathrm{Lap},s}(t)
=
\frac{1}{1+s^2t^2}
\]
and
\[
\phi_{\mathrm{Log},s}(t)
=
\frac{\pi st}{\sinh(\pi st)}.
\]
Euler's product formula gives
\[
\frac{\sinh(\pi z)}{\pi z}
=
\prod_{k=1}^{\infty}
\left(
1+\frac{z^2}{k^2}
\right).
\]
Hence
\begin{equation}
\label{eq:app_logistic_cf_factorization}
\phi_{\mathrm{Log},s}(t)
=
\frac{1}{1+s^2t^2}
\prod_{k=2}^{\infty}
\left(
1+\frac{s^2t^2}{k^2}
\right)^{-1}.
\end{equation}

\begin{lemma}[Logistic noise as Laplace noise plus independent noise]
\label{lem:app_logistic_laplace_convolution}
There exists a centered random variable \(U_s\), independent of
\(L_s\sim\Lap(0,s)\), such that
\[
L_s+U_s
\sim
\Logistic(0,s).
\]
\end{lemma}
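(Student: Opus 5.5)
The plan is to work entirely with characteristic functions, using the factorization \eqref{eq:app_logistic_cf_factorization}. I define
\[
\phi_{U_s}(t)=\frac{\phi_{\mathrm{Log},s}(t)}{\phi_{\mathrm{Lap},s}(t)}=\prod_{k=2}^{\infty}\Bigl(1+\frac{s^2t^2}{k^2}\Bigr)^{-1}.
\]
It then suffices to show that this is the characteristic function of some probability law. If so, I take \(U_s\) with that law, independent of \(L_s\). The characteristic function of \(L_s+U_s\) is then the product \(\phi_{\mathrm{Lap},s}\phi_{U_s}=\phi_{\mathrm{Log},s}\). Uniqueness of laws with a given characteristic function yields \(L_s+U_s\sim\Logistic(0,s)\).

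The key observation is that each factor \((1+s^2t^2/k^2)^{-1}\) is itself the characteristic function of \(\Lap(0,s/k)\). So the natural candidate is
\[
U_s=\sum_{k=2}^{\infty}L^{(k)},\qquad L^{(k)}\sim\Lap(0,s/k)\ \text{independent}.
\]
I then need convergence of this series. Each term is centered with variance \(2s^2/k^2\), which is summable. By Kolmogorov's two-series theorem, or simply \(L^2\) convergence of a martingale with summable variances, the series converges almost surely and in \(L^2\). Its limit \(U_s\) is centered with finite variance \(2s^2(\pi^2/6-1)\).

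The partial sums have characteristic functions \(\prod_{k=2}^{N}(1+s^2t^2/k^2)^{-1}\). Almost sure convergence implies convergence in distribution, so these characteristic functions converge pointwise to \(\phi_{U_s}\). This identifies \(\phi_{U_s}\) with the infinite product. The product converges to a nonzero limit because \(\sum_k s^2t^2/k^2<\infty\).

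The only point needing care is this passage from the finite products to the infinite product. Lévy's continuity theorem handles it together with the almost sure convergence, so I do not expect any step to be a real obstacle. Finally, I observe that \(U_s\) is symmetric, being a sum of symmetric terms, and is centered.
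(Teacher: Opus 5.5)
Your proposal is correct and follows the paper's proof essentially step for step: the paper also takes \(U_s=\sum_{k\ge2}L_k\) with independent \(L_k\sim\Lap(0,s/k)\), gets convergence from \(\sum_{k\ge2}\Var(L_k)=2s^2\sum_{k\ge2}k^{-2}<\infty\), and matches the characteristic function of \(L_s+U_s\) to the logistic one via Euler's product \eqref{eq:app_logistic_cf_factorization}. Your extra care in passing from finite to infinite products fills in a step the paper leaves implicit. That step only needs bounded convergence of \(\E[e^{itS_N}]\) under almost sure convergence, not the full L\'evy continuity theorem.
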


\begin{proof}
Let
\[
L_k\sim\Lap(0,s/k),
\qquad
k\ge2,
\]
be mutually independent. Since
\[
\sum_{k=2}^{\infty}\Var(L_k)
=
2s^2\sum_{k=2}^{\infty}\frac1{k^2}
<
\infty,
\]
the series
\[
U_s
=
\sum_{k=2}^{\infty}L_k
\]
converges in \(L^2\). Its characteristic function is
\[
\phi_{U_s}(t)
=
\prod_{k=2}^{\infty}
\left(
1+\frac{s^2t^2}{k^2}
\right)^{-1}.
\]
Multiplying by \(\phi_{\mathrm{Lap},s}(t)\) and using
\eqref{eq:app_logistic_cf_factorization} gives the logistic characteristic
function.
\end{proof}

\begin{proof}[Proof of Proposition~\ref{prop:laplace_logistic_blackwell}]
Let \(L_s\sim\Lap(0,s)\) be independent of the random variable \(U_s\) in
Lemma~\ref{lem:app_logistic_laplace_convolution}. A Laplace signal has the
form
\[
Y_L=x+L_s.
\]
Adding the report-independent noise \(U_s\) gives
\[
Y_L+U_s
=
x+L_s+U_s
\sim
x+\Logistic(0,s).
\]
Thus the logistic experiment is obtained by garbling the Laplace experiment.
Therefore,
\[
K_{\mathrm{Lap},s}
\succeq_{\mathrm B}
K_{\mathrm{Log},s}.
\]
The welfare conclusion follows from
Proposition~\ref{prop:blackwell_monotone}.
\end{proof}

% ---------------------------------------------------------------------
\subsection{Endpoint Ordering under Common Tight \texorpdfstring{\(\mu\)}{mu}}
\label{app:endpoint_common_mu}
% ---------------------------------------------------------------------

Fix \(\mu>0\), and define
\[
p
=
\Phi(-\mu/2)
\in(0,1/2).
\]
Under common tight-\(\mu\) calibration,
\[
e^{-\rho_L^*}
=
4p^2,
\qquad
e^{\rho_{\mathrm{Log}}^*}
=
\left(
\frac{1-p}{p}
\right)^2.
\]
Thus the calibrated endpoint trade-off functions are
\begin{equation}
\label{eq:app_calibrated_laplace_tradeoff}
T_L(\alpha)
=
\begin{cases}
1-\dfrac{\alpha}{4p^2},
&
0\le\alpha\le2p^2,
\\[8pt]
\dfrac{p^2}{\alpha},
&
2p^2\le\alpha\le\frac12,
\\[8pt]
4p^2(1-\alpha),
&
\frac12\le\alpha\le1,
\end{cases}
\end{equation}
and
\begin{equation}
\label{eq:app_calibrated_logistic_tradeoff}
T_{\mathrm{Log}}(\alpha)
=
\frac{1-\alpha}
{
1-\alpha+
\left(
\frac{1-p}{p}
\right)^2\alpha
}.
\end{equation}

\begin{lemma}[Endpoint trade-off ordering]
\label{lem:app_endpoint_tradeoff_order}
For every \(p\in(0,1/2)\),
\[
T_{\mathrm{Log}}(\alpha)
\le
T_L(\alpha)
\qquad
\forall\alpha\in[0,1].
\]
Equality holds exactly at
\[
\alpha\in\{0,p,1\}.
\]
\end{lemma}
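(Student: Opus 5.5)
The plan is to reduce the comparison to $[0,p]$ using self-inverseness, and then to compare the explicit calibrated formulas \eqref{eq:app_calibrated_laplace_tradeoff} and \eqref{eq:app_calibrated_logistic_tradeoff} branch by branch, using reciprocals.

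\emph{Reduction to $[0,p]$.} First I will check that both calibrated trade-off functions have interior fixed point $p$.
\begin{itemize}
\item Since $p<1/2$ gives $2p^2\le p\le 1/2$, the point $p$ lies on the middle Laplace branch, and $T_L(p)=p^2/p=p$.
\item For the logistic, a one-line computation gives $T_{\mathrm{Log}}(p)=p/(p+1-p)=p$.
\end{itemize}
Both functions are continuous, nonincreasing and self-inverse (Lemma~\ref{lem:app_tight_endpoint_indices}). Hence the epigraph $\{(a,b):b\ge T(a)\}$ of each is invariant under the reflection $(a,b)\mapsto(b,a)$. Pointwise $T_{\mathrm{Log}}\le T_L$ is equivalent to nesting of these epigraphs.

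Every point of the graph of $T_L$ with abscissa in $[p,1]$ is the reflection of a graph point with abscissa in $[0,p]$. So it suffices to prove $T_{\mathrm{Log}}(\alpha)\le T_L(\alpha)$ on $[0,p]$, with equality exactly at $\alpha\in\{0,p\}$. Equality at $1$ then comes from reflecting the equality at $0$.

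\emph{Comparison on $[0,p]$.} I will compare reciprocals, using
\[
\frac{1}{T_{\mathrm{Log}}(\alpha)}=1+\frac{c\,\alpha}{1-\alpha},\qquad c=\Bigl(\frac{1-p}{p}\Bigr)^2 .
\]

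On the middle branch $[2p^2,p]$ the claim $1/T_{\mathrm{Log}}\ge\alpha/p^2$ becomes, after clearing denominators,
\[
p^2(1-\alpha)+(1-p)^2\alpha\ge\alpha(1-\alpha).
\]
This simplifies exactly to $(\alpha-p)^2\ge0$, so equality holds only at $\alpha=p$.

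On the first branch $(0,2p^2]$ the claim $1/T_{\mathrm{Log}}\ge 4p^2/(4p^2-\alpha)$ reduces, after dividing by $\alpha>0$, to the linear inequality
\[
4(1-p)^2-c\,\alpha\ge 1-\alpha .
\]
A linear inequality holds strictly on an interval if it holds strictly at both endpoints, so I will check those.
\begin{itemize}
\item At $\alpha=0$ it reads $4(1-p)^2>1$, which is strict since $p<1/2$.
\item At $\alpha=2p^2$ it reads $2(1-p)^2\ge1-2p^2$, which is $(2p-1)^2\ge0$ and is strict for $p<1/2$.
\end{itemize}
Hence the inequality is strict on $(0,2p^2]$, and equality at $\alpha=0$ is trivial since both functions equal $1$ there.

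\emph{Main obstacle.} The only delicate point is the reflection step: it must transfer both the inequality and the exact equality set from $[0,p]$ to $[p,1]$. I will state it via the epigraph-nesting argument above, which needs only continuity, monotonicity and self-inverseness. Alternatively, I would simply verify the third branch $T_L(\alpha)=4p^2(1-\alpha)$ directly, using the same reciprocal manipulation.
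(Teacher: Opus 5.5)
Your proof is correct. Its computational core is the same as the paper's. Your reciprocal inequality on $[2p^2,p]$ reduces to exactly the paper's numerator $(\alpha-p)^2$. Your linear function $4(1-p)^2-c\alpha-(1-\alpha)$ on $(0,2p^2]$ equals $(1-2p)\bigl(3-2p-\alpha/p^2\bigr)$, which is the paper's first-branch factor; it is positive there because $\alpha/p^2\le 2<3-2p$, so your endpoint check and the paper's factorization prove the same thing.

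The real difference is how $[p,1]$ is handled. The paper carries the middle-branch identity over all of $[2p^2,1/2]$. It then factors the third branch directly as $T_L-T_{\mathrm{Log}}=(1-\alpha)(1-2p)\bigl(4\alpha-(1+2p)\bigr)/(1+B\alpha)$, with $B=(1-2p)/p^2$. You instead use that both calibrated functions are decreasing involutions with common fixed point $p$. This transfers both the inequality and the equality set from $[0,p]$ to $[p,1]$. Concretely, if $T_{\mathrm{Log}}(a)=T_L(a)=b$ with $a\ge p$, then $T_{\mathrm{Log}}(b)=a=T_L(b)$ with $b\le p$, which forces $b\in\{0,p\}$ and hence $a\in\{1,p\}$.

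The paper's route is fully self-contained and needs no structural input. Yours needs self-inverseness of both functions. For $T_{\mathrm{Log}}=(1-\alpha)/(1+B\alpha)$ this is a one-line check. For the Laplace third branch it is built in, since the paper itself derived that branch by self-inversion. In exchange, your route explains why the $[p,1]$ half is automatic and saves the third-branch computation.
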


\begin{proof}
Set
\[
B
=
\frac{(1-p)^2}{p^2}-1
=
\frac{1-2p}{p^2}.
\]
Then
\[
T_{\mathrm{Log}}(\alpha)
=
\frac{1-\alpha}{1+B\alpha}.
\]

For \(0\le\alpha\le2p^2\),
\[
T_L(\alpha)-T_{\mathrm{Log}}(\alpha)
=
\frac{
\alpha(1-2p)
\left(
3-2p-\alpha/p^2
\right)
}{
4p^2(1+B\alpha)
}
\ge0.
\]

For \(2p^2\le\alpha\le1/2\),
\[
T_L(\alpha)-T_{\mathrm{Log}}(\alpha)
=
\frac{
(\alpha-p)^2
}{
\alpha(1+B\alpha)
}
\ge0.
\]

For \(1/2\le\alpha\le1\),
\[
T_L(\alpha)-T_{\mathrm{Log}}(\alpha)
=
\frac{
(1-\alpha)(1-2p)
\left(
4\alpha-(1+2p)
\right)
}{
1+B\alpha
}
\ge0.
\]
The displayed factorizations identify the equality points.
\end{proof}

\begin{proof}[Proof of Theorem~\ref{thm:endpoint_roc}]
Lemma~\ref{lem:app_endpoint_tradeoff_order} gives pointwise ordering of the
complete Neyman--Pearson trade-off functions. For binary experiments, this
is equivalent to Blackwell dominance. Hence
\[
K_{\mathrm{Log},\mu}^{\mathrm{end}}
\succeq_{\mathrm B}
K_{\mathrm{Lap},\mu}^{\mathrm{end}}.
\]
The endpoint welfare conclusion follows by applying
Proposition~\ref{prop:blackwell_monotone} to the endpoint-state decision
problem.
\end{proof}

% =====================================================================
% APPENDIX B
% =====================================================================

\section{Incentive, Posterior, and Monotonicity Results}
\label{app:auxiliary}

This appendix collects the general one-dimensional incentive arguments and
posterior identities used throughout the paper. These results depend on the
reporting structure and the truthful signal law, but not on the particular
form of the optimal allocation.

% ---------------------------------------------------------------------
\subsection{Interim Incentive Compatibility}
% ---------------------------------------------------------------------

\begin{proposition}[Interim BIC and envelope representation]
\label{prop:app_interim_bic}
Let
\[
Q_i:\X\to[0,1],
\qquad
T_i:\X\to\R
\]
be measurable. The following statements are equivalent.

\begin{enumerate}[label=(\roman*),leftmargin=2em]

\item For all \(x,r\in\X\),
\[
xQ_i(x)-T_i(x)
\ge
xQ_i(r)-T_i(r).
\]

\item The function \(Q_i\) is weakly increasing and there exists a constant
\(U_i(\xlo)\) such that
\begin{align}
U_i(x)
&=
U_i(\xlo)
+
\int_{\xlo}^{x}Q_i(z)\,dz,
\label{eq:app_envelope}
\\
T_i(x)
&=
xQ_i(x)
-
\int_{\xlo}^{x}Q_i(z)\,dz
-
U_i(\xlo).
\label{eq:app_payment_identity}
\end{align}

\end{enumerate}

Under either condition, interim individual rationality is equivalent to
\[
U_i(\xlo)\ge0.
\]
For a fixed interim allocation \(Q_i\), expected payment is maximized by
setting
\[
U_i(\xlo)=0.
\]
\end{proposition}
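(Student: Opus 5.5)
The proof is the standard one-dimensional Myerson screening argument, applied to the interim pair \((Q_i,T_i)\). The privacy channel only enters through how \(Q_i\) and \(T_i\) are built from \(q\) and \(t_i\). Once those functions are fixed, the deviation payoff \(U_i(x,r)=xQ_i(r)-T_i(r)\) is linear in the type. We prove (i)\(\Rightarrow\)(ii), then (ii)\(\Rightarrow\)(i), and then read off the IR and revenue statements.

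\emph{Step 1: (i)\(\Rightarrow\)(ii).} Write the incentive constraint for the pair \((x,r)\) and for \((r,x)\), and add them. This gives
\[
(x-r)\bigl(Q_i(x)-Q_i(r)\bigr)\ge0,
\]
so \(Q_i\) is weakly increasing. Next, (i) says that
\[
U_i(x)=\sup_{r\in\X}\bigl\{xQ_i(r)-T_i(r)\bigr\},
\]
with the supremum attained at \(r=x\). As a supremum of affine functions of \(x\), \(U_i\) is convex and therefore absolutely continuous on the compact interval \(\X\). For \(x'>x\), the two constraints give the sandwich
\[
(x'-x)Q_i(x)\le U_i(x')-U_i(x)\le (x'-x)Q_i(x').
\]
Hence \(U_i'=Q_i\) at every continuity point of the monotone function \(Q_i\), which is almost every point. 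Integrating yields \eqref{eq:app_envelope}. The payment identity \eqref{eq:app_payment_identity} follows from \(T_i=xQ_i-U_i\). The one point needing care is justifying the integration step without differentiability. I would avoid it entirely: the sandwich says \(U_i\) is Lipschitz with bounds \(Q_i\le1\), and summing the sandwich along partitions and passing to Riemann sums of the bounded monotone, hence Riemann-integrable, \(Q_i\) gives the integral form directly.

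\emph{Step 2: (ii)\(\Rightarrow\)(i).} Substitute the envelope formula to get
\[
U_i(x)-U_i(x,r)=\int_r^x Q_i(z)\,dz-(x-r)Q_i(r)=\int_r^x\bigl(Q_i(z)-Q_i(r)\bigr)\,dz.
\]
This is nonnegative in both cases \(x\ge r\) and \(x<r\), by monotonicity of \(Q_i\).

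\emph{Step 3: IR and revenue.} Because \(Q_i\ge0\), the envelope formula makes \(U_i\) weakly increasing, so \(U_i(x)\ge0\) for all \(x\) if and only if \(U_i(\xlo)\ge0\). For a fixed \(Q_i\), the identity
\[
\E[T_i(X_i)]=\E\Bigl[X_iQ_i(X_i)-\int_{\xlo}^{X_i}Q_i\Bigr]-U_i(\xlo)
\]
is decreasing in \(U_i(\xlo)\). Over the IR-feasible range \(U_i(\xlo)\ge0\), expected payment is therefore maximized at \(U_i(\xlo)=0\).

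The only mildly delicate step is the absolute-continuity and integration argument in Step 1. Everything else is algebra.
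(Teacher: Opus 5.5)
Your proposal is correct and follows the paper's proof. You derive monotonicity by adding the paired constraints, obtain the envelope formula from the sandwich inequality, prove the converse via $\int_r^x\bigl(Q_i(z)-Q_i(r)\bigr)\,dz\ge 0$, and read off IR and revenue from the monotonicity of $U_i$ and the one-for-one effect of $U_i(\xlo)$ on payments. Your Lipschitz/Darboux-sum justification of the integration step just fills in what the paper calls ``the standard monotone-envelope argument.''
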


\begin{proof}
Assume BIC. For \(x>x'\), incentive compatibility gives
\[
xQ_i(x)-T_i(x)
\ge
xQ_i(x')-T_i(x')
\]
and
\[
x'Q_i(x')-T_i(x')
\ge
x'Q_i(x)-T_i(x).
\]
Adding yields
\[
(x-x')
\bigl(
Q_i(x)-Q_i(x')
\bigr)
\ge0.
\]
Thus \(Q_i\) is weakly increasing.

Define truthful utility by
\[
U_i(x)
=
xQ_i(x)-T_i(x).
\]
The incentive inequalities imply
\[
(x-x')Q_i(x')
\le
U_i(x)-U_i(x')
\le
(x-x')Q_i(x)
\]
whenever \(x>x'\). The standard monotone-envelope argument gives
\[
U_i(x)-U_i(\xlo)
=
\int_{\xlo}^{x}Q_i(z)\,dz,
\]
which yields \eqref{eq:app_payment_identity} after solving for \(T_i(x)\).

Conversely, suppose \(Q_i\) is weakly increasing and the envelope identity
holds. For any \(x,r\in\X\),
\[
U_i(x)
-
\bigl(
xQ_i(r)-T_i(r)
\bigr)
=
\int_r^x
\bigl(
Q_i(z)-Q_i(r)
\bigr)\,dz.
\]
The right-hand side is nonnegative both when \(x\ge r\) and when \(x<r\),
because \(Q_i\) is weakly increasing. Hence truthful reporting is optimal.

Since \(Q_i\ge0\), the envelope formula implies that \(U_i\) is weakly
increasing, so individual rationality is equivalent to the lowest-type
condition. Finally, increasing \(U_i(\xlo)\) lowers every payment by the
same amount, proving the final claim.
\end{proof}

\begin{corollary}[Uniqueness of interim payments up to the lowest-type rent]
\label{cor:app_payment_uniqueness}
For a fixed weakly increasing interim allocation \(Q_i\), every BIC interim
payment schedule has the form
\[
T_i(x)
=
xQ_i(x)
-
\int_{\xlo}^{x}Q_i(z)\,dz
-
c_i
\]
for some constant \(c_i\). Interim IR requires \(c_i\ge0\).
\end{corollary}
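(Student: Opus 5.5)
The corollary follows almost entirely from Proposition~\ref{prop:app_interim_bic}, so the plan is to read it off that equivalence. Fix a weakly increasing interim allocation \(Q_i\) and let \(T_i\) be any measurable interim payment schedule for which the pair \((Q_i,T_i)\) is BIC, meaning condition (i) of Proposition~\ref{prop:app_interim_bic} holds. The implication (i)\(\Rightarrow\)(ii) then gives a constant \(U_i(\xlo)\) for which \eqref{eq:app_payment_identity} holds. Setting \(c_i=U_i(\xlo)\), the payment schedule takes exactly the displayed form
\[
T_i(x)=xQ_i(x)-\int_{\xlo}^{x}Q_i(z)\,dz-c_i .
\]
The one point to check is that \(c_i\) does not depend on \(x\). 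This holds because evaluating \eqref{eq:app_payment_identity} at \(x=\xlo\) gives \(c_i=\xlo Q_i(\xlo)-T_i(\xlo)\), which is the truthful utility of the lowest type. It is a single number fixed by \(T_i\).

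For completeness I would also record the converse: every schedule of this form, for any constant \(c_i\), is BIC with \(Q_i\). This is the direction (ii)\(\Rightarrow\)(i) of the proposition, using that \(Q_i\) is weakly increasing. Together the two directions show that the family of BIC payment schedules for \(Q_i\) is exactly this one-parameter family indexed by \(c_i\).

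For the IR claim, the proposition states that under BIC, interim IR is equivalent to \(U_i(\xlo)\ge0\). Since \(U_i(\xlo)=c_i\), IR requires \(c_i\ge0\). I would note the reason briefly: \(U_i(x)=c_i+\int_{\xlo}^{x}Q_i\) is weakly increasing because \(Q_i\ge0\), so the IR constraint binds only at \(\xlo\).

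There is no real obstacle here. The only care needed is to identify \(c_i\) with the lowest-type rent, so that it is clearly a constant and not an \(x\)-dependent quantity.
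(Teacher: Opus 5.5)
Your proposal is correct and follows the paper's route: the paper gives no separate proof and reads the corollary directly off Proposition~\ref{prop:app_interim_bic}, exactly as you do, with $c_i=U_i(\xlo)=\xlo Q_i(\xlo)-T_i(\xlo)$ supplied by the implication (i)$\Rightarrow$(ii), and the IR claim following from the proposition's statement that interim IR is equivalent to $U_i(\xlo)\ge0$. Your converse via (ii)$\Rightarrow$(i) is correct but not needed, and the point you flag as needing a check, that $c_i$ is constant, is already part of the proposition's conclusion.
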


% ---------------------------------------------------------------------
\subsection{Expected Payments}
% ---------------------------------------------------------------------

\begin{lemma}[Expected virtual-surplus identity]
\label{lem:expected_transfer_identity}
For every BIC interim allocation and payment pair,
\begin{equation}
\label{eq:app_expected_payment}
\E[T_i(X_i)]
=
\E[
Q_i(X_i)\virt(X_i)
]
-
U_i(\xlo).
\end{equation}
\end{lemma}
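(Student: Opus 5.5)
The plan is to start from the payment identity of Proposition~\ref{prop:app_interim_bic}. Then I will integrate the rent term by parts against the prior $F$, which is the standard Myerson computation. By \eqref{eq:app_payment_identity}, for every BIC pair,
\[
\E[T_i(X_i)]
=
\E[X_iQ_i(X_i)]
-
\E\!\left[\int_{\xlo}^{X_i}Q_i(z)\,dz\right]
-
U_i(\xlo).
\]
The first and last terms already have the required form. All the work is in rewriting the middle term as $\E\bigl[Q_i(X_i)\,(1-F(X_i))/f(X_i)\bigr]$.

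For the middle term I will write the expectation as a double integral over $\{(x,z):\xlo\le z\le x\le\xhi\}$ with integrand $Q_i(z)f(x)$. Since $0\le Q_i\le1$ and $\X$ is compact, the integrand is bounded and nonnegative, so Tonelli's theorem lets me swap the order of integration:
\[
\int_{\xlo}^{\xhi}\int_{\xlo}^{x}Q_i(z)\,dz\,f(x)\,dx
=
\int_{\xlo}^{\xhi}Q_i(z)\bigl(1-F(z)\bigr)\,dz .
\]
Next I multiply and divide by $f(z)$, which is legitimate because $f>0$ on $\X$ by Assumption~\ref{ass:regular}. This turns the right-hand side into $\E\bigl[Q_i(X_i)(1-F(X_i))/f(X_i)\bigr]$. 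Combining it with the first term gives $\E[Q_i(X_i)\virt(X_i)]$ by the definition \eqref{eq:virtual_value}, and this establishes \eqref{eq:app_expected_payment}.

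The only step needing care is integrability. We need $\E[X_iQ_i(X_i)]$ and $\E[|Q_i(X_i)\virt(X_i)|]$ to be finite, so that splitting the expectation is valid and no $\infty-\infty$ arises. On the compact support, $X_iQ_i(X_i)$ is bounded. Also $(1-F)/f$ is continuous on $\X$ because $f$ is continuous and strictly positive, hence it is bounded. So every term is finite and the argument goes through.

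Measurability of $Q_i$ is part of the hypotheses. Monotonicity of $Q_i$ is not actually needed for the identity itself; it is used only through the envelope representation, which Proposition~\ref{prop:app_interim_bic} already supplies.
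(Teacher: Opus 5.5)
Your proposal is correct and follows the paper's proof essentially line by line. You integrate the envelope payment identity against $f$, swap the order of integration in the rent term by Tonelli to obtain $\int_{\xlo}^{\xhi}Q_i(z)\{1-F(z)\}\,dz$, and recombine with $\E[X_iQ_i(X_i)]$ to get the virtual value; your added integrability remarks, on the boundedness of $xQ_i(x)$ and of $(1-F)/f$ on the compact support, make explicit what the paper leaves implicit.
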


\begin{proof}
Integrating \eqref{eq:app_payment_identity} against \(f\) gives
\[
\begin{aligned}
\E[T_i(X_i)]
&=
\int_{\xlo}^{\xhi}
xQ_i(x)f(x)\,dx
\\
&\quad
-
\int_{\xlo}^{\xhi}
\left(
\int_{\xlo}^{x}Q_i(z)\,dz
\right)
f(x)\,dx
-
U_i(\xlo).
\end{aligned}
\]
By Tonelli's theorem,
\[
\int_{\xlo}^{\xhi}
\left(
\int_{\xlo}^{x}Q_i(z)\,dz
\right)
f(x)\,dx
=
\int_{\xlo}^{\xhi}
Q_i(z)\{1-F(z)\}\,dz.
\]
Therefore,
\[
\E[T_i(X_i)]
=
\int_{\xlo}^{\xhi}
Q_i(x)
\left[
x-\frac{1-F(x)}{f(x)}
\right]
f(x)\,dx
-
U_i(\xlo),
\]
which is \eqref{eq:app_expected_payment}.
\end{proof}

\begin{corollary}[Revenue-maximizing rent normalization]
\label{cor:app_zero_rent}
Among all BIC and interim-IR payment schedules implementing a fixed interim
allocation \(Q_i\), expected payment is maximized by
\[
U_i(\xlo)=0.
\]
\end{corollary}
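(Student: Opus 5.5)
The corollary follows from two results already established: Proposition~\ref{prop:app_interim_bic}, which ties every admissible schedule to the envelope form, and Lemma~\ref{lem:expected_transfer_identity}, which gives the expected payment of such a schedule.

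\textbf{Step 1: reduce to a one-parameter family.} Fix a weakly increasing interim allocation \(Q_i\). By Proposition~\ref{prop:app_interim_bic}, or equivalently Corollary~\ref{cor:app_payment_uniqueness}, every BIC interim payment schedule implementing \(Q_i\) has the form
\[
T_i(x)=xQ_i(x)-\int_{\xlo}^{x}Q_i(z)\,dz-U_i(\xlo).
\]
So the admissible schedules are indexed by the single constant \(U_i(\xlo)\). Interim IR is equivalent to \(U_i(\xlo)\ge0\), again by that proposition. The feasible set of schedules is therefore exactly \(\{U_i(\xlo)\ge0\}\).

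\textbf{Step 2: express expected payment in terms of the constant.} Apply Lemma~\ref{lem:expected_transfer_identity}:
\[
\E[T_i(X_i)]=\E[Q_i(X_i)\virt(X_i)]-U_i(\xlo).
\]
The first term depends only on \(Q_i\) and is the same across all admissible schedules. It is finite because \(Q_i\in[0,1]\) and \(\virt\) is continuous on the compact set \(\X\) under Assumption~\ref{ass:regular}. Expected payment is therefore a strictly decreasing affine function of \(U_i(\xlo)\).

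\textbf{Step 3: maximize.} Over the feasible set \(U_i(\xlo)\ge0\), a strictly decreasing affine function is maximized uniquely at \(U_i(\xlo)=0\). Equivalently, any schedule with \(U_i(\xlo)>0\) is improved by subtracting that constant from the payment of every type; this shift preserves BIC and keeps IR binding at \(\xlo\).

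\textbf{Main obstacle.} There is essentially none. The only care needed is to confirm that the envelope representation covers every BIC schedule, so that no admissible schedule escapes the one-parameter family. This is exactly the content of Proposition~\ref{prop:app_interim_bic}.
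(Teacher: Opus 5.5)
Your proof is correct and follows the paper's own route: the corollary is read off from the envelope parametrization and IR characterization in Proposition~\ref{prop:app_interim_bic} together with the identity \(\E[T_i(X_i)]=\E[Q_i(X_i)\virt(X_i)]-U_i(\xlo)\) of Lemma~\ref{lem:expected_transfer_identity}, exactly as in your Steps 1--3. There is one sign slip in the ``equivalently'' remark of Step 3: to bring a schedule with \(U_i(\xlo)>0\) to zero rent you must \emph{add} \(U_i(\xlo)\) to every type's payment (equivalently, subtract it from every type's utility), not subtract it from the payments.
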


% ---------------------------------------------------------------------
\subsection{Posterior Factorization and Reduced-Form Representations}
% ---------------------------------------------------------------------

\begin{lemma}[Posterior factorization]
\label{lem:app_posterior_factorization}
Under the independent known-prior model,
\[
\mathcal L(X\mid Y=y)
=
\bigotimes_{i=1}^{n}
\mathcal L(X_i\mid Y_i=y_i).
\]
Consequently,
\[
\E[X_i\mid Y]
=
\widehat x_i(Y_i),
\qquad
\E[\virt(X_i)\mid Y]
=
\widehat J_i(Y_i).
\]
\end{lemma}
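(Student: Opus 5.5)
We will prove the product-posterior claim directly from Bayes' formula under the independent known-prior model, and then read off the two conditional-expectation identities.

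\emph{Step 1: joint density.} Under truthful reporting, the pairs \((X_i,Y_i)\) are independent across \(i\). The types are i.i.d.\ with density \(f\). The mediator's randomizations are conditionally independent given \(X\), as in \eqref{eq:truthful_joint_channel}. Hence the joint density of \((X,Y)\) with respect to Lebesgue measure factors as
\[
p(x,y)=\prod_{i=1}^{n}f(x_i)k(y_i\mid x_i).
\]
Integrating out \(x\) and applying Tonelli's theorem (all factors are nonnegative) gives the marginal \(f_Y(y)=\prod_i m(y_i)\), as in \eqref{eq:joint_signal_density}. By Assumption~\ref{ass:mlrp}, \(k>0\) and \(f>0\), so \(m(y_i)\in(0,\infty)\) for \(m\)-almost every \(y_i\).

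\emph{Step 2: posterior factorization.} Off the null set where some \(m(y_i)\) is zero or infinite, Bayes' formula gives
\[
\pi(x\mid y)=\frac{p(x,y)}{f_Y(y)}=\prod_{i=1}^{n}\frac{f(x_i)k(y_i\mid x_i)}{m(y_i)}.
\]
The \(i\)-th factor is exactly the one-signal posterior density of \(X_i\) given \(Y_i=y_i\). So the regular conditional law of \(X\) given \(Y=y\) is the product of the marginal posteriors.

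To be careful about versions, we verify the defining property of a regular conditional distribution. Let \(A=\prod_i A_i\) be a measurable rectangle and \(B\) a measurable set of signal profiles. By Fubini,
\[
\E[\ind\{X\in A\}\ind\{Y\in B\}]=\int_B \Bigl[\prod_i \pi_i(A_i\mid y_i)\Bigr] f_Y(y)\,dy .
\]
A \(\pi\)-\(\lambda\) argument then extends this from rectangles to all measurable subsets of \(\X^n\).

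\emph{Step 3: the conditional expectations.} Integrate the identity function of coordinate \(i\), and separately \(\virt\) of coordinate \(i\), against the product kernel. All other factors integrate to one, which gives \(\E[X_i\mid Y]=\widehat x_i(Y_i)\) and \(\E[\virt(X_i)\mid Y]=\widehat J_i(Y_i)\) almost surely.

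Integrability has to be checked. \(X_i\) is bounded because \(\X\) is compact. \(\virt(x)=x-(1-F(x))/f(x)\) is continuous on the compact \(\X\), since \(f\) is continuous and strictly positive by Assumption~\ref{ass:regular}; hence \(\virt\) is bounded. Therefore all conditional expectations exist and Fubini applies.

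\emph{Main obstacle.} The only delicate point is measure-theoretic: the statement is about conditional laws, which are defined only almost surely. We therefore argue through the Fubini identity in Step 2 rather than through pointwise division. Everything else is routine.
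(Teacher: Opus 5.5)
Your proposal is correct and is essentially the paper's proof: factor the truthful joint density as \(\prod_i f(x_i)k(y_i\mid x_i)\), use \(f_Y(y)=\prod_i m(y_i)\), and divide to obtain the product of one-signal posteriors. Your extra checks are sound refinements that the paper leaves implicit: the Fubini verification of the regular conditional law, and the boundedness of \(X_i\) and \(\virt\). The \(\pi\)-\(\lambda\) step inside that verification is unnecessary, because the identity holds for every measurable \(A\subseteq\X^n\) directly.
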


\begin{proof}
The truthful joint density is
\[
f_{X,Y}(x,y)
=
\prod_{j=1}^{n}
f(x_j)k(y_j\mid x_j).
\]
The truthful signal density is
\[
f_Y(y)
=
\prod_{j=1}^{n}m(y_j).
\]
Dividing gives
\[
f_{X\mid Y}(x\mid y)
=
\prod_{j=1}^{n}
\frac{
f(x_j)k(y_j\mid x_j)
}{
m(y_j)
},
\]
which is the claimed factorization.
\end{proof}

\begin{lemma}[Posterior welfare and revenue representations]
\label{lem:posterior_representations}
For every bounded measurable allocation \(q\),
\begin{equation}
\label{eq:app_posterior_welfare}
\E\!\left[
q(Y)\sum_{i=1}^{n}X_i
\right]
=
\E\!\left[
q(Y)\sum_{i=1}^{n}\widehat x_i(Y_i)
\right].
\end{equation}
For every BIC signal-measurable mechanism,
\begin{equation}
\label{eq:app_posterior_revenue}
\E\!\left[
\sum_{i=1}^{n}t_i(Y)
\right]
=
\E\!\left[
q(Y)\sum_{i=1}^{n}\widehat J_i(Y_i)
\right]
-
\sum_{i=1}^{n}U_i(\xlo).
\end{equation}
\end{lemma}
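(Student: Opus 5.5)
The welfare identity \eqref{eq:app_posterior_welfare} follows by conditioning on \(Y\) and using Lemma~\ref{lem:app_posterior_factorization}. Since \(q\) is bounded and \(\X\) is compact, \(q(Y)X_i\) is integrable. The tower property gives
\[
\E[q(Y)X_i]=\E\bigl[q(Y)\E[X_i\mid Y]\bigr]=\E[q(Y)\widehat x_i(Y_i)],
\]
where the last equality is the factorization \(\E[X_i\mid Y]=\widehat x_i(Y_i)\). Summing over \(i\) proves the first claim.

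For the revenue identity \eqref{eq:app_posterior_revenue}, I will go through interim quantities. The first step is
\[
\E\Bigl[\sum_i t_i(Y)\Bigr]=\sum_i\E[T_i(X_i)],
\]
which follows from Fubini--Tonelli and the integrability of transfers. Under truthful reporting, conditioning \(t_i(Y)\) on \(X_i=x_i\) integrates \(t_i\) against \(k(y_i\mid x_i)f_{Y_{-i}}(y_{-i})\). This uses independence of \(X_{-i}\) from \(X_i\) and of the channel draws, so \(Y_{-i}\) has density \(f_{Y_{-i}}\) independently of \(X_i\). The result is exactly \(T_i(x_i)\) from \eqref{eq:interim_transfer}.

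The second step applies Lemma~\ref{lem:expected_transfer_identity}, which gives
\[
\E[T_i(X_i)]=\E[Q_i(X_i)\virt(X_i)]-U_i(\xlo).
\]
This is valid because BIC for the signal mechanism is precisely condition (i) of Proposition~\ref{prop:app_interim_bic} applied to \((Q_i,T_i)\).

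The third step rewrites \(\E[Q_i(X_i)\virt(X_i)]\) as \(\E[q(Y)\virt(X_i)]\). By the same conditioning argument as in the first step, \(Q_i(x_i)=\E[q(Y)\mid X_i=x_i]\) under the truthful law. Hence
\[
\E[Q_i(X_i)\virt(X_i)]=\E\bigl[\virt(X_i)\E[q(Y)\mid X_i]\bigr]=\E[q(Y)\virt(X_i)].
\]
The fourth step conditions on \(Y\) and applies Lemma~\ref{lem:app_posterior_factorization} again, giving \(\E[q(Y)\virt(X_i)]=\E[q(Y)\widehat J_i(Y_i)]\). Summing over \(i\) yields \eqref{eq:app_posterior_revenue}.

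The main point requiring care is integrability of \(\virt(X_i)\), which is needed to justify the tower-property manipulations. Under Assumption~\ref{ass:regular}, \(f\) is continuous and strictly positive on the compact interval \(\X\), so \(f\) is bounded below. Therefore \(\virt\) is bounded, and every product above is integrable because \(0\le q\le1\). The identification of \(Q_i(x_i)\) and \(T_i(x_i)\) as conditional expectations given \(X_i=x_i\) holds for every \(x_i\) by the explicit kernel formulas, not merely almost surely. This means no measurability subtleties arise when these conditional expectations are integrated against \(f\).
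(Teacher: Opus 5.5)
Your proposal is correct and follows essentially the same route as the paper. For welfare, both use the tower property together with the posterior factorization. For revenue, both use Lemma~\ref{lem:expected_transfer_identity}, then the rewriting $\E[Q_i(X_i)\virt(X_i)]=\E[q(Y)\virt(X_i)]$ from the definition of the interim allocation, then iterated expectations again. Your explicit derivation of $\E[\sum_i t_i(Y)]=\sum_i\E[T_i(X_i)]$ and your check that $\virt$ is bounded fill in details that the paper leaves implicit; it simply takes the former as given in \eqref{eq:revenue}.
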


\begin{proof}
For welfare, iterated expectations and
Lemma~\ref{lem:app_posterior_factorization} give
\[
\begin{aligned}
\E\!\left[
q(Y)\sum_iX_i
\right]
&=
\E\!\left[
q(Y)
\E\!\left[
\sum_iX_i
\mid Y
\right]
\right]
\\
&=
\E\!\left[
q(Y)\sum_i\widehat x_i(Y_i)
\right].
\end{aligned}
\]

For revenue,
Lemma~\ref{lem:expected_transfer_identity} gives
\[
\E[T_i(X_i)]
=
\E[
Q_i(X_i)\virt(X_i)
]
-
U_i(\xlo).
\]
By the definition of the interim allocation,
\[
\E[
Q_i(X_i)\virt(X_i)
]
=
\E[
q(Y)\virt(X_i)
].
\]
Applying iterated expectations yields
\[
\E[
q(Y)\virt(X_i)
]
=
\E[
q(Y)\widehat J_i(Y_i)
].
\]
Summing over agents proves \eqref{eq:app_posterior_revenue}.
\end{proof}

\begin{corollary}[Zero-rent reduced-form revenue]
\label{cor:app_zero_rent_revenue}
Under the normalization \(U_i(\xlo)=0\) for every \(i\),
\[
\E\!\left[
\sum_{i=1}^{n}t_i(Y)
\right]
=
\E\!\left[
q(Y)\sum_{i=1}^{n}\widehat J_i(Y_i)
\right].
\]
\end{corollary}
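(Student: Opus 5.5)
The statement is Corollary~\ref{cor:app_zero_rent_revenue}, and it follows directly from Lemma~\ref{lem:posterior_representations}.

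For any BIC signal-measurable mechanism, identity \eqref{eq:app_posterior_revenue} gives expected total transfers. These equal the expected aggregate posterior virtual surplus minus the sum of lowest-type rents \(\sum_i U_i(\xlo)\). Imposing \(U_i(\xlo)=0\) for every \(i\) removes that correction term, which yields the displayed equality. So the plan is a single substitution. The only thing to check is that every hypothesis of the lemma is in force.

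Concretely, I would verify three hypotheses in order.

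\begin{enumerate}
\item The mechanism is BIC. By Proposition~\ref{prop:app_interim_bic}, each interim pair \((Q_i,T_i)\) then satisfies the payment identity \eqref{eq:app_payment_identity} with constant \(U_i(\xlo)\).
\item Lemma~\ref{lem:expected_transfer_identity} applies. Integrating that identity against \(f\) and using Tonelli gives \(\E[T_i(X_i)]=\E[Q_i(X_i)\virt(X_i)]-U_i(\xlo)\). Tonelli is legitimate because \(0\le Q_i\le 1\) and \(\X\) is compact.
\item The passage from interim to ex-post quantities is valid. First, \(\E[T_i(X_i)]=\E[t_i(Y)]\) holds by Fubini and the definition \eqref{eq:interim_transfer} under the truthful law, given the maintained integrability of transfers. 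Second, \(\E[Q_i(X_i)\virt(X_i)]=\E[q(Y)\widehat J_i(Y_i)]\) holds by iterated expectations together with the posterior factorization, Lemma~\ref{lem:app_posterior_factorization}.
\end{enumerate}

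Summing over \(i\) and setting each rent to zero completes the argument.

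I expect essentially no obstacle. The one point deserving care is integrability of \(\virt(X_i)q(Y)\), needed for the conditioning step. This holds because \(f\) is continuous and positive on the compact interval \(\X\), so \(\virt\) is bounded there.
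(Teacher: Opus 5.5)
Your proposal is correct and follows the paper's route: the corollary is exactly identity \eqref{eq:app_posterior_revenue} of Lemma~\ref{lem:posterior_representations} with every lowest-type rent set to zero. Your hypothesis check retraces that lemma's proof (the envelope payment identity, then Lemma~\ref{lem:expected_transfer_identity}, then iterated expectations with the posterior factorization), and your remarks on $\E[T_i(X_i)]=\E[t_i(Y)]$ and on the boundedness of $\virt$ only make explicit steps the paper leaves implicit.
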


% ---------------------------------------------------------------------
\subsection{Posterior Monotonicity under MLRP}
% ---------------------------------------------------------------------

\begin{lemma}[MLRP and posterior stochastic order]
\label{lem:mlrp_fosd}
Suppose \(k(y\mid x)>0\) and the kernel satisfies MLRP. If \(y_2>y_1\),
then
\[
\mathcal L(X\mid Y=y_2)
\]
dominates
\[
\mathcal L(X\mid Y=y_1)
\]
in likelihood-ratio order and therefore in first-order stochastic order.
Hence, for every integrable weakly increasing function \(h\),
\[
y
\longmapsto
\E[h(X)\mid Y=y]
\]
is weakly increasing. If the posterior likelihood ratio is strictly increasing and \(h\) is
weakly increasing and nonconstant on a set receiving positive probability
under both posterior laws, then the posterior expectation inequality is
strict.
\end{lemma}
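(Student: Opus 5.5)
The plan is to prove the likelihood-ratio order first, by Bayes' formula, and then read off the expectation claims. Write the posterior density as
\[
\pi(x\mid y)=\frac{k(y\mid x)f(x)}{m(y)} ,
\]
which is well defined because \(k>0\) and \(f>0\) make \(m(y)>0\). For \(y_2>y_1\) and \(x_2>x_1\), the likelihood-ratio order requires
\[
\pi(x_2\mid y_2)\,\pi(x_1\mid y_1)\ \ge\ \pi(x_1\mid y_2)\,\pi(x_2\mid y_1).
\]
After cancelling the common factors \(f(x_1)f(x_2)/\bigl(m(y_1)m(y_2)\bigr)\), this reduces to
\[
k(y_2\mid x_2)\,k(y_1\mid x_1)\ \ge\ k(y_1\mid x_2)\,k(y_2\mid x_1).
\]
That is exactly the statement that \(y\mapsto k(y\mid x_2)/k(y\mid x_1)\) is weakly increasing, i.e.\ Assumption~\ref{ass:mlrp}. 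Equivalently, \(x\mapsto \pi(x\mid y_2)/\pi(x\mid y_1)\propto k(y_2\mid x)/k(y_1\mid x)\) is weakly increasing in \(x\); this form is the one used below.

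Next I pass from likelihood-ratio order to first-order stochastic dominance by the standard argument. Let \(\ell(x)=\pi(x\mid y_2)/\pi(x\mid y_1)\), which is weakly increasing and integrates to one against \(\pi(\cdot\mid y_1)\). Then there is a crossing point \(c\) with \(\ell\le1\) below \(c\) and \(\ell\ge1\) above it. It follows that \(\int_{\xlo}^{t}(\ell-1)\,\pi(x\mid y_1)\,dx\le0\) for every \(t\), which is the cdf inequality. For the expectation claim I use the identity
\[
\E[h(X)\mid y_2]-\E[h(X)\mid y_1]=\int h(x)\bigl(\ell(x)-1\bigr)\pi(x\mid y_1)\,dx .
\]
Because \(\int(\ell-1)\pi(\cdot\mid y_1)=0\), this equals \(\int\bigl(h(x)-h(c)\bigr)(\ell(x)-1)\pi(x\mid y_1)\,dx\). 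The integrand is nonnegative pointwise, since both factors change sign at \(c\). Integrability of \(h\) on the compact \(\X\) under either posterior justifies these manipulations; equivalently, one can apply the layer-cake formula to the cdf inequality.

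The strict case is the main obstacle, because one must locate a set of positive measure where both factors are strictly nonzero. Suppose \(\ell\) is strictly increasing. Then \(\ell(x)-1\neq0\) for all \(x\neq c\). Now suppose the integral \(\int (h-h(c))(\ell-1)\,\pi(\cdot\mid y_1)\) vanishes. Since the integrand is nonnegative, it is then zero a.e., which forces \(h=h(c)\) almost everywhere under \(\pi(\cdot\mid y_1)\). This contradicts the hypothesis that \(h\) is nonconstant on a set of positive posterior probability. I would state carefully that "nonconstant" means \(h\) is not a.e.\ equal to a constant under the posteriors, which are mutually absolutely continuous here because \(k>0\). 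With that reading the posterior expectation inequality is strict, which completes the proof.
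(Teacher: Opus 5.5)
Your proposal is correct and follows the same route as the paper's proof: Bayes' formula makes the posterior ratio proportional to \(k(y_2\mid x)/k(y_1\mid x)\), MLRP then gives likelihood-ratio order, and likelihood-ratio order gives first-order dominance and monotone posterior expectations. You also spell out three steps the paper only asserts. First, the cross-multiplication showing that monotonicity in \(y\) (the form of the MLRP assumption) is equivalent to monotonicity in \(x\) of the posterior ratio. Second, the single-crossing argument. Third, the strictness argument, under the reading that \(h\) is not almost-everywhere constant, which is the reading needed for the claim to hold.
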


\begin{proof}
Bayes' formula gives
\[
\pi(x\mid y)
=
\frac{
k(y\mid x)f(x)
}{
\int_{\X}k(y\mid z)f(z)\,dz
}.
\]
Therefore,
\[
\frac{
\pi(x\mid y_2)
}{
\pi(x\mid y_1)
}
=
C(y_1,y_2)
\frac{
k(y_2\mid x)
}{
k(y_1\mid x)
},
\]
where \(C(y_1,y_2)>0\) does not depend on \(x\). By MLRP, the right-hand
side is weakly increasing in \(x\). Thus the posterior at \(y_2\) dominates
the posterior at \(y_1\) in likelihood-ratio order.
Likelihood-ratio order implies first-order stochastic dominance.
Consequently, the expectation of every integrable weakly increasing
function \(h\) is weakly larger under the posterior associated with
\(y_2\).
Under the stated strictness and common-positive-probability conditions,
strict likelihood-ratio ordering and nonconstancy of \(h\) imply a strict
expectation inequality.
\end{proof}

\begin{corollary}[Monotonicity of posterior scores]
\label{cor:app_score_monotonicity}
Under Assumptions~\ref{ass:regular} and \ref{ass:mlrp}, for every
\(\lambda\ge0\),
\[
y
\longmapsto
S_i(y,\lambda)
=
\E[
X_i+\lambda\virt(X_i)
\mid Y_i=y
]
\]
is weakly increasing.
\end{corollary}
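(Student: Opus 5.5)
The plan is to apply Lemma~\ref{lem:mlrp_fosd} with the particular increasing function \(h=\Psival(\cdot,\lambda)\), after first verifying that this function is weakly increasing and integrable.

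\emph{Step 1: identify the score as a posterior expectation.} Fix \(\lambda\ge0\). By definition \eqref{eq:posterior_score_setup} and linearity of conditional expectation,
\[
S_i(y,\lambda)=\E\bigl[\Psival(X_i,\lambda)\mid Y_i=y\bigr],
\qquad
\Psival(x,\lambda)=x+\lambda\virt(x).
\]
Under symmetry the posterior law of \(X_i\) given \(Y_i=y\) is the same kernel-prior posterior \(\pi(\cdot\mid y)\) for every \(i\). Hence it suffices to treat a single agent.

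\emph{Step 2: monotonicity and integrability of the integrand.} The identity map \(x\mapsto x\) is strictly increasing. Assumption~\ref{ass:regular} makes \(\virt\) weakly increasing, and since \(\lambda\ge0\), the sum \(\Psival(\cdot,\lambda)\) is weakly increasing on \(\X\).

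For integrability, note that \(f\) is continuously differentiable and strictly positive on the compact interval \([\xlo,\xhi]\). Therefore \(f\) is bounded away from zero, so \(\virt\) is continuous and bounded on \(\X\). It follows that \(\Psival(\cdot,\lambda)\) is bounded and thus integrable under every posterior \(\pi(\cdot\mid y)\).

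\emph{Step 3: apply the lemma.} Assumption~\ref{ass:mlrp} supplies a strictly positive density with weakly increasing likelihood ratio, which is exactly the hypothesis of Lemma~\ref{lem:mlrp_fosd}. That lemma shows that for \(y_2>y_1\), \(\pi(\cdot\mid y_2)\) dominates \(\pi(\cdot\mid y_1)\) in first-order stochastic order. Consequently the expectation of the bounded, weakly increasing function \(h=\Psival(\cdot,\lambda)\) is weakly larger at \(y_2\), which gives \(S_i(y_2,\lambda)\ge S_i(y_1,\lambda)\).

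There is no real obstacle here. The only point needing care is the integrability check in Step 2, which is why I use the boundedness of \(\virt\) coming from compactness and \(f>0\).
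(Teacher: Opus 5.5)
Your proposal is correct and is the paper's proof: regularity makes \(x\mapsto x+\lambda\virt(x)\) weakly increasing for \(\lambda\ge0\), and Lemma~\ref{lem:mlrp_fosd} then gives weak monotonicity of the posterior expectation in \(y\). Your explicit integrability check is sound and only adds a detail the paper leaves implicit. That check uses boundedness of \(\virt\), which follows from continuity and strict positivity of \(f\) on the compact interval.
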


\begin{proof}
Regularity implies that
\[
x
\longmapsto
x+\lambda\virt(x)
\]
is weakly increasing. Apply Lemma~\ref{lem:mlrp_fosd}.
\end{proof}

\begin{remark}[Weak MLRP, pooling, and ironing]
\label{rmk:weak_mlrp_no_ironing}
Weak MLRP may generate flat posterior-score regions. Such flatness produces
pooling and may generate atoms in the score distribution, but it does not
generate nonmonotonicity. Consequently, Myerson ironing is not required
unless the underlying generalized virtual value or another relevant
allocation index is nonmonotone.
\end{remark}

% ---------------------------------------------------------------------
\subsection{Lower-Endpoint Identity}
% ---------------------------------------------------------------------

\begin{lemma}[Mean virtual value]
\label{lem:lower_endpoint}
Under Assumption~\ref{ass:regular},
\[
\E[\virt(X)]
=
\xlo.
\]
Consequently,
\[
\E[\widehat J_i(Y_i)]
=
\xlo.
\]
\end{lemma}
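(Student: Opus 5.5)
The identity \(\E[\virt(X)]=\xlo\) follows from a direct computation under the density \(f\). The second claim then follows from iterated expectations. First I would write
\[
\E[\virt(X)]
=
\int_{\xlo}^{\xhi}\Bigl(x-\frac{1-F(x)}{f(x)}\Bigr)f(x)\,dx
=
\E[X]-\int_{\xlo}^{\xhi}\bigl(1-F(x)\bigr)\,dx .
\]
This is legitimate because \(f>0\) is continuous on the compact interval \(\X\) by Assumption~\ref{ass:regular}. Hence \(f\) is bounded away from zero, \(\virt\) is bounded, and \(\virt\) is integrable.

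Next, I would evaluate the tail integral in either of two standard ways. One is integration by parts with \(u=1-F\) and \(dv=dx\), using \(F(\xlo)=0\) and \(F(\xhi)=1\). The other is Tonelli's theorem applied to \(1-F(x)=\int_x^{\xhi}f(z)\,dz\). Either way,
\[
\int_{\xlo}^{\xhi}(1-F(x))\,dx
=
\int_{\xlo}^{\xhi}(z-\xlo)f(z)\,dz
=
\E[X]-\xlo .
\]
The Tonelli route mirrors the computation in Lemma~\ref{lem:expected_transfer_identity}. In fact, taking \(Q_i\equiv1\) there gives \(T_i(x)=x-(x-\xlo)=\xlo\) under zero rent, so the identity is that lemma specialized. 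Subtracting gives \(\E[\virt(X)]=\xlo\).

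For the posterior statement, note that \(\widehat J_i(Y_i)=\E[\virt(X_i)\mid Y_i]\) is a version of the conditional expectation of the integrable variable \(\virt(X_i)\) under the truthful joint law \(f(x)k(y\mid x)\). The tower property then gives \(\E[\widehat J_i(Y_i)]=\E[\virt(X_i)]=\xlo\).

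There is no real obstacle. The only point needing care is integrability, so that the tower property and Fubini apply, and this is supplied by the boundedness of \(\virt\) on compact \(\X\).
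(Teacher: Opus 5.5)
Your proposal is correct and follows essentially the same route as the paper: write $\E[\virt(X)]=\E[X]-\int_{\xlo}^{\xhi}(1-F(x))\,dx$, use the tail formula $\E[X]=\xlo+\int_{\xlo}^{\xhi}(1-F(x))\,dx$, and then apply iterated expectations for the posterior statement. Your integrability check and the remark that the identity is Lemma~\ref{lem:expected_transfer_identity} with $Q_i\equiv1$ are sound additions that the paper leaves implicit.
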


\begin{proof}
By definition,
\[
\E[\virt(X)]
=
\E[X]
-
\int_{\xlo}^{\xhi}
\{1-F(x)\}\,dx.
\]
For a random variable supported on \([\xlo,\xhi]\),
\[
\E[X]
=
\xlo
+
\int_{\xlo}^{\xhi}
\{1-F(x)\}\,dx.
\]
Subtracting gives
\[
\E[\virt(X)]
=
\xlo.
\]
The posterior identity follows by iterated expectations:
\[
\E[\widehat J_i(Y_i)]
=
\E[
\E[\virt(X_i)\mid Y_i]
]
=
\E[\virt(X_i)].
\]
\end{proof}

\begin{corollary}[Mean posterior score]
\label{cor:app_mean_posterior_score}
For every \(\lambda\ge0\),
\[
\mu_S(\lambda)
=
\E[S_i(Y_i,\lambda)]
=
\E[X]+\lambda\xlo.
\]
\end{corollary}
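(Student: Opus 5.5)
The plan is to combine linearity of conditional expectation, the tower property, and Lemma~\ref{lem:lower_endpoint}. The only point requiring care is integrability, so that each posterior object is well defined and expectations may be split.

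First, I will check that \(\Psival(X,\lambda)=X+\lambda\virt(X)\) is bounded on \(\X\).
\begin{itemize}
\item \(X\) takes values in the compact interval \([\xlo,\xhi]\).
\item Under Assumption~\ref{ass:regular}, \(f\) is continuous and strictly positive on the compact set \(\X\), so \(\min_{\X}f>0\).
\item Since \(0\le 1-F\le 1\), the ratio \((1-F)/f\) is bounded, and hence \(\virt\) is bounded on \(\X\).
\end{itemize}
Therefore \(X\), \(\virt(X)\) and \(\Psival(X,\lambda)\) are all integrable for every \(\lambda\ge0\). It follows that \(\widehat x_i(Y_i)\) and \(\widehat J_i(Y_i)\) are well-defined integrable random variables.

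Next, I will use the definition \eqref{eq:posterior_score_setup}, \(S_i(Y_i,\lambda)=\widehat x_i(Y_i)+\lambda\widehat J_i(Y_i)\), and take expectations. By linearity,
\[
\mu_S(\lambda)=\E[\widehat x_i(Y_i)]+\lambda\,\E[\widehat J_i(Y_i)].
\]
By iterated expectations, \(\E[\widehat x_i(Y_i)]=\E\bigl[\E[X_i\mid Y_i]\bigr]=\E[X_i]=\E[X]\). The second expectation equals \(\xlo\) by Lemma~\ref{lem:lower_endpoint}, which itself rests on the identity \(\E[X]=\xlo+\int_{\xlo}^{\xhi}\{1-F(x)\}\,dx\). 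Substituting both terms gives \(\mu_S(\lambda)=\E[X]+\lambda\xlo\).

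There is no genuine obstacle here. The only step needing attention is the boundedness of \(\virt\), which justifies splitting the expectation and applying the tower property. That step relies on \(f\) being bounded away from zero on the compact type space, which is exactly what Assumption~\ref{ass:regular} provides.
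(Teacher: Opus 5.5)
Your proposal is correct and is essentially the paper's argument. The corollary is stated without proof right after Lemma~\ref{lem:lower_endpoint}, and it follows exactly as you describe: linearity, iterated expectations giving $\E[\widehat x_i(Y_i)]=\E[X]$, and that lemma's identity $\E[\widehat J_i(Y_i)]=\xlo$. Your explicit boundedness check on $\virt$ (using $\min_{\X}f>0$) supplies a justification the paper only asserts elsewhere.
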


% ---------------------------------------------------------------------
\subsection{Continuity of Interim Allocations}
% ---------------------------------------------------------------------

\begin{lemma}[Continuity of channel-induced interim allocations]
\label{lem:app_interim_continuity}
Under Assumption~\ref{ass:compact_spaces}, every bounded measurable
allocation \(q\) induces a continuous interim allocation \(Q_i^q\).
\end{lemma}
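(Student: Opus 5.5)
The plan is to write the interim allocation as an integral against the kernel, then pass to the limit inside that integral using the $L^1$-continuity of the kernel in \eqref{eq:l1_kernel_continuity}. Fix a bounded measurable $q:\Y^n\to[0,1]$ and an agent $i$. By Tonelli's theorem applied to the nonnegative integrand in \eqref{eq:interim_allocation}, I define the opponent-averaged allocation
\[
\bar q_i(y_i)=\int_{\Y^{n-1}}q(y_i,y_{-i})\,f_{Y_{-i}}(y_{-i})\,dy_{-i}.
\]
Joint measurability of $q$ and of the product density makes $\bar q_i$ measurable. Since $f_{Y_{-i}}$ is a probability density and $0\le q\le 1$, we have $0\le\bar q_i\le1$. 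With this notation, $Q_i^q(r)=\int_{\Y}\bar q_i(y)\,k(y\mid r)\,dy$ for every $r\in\X$.

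Continuity then follows from a one-line estimate. For $r,r'\in\X$,
\[
\bigl|Q_i^q(r')-Q_i^q(r)\bigr|
\le\int_{\Y}|\bar q_i(y)|\,\bigl|k(y\mid r')-k(y\mid r)\bigr|\,dy
\le\int_{\Y}\bigl|k(y\mid r')-k(y\mid r)\bigr|\,dy .
\]
By \eqref{eq:l1_kernel_continuity}, the right-hand side tends to zero as $r'\to r$, for every $r\in\X$. Hence $Q_i^q$ is continuous on $\X$. Because $\X$ is compact, it is in fact uniformly continuous, and this holds uniformly over all allocations, since the bound does not depend on $q$. The same argument applies to any agent and any $q$.

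The only point needing care is the reduction to $\bar q_i$: one must check that the iterated integral is legitimate and that $\bar q_i$ is a measurable bounded function. Tonelli's theorem settles both, because the integrand is nonnegative and jointly measurable by Assumption~\ref{ass:compact_spaces}. No dominated-convergence argument is needed, since the bound $|\bar q_i|\le1$ turns the problem directly into the $L^1$ modulus of the kernel.
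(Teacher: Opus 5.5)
Your proposal is correct and is essentially the paper's argument: the paper bounds \(|Q_i^q(x')-Q_i^q(x)|\) directly on \(\Y^n\) using \(0\le q\le 1\), integrates out \(y_{-i}\) against \(f_{Y_{-i}}\), and then appeals to \eqref{eq:l1_kernel_continuity}; your opponent-averaged \(\bar q_i\) is the same integration, just done before the estimate rather than inside it. Your extra observation that the bound does not depend on \(q\) is correct and goes slightly beyond what the paper states: it makes the family \(\{Q_i^q\}\) uniformly equicontinuous on the compact set \(\X\).
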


\begin{proof}
For \(x,x'\in\X\),
\[
\begin{aligned}
|Q_i^q(x')-Q_i^q(x)|
&\le
\int_{\Y^n}
q(y)
\left|
k(y_i\mid x')-k(y_i\mid x)
\right|
f_{Y_{-i}}(y_{-i})\,dy
\\
&\le
\int_{\Y}
\left|
k(y_i\mid x')-k(y_i\mid x)
\right|\,dy_i.
\end{aligned}
\]
The final term converges to zero by
\eqref{eq:l1_kernel_continuity}.
\end{proof}

% =====================================================================
% APPENDIX C
% =====================================================================

\section{Functional Analysis and Proofs of the Main Results}
\label{app:main_results_proofs}

This appendix establishes existence and strong duality for the reduced-form
problem, proves the posterior-score characterization, develops the Fredholm
implementation operator, proves the three revenue regimes and the fixed-score
asymptotics, provides the hierarchical extension, and establishes Blackwell
monotonicity.

% ---------------------------------------------------------------------
\subsection{Weak-\texorpdfstring{\(^*\)}{*} Compactness}
\label{app:functional_analysis}
% ---------------------------------------------------------------------

Let
\[
\nu(dy)
=
f_Y(y)\,dy
\]
denote the truthful signal law on \(\Y^n\). We identify allocations that
agree \(\nu\)-almost surely and equip \(L^\infty(\nu)\) with the weak-\(^*\)
topology
\[
\sigma(L^\infty(\nu),L^1(\nu)).
\]

Define the allocation cube
\begin{equation}
\label{eq:app_allocation_cube}
\mathcal A
=
\left\{
q\in L^\infty(\nu):
0\le q\le1
\quad
\nu\text{-a.e.}
\right\}.
\end{equation}

\begin{lemma}[Weak-\(^*\) compactness of the allocation cube]
\label{lem:app_allocation_cube_compact}
The set \(\mathcal A\) is convex and weak-\(^*\) compact.
\end{lemma}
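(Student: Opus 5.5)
Convexity is immediate: if \(q_1,q_2\in\mathcal A\) and \(s\in[0,1]\), then \(sq_1+(1-s)q_2\) takes values in \([0,1]\) \(\nu\)-almost everywhere.

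For compactness we invoke the Banach--Alaoglu theorem. The measure \(\nu\) is a probability measure on \(\Y^n\), hence \(\sigma\)-finite. Therefore \(L^\infty(\nu)\) is isometrically the dual of \(L^1(\nu)\), and the weak-\(^*\) topology \(\sigma(L^\infty(\nu),L^1(\nu))\) is the one Alaoglu refers to. Every \(q\in\mathcal A\) satisfies \(\|q\|_{L^\infty(\nu)}\le1\), so \(\mathcal A\) sits inside the closed unit ball, which is weak-\(^*\) compact. It remains to show that \(\mathcal A\) is weak-\(^*\) closed; a closed subset of a compact set is compact.

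To prove closedness, write \(\mathcal A\) as an intersection of weak-\(^*\) closed half-spaces:
\[
\mathcal A=\bigcap_{g\in L^1(\nu),\,g\ge0}\Bigl\{q:\textstyle\int qg\,d\nu\ge0\Bigr\}\cap\Bigl\{q:\textstyle\int (1-q)g\,d\nu\ge0\Bigr\}.
\]
Each set in the intersection is the preimage of a closed half-line under the map \(q\mapsto\int qg\,d\nu\), which is weak-\(^*\) continuous. (Note that \(1\cdot g\in L^1(\nu)\), so \(\int(1-q)g\,d\nu\) is also a continuous affine function of \(q\).) Hence each set is weak-\(^*\) closed.

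The only step needing care is the reverse inclusion: if \(q\) lies in every such half-space, then \(0\le q\le1\) almost surely. Suppose instead that \(\nu(q<0)>0\). Take \(g=\ind\{q<0\}\), which is in \(L^1(\nu)\) because \(\nu\) is finite. Then \(\int qg\,d\nu<0\), a contradiction. The same argument with \(g=\ind\{q>1\}\) rules out \(\nu(q>1)>0\).

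Real-valuedness of \(q\) is built into the setup, so this completes the argument: \(\mathcal A\) is a weak-\(^*\) closed subset of the weak-\(^*\) compact unit ball, hence convex and weak-\(^*\) compact.
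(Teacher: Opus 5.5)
Your proof is correct and follows the paper's route: containment in the unit ball plus Banach--Alaoglu, then weak-\(^*\) closedness by testing against nonnegative \(g\in L^1(\nu)\). You phrase closedness as an intersection of closed half-spaces where the paper passes to the limit along a net, and you make explicit two points the paper leaves implicit: the \(\sigma\)-finiteness that gives \(L^\infty(\nu)=L^1(\nu)^*\), and the indicator-function argument showing that the half-space conditions force \(0\le q\le 1\) almost everywhere.
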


\begin{proof}
The set \(\mathcal A\) lies in the closed unit ball of \(L^\infty(\nu)\),
which is weak-\(^*\) compact by the Banach--Alaoglu theorem. It remains to
show weak-\(^*\) closedness.

Let \(q_\gamma\in\mathcal A\) converge weak-\(^*\) to \(q\). For every
nonnegative \(g\in L^1(\nu)\),
\[
\int q_\gamma g\,d\nu
\ge0.
\]
Passing to the limit gives
\[
\int qg\,d\nu
\ge0,
\]
so \(q\ge0\) almost everywhere. Applying the same argument to
\(1-q_\gamma\) gives \(q\le1\) almost everywhere.
\end{proof}

For each \(i\) and \(x\in\X\), define
\[
\ell_{i,x}(y)
=
\frac{k(y_i\mid x)}
{m(y_i)}
\]
on the support of \(m\). Then
\[
Q_i^q(x)
=
\int_{\Y^n}
q(y)\ell_{i,x}(y)\,\nu(dy),
\]
and
\[
\|\ell_{i,x}\|_{L^1(\nu)}
=
1.
\]

\begin{lemma}[Weak-\(^*\) continuity of interim allocations]
\label{lem:app_interim_weakstar}
For fixed \(i\) and \(x\), the map
\[
q
\longmapsto
Q_i^q(x)
\]
is weak-\(^*\) continuous.
\end{lemma}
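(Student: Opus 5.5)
By the definition of the weak-\(^*\) topology \(\sigma(L^\infty(\nu),L^1(\nu))\), every map of the form \(q\mapsto\int q\,g\,d\nu\) with \(g\in L^1(\nu)\) is weak-\(^*\) continuous. The plan is therefore to write \(q\mapsto Q_i^q(x)\) as exactly such a pairing, with \(g=\ell_{i,x}\). It then suffices to check that \(\ell_{i,x}\) is a well-defined element of \(L^1(\nu)\) and that the representation \(Q_i^q(x)=\int q\,\ell_{i,x}\,d\nu\) holds for every \(q\in L^\infty(\nu)\), not only for allocations in \(\mathcal A\).

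First, I would verify the representation. Under the known-prior benchmark \(\nu(dy)=\prod_j m(y_j)\,dy\). On \(\{m(y_i)>0\}\) we therefore have
\[
q(y)\,\ell_{i,x}(y)\,f_Y(y)=q(y)\,k(y_i\mid x)\,f_{Y_{-i}}(y_{-i}),
\]
which is the integrand in \eqref{eq:interim_allocation}. On \(\{m(y_i)=0\}\) we have \(\int_\X k(y_i\mid z)f(z)\,dz=0\). Since \(f>0\), this forces \(k(y_i\mid z)=0\) for Lebesgue-a.e.\ \(z\). Under Assumption~\ref{ass:mlrp}, \(k>0\) everywhere on the common support, so this set is in fact null. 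More generally, I would invoke the \(L^1\)-continuity \eqref{eq:l1_kernel_continuity} in \(x\): it shows that the set \(\{y_i:m(y_i)=0,\ k(y_i\mid x)>0\}\) has Lebesgue measure zero, so it contributes nothing to \(Q_i^q(x)\).

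Second, I would check integrability and normalization. By Tonelli,
\[
\|\ell_{i,x}\|_{L^1(\nu)}=\int_\Y k(y_i\mid x)\,dy_i\cdot\prod_{j\ne i}\int m=1,
\]
so \(\ell_{i,x}\in L^1(\nu)\) and is nonnegative.

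Third, I would make sure the map is well defined on equivalence classes. If \(q=q'\) \(\nu\)-a.e., then \(\int q\,\ell_{i,x}\,d\nu=\int q'\,\ell_{i,x}\,d\nu\), so the map descends to the quotient \(L^\infty(\nu)\). The map is also linear, with \(|Q_i^q(x)|\le\|q\|_\infty\).

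Weak-\(^*\) continuity is then immediate: if \(q_\gamma\to q\) weak-\(^*\), then \(\int q_\gamma\ell_{i,x}\,d\nu\to\int q\,\ell_{i,x}\,d\nu\) by definition of the topology.

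The only delicate step is the null-set bookkeeping on \(\{m=0\}\) in the first step, and the paper's maintained positivity assumptions dispose of it.
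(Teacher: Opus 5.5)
Your proposal is correct and follows the paper's argument. The paper's proof is the one-line observation that \(q\mapsto Q_i^q(x)\) is the dual pairing of \(q\) with \(\ell_{i,x}=k(y_i\mid x)/m(y_i)\), where \(\ell_{i,x}\in L^1(\nu)\) with norm one; your handling of the set \(\{m(y_i)=0\}\) (including the \(L^1\)-continuity argument) and of well-definedness on \(\nu\)-equivalence classes spells out details the paper leaves implicit.
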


\begin{proof}
The map is the dual pairing with
\(\ell_{i,x}\in L^1(\nu)\).
\end{proof}

Define
\[
\mathcal Q^{\mathrm{mon}}
=
\left\{
q\in\mathcal A:
Q_i^q(x')\ge Q_i^q(x)
\text{ whenever }x'\ge x,
\ \forall i
\right\}.
\]

\begin{lemma}[Compactness of the monotone allocation class]
\label{lem:app_monotone_compact}
The set \(\mathcal Q^{\mathrm{mon}}\) is convex and weak-\(^*\) compact.
\end{lemma}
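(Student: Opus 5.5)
The plan is to write \(\mathcal Q^{\mathrm{mon}}\) as the intersection of the allocation cube \(\mathcal A\) with a family of weak-\(^*\) closed half-spaces. Convexity and compactness then follow from Lemma~\ref{lem:app_allocation_cube_compact} and Lemma~\ref{lem:app_interim_weakstar}.

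For each agent \(i\) and each pair \(x\le x'\) in \(\X\), define
\[
C_{i,x,x'}
=
\left\{
q\in L^\infty(\nu):
Q_i^q(x')-Q_i^q(x)\ge0
\right\}.
\]
By construction,
\[
\mathcal Q^{\mathrm{mon}}
=
\mathcal A\cap\bigcap_{i}\bigcap_{x\le x'}C_{i,x,x'}.
\]

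\textbf{Convexity.} The map \(q\mapsto Q_i^q(x)\) is linear, since it is the pairing \(\int q\,\ell_{i,x}\,d\nu\). So \(q\mapsto Q_i^q(x')-Q_i^q(x)\) is linear, and each \(C_{i,x,x'}\) is a half-space, hence convex. The cube \(\mathcal A\) is convex by Lemma~\ref{lem:app_allocation_cube_compact}. An intersection of convex sets is convex.

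\textbf{Closedness.} By Lemma~\ref{lem:app_interim_weakstar}, \(q\mapsto Q_i^q(x')-Q_i^q(x)\) is the pairing with \(\ell_{i,x'}-\ell_{i,x}\in L^1(\nu)\), so it is weak-\(^*\) continuous. Each \(C_{i,x,x'}\) is the preimage of \([0,\infty)\) under this map and is therefore weak-\(^*\) closed. An arbitrary intersection of closed sets is closed, so the intersection over all \(i\) and all pairs \(x\le x'\) is weak-\(^*\) closed. Intersecting with the compact set \(\mathcal A\) gives a closed subset of a compact set in the (Hausdorff) weak-\(^*\) topology. Hence \(\mathcal Q^{\mathrm{mon}}\) is weak-\(^*\) compact.

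\textbf{Well-definedness under a.e.\ identification.} The one point needing care is that the constraints make sense on \(\nu\)-equivalence classes. If \(q=\tilde q\) \(\nu\)-a.e., then \(Q_i^q(x)=Q_i^{\tilde q}(x)\) for every \(x\), because the integrand \(q\,\ell_{i,x}\) is integrated against \(\nu\).

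This relies on \(\ell_{i,x}\) being a genuine element of \(L^1(\nu)\), i.e.\ on \(k(\cdot\mid x)\,dy\) being absolutely continuous with respect to \(m(y)\,dy\). That holds under the common-support positivity in Assumption~\ref{ass:mlrp}, since \(k>0\) implies \(m>0\) on \(\Y\). I expect this to be the only step that needs more than a line: confirming that \(\ell_{i,x}\) integrates to one and that no report \(x\) puts mass outside the support of \(\nu\). Once it is checked, the monotonicity constraints are honest continuous linear functionals on \(L^\infty(\nu)\) and the argument above goes through.

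Nonemptiness of \(\mathcal Q^{\mathrm{mon}}\) is immediate, since every constant \(q\) belongs to it, although compactness does not require it.
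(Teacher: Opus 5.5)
Your proposal is correct and follows the paper's proof. \(\mathcal Q^{\mathrm{mon}}\) is cut out of the weak-\(^*\) compact cube \(\mathcal A\) by the weak-\(^*\) closed half-spaces \(\{Q_i^q(x')-Q_i^q(x)\ge0\}\), each a pairing with \(\ell_{i,x'}-\ell_{i,x}\in L^1(\nu)\), and convexity comes from linearity of \(q\mapsto Q_i^q\); the integrability of \(\ell_{i,x}\) that you flag is already recorded in the paper's setup (\(\|\ell_{i,x}\|_{L^1(\nu)}=1\)), so it is a consistency check rather than an additional step.
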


\begin{proof}
For each \(i\) and each \(x'\ge x\), the condition
\[
Q_i^q(x')-Q_i^q(x)
\ge0
\]
defines a weak-\(^*\) closed half-space by
Lemma~\ref{lem:app_interim_weakstar}. Therefore
\(\mathcal Q^{\mathrm{mon}}\) is weak-\(^*\) closed in the compact set
\(\mathcal A\). Convexity follows from linearity of \(Q_i^q\).
\end{proof}

Define
\[
w(y)
=
\sum_{i=1}^{n}\widehat x_i(y_i),
\qquad
v(y)
=
\sum_{i=1}^{n}\widehat J_i(y_i).
\]
Compactness of \(\X\) and boundedness of \(\virt\) imply
\[
w,v\in L^1(\nu).
\]

\begin{lemma}[Continuity of welfare and revenue]
\label{lem:app_objective_continuity}
The functionals
\[
\mathsf W(q)
=
\int q(y)w(y)\,\nu(dy)
\]
and
\[
\mathsf V(q)
=
\int q(y)v(y)\,\nu(dy)
\]
are weak-\(^*\) continuous.
\end{lemma}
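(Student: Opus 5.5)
The two functionals are dual pairings of \(q\in L^\infty(\nu)\) against fixed elements of \(L^1(\nu)\), so the plan is to show \(w,v\in L^1(\nu)\) and then read off weak-\(^*\) continuity from the definition of \(\sigma(L^\infty(\nu),L^1(\nu))\). By definition of this topology, for every \(g\in L^1(\nu)\) the map \(q\mapsto\int qg\,d\nu\) is continuous. Taking \(g=w\) and \(g=v\) gives the claim, exactly as in Lemma~\ref{lem:app_interim_weakstar}.

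For integrability, I would bound the posterior objects pointwise. Since \(\X=[\xlo,\xhi]\) is compact (Assumption~\ref{ass:compact_spaces}), \(|\widehat x_i(y_i)|\le\max(|\xlo|,|\xhi|)\) for every \(y_i\), because a posterior mean of a variable supported on \(\X\) lies in \(\X\). Hence \(|w|\le n\max(|\xlo|,|\xhi|)\).

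For \(v\), I would establish boundedness of \(\virt\) on \(\X\) and then use \(|\widehat J_i(y_i)|\le\sup_{\X}|\virt|\). Under Assumption~\ref{ass:regular}, \(f\) is continuous and strictly positive on the compact interval \(\X\), so \(\min_{\X}f>0\). It follows that \((1-F)/f\) is bounded by \(1/\min_{\X}f\), and therefore \(\virt\) is bounded. Thus \(|v|\le n\sup_{\X}|\virt|<\infty\). Bounded measurable functions lie in \(L^1(\nu)\) because \(\nu\) is a probability measure.

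One detail needs attention: measurability of \(\widehat x_i\) and \(\widehat J_i\) as functions of \(y_i\). I would obtain it from the Bayes formula. Joint measurability of \(k\) and Tonelli's theorem make \(y\mapsto\int h(x)k(y\mid x)f(x)\,dx\) measurable, and \(m(y)>0\) by Assumption~\ref{ass:mlrp}, so the ratio is measurable.

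I do not expect a genuine obstacle here. The only nontrivial point is the lower bound on \(f\), which is what makes \(\virt\) bounded, and it follows from continuity and positivity of \(f\) on a compact set.
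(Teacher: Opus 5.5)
Your proposal is correct and follows the paper's route. The paper likewise notes that compactness of \(\X\) and boundedness of \(\virt\) give \(w,v\in L^1(\nu)\), and then treats both functionals as dual pairings with elements of \(L^1(\nu)\). Your lower bound \(\min_{\X}f>0\) and your measurability check via Bayes' formula and Tonelli supply details the paper asserts without proof.
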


\begin{proof}
Both are dual pairings with elements of \(L^1(\nu)\).
\end{proof}

\begin{proposition}[Existence of a reduced-form optimum]
\label{prop:app_reduced_form_existence}
If
\[
\mathcal F_R
=
\left\{
q\in\mathcal Q^{\mathrm{mon}}:
\mathsf V(q)\ge R
\right\}
\]
is nonempty, then the reduced-form problem
\eqref{eq:main_reduced_problem} has a solution.
\end{proposition}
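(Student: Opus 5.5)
This is a direct application of the Weierstrass extreme value theorem in the weak-\(^*\) topology. By Lemma~\ref{lem:app_monotone_compact}, \(\mathcal Q^{\mathrm{mon}}\) is weak-\(^*\) compact. By Lemma~\ref{lem:app_objective_continuity}, \(\mathsf V\) is weak-\(^*\) continuous. Hence the superlevel set \(\{q:\mathsf V(q)\ge R\}=\mathsf V^{-1}([R,\infty))\) is weak-\(^*\) closed. Its intersection \(\mathcal F_R\) with \(\mathcal Q^{\mathrm{mon}}\) is then a weak-\(^*\) closed subset of a compact set, so \(\mathcal F_R\) is weak-\(^*\) compact. It is nonempty by hypothesis.

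Next, \(\mathsf W\) is weak-\(^*\) continuous by the same lemma. A continuous real-valued function on a nonempty compact topological space attains its supremum. In particular, the supremum is finite: \(|\mathsf W(q)|\le\|w\|_{L^1(\nu)}<\infty\), since \(0\le q\le1\) and \(|\widehat x_i|\le\max(|\xlo|,|\xhi|)\). To avoid any worry about nets versus sequences, I would argue directly through the finite-intersection property. For each \(c<\sup_{\mathcal F_R}\mathsf W\), the set \(\{q\in\mathcal F_R:\mathsf W(q)\ge c\}\) is nonempty and closed. These sets are nested, so their intersection over all such \(c\) is nonempty by compactness. Any element of the intersection attains the supremum.

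Finally, I would record that a maximizer in \(L^\infty(\nu)\) is an equivalence class. Any measurable representative \(q\) with values in \([0,1]\) serves as a solution of \eqref{eq:main_reduced_problem}. This is legitimate because \(Q_i^q\), \(\mathsf W\), and \(\mathsf V\) depend only on the \(\nu\)-class of \(q\). For \(Q_i^q\) this holds because \(k(y_i\mid x)\,dy_i\) is absolutely continuous with respect to \(m(y_i)\,dy_i\) when \(k>0\) on the common support (Assumption~\ref{ass:mlrp}), and the density \(\ell_{i,x}\) is used exactly in that way.

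The only step needing care, and hence the main obstacle, is the integrability claim \(w,v\in L^1(\nu)\). Lemma~\ref{lem:app_objective_continuity} relies on it, and it requires \(\virt\) to be bounded on \(\X\). Under Assumption~\ref{ass:regular}, \(f\) is continuous and strictly positive on the compact interval \([\xlo,\xhi]\), so \(f\) is bounded below by a positive constant. Therefore \((1-F)/f\) is bounded, and \(|\widehat J_i|\le\sup_{\X}|\virt|<\infty\). With this verified, the rest of the argument is purely topological.
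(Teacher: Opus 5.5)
Your proposal is correct and follows the paper's proof. Continuity of $\mathsf V$ makes $\mathcal F_R$ a weak-$^*$ closed subset of the weak-$^*$ compact set $\mathcal Q^{\mathrm{mon}}$, and the weak-$^*$ continuous functional $\mathsf W$ then attains its maximum there. Your extra remarks only make explicit details the paper leaves implicit: the finite-intersection argument, the choice of a $\nu$-class representative, and the boundedness of $\virt$ behind $w,v\in L^1(\nu)$.
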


\begin{proof}
The feasible set is weak-\(^*\) closed in the weak-\(^*\) compact set
\(\mathcal Q^{\mathrm{mon}}\), and is therefore weak-\(^*\) compact.
The weak-\(^*\) continuous functional \(\mathsf W\) attains its maximum.
\end{proof}

% ---------------------------------------------------------------------
\subsection{Supporting Multipliers and Strong Duality}
\label{app:multiplier_duality}
% ---------------------------------------------------------------------

Define the downward-comprehensive attainable set
\[
\mathcal C
=
\left\{
(r,z)\in\R^2:
\exists q\in\mathcal Q^{\mathrm{mon}}
\text{ with }
r\le\mathsf V(q),
\quad
z\le\mathsf W(q)
\right\}.
\]

\begin{lemma}[Attainable-set geometry]
\label{lem:app_attainable_set}
The set \(\mathcal C\) is convex, closed, and downward comprehensive.
\end{lemma}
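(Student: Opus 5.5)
The proof has three parts: convexity, downward comprehensiveness, and closedness. Convexity and downward comprehensiveness come from the linearity of \(\mathsf V,\mathsf W\) and the convexity of \(\mathcal Q^{\mathrm{mon}}\). Closedness comes from the weak-\(^*\) compactness established in Lemma~\ref{lem:app_monotone_compact} together with the weak-\(^*\) continuity in Lemma~\ref{lem:app_objective_continuity}.

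\emph{Downward comprehensiveness} is immediate from the definition. If \((r,z)\in\mathcal C\) is witnessed by \(q\), and \(r'\le r\), \(z'\le z\), then the same \(q\) witnesses \((r',z')\in\mathcal C\).

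\emph{Convexity.} Take \((r_1,z_1)\) and \((r_2,z_2)\) in \(\mathcal C\), with witnesses \(q_1,q_2\in\mathcal Q^{\mathrm{mon}}\), and fix \(\theta\in[0,1]\). Set \(q_\theta=\theta q_1+(1-\theta)q_2\). Then \(q_\theta\in\mathcal Q^{\mathrm{mon}}\) by Lemma~\ref{lem:app_monotone_compact}. Since \(\mathsf V\) and \(\mathsf W\) are linear in \(q\) (they are integrals against \(v\) and \(w\)),
\[
\mathsf V(q_\theta)=\theta\mathsf V(q_1)+(1-\theta)\mathsf V(q_2)\ge\theta r_1+(1-\theta)r_2,
\]
and the same inequality holds for \(\mathsf W\). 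Hence the convex combination of the two points lies in \(\mathcal C\).

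\emph{Closedness.} This is the only step needing care. Consider the map
\[
\Gamma:\mathcal Q^{\mathrm{mon}}\to\R^2,\qquad q\mapsto(\mathsf V(q),\mathsf W(q)).
\]
It is continuous from the weak-\(^*\) topology to the Euclidean topology by Lemma~\ref{lem:app_objective_continuity}. Its domain is weak-\(^*\) compact by Lemma~\ref{lem:app_monotone_compact}, so its image \(\Gamma(\mathcal Q^{\mathrm{mon}})\) is a compact subset of \(\R^2\). By definition \(\mathcal C=\Gamma(\mathcal Q^{\mathrm{mon}})-\R^2_+\), and the Minkowski sum of a compact set and a closed set is closed.

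To spell out that last fact without nets on the allocation side, take \((r_k,z_k)\in\mathcal C\) converging to \((r,z)\), with witnesses \(q_k\). The points \(\Gamma(q_k)\) lie in a compact set, so along a subsequence \(\Gamma(q_k)\to(a,b)\in\Gamma(\mathcal Q^{\mathrm{mon}})\), say \((a,b)=\Gamma(q)\). Passing to the limit in \(r_k\le\mathsf V(q_k)\) and \(z_k\le\mathsf W(q_k)\) gives \(r\le a=\mathsf V(q)\) and \(z\le b=\mathsf W(q)\). So \((r,z)\in\mathcal C\), witnessed by \(q\).

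The only technical point to flag is the membership \(v,w\in L^1(\nu)\). This was already noted above the lemma, using compactness of \(\X\) and boundedness of \(\virt\), which follows from the continuity of \(f>0\) in Assumption~\ref{ass:regular}. Once that is granted, no further obstacle is expected.
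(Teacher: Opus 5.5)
Your proof is correct and follows the paper's argument. Convexity comes from linearity of \(\mathsf V,\mathsf W\) and convexity of \(\mathcal Q^{\mathrm{mon}}\), and closedness comes from weak-\(^*\) compactness of \(\mathcal Q^{\mathrm{mon}}\) together with weak-\(^*\) continuity of \(\mathsf V,\mathsf W\); the only cosmetic difference is that you push compactness forward to the image in \(\R^2\) and use ordinary subsequences, whereas the paper extracts a weak-\(^*\) convergent subnet of the allocations directly.
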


\begin{proof}
Convexity follows from convexity of
\(\mathcal Q^{\mathrm{mon}}\) and linearity of \(\mathsf W\) and
\(\mathsf V\). Downward comprehensiveness is immediate.

Let
\[
(r_\gamma,z_\gamma)\to(r,z),
\qquad
(r_\gamma,z_\gamma)\in\mathcal C.
\]
Choose \(q_\gamma\in\mathcal Q^{\mathrm{mon}}\) satisfying
\[
r_\gamma\le\mathsf V(q_\gamma),
\qquad
z_\gamma\le\mathsf W(q_\gamma).
\]
By weak-\(^*\) compactness, a subnet converges to some
\(q\in\mathcal Q^{\mathrm{mon}}\). Continuity of \(\mathsf V\) and
\(\mathsf W\) gives
\[
r\le\mathsf V(q),
\qquad
z\le\mathsf W(q).
\]
Hence \((r,z)\in\mathcal C\).
\end{proof}

\begin{proposition}[Supporting multiplier and strong duality]
\label{prop:app_supporting_multiplier}
Suppose strict feasibility holds:
\[
\exists q^\circ\in\mathcal Q^{\mathrm{mon}}
\quad\text{such that}\quad
\mathsf V(q^\circ)>R.
\]
Then there exists a finite \(\lambda^*\ge0\) such that every primal optimum
maximizes
\[
q
\longmapsto
\mathsf W(q)+\lambda^*\mathsf V(q)
\]
over \(\mathcal Q^{\mathrm{mon}}\), and
\[
\lambda^*
\bigl(
\mathsf V(q^*)-R
\bigr)
=
0.
\]
Moreover,
\begin{equation}
\label{eq:app_strong_duality}
\sup_{\substack{q\in\mathcal Q^{\mathrm{mon}}\\
\mathsf V(q)\ge R}}
\mathsf W(q)
=
\inf_{\lambda\ge0}
\left[
\sup_{q\in\mathcal Q^{\mathrm{mon}}}
\{
\mathsf W(q)+\lambda\mathsf V(q)
\}
-\lambda R
\right].
\end{equation}
\end{proposition}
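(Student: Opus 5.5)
The proof is a finite-dimensional separation argument in \(\R^2\) applied to the attainable set \(\mathcal C\) of Lemma~\ref{lem:app_attainable_set}. The primal value is the perturbation function
\[
p(r)=\sup\{\mathsf W(q):q\in\mathcal Q^{\mathrm{mon}},\ \mathsf V(q)\ge r\},
\]
and I would work with its properties near \(r=R\).

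\emph{Step 1: properties of \(p\).} By Proposition~\ref{prop:app_reduced_form_existence}, \(p(r)\) is attained whenever the feasible set is nonempty. It is finite, since \(|\mathsf W|\le\|w\|_{L^1(\nu)}\) on \(\mathcal A\). Linearity of \(\mathsf W,\mathsf V\) and convexity of \(\mathcal Q^{\mathrm{mon}}\) make \(p\) concave and nonincreasing on \(\{r:p(r)>-\infty\}\). The hypograph of \(p\) is exactly \(\mathcal C\).

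\emph{Step 2: existence of the multiplier.} Strict feasibility gives \(\mathsf V(q^\circ)>R\). Hence \(R\) lies in the interior of the effective domain of \(p\), because every \(r<\mathsf V(q^\circ)\) is feasible. A finite concave function is superdifferentiable at interior points, so there is \(\lambda^*\) with
\[
p(r)\le p(R)-\lambda^*(r-R)\qquad\text{for all } r.
\]
Monotonicity of \(p\) forces \(\lambda^*\ge0\). Equivalently, separate the point \((R,p(R)+\varepsilon)\), which lies outside \(\mathcal C\), from the closed convex set \(\mathcal C\). Downward comprehensiveness makes the normal vector nonnegative. The \(z\)-coefficient must be nonzero, for otherwise the hyperplane \(r=R\) would support \(\mathcal C\), contradicting \(\mathsf V(q^\circ)>R\); this is where Slater's condition enters. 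I regard this step, ruling out a vertical supporting line, as the only real obstacle, and strict feasibility is exactly what handles it.

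\emph{Step 3: optimality and complementary slackness.} For any \(q\in\mathcal Q^{\mathrm{mon}}\), take \(r=\mathsf V(q)\) in the supergradient inequality. This gives
\[
\mathsf W(q)+\lambda^*\mathsf V(q)\le p(R)+\lambda^*R .
\]
Let \(q^*\) be a primal optimum, so \(\mathsf W(q^*)=p(R)\) and \(\mathsf V(q^*)\ge R\). Then
\[
\mathsf W(q^*)+\lambda^*\mathsf V(q^*)\ge p(R)+\lambda^*R .
\]
Hence \(q^*\) maximizes \(\mathsf W+\lambda^*\mathsf V\) over \(\mathcal Q^{\mathrm{mon}}\). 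The chain of inequalities also forces \(\lambda^*(\mathsf V(q^*)-R)\le0\). Combined with \(\lambda^*\ge0\) and \(\mathsf V(q^*)\ge R\), this yields complementary slackness. Because \(\lambda^*\) was constructed from \(p\) alone, the same multiplier serves every primal optimum.

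\emph{Step 4: strong duality.} Weak duality, primal \(\le\) dual, is standard: for feasible \(q\) and \(\lambda\ge0\), \(\mathsf W(q)\le\mathsf W(q)+\lambda(\mathsf V(q)-R)\). Taking \(\lambda=\lambda^*\), Step 3 gives
\[
\sup_{q\in\mathcal Q^{\mathrm{mon}}}\{\mathsf W+\lambda^*\mathsf V\}-\lambda^*R=p(R).
\]
So the infimum in \eqref{eq:app_strong_duality} is attained at \(\lambda^*\) and equals the primal value.
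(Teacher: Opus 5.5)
Your proof is correct and is essentially the paper's argument: the paper applies the supporting-hyperplane theorem to \(\mathcal C\) at the boundary point \((R,W^*(R))\), uses downward comprehensiveness to make the normal nonnegative and strict feasibility to rule out a vertical normal, and then derives Lagrangian optimality, complementary slackness, and strong duality exactly as in your Steps 3--4. Your supergradient of \(p\) at \(R\) is the same object, since \(\mathcal C\) is the hypograph of \(p\). One small caution concerns your parenthetical variant that separates \((R,p(R)+\varepsilon)\): for fixed \(\varepsilon\) it gives only an inequality with \(\varepsilon\) slack, so you would need either the limit \(\varepsilon\downarrow0\) (using the Slater bound on the multipliers) or a direct support of \(\mathcal C\) at \((R,p(R))\). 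Your primary superdifferentiability route needs neither.
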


\begin{proof}
Let \(W^*(R)\) denote the primal value. Then
\[
(R,W^*(R))
\]
is a boundary point of the closed convex set \(\mathcal C\).Indeed, \((R,W^*(R))\in\mathcal C\), while
\((R,W^*(R)+\varepsilon)\notin\mathcal C\) for every
\(\varepsilon>0\) by the definition of \(W^*(R)\). The supporting-hyperplane theorem gives a nonzero normal vector
\((a,b)\in\R^2\). Downward comprehensiveness of \(\mathcal C\) implies
\(a,b\ge0\): if either component were negative, moving sufficiently far
downward in the corresponding coordinate would violate the supporting
inequality. Hence \((a,b)\in\R_+^2\), and
\[
ar+bz
\le
aR+bW^*(R)
\qquad
\forall(r,z)\in\mathcal C.
\]
Strict feasibility rules out \(b=0\), so \(b>0\). Define
\[
\lambda^*
=
\frac{a}{b}.
\]
Then
\[
\mathsf W(q)+\lambda^*\mathsf V(q)
\le
W^*(R)+\lambda^*R
\]
for all \(q\in\mathcal Q^{\mathrm{mon}}\).

For a primal optimum \(q^*\),
\[
\mathsf W(q^*)=W^*(R),
\qquad
\mathsf V(q^*)\ge R.
\]
Substituting into the support inequality gives complementary slackness.
Weak duality and evaluation at \(\lambda^*\) give
\eqref{eq:app_strong_duality}.
\end{proof}

% ---------------------------------------------------------------------
\subsection{Pointwise Lagrangian Maximization}
% ---------------------------------------------------------------------

\begin{lemma}[Pointwise maximization]
\label{lem:app_pointwise_max}
For fixed \(\lambda\ge0\), the maximizers of
\[
q
\longmapsto
\int q(y)G_\lambda(y)\,\nu(dy)
\]
over \(\mathcal A\) are precisely the measurable functions satisfying
\[
q(y)=1
\quad\text{on }\{G_\lambda>0\},
\]
\[
q(y)=0
\quad\text{on }\{G_\lambda<0\},
\]
with arbitrary values in \([0,1]\) on \(\{G_\lambda=0\}\).
\end{lemma}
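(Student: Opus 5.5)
The plan is a pointwise comparison of integrands under the truthful signal law \(\nu\). First I check that the objective is well defined. Compactness of \(\X\) and boundedness of \(\virt\), recorded just before Lemma~\ref{lem:app_objective_continuity}, give \(G_\lambda\in L^1(\nu)\). Hence \(\int qG_\lambda\,d\nu\) is finite for every \(q\in\mathcal A\).

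Next, fix \(q\in\mathcal A\) and let \(q_\lambda=\ind\{G_\lambda>0\}\). Since \(0\le q\le1\), two pointwise facts hold:
\[
q(y)G_\lambda(y)\le (G_\lambda(y))_+=q_\lambda(y)G_\lambda(y)\quad\nu\text{-a.e.}
\]
Integrating gives
\[
\int qG_\lambda\,d\nu\le\int (G_\lambda)_+\,d\nu,
\]
with equality attained by \(q_\lambda\in\mathcal A\). So the supremum equals \(\E[(G_\lambda)_+]\). It is also clear that any function described in the statement attains it: such a function equals \(1\) where \(G_\lambda>0\) and \(0\) where \(G_\lambda<0\). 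On the set where \(G_\lambda=0\), the integrand vanishes whatever value \(q\) takes there.

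The converse direction carries the substance. Suppose \(q\) is a maximizer. Then
\[
\int\bigl[(G_\lambda)_+-qG_\lambda\bigr]\,d\nu=0,
\]
and the integrand is nonnegative, so it vanishes \(\nu\)-a.e. I would then split into cases.
\begin{itemize}
\item On \(\{G_\lambda>0\}\), the integrand is \((1-q)G_\lambda\), which forces \(q=1\) a.e. there.
\item On \(\{G_\lambda<0\}\), the integrand is \(-qG_\lambda=q|G_\lambda|\), which forces \(q=0\) a.e. there.
\item On \(\{G_\lambda=0\}\), there is no restriction beyond \(q\in[0,1]\).
\end{itemize}

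The only delicate point is that the characterization holds up to \(\nu\)-null sets. This is consistent with identifying elements of \(\mathcal A\) modulo \(\nu\)-a.e. equality, as fixed at the start of this appendix, so no further obstacle is expected.
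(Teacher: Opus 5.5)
Your proposal is correct and follows the paper's argument, which maximizes \(a\mapsto aG_\lambda(y)\) over \(a\in[0,1]\) pointwise and then integrates. You also spell out what the paper's two-line proof leaves implicit: the integrability of \(G_\lambda\), and the necessity direction, where a nonnegative integrand with zero integral vanishes \(\nu\)-a.e. Your reading of the characterization as holding up to \(\nu\)-null sets matches the paper's identification of allocations modulo \(\nu\)-a.e. equality.
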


\begin{proof}
For each \(y\), maximize
\[
a
\longmapsto
aG_\lambda(y)
\]
over \(a\in[0,1]\). Integrating the pointwise inequality proves the claim.
\end{proof}

\begin{remark}[Positive-mass tie sets]
\label{rmk:positive_mass_tie}
If
\[
\Prob(G_{\lambda^*}(Y)=0)>0,
\]
pointwise maximization alone does not determine the allocation on the
zero-score set. A constant tie probability preserves coordinatewise
monotonicity but need not move expected virtual surplus in the direction
required to meet the revenue constraint. Exact selection therefore requires
an additional measurable monotone tie rule.
\end{remark}

% ---------------------------------------------------------------------
\subsection{Proof of the Optimal-Allocation Theorem}
\label{app:proof_optimal_allocation}
% ---------------------------------------------------------------------

\begin{proof}[Proof of Theorem~\ref{thm:optimal_allocation}]
Existence follows from
Proposition~\ref{prop:app_reduced_form_existence}. The supporting multiplier
\(\lambda^*\) is supplied by
Proposition~\ref{prop:app_supporting_multiplier}.

Regularity implies that
\[
x
\longmapsto
x+\lambda^*\virt(x)
\]
is weakly increasing. By Lemma~\ref{lem:mlrp_fosd},
\[
S_i(y_i,\lambda^*)
=
\E[
X_i+\lambda^*\virt(X_i)
\mid
Y_i=y_i
]
\]
is weakly increasing in \(y_i\). Hence
\[
G_{\lambda^*}(y)
=
\sum_iS_i(y_i,\lambda^*)
\]
is coordinatewise weakly increasing.

Let \(\widetilde q\) be a primal optimum, whose existence follows from
Proposition~\ref{prop:app_reduced_form_existence}. By
Proposition~\ref{prop:app_supporting_multiplier}, \(\widetilde q\) also
maximizes the Lagrangian at \(\lambda^*\).

Lemma~\ref{lem:app_pointwise_max} implies that every Lagrangian maximizer
equals
\[
\ind\{G_{\lambda^*}>0\}
\]
outside the zero-score set. By
Assumption~\ref{ass:aggregate_score_atomless}, the zero-score set is
\(\nu\)-null, so the Lagrangian maximizer is unique up to
\(\nu\)-almost-everywhere equality. Consequently,
\[
\widetilde q(y)
=
\ind\{G_{\lambda^*}(y)>0\}
\qquad
\nu\text{-a.e.}
\]
Therefore the threshold rule
\[
q^*(y)
=
\ind\{G_{\lambda^*}(y)>0\}
\]
is itself primal feasible and optimal, not merely a maximizer of the unconstrained Lagrangian.

Because \(G_{\lambda^*}\) is coordinatewise weakly increasing, \(q^*\) is
coordinatewise weakly increasing. For each agent, MLRP then implies that the
interim allocation is weakly increasing in the report. Reduced-form BIC
follows from Proposition~\ref{prop:app_interim_bic}. Complementary slackness
follows from Proposition~\ref{prop:app_supporting_multiplier}.
\end{proof}

% ---------------------------------------------------------------------
\subsection{The Channel Operator and Fredholm Implementation}
\label{app:channel_operator}
% ---------------------------------------------------------------------

Let \(\mathcal H\) be the Banach space of admissible opponent-averaged
transfers specified in Assumption~\ref{ass:completeness}. Define
\[
\mathcal K:\mathcal H\to C(\X),
\qquad
(\mathcal Kg)(x)
=
\int_{\Y}g(y)k(y\mid x)\,dy.
\]

\begin{proposition}[Boundedness of the channel operator]
\label{prop:app_channel_operator}
The operator \(\mathcal K\) is linear and bounded. In particular,
\[
\|\mathcal Kg\|_\infty
\le
C_{\mathcal H}\|g\|_{\mathcal H}.
\]
Under \eqref{eq:l1_kernel_continuity},
\(\mathcal Kg\in C(\X)\).
\end{proposition}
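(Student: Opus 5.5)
Three things need to be checked: that \(\mathcal K\) is linear, that it is bounded in the sup norm, and that \(\mathcal Kg\) is continuous on \(\X\) for every \(g\in\mathcal H\).

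\emph{Well-definedness and linearity.} By the domination condition \eqref{eq:transfer_space_domination} of Assumption~\ref{ass:completeness}, for each \(g\in\mathcal H\) and \(x\in\X\) the integral \(\int_{\Y}|g(y)|k(y\mid x)\,dy\) is finite. Hence \((\mathcal Kg)(x)\) is a well-defined real number. Linearity then follows from linearity of the Lebesgue integral.

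\emph{Boundedness.} The bound is immediate from the same condition:
\[
|(\mathcal Kg)(x)|\le\int_{\Y}|g(y)|k(y\mid x)\,dy\le C_{\mathcal H}\|g\|_{\mathcal H}
\]
uniformly in \(x\). Taking the supremum over \(x\in\X\) gives \(\|\mathcal Kg\|_\infty\le C_{\mathcal H}\|g\|_{\mathcal H}\).

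\emph{Continuity, bounded case.} Continuity of \(\mathcal Kg\) is the only step needing care, because \(g\) may be unbounded and the \(L^1\) kernel continuity \eqref{eq:l1_kernel_continuity} controls only bounded integrands. I first treat bounded \(g\) with \(\|g\|_\infty\le M\), exactly as in Lemma~\ref{lem:app_interim_continuity}:
\[
|(\mathcal Kg)(x')-(\mathcal Kg)(x)|\le M\int_{\Y}|k(y\mid x')-k(y\mid x)|\,dy\to0
\quad\text{as }x'\to x.
\]

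\emph{Continuity, general case.} For general \(g\in\mathcal H\), let \(g_N=(g\wedge N)\vee(-N)\) be the bounded truncations. By Assumption~\ref{ass:completeness}, these belong to \(\mathcal H\) and satisfy \(\|g_N-g\|_{\mathcal H}\to0\). The boundedness just proved, applied to \(g-g_N\), gives \(\|\mathcal Kg_N-\mathcal Kg\|_\infty\le C_{\mathcal H}\|g_N-g\|_{\mathcal H}\to0\). Each \(\mathcal Kg_N\) is continuous by the bounded case. Since a uniform limit of continuous functions on \(\X\) is continuous, \(\mathcal Kg\in C(\X)\).

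The main (though mild) obstacle is this passage from bounded to unbounded \(g\). The truncation-density hypothesis in Assumption~\ref{ass:completeness} is exactly what allows the uniform bound to transport continuity, so I expect it to go through as described.
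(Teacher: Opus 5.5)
Your proof is correct and follows the paper's argument essentially step for step. Linearity and the sup-norm bound come directly from the domination condition, and continuity for bounded \(g\) comes from the \(L^1\) kernel continuity. The extension to general \(g\in\mathcal H\) then uses bounded truncations together with the operator bound. You also make explicit that \(\mathcal Kg\) is the uniform limit of the continuous functions \(\mathcal Kg_N\), which is precisely what the paper means by ``approximate by bounded truncations and use boundedness of \(\mathcal K\).''
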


\begin{proof}
Linearity is immediate. The norm bound follows from the domination condition
in Assumption~\ref{ass:completeness}. For bounded \(g\),
\[
|(\mathcal Kg)(x')-(\mathcal Kg)(x)|
\le
\|g\|_\infty
\int_{\Y}
|k(y\mid x')-k(y\mid x)|\,dy,
\]
which converges to zero by
\eqref{eq:l1_kernel_continuity}. For general \(g\in\mathcal H\), approximate
by bounded truncations and use boundedness of \(\mathcal K\).
\end{proof}

\begin{proposition}[Completeness and injectivity]
\label{prop:app_completeness_injectivity}
Assumption~\ref{ass:completeness} is equivalent to injectivity of
\(\mathcal K\), modulo \(m(y)\,dy\)-almost-everywhere equality.
\end{proposition}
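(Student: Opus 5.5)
The claim is that Assumption~\ref{ass:completeness} holds exactly when \(\mathcal K\) is injective on \(\mathcal H\), modulo \(m(y)\,dy\)-almost-everywhere equality. The plan is to unwind the definitions so that both statements become the same set of equations, and then to check that the notion of ``zero'' matches on the two sides.

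By Proposition~\ref{prop:app_channel_operator}, \(\mathcal K\) is a well-defined linear map from \(\mathcal H\) to \(C(\X)\). Injectivity modulo \(m\)-null sets means that \(\mathcal Kg_1=\mathcal Kg_2\) implies \(g_1=g_2\) \(m\)-a.e. By linearity this is equivalent to the kernel condition: \(\mathcal Kg=0\) implies \(g=0\) \(m\)-a.e. Since \(\mathcal H\) is a linear space, \(g_1-g_2\in\mathcal H\), so the difference trick is legitimate. The statement \(\mathcal Kg=0\) in \(C(\X)\) means
\[
\int_{\Y}g(y)k(y\mid x)\,dy=0
\qquad\text{for every }x\in\X,
\]
which is exactly hypothesis \eqref{eq:completeness}. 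Its conclusion is exactly the completeness conclusion. Both implications therefore follow directly.

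The only point needing care is the quotient. We must check that modding out by \(m\)-null functions is compatible with \(\mathcal K\): if \(g=0\) \(m\)-a.e., then \(\mathcal Kg=0\). Positivity of \(k\) and \(f\), from Assumption~\ref{ass:mlrp} and Assumption~\ref{ass:regular}, gives \(m(y)>0\) on \(\Y\). Hence \(m\)-null sets coincide with Lebesgue-null subsets of \(\Y\). Each \(k(\cdot\mid x)\) is then absolutely continuous with respect to \(m(y)\,dy\). The integral defining \((\mathcal Kg)(x)\) is finite by \eqref{eq:transfer_space_domination}, so it vanishes for every \(x\). Thus \(\mathcal K\) descends to a well-defined map on equivalence classes, and injectivity on the quotient is precisely the kernel statement above.

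I expect this absolute-continuity step to be the main, if mild, obstacle. Without positivity of \(m\), the statement ``\(g=0\) \(m\)-a.e.'' need not force \(\mathcal Kg=0\), so the quotient map would not be well defined.
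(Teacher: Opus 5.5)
Your proof is correct and is essentially the paper's argument, which simply notes that \(\mathcal Kg=0\) in \(C(\X)\) is literally the hypothesis in \eqref{eq:completeness}, so completeness says exactly that \(\mathcal K\) has trivial kernel modulo \(m(y)\,dy\)-null functions; linearity then turns trivial kernel into injectivity. Your additional check that \(\mathcal K\) annihilates \(m\)-null functions (via \(m>0\) on \(\Y\)), so that the map descends to equivalence classes, is a sound refinement that the paper leaves implicit.
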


\begin{proof}
The identity \(\mathcal Kg=0\) is exactly
\[
\int_{\Y}
g(y)k(y\mid x)\,dy
=
0
\qquad
\forall x\in\X.
\]
Thus the completeness condition is equivalent to a trivial kernel.
\end{proof}

\begin{theorem}[Range, uniqueness, and stability]
\label{thm:app_transfer_operator}
Let \(\tau\in C(\X)\).

\begin{enumerate}[label=(\roman*),leftmargin=2em]

\item A solution \(g\in\mathcal H\) of
\[
\mathcal Kg=\tau
\]
exists if and only if
\[
\tau\in\operatorname{Ran}(\mathcal K).
\]

\item Under Assumption~\ref{ass:completeness}, the solution is unique.

\item If \(\operatorname{Ran}(\mathcal K)\) is closed in \(C(\X)\), then
\[
\mathcal K^{-1}:
\operatorname{Ran}(\mathcal K)
\longrightarrow
\mathcal H
\]
is bounded.

\item If the range is not closed, the inverse is unbounded on its range.

\end{enumerate}
\end{theorem}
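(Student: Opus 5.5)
Parts (i) and (ii) are essentially definitional, so most of the work will go into (iii) and (iv), which I intend to derive from the bounded inverse theorem. For (i), $\mathcal Kg=\tau$ having a solution $g\in\mathcal H$ is exactly the statement $\tau\in\operatorname{Ran}(\mathcal K)$; no further argument is needed beyond recalling from Proposition~\ref{prop:app_channel_operator} that $\mathcal K$ is a well-defined linear map $\mathcal H\to C(\X)$. For (ii), suppose $g_1,g_2\in\mathcal H$ both solve $\mathcal Kg=\tau$. Linearity gives $\mathcal K(g_1-g_2)=0$, i.e.\ $\int_{\Y}(g_1-g_2)(y)k(y\mid x)\,dy=0$ for every $x\in\X$. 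Assumption~\ref{ass:completeness}, in the form of Proposition~\ref{prop:app_completeness_injectivity}, then yields $g_1=g_2$ $m(y)\,dy$-a.e. Uniqueness is therefore understood modulo this null-set identification, which is how $\mathcal H$ should be read as a space of equivalence classes.

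For (iii), I will use that $\mathcal H$ is Banach and $\mathcal K$ is bounded by Proposition~\ref{prop:app_channel_operator}. If $\operatorname{Ran}(\mathcal K)$ is closed in the Banach space $C(\X)$, it is itself a Banach space. Under completeness, $\mathcal K:\mathcal H\to\operatorname{Ran}(\mathcal K)$ is then a continuous linear bijection between Banach spaces. The open mapping (bounded inverse) theorem gives boundedness of $\mathcal K^{-1}$ on the range, that is, a constant $c>0$ with $\|g\|_{\mathcal H}\le c^{-1}\|\mathcal Kg\|_\infty$.

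For (iv), I will argue by contraposition. Suppose $\mathcal K^{-1}$ were bounded on $\operatorname{Ran}(\mathcal K)$, so that $\|g\|_{\mathcal H}\le C\|\mathcal Kg\|_\infty$ for all $g$. Take $\tau_k=\mathcal Kg_k\to\tau$ in $C(\X)$. Then $(\tau_k)$ is Cauchy, the bound makes $(g_k)$ Cauchy in $\mathcal H$, and completeness of $\mathcal H$ gives a limit $g$. Continuity of $\mathcal K$ then gives $\mathcal Kg=\tau$, so the range is closed. Hence a non-closed range forces an unbounded inverse, and I expect no real obstacle here.

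The only delicate point is to be explicit that (iii) and (iv) presuppose injectivity, i.e.\ Assumption~\ref{ass:completeness}, so that $\mathcal K^{-1}$ is defined at all. I will also make sure $\mathcal H$ is taken modulo $m$-null functions, so that injectivity holds literally on $\mathcal H$. The domination condition \eqref{eq:transfer_space_domination} guarantees that $m$-null $g$ have $\mathcal Kg=0$: since $m>0$ wherever some $k(\cdot\mid x)>0$, they are $k(\cdot\mid x)\,dy$-null for each $x$. This makes the quotient compatible with $\mathcal K$.
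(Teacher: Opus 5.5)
Your proposal is correct and follows the paper's proof. Parts (i) and (ii) come from the definition of the range and from completeness read as injectivity, (iii) from the bounded inverse theorem applied to $\mathcal K$ as a bijection onto its closed, hence Banach, range, and (iv) from the fact that a bounded inverse turns $\mathcal K g_k\to\tau$ into a Cauchy sequence $(g_k)$ in $\mathcal H$ whose limit solves $\mathcal K g=\tau$, and your explicit note that (iii) and (iv) presuppose injectivity for $\mathcal K^{-1}$ to be defined makes precise a point the paper leaves implicit.
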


\begin{proof}
Parts (i) and (ii) follow from the definition of the range and injectivity.
If the range is closed, it is a Banach space under the inherited
\(C(\X)\)-norm. The bounded inverse theorem therefore applies.

Conversely, suppose the inverse were bounded while the range were not
closed. Then there would exist \(g_n\in\mathcal H\) and
\(\tau\notin\operatorname{Ran}(\mathcal K)\) such that
\[
\mathcal Kg_n\to\tau
\]
in \(C(\X)\). Boundedness of the inverse would imply that \(g_n\) is Cauchy
in \(\mathcal H\). Let \(g\) be its limit. Continuity of \(\mathcal K\)
would then give
\[
\mathcal Kg=\tau,
\]
a contradiction.
\end{proof}

\begin{proof}[Proof of Proposition~\ref{prop:fredholm}]
For a fixed allocation \(q\),
\[
Q_i^q(x)
=
\int_{\Y^n}
q(y)
k(y_i\mid x)
f_{Y_{-i}}(y_{-i})\,dy.
\]
Under zero lowest-type rent, the envelope payment is
\[
\tau_i^q(x)
=
xQ_i^q(x)
-
\int_{\xlo}^{x}Q_i^q(z)\,dz.
\]
Substituting the definition of \(Q_i^q\) and applying Fubini gives
\[
\begin{aligned}
\tau_i^q(x)
&=
\int_{\Y^n}
q(y)
\left[
xk(y_i\mid x)
-
\int_{\xlo}^{x}
k(y_i\mid z)\,dz
\right]
f_{Y_{-i}}(y_{-i})\,dy.
\end{aligned}
\]

An ex-post transfer \(t_i\) induces the interim payment
\[
T_i(x)
=
\int_{\Y^n}
t_i(y)
k(y_i\mid x)
f_{Y_{-i}}(y_{-i})\,dy.
\]
Define the truthful-law opponent average
\[
\bar t_i(y_i)
=
\E[t_i(Y)\mid Y_i=y_i]
=
\int_{\Y^{n-1}}
t_i(y_i,y_{-i})
f_{Y_{-i}}(y_{-i})\,dy_{-i}.
\]
Then
\[
T_i(x)
=
\int_{\Y}
\bar t_i(y_i)k(y_i\mid x)\,dy_i
=
(\mathcal K\bar t_i)(x).
\]
Thus implementation is equivalent to
\[
\mathcal K\bar t_i
=
\tau_i^q.
\]
Conversely, whenever \(\bar t_i\in\mathcal H\) solves this equation, an
ex-post implementation is obtained by setting
\[
t_i(y_i,y_{-i})=\bar t_i(y_i),
\]
provided this transfer satisfies the maintained integrability requirements.
Existence, uniqueness, and stability follow from
Theorem~\ref{thm:app_transfer_operator}.

Finally, two ex-post transfers implement the same interim payment schedule if
and only if their difference \(r_i\) satisfies
\[
\E[r_i(Y)\mid Y_i]
=
0
\]
under the truthful signal law.
\end{proof}

% ---------------------------------------------------------------------
\subsection{Proof of the Revenue Regimes}
\label{app:proof_three_regimes_known}
% ---------------------------------------------------------------------

Set
\[
Z_i
=
\widehat J_i(Y_i),
\qquad
S_n
=
\sum_{i=1}^{n}Z_i.
\]
The variables \(Z_i\) are i.i.d. and bounded, with
\[
\E[Z_i]
=
\xlo.
\]
Moreover,
\[
R_n^{*,\mathrm{red}}(K)
=
\E[(S_n)_+].
\]

\begin{proof}[Proof of Theorem~\ref{thm:three_regimes_known}]
Suppose first that \(\xlo>0\). Since
\[
(S_n)_+
=
S_n+(-S_n)_+,
\]
we have
\[
\E[(S_n)_+]
=
n\xlo+\E[(-S_n)_+].
\]
If \(|Z_i|\le M\), then
\[
0
\le
\E[(-S_n)_+]
\le
nM\Prob(S_n<0).
\]
Hoeffding's inequality gives constants \(c,C>0\) such that
\[
\Prob(S_n<0)
\le
Ce^{-cn}.
\]
Hence
\[
\left|
R_n^{*,\mathrm{red}}(K)-n\xlo
\right|
\le
CMne^{-cn},
\]
after absorbing constants.

Suppose next that \(\xlo=0\). The central limit theorem gives
\[
\frac{S_n}{\sigma_J\sqrt n}
\Rightarrow
N(0,1).
\]
Furthermore,
\[
\sup_n
\E\!\left[
\left(
\frac{S_n}{\sigma_J\sqrt n}
\right)^2
\right]
=
1.
\]
Thus the positive parts are uniformly integrable. Therefore,
\[
\frac{\E[(S_n)_+]}{\sigma_J\sqrt n}
\longrightarrow
\E[N_+]
=
\frac{1}{\sqrt{2\pi}}.
\]

Finally, suppose \(\xlo<0\). Since \(Z_i\) is bounded, its moment-generating
function is finite on all of \(\R\). Cram\'er's theorem applies with rate
function
\[
I_K(a)
=
\sup_{t\in\R}
\left\{
ta-\log\E[e^{tZ_i}]
\right\}.
\]

For the upper bound,
\[
\E[(S_n)_+]
\le
nM\Prob(S_n/n\ge0).
\]
Cram\'er's upper bound gives
\[
\limsup_{n\to\infty}
\frac1n
\log\E[(S_n)_+]
\le
-I_K(0).
\]

For the lower bound, fix \(\eta>0\) sufficiently small that
\((0,\eta)\) lies in the interior of the effective domain. Then
\[
\E[(S_n)_+]
\ge
\frac{n\eta}{2}
\Prob\!\left(
\frac{S_n}{n}\in(\eta/2,\eta)
\right).
\]
Cram\'er's lower bound gives
\[
\liminf_{n\to\infty}
\frac1n
\log\E[(S_n)_+]
\ge
-\inf_{a\in(\eta/2,\eta)}I_K(a).
\]
Because \(0\) lies in the interior of the effective domain, \(I_K\) is
continuous at zero. Letting \(\eta\downarrow0\) yields
\[
\liminf_{n\to\infty}
\frac1n
\log\E[(S_n)_+]
\ge
-I_K(0).
\]
Combining the bounds proves the logarithmic limit. Since
\[
0\neq\E[Z_i]=\xlo,
\]
the rate satisfies \(I_K(0)>0\).
\end{proof}

\begin{corollary}[Exponential representation in the negative-drift regime]
\label{cor:app_negative_drift_exponential}
If \(\xlo<0\), then
\[
R_n^{*,\mathrm{red}}(K)
=
\exp\!\left\{
-nI_K(0)+o(n)
\right\}.
\]
\end{corollary}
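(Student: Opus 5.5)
The corollary is a restatement of part (iii) of Theorem~\ref{thm:three_regimes_known}. The plan is to unpack what the little-\(o\) notation requires and read it off from the logarithmic limit proved there. First I would check that \(R_n^{*,\mathrm{red}}(K)=\E[(S_n)_+]>0\) for every \(n\), so that its logarithm is defined. This holds because, under the interior-of-convex-hull hypothesis, \(Z_i=\widehat J_i(Y_i)\) takes values above zero with positive probability. The event that all \(n\) coordinates exceed some fixed \(\eta>0\) therefore has positive probability, and on it \(S_n>n\eta>0\).

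Next I would define
\[
\varepsilon_n=\log R_n^{*,\mathrm{red}}(K)+nI_K(0).
\]
By construction this gives the identity \(R_n^{*,\mathrm{red}}(K)=\exp\{-nI_K(0)+\varepsilon_n\}\). The claim then reduces to showing \(\varepsilon_n/n\to0\). But
\[
\frac{\varepsilon_n}{n}=\frac1n\log R_n^{*,\mathrm{red}}(K)+I_K(0),
\]
and this tends to zero exactly by the limit \(\lim_n n^{-1}\log R_n^{*,\mathrm{red}}(K)=-I_K(0)\) established in the proof of Theorem~\ref{thm:three_regimes_known}(iii). I would invoke that result under the same hypotheses, namely Assumptions~\ref{ass:regular}, \ref{ass:mlrp}, \ref{ass:compact_spaces}, nondegenerate \(\sigma_J^2\), and \(0\) in the interior of the convex hull of the support.

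There is no real obstacle here. The only point needing care is that the corollary implicitly carries the support hypothesis of part (iii). Without that hypothesis, if \(Z_i\le0\) almost surely, the revenue could vanish identically and the logarithm would be \(-\infty\). I would state explicitly that this hypothesis is inherited from the theorem.
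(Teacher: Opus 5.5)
Your proposal is correct and is the same argument as the paper, which states this corollary without separate proof as a direct rewriting of the logarithmic limit \(\lim_n n^{-1}\log R_n^{*,\mathrm{red}}(K)=-I_K(0)\) in Theorem~\ref{thm:three_regimes_known}(iii). Keep your two additions: the positivity check, and the point that the statement silently inherits the convex-hull support hypothesis, which also ensures \(I_K(0)<\infty\) so that your \(\varepsilon_n\) is real-valued.
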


% ---------------------------------------------------------------------
\subsection{Proof for a Fixed Posterior-Score Rule}
\label{app:proof_asymptotic_revenue}
% ---------------------------------------------------------------------

\begin{proof}[Proof of Proposition~\ref{prop:asymptotic_revenue}]
Let
\[
A_n
=
\left\{
\sum_{i=1}^{n}W_i(\lambda)\ge0
\right\}.
\]
The variables \(W_i(\lambda)\) are i.i.d. and bounded, and
\[
\E[W_i(\lambda)]
=
\mu_S(\lambda)
>
0.
\]
Hoeffding's inequality therefore gives constants
\(c_\lambda,C_\lambda>0\) such that
\[
\Prob(A_n^c)
\le
C_\lambda e^{-c_\lambda n}.
\]

Since
\[
V_n
=
\sum_{i=1}^{n}\widehat J_i(Y_i)
\]
and \(|\widehat J_i(Y_i)|\le M\),
\[
\begin{aligned}
\E[V_n\ind_{A_n}]
&=
\E[V_n]
-
\E[V_n\ind_{A_n^c}]
\\
&=
n\xlo
+
\mathcal O
\left(
n\Prob(A_n^c)
\right)
\\
&=
n\xlo
+
\mathcal O
\left(
ne^{-c_\lambda n}
\right).
\end{aligned}
\]
\end{proof}

\subsection{Local Transition at the Mean-Score Boundary}
\label{app:proof_local_multiplier_transition}

\begin{proof}[Proof of Proposition~\ref{prop:local_multiplier_transition}]
Because
\[
S_i(Y_i,\lambda)
=
\widehat x_i(Y_i)
+
\lambda\widehat J_i(Y_i),
\]
we have
\[
S_i(Y_i,\lambda_n)
=
S_i(Y_i,\lambda_{\mathrm{crit}})
+
\frac{h}{\sqrt n}\widehat J_i(Y_i).
\]
Therefore,
\[
\frac{1}{\sqrt n}
\sum_{i=1}^{n}
S_i(Y_i,\lambda_n)
=
\frac{1}{\sqrt n}
\sum_{i=1}^{n}
S_i(Y_i,\lambda_{\mathrm{crit}})
+
h\frac1n
\sum_{i=1}^{n}
\widehat J_i(Y_i).
\]
At the critical multiplier,
\[
\E[
S_i(Y_i,\lambda_{\mathrm{crit}})
]
=
0.
\]
Hence the central limit theorem gives
\[
\frac{1}{\sqrt n}
\sum_{i=1}^{n}
S_i(Y_i,\lambda_{\mathrm{crit}})
\Rightarrow
\Normal(0,\sigma_{\mathrm{crit}}^2).
\]
By the law of large numbers and
Lemma~\ref{lem:lower_endpoint},
\[
\frac1n
\sum_{i=1}^{n}
\widehat J_i(Y_i)
\longrightarrow
\xlo
\qquad
\text{in probability}.
\]
Slutsky's theorem therefore yields
\[
\frac{1}{\sqrt n}
\sum_{i=1}^{n}
S_i(Y_i,\lambda_n)
\Rightarrow
\Normal(
h\xlo,
\sigma_{\mathrm{crit}}^2
).
\]
Since the limiting distribution is continuous at zero,
\[
\Prob\!\left(
\sum_{i=1}^{n}
S_i(Y_i,\lambda_n)
\ge0
\right)
\longrightarrow
\Phi\!\left(
\frac{h\xlo}{\sigma_{\mathrm{crit}}}
\right).
\]
\end{proof}
\subsection{Hierarchical Prior}
\label{app:hierarchical_proof}
% ---------------------------------------------------------------------

\begin{assumption}[Hierarchical regularity]
\label{ass:hierarchical_regular}
For every \(\theta\), \(F_\theta\) has a strictly positive continuously
differentiable density \(f_\theta\) on the common compact support
\(\X=[\xlo,\xhi]\). Its virtual value
\[
\virt_\theta(x)
=
x-\frac{1-F_\theta(x)}{f_\theta(x)}
\]
is weakly increasing. All posterior moments used below are jointly measurable and integrable.
The channel satisfies MLRP and \eqref{eq:l1_kernel_continuity}. The family
of conditional truthful signal laws is dominated by a common sigma-finite
measure, and the corresponding likelihood-ratio representers used in the
conditional interim allocations belong to \(L^1(\nu_{\mathrm H})\).
\end{assumption}

\medskip

Define the hierarchical monotone allocation class by
\[
\mathcal Q_{\mathrm H}^{\mathrm{mon}}
=
\left\{
q:
Q_{i,\theta}^q
\text{ is weakly increasing in }x
\text{ for every }i,\theta
\right\}.
\]

\begin{assumption}[Hierarchical score monotonicity]
\label{ass:hierarchical_score_monotonicity}
At every supporting multiplier \(\lambda\), the aggregate hierarchical score
\[
G_\lambda^{\mathrm H}(y)
=
\sum_{i=1}^{n}\Psi_i(y,\lambda)
\]
is coordinatewise weakly increasing.
\end{assumption}

\begin{assumption}[Hierarchical score atomlessness]
\label{ass:hierarchical_score_atomless}
At every supporting multiplier \(\lambda\),
\[
\Prob\!\left(
G_\lambda^{\mathrm H}(Y)=0
\right)
=
0
\]
under the hierarchical truthful signal law.
\end{assumption}

Conditional on \(\theta\), the envelope theorem gives
\[
U_i^\theta(x)
=
U_i^\theta(\xlo)
+
\int_{\xlo}^{x}
Q_i^\theta(z)\,dz.
\]
Under conditional zero rents,
\[
\E[T_i(X_i)\mid\theta]
=
\E[
Q_i^\theta(X_i)\virt_\theta(X_i)
\mid\theta
].
\]
Moreover,
\[
\E[X_i\mid\theta,Y]
=
\widehat x^\theta(Y_i),
\qquad
\E[\virt_\theta(X_i)\mid\theta,Y]
=
\widehat J^\theta(Y_i).
\]
\begin{lemma}[Hierarchical posterior representations]
\label{lem:app_hierarchical_posterior}
Under Assumption~\ref{ass:hierarchical_regular},
\begin{align}
\E[X_i\mid Y=y]
&=
\E_{\theta\mid y}
\left[
\widehat x^\theta(y_i)
\right],
\label{eq:app_hierarchical_posterior_mean}
\\
\E[\virt_\theta(X_i)\mid Y=y]
&=
\E_{\theta\mid y}
\left[
\widehat J^\theta(y_i)
\right].
\label{eq:app_hierarchical_posterior_virtual}
\end{align}
Consequently, the per-agent contribution to the hierarchical
welfare--revenue Lagrangian is
\[
\Psi_i(y,\lambda)
=
\E_{\theta\mid y}
\left[
\widehat x^\theta(y_i)
+
\lambda\widehat J^\theta(y_i)
\right].
\]
\end{lemma}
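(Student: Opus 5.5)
The plan is to use iterated expectations over the latent parameter. First establish that, conditional on \(\theta\), the signals factor exactly as in the known-prior model. Then integrate out \(\theta\) against its posterior given the full profile \(Y=y\).

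\emph{Step 1: conditional factorization.} Conditional on \(\theta\), the types are i.i.d.\ \(F_\theta\) and the mediator's draws are independent. The conditional joint density of \((X,Y)\) given \(\theta\) is therefore \(\prod_j f_\theta(x_j)k(y_j\mid x_j)\). Repeating the argument of Lemma~\ref{lem:app_posterior_factorization} with \(f\) replaced by \(f_\theta\) and \(m\) by \(m_\theta(y_j)=\int k(y_j\mid x)f_\theta(x)\,dx\), the conditional posterior of \(X\) given \((\theta,Y=y)\) is a product of coordinatewise posteriors. This gives \(\E[X_i\mid\theta,Y=y]=\widehat x^\theta(y_i)\) and \(\E[\virt_\theta(X_i)\mid\theta,Y=y]=\widehat J^\theta(y_i)\), which are exactly the identities recorded just before the lemma.

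\emph{Step 2: tower property.} By the tower property, applied to the sigma-fields \(\sigma(Y)\subseteq\sigma(\theta,Y)\),
\[
\E[X_i\mid Y]=\E\bigl[\E[X_i\mid\theta,Y]\mid Y\bigr]=\E\bigl[\widehat x^\theta(Y_i)\mid Y\bigr].
\]
The right-hand side is the integral of \(\widehat x^\theta(y_i)\) against \(\pi(d\theta\mid y)\). For the virtual value, the integrand \(\virt_\theta(X_i)\) depends on \(\theta\) as well as \(X_i\). The same argument still applies because \(\virt_\theta(X_i)\) is \(\sigma(\theta,X)\)-measurable. This yields \eqref{eq:app_hierarchical_posterior_virtual}.

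\emph{Step 3: regular conditional distributions and integrability.} The one technical point is to justify writing these expectations as integrals against a regular conditional distribution \(\pi(d\theta\mid y)\). This uses the common dominating measure and the joint measurability and integrability assumed in Assumption~\ref{ass:hierarchical_regular}. Boundedness of \(\X\) makes \(X_i\) bounded, and the assumed integrability of \(\virt_\theta\) handles the second identity. Under these conditions Bayes' formula \(\pi(d\theta\mid y)\propto\prod_j m_\theta(y_j)\,\pi(d\theta)\) is well defined.

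\emph{Step 4: the Lagrangian.} The welfare--revenue Lagrangian is \(\E[q(Y)\sum_i(X_i+\lambda\virt_\theta(X_i))]\). Conditioning on \(Y\), and using that \(q(Y)\) is \(Y\)-measurable, reduces it to \(\E[q(Y)\sum_i\Psi_i(Y,\lambda)]\) by linearity of conditional expectation. This identifies \(\Psi_i\) as the per-agent contribution.

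I expect the main obstacle to be purely measure-theoretic: making sure \(\theta\mapsto\widehat J^\theta(y_i)\) is jointly measurable and \(\pi(\cdot\mid y)\)-integrable, so that the tower step is legitimate. Assumption~\ref{ass:hierarchical_regular} is designed to cover exactly this, so I would simply invoke it.
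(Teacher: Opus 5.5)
Your proposal is correct and follows the paper's proof. Conditional on \(\theta\), the posterior of \(X\) factorizes, so that \(\E[X_i\mid\theta,Y=y]=\widehat x^\theta(y_i)\), and likewise for \(\virt_\theta\); iterated expectations over \(\theta\mid y\) then give both identities, and the score follows by linearity. Your Steps 3 and 4 only make explicit what the paper leaves implicit: the Bayes-formula and measurability details, and the link to the Lagrangian through the conditional zero-rent revenue identity.
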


\begin{proof}
By iterated expectations,
\[
\E[X_i\mid Y=y]
=
\E_{\theta\mid y}
\left[
\E[X_i\mid\theta,Y=y]
\right].
\]
Conditional on \(\theta\), the types are independent and the channel acts
independently across agents, so
\[
\E[X_i\mid\theta,Y=y]
=
\E[X_i\mid\theta,Y_i=y_i]
=
\widehat x^\theta(y_i).
\]
This proves \eqref{eq:app_hierarchical_posterior_mean}. The same argument
applied to \(\virt_\theta(X_i)\) proves
\eqref{eq:app_hierarchical_posterior_virtual}. Summing the two terms with
weight \(\lambda\) yields the stated score.
\end{proof}

\begin{proof}[Proof of Theorem~\ref{thm:hierarchical_allocation}]
The hierarchical Lagrangian is
\[
\mathcal L_{\mathrm H}(q,\lambda)
=
\E[
q(Y)G_\lambda^{\mathrm H}(Y)
]
-\lambda R.
\]
Let \(\nu_{\mathrm H}\) denote the hierarchical truthful signal law. For
each \(i,\theta\), and \(x\), the conditional interim allocation can be
written as a dual pairing
\[
Q_{i,\theta}^q(x)
=
\int_{\Y^n}
q(y)\ell_{i,\theta,x}(y)\,
\nu_{\mathrm H}(dy)
\]
for an appropriate likelihood-ratio representer
\(\ell_{i,\theta,x}\in L^1(\nu_{\mathrm H})\), under the maintained
domination and integrability assumptions. Hence, for every
\(x'\ge x\), the restriction
\[
Q_{i,\theta}^q(x')-Q_{i,\theta}^q(x)\ge0
\]
defines a weak-\(^*\) closed half-space. The hierarchical monotone class is
therefore an intersection of weak-\(^*\) closed half-spaces inside the
weak-\(^*\) compact allocation cube. It is consequently weak-\(^*\)
compact and convex. The hierarchical welfare and revenue functionals are
weak-\(^*\) continuous by their posterior representations.
Hierarchical strict feasibility
and the supporting-hyperplane argument give a finite supporting multiplier
\(\lambda^*\).

Pointwise maximization and
Assumption~\ref{ass:hierarchical_score_atomless} imply
\[
q_{\mathrm H}^*(y)
=
\ind\{
G_{\lambda^*}^{\mathrm H}(y)>0
\}
\]
almost surely under the hierarchical truthful signal law. By
Assumption~\ref{ass:hierarchical_score_monotonicity}, this allocation is
coordinatewise weakly increasing.

Fix \(\theta\). Conditional MLRP implies that
\(Y_i\mid X_i=x_i,\theta\) is stochastically increasing in \(x_i\).
Therefore, the conditional interim allocation is weakly increasing.
Applying Proposition~\ref{prop:app_interim_bic} conditional on \(\theta\)
gives conditional BIC. Complementary slackness follows from the supporting
multiplier argument.
\end{proof}

A common ex-post transfer must satisfy, for every \(\theta\) and \(x_i\),
\[
\begin{aligned}
&
\int_{\Y^n}
t_i(y)
k(y_i\mid x_i)
\prod_{j\ne i}
m_\theta(y_j)\,dy
\\
&\qquad =
\int_{\Y^n}
q_{\mathrm H}^*(y)
\left[
x_i k(y_i\mid x_i)
-
\int_{\xlo}^{x_i}
k(y_i\mid z)\,dz
\right]
\prod_{j\ne i}
m_\theta(y_j)\,dy.
\end{aligned}
\]
Conditional completeness identifies the conditional opponent average if a
solution exists. It does not imply that a single ex-post transfer belongs to
the intersection of the ranges of all conditional channel operators.

% ---------------------------------------------------------------------
\subsection{Blackwell Monotonicity}
\label{app:proof_blackwell_monotone}
% ---------------------------------------------------------------------

\begin{proof}[Proof of Proposition~\ref{prop:blackwell_monotone}]
Suppose
\[
K_1
\succeq_{\mathrm B}
K_2.
\]
By Blackwell's theorem, there exists a report-independent Markov kernel \(L\)
such that
\[
K_2(A\mid x)
=
\int
L(A\mid y)
K_1(dy\mid x)
\]
for every measurable set \(A\).

Let \(q_2\) be reduced-form feasible under \(K_2\). Under \(K_1\), after
observing the signal profile \(Y\), generate conditionally independent
garbled signals
\[
Z_i
\sim
L(\cdot\mid Y_i),
\qquad
i=1,\ldots,n,
\]
and define
\[
q_1(Y)
=
\E[
q_2(Z)
\mid Y
].
\]
Conditional on the valuation profile, \(Z\) has exactly the signal law
generated by \(K_2\). Hence \(q_1\) and \(q_2\) induce identical interim
allocation schedules. Reduced-form monotonicity is therefore preserved.

Likewise,
\[
\E\!\left[
q_1(Y)\sum_iX_i
\right]
=
\E\!\left[
q_2(Z)\sum_iX_i
\right]
\]
and
\[
\E\!\left[
q_1(Y)\sum_i\virt(X_i)
\right]
=
\E\!\left[
q_2(Z)\sum_i\virt(X_i)
\right].
\]
Thus every reduced-form feasible welfare--revenue pair under \(K_2\) is
attainable under \(K_1\). Taking suprema proves
\[
W^{\mathrm{red}}(R;K_1)
\ge
W^{\mathrm{red}}(R;K_2).
\]
\end{proof}
\section{Additional Numerical Results}
\label{app:numerics}
% ============================================================================

This appendix records the numerical illustrations not shown in
Section~\ref{sec:numerics}. Posterior quantities are computed by quadrature and
Monte Carlo comparisons use common random numbers across channels, so that all
channel comparisons are paired.

% ------------------------------------------------------------------
\subsection{Simulation Design and Calibration}
\label{app:numerics:design}
% ------------------------------------------------------------------

\paragraph{Priors.}
The posterior-score panel uses
\[
X\sim\Unif[-1,1].
\]
The phase-transition, budget-tradeoff, and approximate-LDP experiments use
\[
X\sim\Unif[-0.5,1.5],
\]
for which
\[
\E[X]=0.5,
\qquad
\xlo=-0.5,
\qquad
\lambda_{\mathrm{crit}}=-\,\E[X]/\xlo=1.
\]
The square-root revenue experiment uses
\[
X\sim\Unif[0,1],
\]
so that \(\xlo=0\). The exponential-regime experiment uses a prior with
\(\xlo<0\).

\paragraph{Common tight-\(\mu\) calibration.}
For support width \(\Delta_{\X}\), the frontier scales are
\[
\sigma
=
\frac{\Delta_{\X}}{\mu},
\qquad
b_L
=
\frac{\Delta_{\X}}
{-2\log(2\Phi(-\mu/2))},
\qquad
\beta
=
\frac{\Delta_{\X}}
{2\log\!\big(\Phi(\mu/2)/\Phi(-\mu/2)\big)},
\]
as in Proposition~\ref{prop:common_mu_calibration}.

\paragraph{Equal-scale calibration.}
For a common pure-\(\epsilon\) frontier,
\[
b_L=\beta=\frac{\Delta_{\X}}{\epsilon}.
\]

% ------------------------------------------------------------------
\subsection{The Exponential Revenue Regime}
\label{app:numerics:exponential}
% ------------------------------------------------------------------

Of the three regimes in Theorem~\ref{thm:three_regimes_known}, the linear case
\(\xlo>0\) is governed by its leading term \(n\xlo\) and needs no separate plot;
the boundary case \(\xlo=0\) is shown in Section~\ref{sec:numerics}. Here we
illustrate the remaining case. When \(\xlo<0\),
Theorem~\ref{thm:three_regimes_known} predicts
\[
R_n^{*,\mathrm{red}}
=
\exp\{-nI_K(0)+o(n)\},
\]
and the probability of positive aggregate posterior virtual surplus obeys the
same exponential decay exponent \(I_K(0)\)
(Remark~\ref{rmk:probability_revenue_rate}).

\begin{figure}[t]
  \centering
  \begin{minipage}{0.49\linewidth}
    \centering
    \includegraphics[width=\linewidth]
    {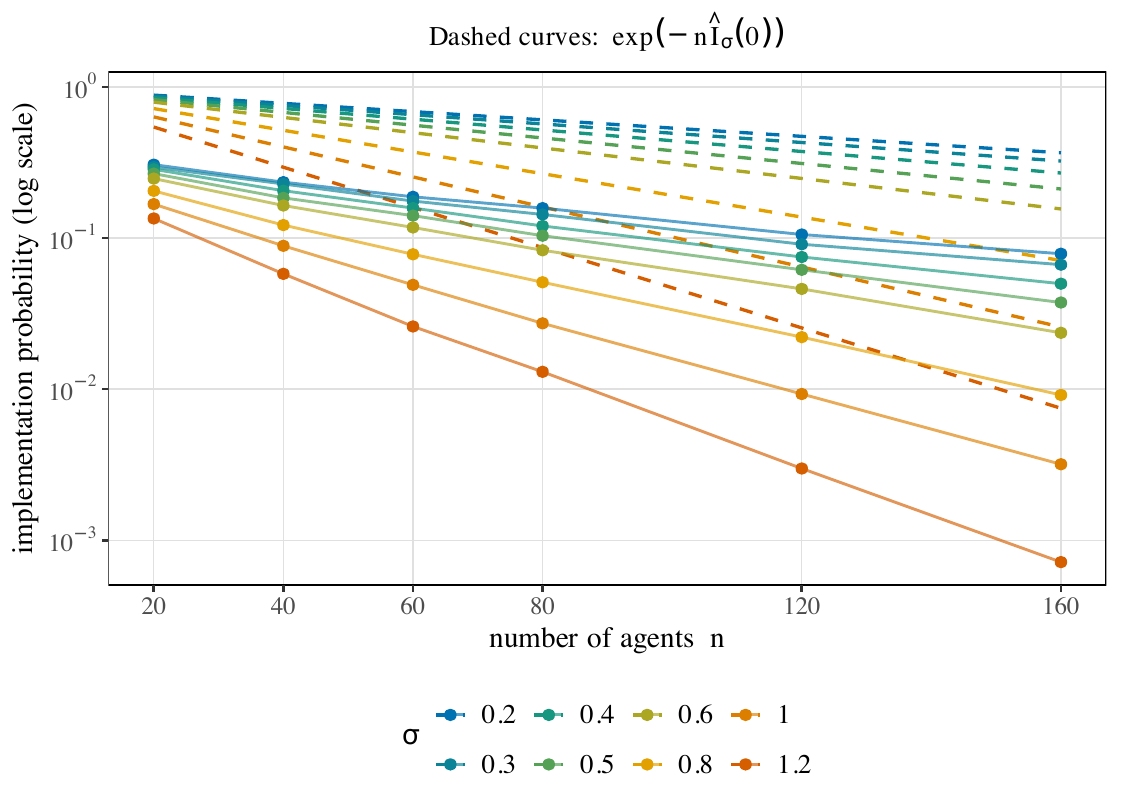}
  \end{minipage}
  \hfill
  \begin{minipage}{0.49\linewidth}
    \centering
    \includegraphics[width=\linewidth]
    {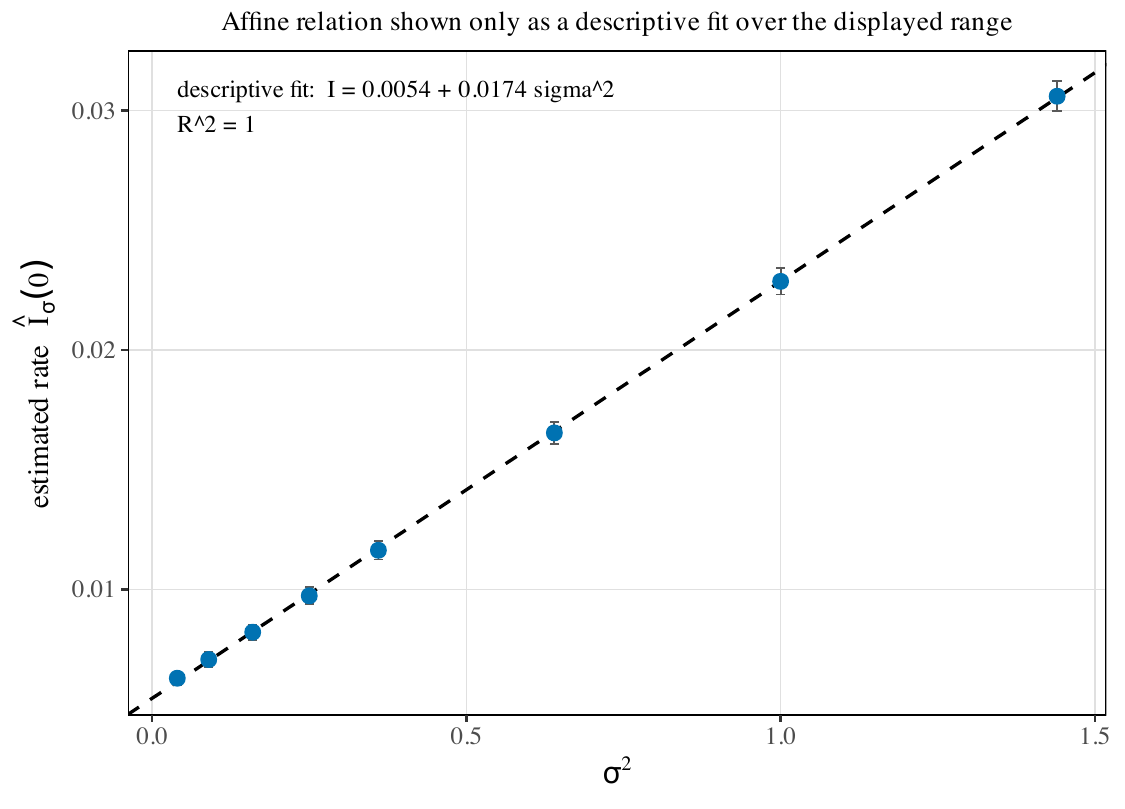}
  \end{minipage}
  \caption{
  Exponential regime (\(\xlo<0\)). Left: implementation probability on a
  logarithmic scale against \(n\), with the Cram\'er-rate prediction overlaid.
  Right: estimated Gaussian rate against the noise variance. The fitted
  relationship in the right panel is descriptive and is not asserted as a
  theorem.
  }
  \label{fig:exp4}
\end{figure}

% ------------------------------------------------------------------
\subsection{Hierarchical Unknown-Endpoint Diagnostic}
\label{app:numerics:hierarchical}
% ------------------------------------------------------------------

The hierarchical theorem assumes a common conditional support. The numerical
specification
\[
X_i\mid\theta\sim\Unif[\xlo,\theta]
\]
instead has a parameter-dependent support and is nonregular, so it falls outside
Theorem~\ref{thm:hierarchical_allocation} (Remark~\ref{rmk:hierarchical_scope}).
It is included only as a descriptive diagnostic.

\begin{figure}[t]
  \centering
  \begin{minipage}{0.49\linewidth}
    \centering
    \includegraphics[width=\linewidth]
    {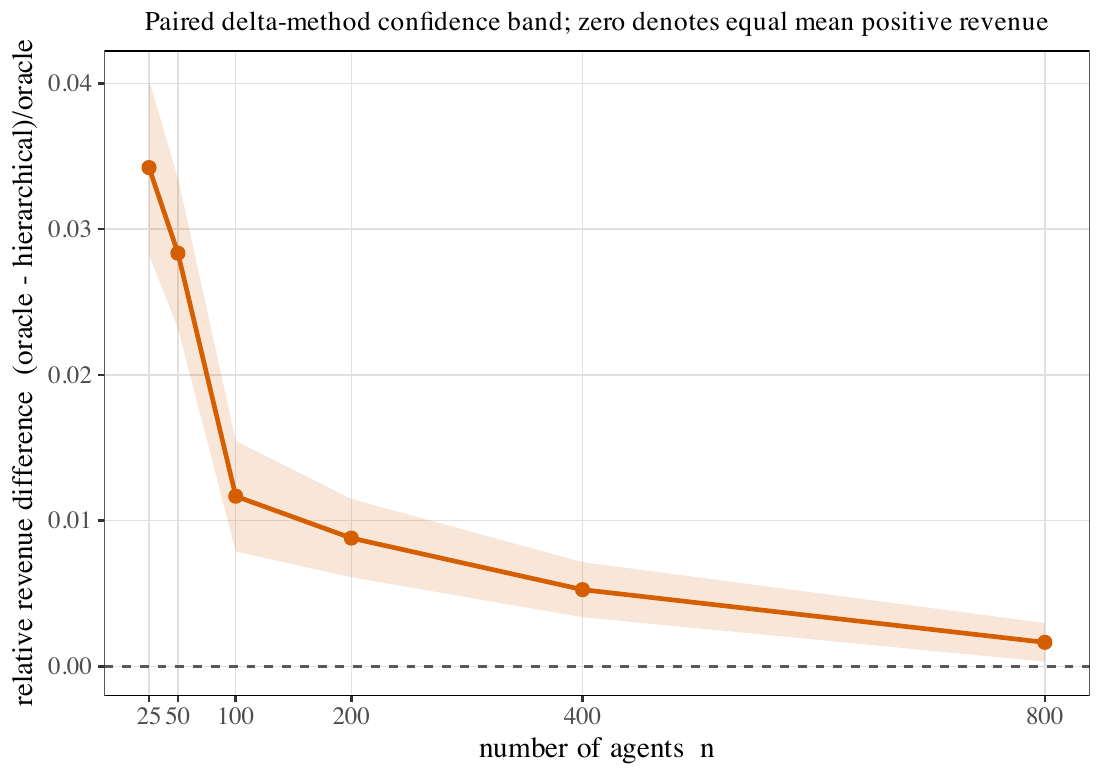}
  \end{minipage}
  \hfill
  \begin{minipage}{0.49\linewidth}
    \centering
    \includegraphics[width=\linewidth]
    {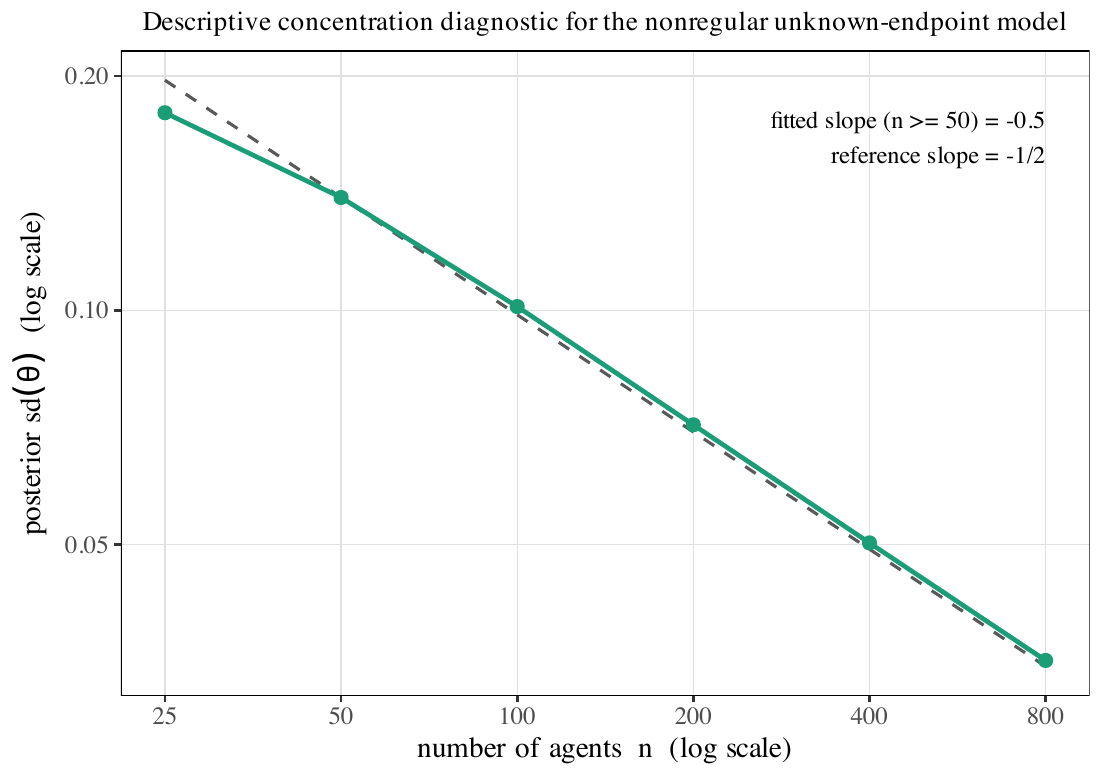}
  \end{minipage}
  \caption{
  Nonregular unknown-endpoint diagnostic. Left: relative reduced-form revenue
  shortfall of the hierarchical rule relative to the known-parameter oracle.
  Right: posterior standard deviation of the unknown endpoint. No
  Bernstein--von Mises or parametric-rate conclusion is claimed.
  }
  \label{fig:exp6}
\end{figure}

% ------------------------------------------------------------------
\subsection{Equal-Scale Welfare--Revenue Frontiers}
\label{app:numerics:equal_scale}
% ------------------------------------------------------------------

At equal scale, Laplace Blackwell-dominates logistic
(Proposition~\ref{prop:laplace_logistic_blackwell}). Both the maximal
reduced-form revenue and the full reduced-form welfare--revenue frontier are
therefore weakly higher under Laplace.

\begin{figure}[t]
  \centering
  \includegraphics[width=0.66\linewidth]
  {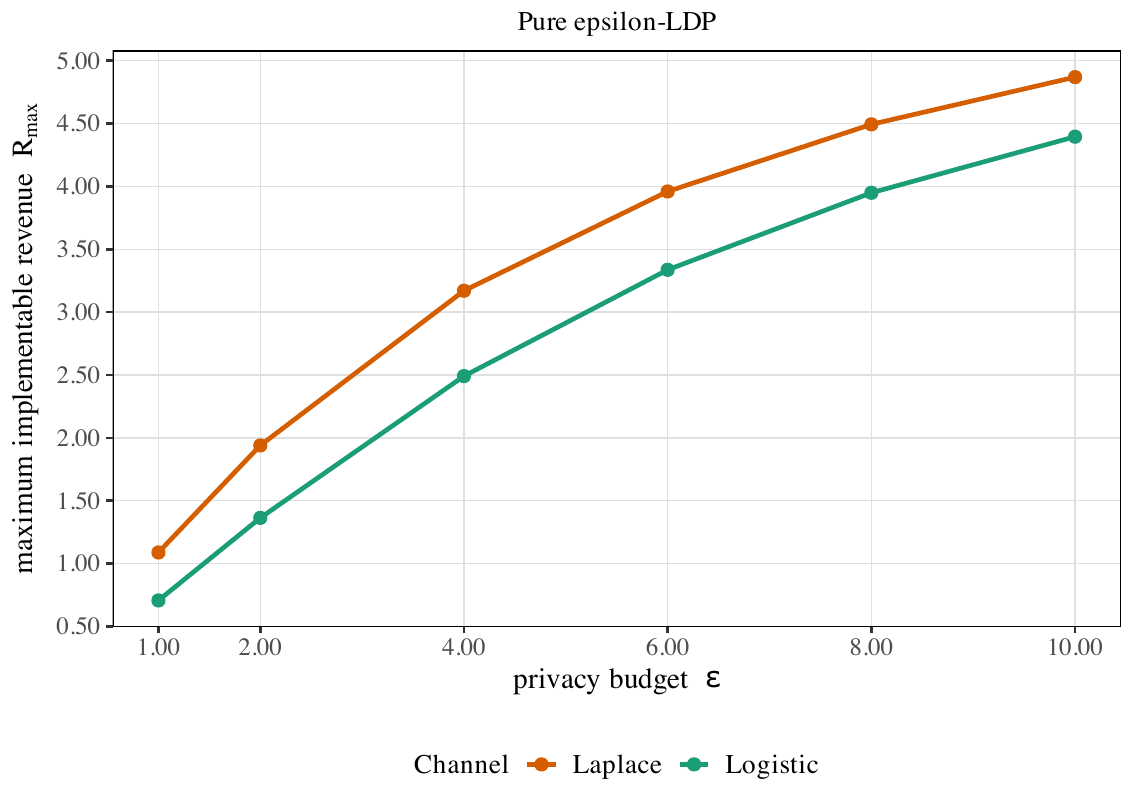}
  \caption{
  Equal-scale maximal reduced-form revenue against the common pure-\(\epsilon\)
  budget. The Laplace value is weakly greater than the logistic value,
  consistently with Proposition~\ref{prop:laplace_logistic_blackwell}.
  }
  \label{fig:exp8_rmax}
\end{figure}

\begin{figure}[t]
  \centering
  \includegraphics[width=\linewidth]
  {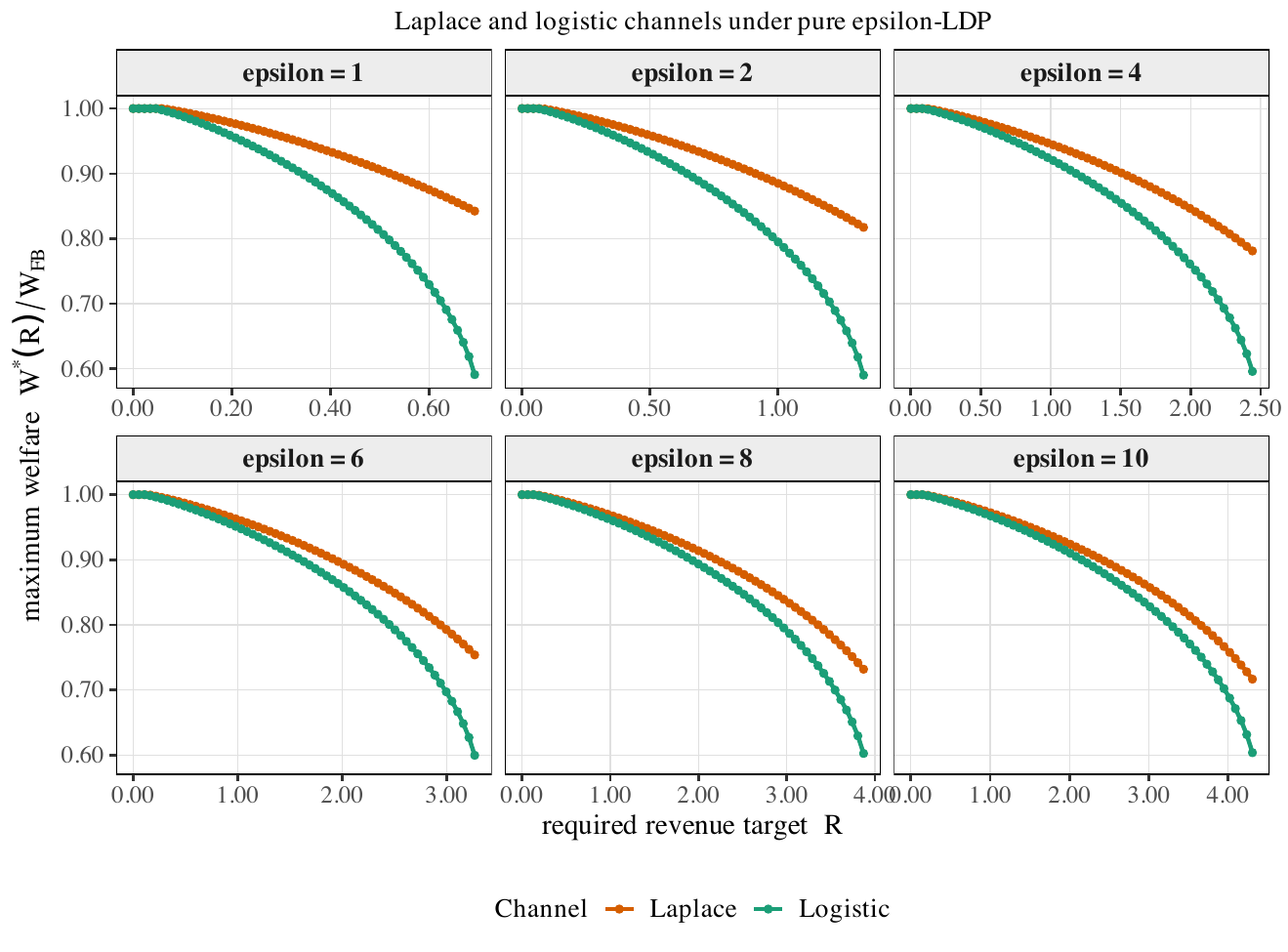}
  \caption{
  Equal-scale reduced-form welfare--revenue frontiers. In each panel the
  Laplace frontier is weakly above the logistic frontier.
  }
  \label{fig:exp8_frontier}
\end{figure}

% ------------------------------------------------------------------
\subsection{Common-\texorpdfstring{\(\mu\)}{mu} Endpoint and Continuous-Type Comparisons}
\label{app:numerics:common_mu}
% ------------------------------------------------------------------

The left panel of Figure~\ref{fig:exp9_appendix} displays the exact endpoint
trade-off difference
\[
\mathcal T_{\mathrm{Lap}}(\alpha)
-
\mathcal T_{\mathrm{Log}}(\alpha),
\]
which is nonnegative by Theorem~\ref{thm:endpoint_roc} and vanishes at
\[
\alpha\in\{0,\Phi(-\mu/2),1\},
\]
the three testing levels at which both endpoint experiments touch the Gaussian
benchmark (Remark~\ref{rmk:channel_scope}).

The right panel reports the corresponding continuous-type welfare difference.
That difference is a numerical quantity, not a consequence of the endpoint
Blackwell theorem. Its confidence band overlaps zero over a substantial part of
the displayed range, so the simulation does not support extending the endpoint
Blackwell order to the full continuum (Remark~\ref{rmk:endpoint_scope}).

\begin{figure}[t]
  \centering
  \begin{minipage}{0.52\linewidth}
    \centering
    \includegraphics[width=\linewidth]
    {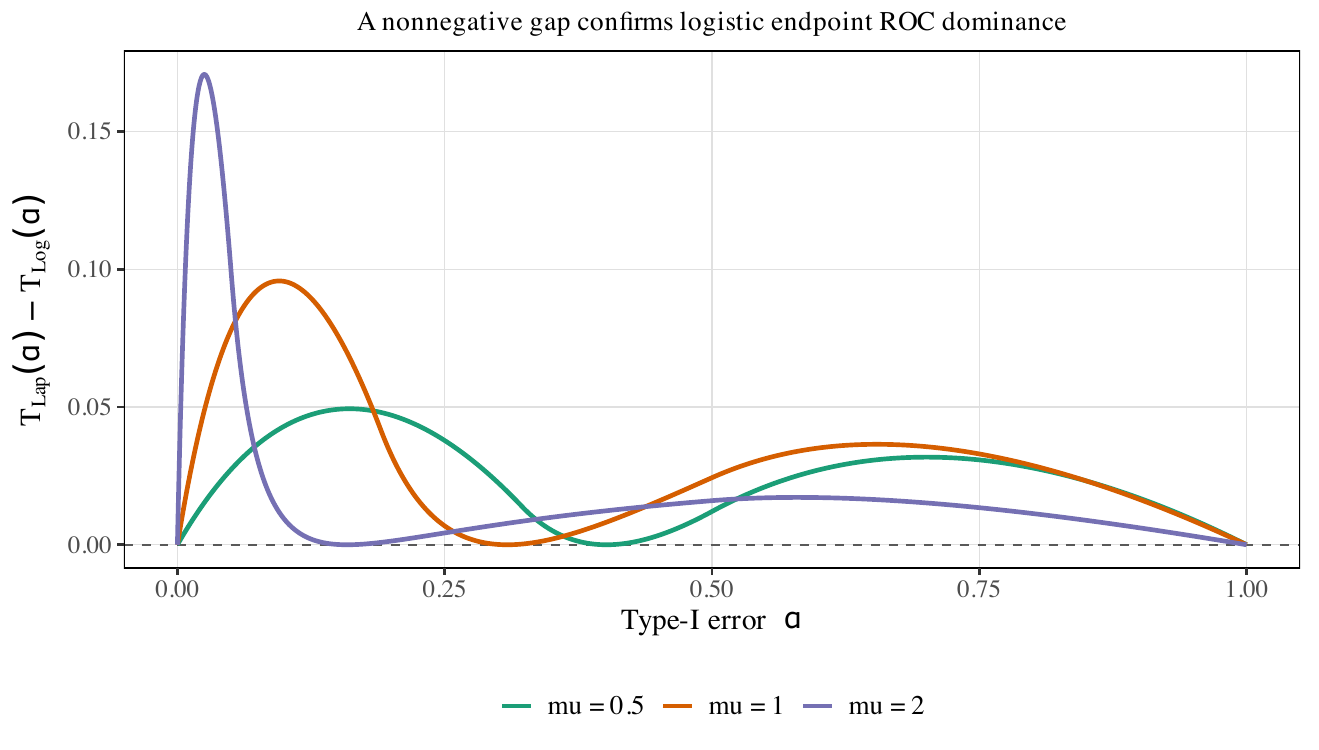}
  \end{minipage}
  \hfill
  \begin{minipage}{0.46\linewidth}
    \centering
    \includegraphics[width=\linewidth]
    {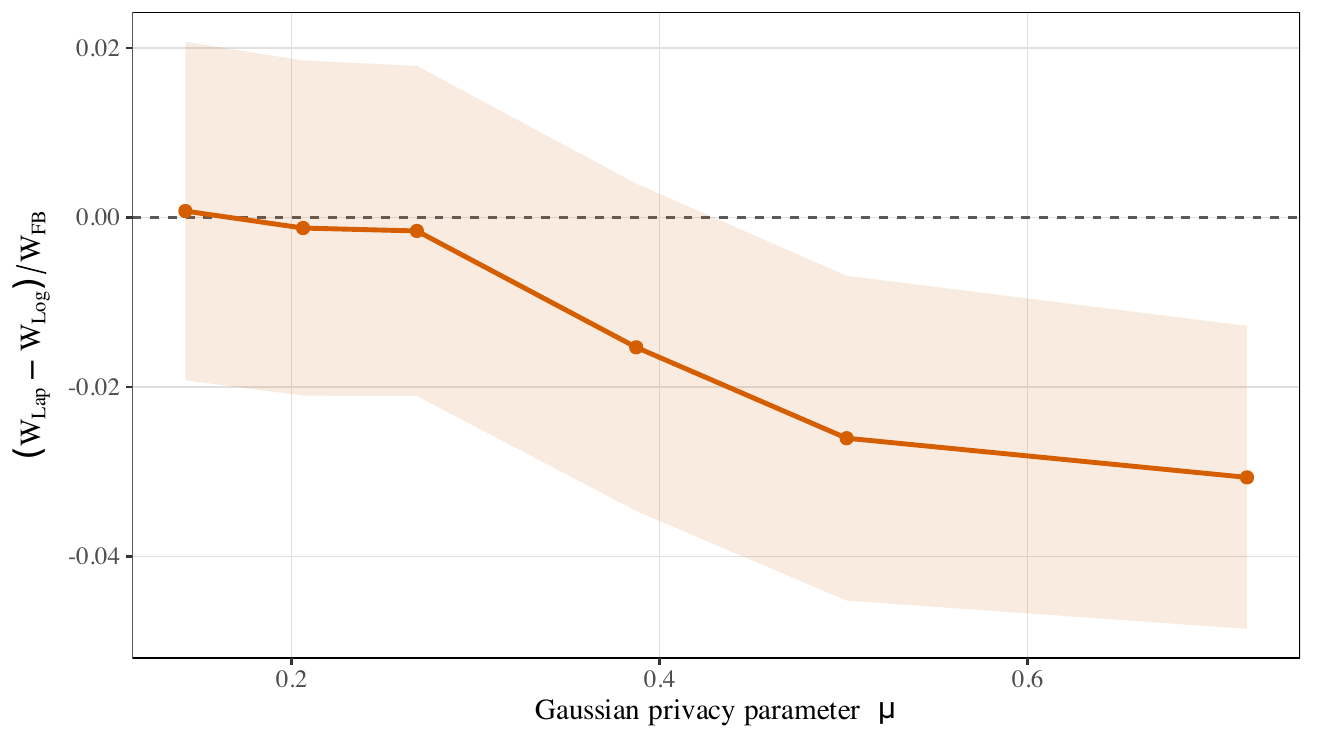}
  \end{minipage}
  \caption{
  Common tight-\(\mu\) comparison. Left: exact endpoint trade-off gap,
  nonnegative by Theorem~\ref{thm:endpoint_roc}. Right: continuous-type welfare
  difference, which is numerically small and of statistically ambiguous sign
  over much of the range.
  }
  \label{fig:exp9_appendix}
\end{figure}

% ------------------------------------------------------------------
\subsection{Welfare--Revenue Trade-off under Approximate LDP}
\label{app:numerics:tradeoff}
% ------------------------------------------------------------------

This experiment uses a common \((\epsilon,\delta)\)-LDP frontier with
\[
\delta=10^{-5},
\]
under the prior \(X\sim\Unif[-0.5,1.5]\) of the design section. The first panel
reports welfare efficiency against the privacy budget. The second reports
welfare and signed reduced-form revenue as the score multiplier varies, with the
vertical reference at the mean-score boundary \(\lambda_{\mathrm{crit}}=1\).

\begin{figure}[t]
  \centering
  \begin{minipage}{0.49\linewidth}
    \centering
    \includegraphics[width=\linewidth]
    {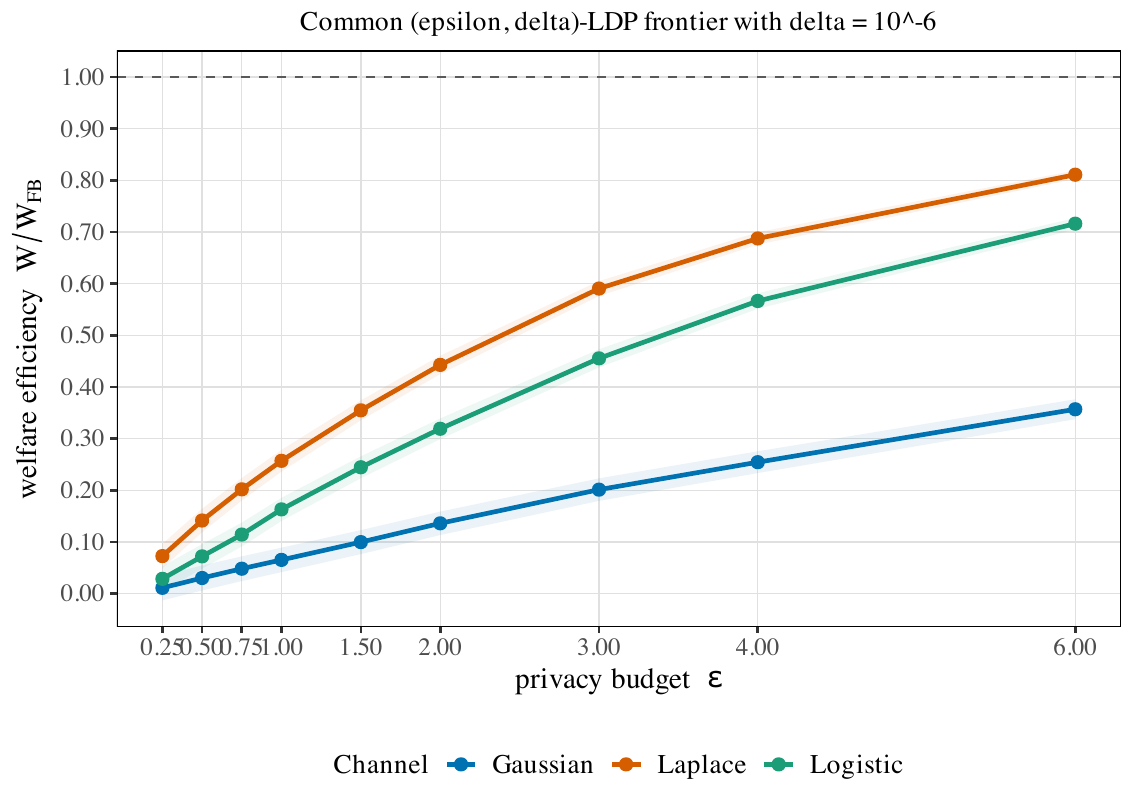}
  \end{minipage}
  \hfill
  \begin{minipage}{0.49\linewidth}
    \centering
    \includegraphics[width=\linewidth]
    {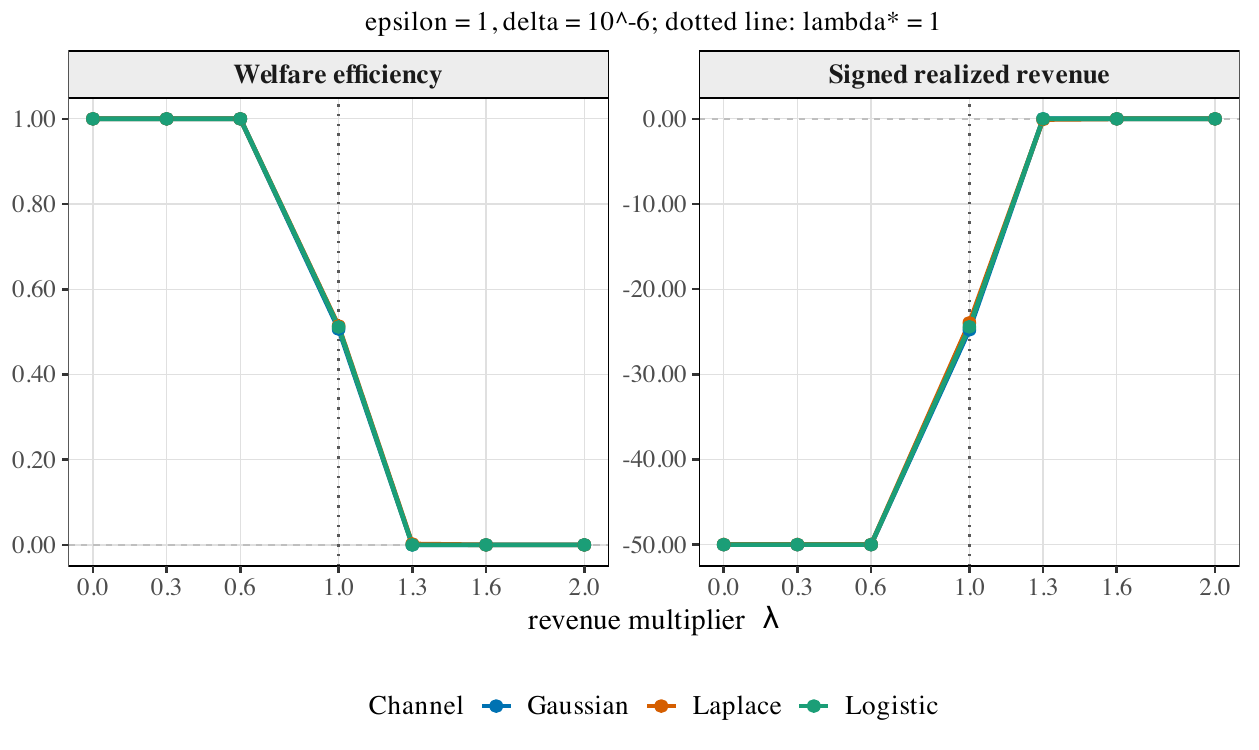}
  \end{minipage}
  \caption{
  Welfare--revenue trade-off under a common \((\epsilon,\delta)\)-LDP frontier
  with \(\delta=10^{-5}\). Left: welfare efficiency against \(\epsilon\). Right:
  welfare efficiency and signed revenue against the score multiplier. The
  vertical reference at \(\lambda_{\mathrm{crit}}=1\) is the mean-score
  boundary.
  }
  \label{fig:exp7}
\end{figure}

\clearpage
\begingroup
% --- S-numbering for the supplement ---
\setcounter{section}{0}\setcounter{equation}{0}%
\setcounter{table}{0}\setcounter{figure}{0}\setcounter{theorem}{0}%
\renewcommand{\thesection}{S\arabic{section}}%
\renewcommand{\theequation}{S\arabic{equation}}%
\renewcommand{\thetable}{S\arabic{table}}%
\renewcommand{\thefigure}{S\arabic{figure}}%
% --- new-page large heading ---
\begin{center}
  {\LARGE\bfseries Online Supplement}\\[6pt]
  {\large\itshape Public Good Provision under Locally Private Signals}
\end{center}
\vspace{1.5em}
% --- supplement body + its own reference list (separate bibliographic unit) ---
\begin{bibunit}[apalike]
  \section*{Scope of the Supplement}

This supplement collects the technical material, clarifications, and numerical
details accompanying the main paper. Its principal purpose is to make precise
the distinction among three objects that are easy to conflate.

\begin{enumerate}[label=(\roman*),leftmargin=2em,itemsep=3pt]
\item The \emph{reduced-form allocation problem}, in which Bayesian incentive
compatibility is expressed through interim monotonicity and the envelope
formula.
\item \emph{Exact signal-measurable implementation}, which additionally
requires the envelope-prescribed interim transfer to lie in the range of the
channel's conditional-expectation operator.
\item \emph{Approximate signal-measurable implementation}, obtained when the
implementing Fredholm equation is solved only up to a controlled residual.
\end{enumerate}

The main paper gives a sharp characterization of the optimal reduced-form
allocation and the corresponding reduced-form revenue envelope. Exact Bayesian
implementation by transfers measurable with respect to the privatized signals is
conditional on a Fredholm range condition, which need not hold automatically for
smoothing channels. This supplement therefore provides both an exact range
characterization and a quantitative approximate-implementation theorem that
converts a Fredholm residual into explicit incentive, participation, and revenue
errors.

The supplement also supplies the revelation argument appropriate to the
privacy-channel environment, explains the lowest-type participation
normalization, delimits the scope of completeness, records the precise status of
the hierarchical extension, and documents the complete numerical methodology. In
particular, no ironing argument is required for the Laplace channel: weak MLRP
preserves weak monotonicity, and the flat posterior-score tails create pooling
without violating Bayesian incentive compatibility.

Throughout, $\X=[\xlo,\xhi]$ is the bounded type space, $K$ the privacy channel
with conditional density $k(y\mid x)$, and all expectations are taken under the
truthful signal law. Cross-references prefixed ``M-'' point to the main paper.

\section{Exact Reduced-Form BIC and Signal-Measurable Implementation}
\label{supp:implementation}

\subsection{The conditional-expectation operator}

Fix an agent $i$. Once the allocation rule $q$ is selected, the envelope formula
determines the target interim transfer
\begin{equation}
\label{supp:eq_target_transfer}
T_i^{q}(x)
=
xQ_i^{q}(x)
-
\int_{\xlo}^{x}Q_i^{q}(z)\,dz,
\end{equation}
where the interim allocation is
\begin{equation}
\label{supp:eq_interim_allocation}
Q_i^{q}(x)
=
\int_{\Y^n}
q(y)\,
k(y_i\mid x)\,
f_{Y_{-i}}(y_{-i})\,dy.
\end{equation}

For any signal-measurable ex-post transfer $t_i:\Y^n\to\R$, define its opponent
average under the truthful law
\begin{equation}
\label{supp:eq_opponent_average}
g_i(y_i)
=
\int_{\Y^{n-1}}
t_i(y_i,y_{-i})\,
f_{Y_{-i}}(y_{-i})\,dy_{-i}.
\end{equation}
In the notation of the main paper, $g_i$ is the opponent-averaged transfer
$\bar t_i$. The induced interim transfer depends on $t_i$ only through $g_i$,
through the conditional-expectation operator
\begin{equation}
\label{supp:eq_operator}
(\mathcal K g_i)(x)
:=
\int_{\Y}
g_i(y)\,k(y\mid x)\,dy
=
\E[g_i(Y_i)\mid X_i=x].
\end{equation}
Exact signal-measurable implementation therefore requires
\begin{equation}
\label{supp:eq_exact_range}
\mathcal K g_i
=
T_i^{q}.
\end{equation}
Equation~\eqref{supp:eq_exact_range} is a Fredholm integral equation of the first
kind; it is the equation stated in the main paper, with $T_i^{q}$ the required
interim payment.

\begin{proposition}[Exact implementation is a range condition]
\label{supp:prop_exact_range}
Fix an allocation rule $q$ whose interim allocation $Q_i^{q}$ is weakly
increasing, and let $T_i^{q}$ be given by \eqref{supp:eq_target_transfer}. The
following are equivalent.
\begin{enumerate}[label=(\roman*),leftmargin=2em,itemsep=3pt]
\item There exists an integrable signal-measurable transfer $t_i:\Y^n\to\R$ that
implements the envelope interim transfer $T_i^{q}$.
\item There exists an integrable $g_i:\Y\to\R$ with $\mathcal K g_i=T_i^{q}$.
\item $T_i^{q}\in\Range(\mathcal K)$.
\end{enumerate}
Whenever $g_i$ solves the one-dimensional equation, the transfer
$t_i(y_i,y_{-i})=g_i(y_i)$ is an exact signal-measurable implementation.
\end{proposition}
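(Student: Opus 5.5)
The plan is to prove the cycle (i)$\Rightarrow$(ii)$\Rightarrow$(iii)$\Rightarrow$(i), with the final sentence falling out of the last implication. The key identity, already used in the proof of Proposition~\ref{prop:fredholm}, is a Fubini computation: for any integrable ex-post transfer $t_i$, the interim payment after report $x$ is
\[
T_i(x)=\int_{\Y^n}t_i(y)\,k(y_i\mid x)\,f_{Y_{-i}}(y_{-i})\,dy
=\int_{\Y}g_i(y_i)\,k(y_i\mid x)\,dy_i=(\mathcal K g_i)(x),
\]
with $g_i$ the opponent average \eqref{supp:eq_opponent_average}. The factorization $f_{Y_{-i}}=\prod_{j\neq i}m(y_j)$ from \eqref{eq:joint_signal_density} justifies integrating out $y_{-i}$ first, so the interim transfer depends on $t_i$ only through $g_i$.

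For (i)$\Rightarrow$(ii), let $t_i$ implement $T_i^q$ and set $g_i$ equal to its opponent average. Integrability of $t_i$ under every admissible report profile gives, by Tonelli, $\int|g_i(y_i)|\,k(y_i\mid x)\,dy_i\le\int|t_i|\,k(y_i\mid x)f_{Y_{-i}}\,dy<\infty$ for each $x$. Hence $g_i$ is integrable against each $k(\cdot\mid x)$, and also against $m$ after averaging over $x$. The displayed identity then gives $\mathcal K g_i=T_i^q$. The step (ii)$\Rightarrow$(iii) is the definition of $\Range(\mathcal K)$, where the range is understood over integrable $g$ for which the defining integral converges for every $x$.

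For (iii)$\Rightarrow$(i), take $g_i$ with $\mathcal K g_i=T_i^q$ and define $t_i(y_i,y_{-i})=g_i(y_i)$. This transfer is measurable in the signal profile. Its integrability under any report profile $(r_i,r_{-i})$ reduces to $\int|g_i|\,k(\cdot\mid r_i)<\infty$, because the other coordinates integrate to one. Its opponent average is $g_i$ itself, so the identity gives interim payment $T_i^q$. The same construction proves the final sentence of the statement.

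It remains to check that matching the interim transfer really constitutes implementation. Given that $Q_i^q$ is weakly increasing and $T_i^q$ satisfies the envelope form \eqref{supp:eq_target_transfer}, Proposition~\ref{prop:app_interim_bic} yields BIC with $U_i(\xlo)=0$, and hence interim IR as well.

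The main obstacle is bookkeeping about the function space: making ``integrable'' precise enough that $\mathcal K g_i$ is finite for every $x$, not merely $m$-a.e. My first approach is to define the range over $g$ satisfying $\sup_x\int|g|\,k(\cdot\mid x)<\infty$, mirroring \eqref{eq:transfer_space_domination}. Under that definition both directions close via the Tonelli bound above, using compactness of $\X$ and the $L^1$-continuity \eqref{eq:l1_kernel_continuity}.
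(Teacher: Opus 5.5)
Your argument is correct and is essentially the paper's proof: the opponent-average (Fubini) identity $T_i=\mathcal K g_i$ gives (i)$\Rightarrow$(ii), (ii)$\Leftrightarrow$(iii) is the definition of the range, and $t_i(y)=g_i(y_i)$ gives the converse and the final sentence; your integrability and BIC/IR checks are harmless additions.

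One caution, which concerns only your closing remark. Compactness of $\X$ and $L^1$-continuity of the kernel give at most lower semicontinuity, not boundedness, of $x\mapsto\int|g|\,k(\cdot\mid x)$. For example, take $k(y\mid x)=(1-x)\varphi(y)+x\varphi(y-1/x)$ on $(0,1]$ with $k(\cdot\mid 0)=\varphi$, and $g(y)=y^2$; the integral equals $1+1/x$ for $x>0$ and $1$ at $x=0$. So a sup-bounded transfer class would need MLRP instead, which gives $k(\cdot\mid x)\le C\{k(\cdot\mid\xlo)+k(\cdot\mid\xhi)\}$ uniformly in $x$. The pointwise notion of integrability that your main argument already uses needs no such uniformity.
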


\begin{proof}
If $t_i$ implements $T_i^{q}$, its opponent average $g_i$ from
\eqref{supp:eq_opponent_average} satisfies
$T_i^{q}(x)=\int_\Y g_i(y)k(y\mid x)\,dy=(\mathcal K g_i)(x)$, giving
(i)$\Rightarrow$(ii). The equivalence (ii)$\Leftrightarrow$(iii) is the
definition of the range. Finally, if $g_i$ solves the one-dimensional equation,
then with $t_i(y)=g_i(y_i)$,
\[
\E[t_i(Y)\mid X_i=x]
=
\E[g_i(Y_i)\mid X_i=x]
=
(\mathcal K g_i)(x)
=
T_i^{q}(x),
\]
which gives (ii)$\Rightarrow$(i).
\end{proof}

\begin{remark}[What exact BIC means in the paper]
\label{supp:rmk_exact_bic_scope}
The reduced-form allocation characterized in the main paper is exactly BIC in the
standard one-dimensional reduced-form sense: its interim allocation is monotone,
and the envelope formula gives the unique normalized interim transfer. The
stronger statement that the reduced form is implemented by transfers measurable
with respect to the privatized signals requires
Proposition~\ref{supp:prop_exact_range}. References to an ``exactly implementable
mechanism'' are therefore always conditional on $T_i^{q}\in\Range(\mathcal K)$.
\end{remark}

\subsection{Why smoothing creates a genuine implementation problem}

For Gaussian and other smoothing kernels, $\mathcal K$ maps functions of the
signal into smoother functions of the type. When the operator is considered on
Hilbert spaces for which its kernel is square integrable---for example, after
restriction to compact type and signal domains, or under appropriate weighted
$L^2$ conditions---it is a Hilbert--Schmidt operator and hence compact. More
generally, smoothing channel operators often have nonclosed range and unstable
inversion, but compactness must be verified for the particular function spaces
used. Whenever $\mathcal K$ is compact and injective on an infinite-dimensional
domain, its range cannot be both infinite dimensional and closed; its inverse on
the range is therefore unbounded, and solving $\mathcal K g=T$ is ill posed.

This has two distinct implications.

\begin{enumerate}[label=(\roman*),leftmargin=2em,itemsep=3pt]
\item \emph{Uniqueness does not imply existence.} Completeness can make
$\mathcal K$ injective, but injectivity only says two exact solutions cannot
differ; it does not produce a solution for a prescribed target.
\item \emph{Numerical regularization does not prove exact implementation.} A
small finite-grid residual shows that a discretized target is well approximated
on that grid. It does not establish that the continuous target lies in the
continuous operator range.
\end{enumerate}

The main paper therefore interprets its Fredholm numerical experiment as an
approximation-and-conditioning diagnostic rather than as a proof of exact range
membership.

\subsection{A nontrivial exact-implementation class}

Although a generic target need not lie in the range of a smoothing operator, the
range is neither empty nor economically trivial. For additive channels,
polynomial interim transfers admit exact signal-measurable implementations.

\begin{proposition}[Polynomial targets under additive noise]
\label{supp:prop_polynomial_range}
Suppose $Y=X+\varepsilon$, where $\varepsilon$ is independent of $X$ with finite
moments through order $m$, and let $(\mathcal K g)(x)=\E[g(x+\varepsilon)]$. Then,
restricted to polynomials of degree at most $m$, $\mathcal K$ is a bijection. In
particular, for every $T(x)=\sum_{r=0}^{m}a_rx^r$ there is a unique polynomial
$g(y)=\sum_{r=0}^{m}b_ry^r$ of degree at most $m$ with
$\E[g(x+\varepsilon)]=T(x)$ for all $x\in\R$.
\end{proposition}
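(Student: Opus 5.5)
The proof shows that $\mathcal K$ maps the space $\mathcal P_m$ of polynomials of degree at most $m$ into itself and is represented, in the monomial basis, by an upper-triangular matrix with unit diagonal. Such a matrix is invertible, which gives the bijection.

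First, $\mathcal K$ is well defined on $\mathcal P_m$ and preserves degree. Fix $0\le r\le m$ and write $\kappa_j=\E[\varepsilon^j]$ for $j\le m$; these are finite by hypothesis, and $\kappa_0=1$. The binomial theorem and linearity of expectation give
\[
(\mathcal K y^r)(x)
=
\E\bigl[(x+\varepsilon)^r\bigr]
=
\sum_{j=0}^{r}\binom{r}{j}\kappa_{r-j}\,x^{j}
=
x^r+\sum_{j=0}^{r-1}\binom{r}{j}\kappa_{r-j}\,x^{j}.
\]
So $\mathcal K y^r$ is a polynomial of degree exactly $r$ with leading coefficient one. Extending by linearity, $\mathcal K(\mathcal P_m)\subseteq\mathcal P_m$. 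The integrability of $g(x+\varepsilon)$ for $g\in\mathcal P_m$ follows from the finite moments, because $|x+\varepsilon|^r\le 2^{r}(|x|^r+|\varepsilon|^r)$.

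Second, in the basis $\{1,y,\dots,y^m\}$ the matrix of $\mathcal K|_{\mathcal P_m}$ has entries $M_{jr}=\binom{r}{j}\kappa_{r-j}$ for $j\le r$ and $0$ otherwise. It is upper triangular with ones on the diagonal, so $\det M=1$ and $M$ is invertible. Since $\mathcal P_m$ is finite dimensional, $\mathcal K|_{\mathcal P_m}$ is a linear bijection. Existence and uniqueness of $g=\sum_r b_r y^r$ with $\mathcal Kg=T$ follow immediately, with $b=M^{-1}a$. The coefficients can also be computed explicitly by back substitution from degree $m$ downward:
\[
b_m=a_m,
\qquad
b_j=a_j-\sum_{r=j+1}^{m}\binom{r}{j}\kappa_{r-j}\,b_r .
\]

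Third, the identity $\E[g(x+\varepsilon)]=T(x)$ holds for every real $x$, not just for $x\in\X$. Both sides are polynomials in $x$ whose coefficients agree by construction, so they coincide everywhere.

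No step is a real obstacle. The only care needed is to read uniqueness as uniqueness \emph{within} $\mathcal P_m$. Uniqueness among all integrable $g$ would require the completeness condition of Assumption~\ref{ass:completeness}, which the statement does not assert. I would add a sentence noting that global uniqueness holds only when $\mathcal P_m\subseteq\mathcal H$ and completeness is maintained, in which case it follows from Proposition~\ref{prop:app_completeness_injectivity}.
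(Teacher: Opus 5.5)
Your proof is correct and follows the same route as the paper: the binomial expansion shows that $\mathcal K$ sends $y^r$ to a monic polynomial of degree $r$, so its matrix in the monomial basis is upper triangular with unit diagonal and hence invertible. Your integrability check, back-substitution formula, and caveat that uniqueness holds only among polynomials of degree at most $m$ are all sound; the paper makes the same caveat immediately after its proof.
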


\begin{proof}
For each $r\le m$,
\[
\E[(x+\varepsilon)^r]
=
\sum_{\ell=0}^{r}
\binom{r}{\ell}
x^{r-\ell}\,
\E[\varepsilon^\ell],
\]
so $\mathcal K$ sends the monomial $y^r$ to a degree-$r$ polynomial in $x$ with
leading coefficient one. In the monomial basis $(1,x,\ldots,x^m)$ the matrix of
this map is upper triangular with unit diagonal, hence invertible. Every
degree-$\le m$ target thus has a unique degree-$\le m$ preimage.
\end{proof}

The uniqueness asserted here is uniqueness within the finite-dimensional space of
polynomials of degree at most $m$. Uniqueness among all admissible transfers
additionally requires injectivity of the channel operator on the maintained
transfer space.

\begin{corollary}[Affine interim allocations]
\label{supp:cor_affine_allocation}
If the additive channel has finite second moment and the interim allocation is
affine, $Q_i(x)=a_ix+b_i$, then the normalized envelope transfer is the quadratic
\[
T_i(x)
=
xQ_i(x)-\int_{\xlo}^{x}Q_i(z)\,dz
=
\frac{a_i}{2}x^2
+
\frac{a_i}{2}\xlo^2
+
b_i\xlo,
\]
which has an exact signal-measurable implementation.
\end{corollary}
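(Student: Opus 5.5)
The proof is a direct computation followed by an appeal to Proposition~\ref{supp:prop_polynomial_range} with $m=2$.

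\emph{Step 1: the envelope transfer.} Take $Q_i(x)=a_ix+b_i$, with $a_i\ge0$ so that the interim allocation is weakly increasing. Substitute it into \eqref{supp:eq_target_transfer}. We have $xQ_i(x)=a_ix^2+b_ix$ and
\[
\int_{\xlo}^{x}(a_iz+b_i)\,dz=\tfrac{a_i}{2}(x^2-\xlo^2)+b_i(x-\xlo).
\]
Subtracting, the linear terms $b_ix$ cancel and $a_ix^2-\tfrac{a_i}{2}x^2=\tfrac{a_i}{2}x^2$. This leaves
\[
T_i(x)=\tfrac{a_i}{2}x^2+\tfrac{a_i}{2}\xlo^2+b_i\xlo,
\]
which is the stated quadratic.

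\emph{Step 2: exact implementation.} Since $T_i$ is a polynomial of degree at most two and the noise has finite second moment, Proposition~\ref{supp:prop_polynomial_range} with $m=2$ yields a unique quadratic $g$ with $\E[g(x+\varepsilon)]=T_i(x)$ for all $x$. To make this explicit, write $m_1=\E[\varepsilon]$, $m_2=\E[\varepsilon^2]$ and $g(y)=c_2y^2+c_1y+c_0$. Then
\[
\E[g(x+\varepsilon)]=c_2x^2+(2c_2m_1+c_1)x+(c_2m_2+c_1m_1+c_0).
\]
Matching coefficients with $T_i$ gives
\[
c_2=\tfrac{a_i}{2},\qquad c_1=-a_im_1,\qquad c_0=\tfrac{a_i}{2}\xlo^2+b_i\xlo-\tfrac{a_i}{2}m_2+a_im_1^2.
\]
By Proposition~\ref{supp:prop_exact_range}, the transfer $t_i(y)=g(y_i)$ is then an exact signal-measurable implementation.

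\emph{Main point to check: admissibility.} The transfer must be integrable under every admissible report profile, not merely the truthful one. Under report $r\in\X$ the signal is $r+\varepsilon$, so $|g(r+\varepsilon)|\le C(1+r^2+\varepsilon^2)$. This is integrable uniformly over the compact set $\X$ because $m_2<\infty$, so the finite-second-moment hypothesis is exactly what makes the construction legitimate.

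The corollary takes the affine $Q_i$ as given. Whether a particular allocation rule $q$ actually induces an affine interim allocation is a separate question and is not needed here.
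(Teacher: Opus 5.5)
Your proposal is correct and follows the paper's proof: you integrate the envelope formula directly to obtain the quadratic, then apply Proposition~\ref{supp:prop_polynomial_range} with $m=2$. Your explicit coefficients $c_2,c_1,c_0$ and your integrability check under arbitrary reports $r\in\X$ are both correct; they are details the paper leaves implicit.
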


\begin{proof}
Direct integration gives
\[
T_i(x)
=
a_ix^2+b_ix
-
\frac{a_i}{2}(x^2-\xlo^2)
-
b_i(x-\xlo)
=
\frac{a_i}{2}x^2+\frac{a_i}{2}\xlo^2+b_i\xlo,
\]
a polynomial of degree at most two; apply
Proposition~\ref{supp:prop_polynomial_range}.
\end{proof}

\begin{example}[Explicit quadratic implementation]
\label{supp:ex_quadratic_implementation}
Let $Y=X+\varepsilon$ with $\E[\varepsilon]=0$ and
$\Var(\varepsilon)=\sigma_\varepsilon^2$. If $T(x)=c_2x^2+c_1x+c_0$, then
\[
g(y)=c_2\bigl(y^2-\sigma_\varepsilon^2\bigr)+c_1y+c_0
\]
satisfies $\E[g(Y)\mid X=x]=T(x)$. This covers centered Gaussian, Laplace, and
logistic additive noise.
\end{example}

\subsection{Approximate signal-measurable implementation}

When the target does not belong to $\Range(\mathcal K)$, or when range membership
is unknown, the economically relevant question is how a Fredholm approximation
error translates into incentive and participation errors.

\begin{definition}[Uniform Fredholm residual]
\label{supp:def_residual}
Let $T_i^{q}$ be the normalized envelope transfer associated with a monotone
interim allocation $Q_i^{q}$. An opponent-averaged transfer $g_i$ has uniform
Fredholm residual at most $\eta_i$ if
\begin{equation}
\label{supp:eq_uniform_residual}
\sup_{x\in\X}
\bigl|
(\mathcal K g_i)(x)-T_i^{q}(x)
\bigr|
\le
\eta_i.
\end{equation}
\end{definition}

Write $\widetilde T_i(x)=(\mathcal K g_i)(x)$ for the induced interim transfer and
$e_i(x)=\widetilde T_i(x)-T_i^{q}(x)$ for the pointwise error.

\begin{theorem}[Fredholm residual implies approximate BIC and IR]
\label{supp:thm_approx_implementation}
Let $Q_i:\X\to[0,1]$ be weakly increasing, let
$T_i(x)=xQ_i(x)-\int_{\xlo}^{x}Q_i(z)\,dz$, and suppose $g_i:\Y\to\R$ satisfies
$\sup_{x\in\X}|(\mathcal K g_i)(x)-T_i(x)|\le\eta_i$. Use the signal-measurable
transfer $t_i(y)=g_i(y_i)$. Then:
\begin{enumerate}[label=(\roman*),leftmargin=2em,itemsep=4pt]
\item the mechanism is $2\eta_i$-BIC for agent $i$,
\begin{equation}
\label{supp:eq_approx_bic}
xQ_i(x)-\widetilde T_i(x)
\ge
xQ_i(z)-\widetilde T_i(z)-2\eta_i
\qquad
\forall x,z\in\X;
\end{equation}
\item the mechanism is $\eta_i$-interim individually rational,
\begin{equation}
\label{supp:eq_approx_ir}
xQ_i(x)-\widetilde T_i(x)\ge-\eta_i
\qquad
\forall x\in\X;
\end{equation}
\item the expected-transfer error satisfies
$\bigl|\E[\widetilde T_i(X_i)]-\E[T_i(X_i)]\bigr|\le\eta_i$, so that for $n$ agents
\begin{equation}
\label{supp:eq_revenue_error}
\Bigl|
\E\!\Bigl[\textstyle\sum_{i=1}^{n}\widetilde T_i(X_i)\Bigr]
-
\E\!\Bigl[\textstyle\sum_{i=1}^{n}T_i(X_i)\Bigr]
\Bigr|
\le
\sum_{i=1}^{n}\eta_i,
\end{equation}
which is at most $n\eta$ when $\eta_i\le\eta$ for every $i$.
\end{enumerate}
\end{theorem}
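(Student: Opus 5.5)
The plan is a perturbation argument: compare the exact envelope pair \((Q_i,T_i)\) with the approximate pair \((Q_i,\widetilde T_i)\). The two share the same interim allocation, since the allocation rule \(q\) is unchanged and only transfers are modified. First I verify that the transfer \(t_i(y)=g_i(y_i)\) induces the interim transfer \(\widetilde T_i=\mathcal K g_i\) after any report. This is the computation in the proof of Proposition~\ref{supp:prop_exact_range}: the interim transfer after report \(z\) integrates \(t_i\) against \(k(y_i\mid z)f_{Y_{-i}}(y_{-i})\). Because \(t_i\) does not depend on \(y_{-i}\) and \(f_{Y_{-i}}\) integrates to one, this integral equals \((\mathcal K g_i)(z)\). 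Hence the deviation payoff is \(xQ_i(z)-\widetilde T_i(z)=xQ_i(z)-T_i(z)-e_i(z)\), with \(|e_i|\le\eta_i\) uniformly on \(\X\).

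For (i), I invoke Proposition~\ref{prop:app_interim_bic}. Since \(Q_i\) is weakly increasing and \(T_i\) is the envelope transfer with zero lowest-type rent, \((Q_i,T_i)\) is exactly BIC: \(xQ_i(x)-T_i(x)\ge xQ_i(z)-T_i(z)\) for all \(x,z\). I then write
\[
xQ_i(x)-\widetilde T_i(x)=xQ_i(x)-T_i(x)-e_i(x)\ge xQ_i(z)-T_i(z)-e_i(x),
\]
and replace \(-T_i(z)\) by \(-\widetilde T_i(z)+e_i(z)\). This gives a lower bound of \(xQ_i(z)-\widetilde T_i(z)+e_i(z)-e_i(x)\), and \(e_i(z)-e_i(x)\ge-2\eta_i\) yields \eqref{supp:eq_approx_bic}.

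For (ii), the envelope identity gives truthful exact utility \(U_i(x)=\int_{\xlo}^xQ_i(z)\,dz\ge0\). Subtracting \(e_i(x)\le\eta_i\) gives \eqref{supp:eq_approx_ir}.

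For (iii), I integrate the pointwise bound against \(f\): \(|\E[\widetilde T_i(X_i)]-\E[T_i(X_i)]|\le\E|e_i(X_i)|\le\eta_i\). Summing over agents with the triangle inequality gives \eqref{supp:eq_revenue_error}.

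The only delicate step is the first one: identifying the interim transfer after an arbitrary report \(z\), not just the truthful report, with \(\mathcal K g_i(z)\). This requires the Fubini exchange, which is justified by the integrability of \(g_i\) under every \(k(\cdot\mid z)\), in turn guaranteed by the domination condition \eqref{eq:transfer_space_domination} or the maintained integrability. It also requires the residual bound to hold at every \(z\in\X\), which the uniform sup in \eqref{supp:eq_uniform_residual} provides. Everything else is triangle inequalities.
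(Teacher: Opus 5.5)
Your proposal is correct and follows the paper's proof essentially line for line. Exact BIC of the envelope pair $(Q_i,T_i)$, together with the uniform bound $|e_i|\le\eta_i$, gives (i) via $e_i(z)-e_i(x)\ge-2\eta_i$; $U_i(x)=\int_{\xlo}^{x}Q_i(s)\,ds\ge0$ gives (ii); and integrating $|e_i|\le\eta_i$ gives (iii). Your explicit check that $t_i(y)=g_i(y_i)$ induces $\widetilde T_i(z)=(\mathcal K g_i)(z)$ at every report $z$, not only the truthful one, is something the paper takes as definitional, so spelling it out is a harmless addition.
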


\begin{proof}
The exact envelope pair $(Q_i,T_i)$ is BIC, so
$xQ_i(x)-T_i(x)\ge xQ_i(z)-T_i(z)$ for all $x,z$. Since
$\widetilde T_i=T_i+e_i$ with $|e_i|\le\eta_i$,
\[
\begin{aligned}
xQ_i(x)-\widetilde T_i(x)
&=
xQ_i(x)-T_i(x)-e_i(x)
\ge
xQ_i(z)-T_i(z)-e_i(x)
\\
&=
xQ_i(z)-\widetilde T_i(z)+e_i(z)-e_i(x)
\ge
xQ_i(z)-\widetilde T_i(z)-2\eta_i,
\end{aligned}
\]
proving (i). Under $U_i(\xlo)=0$, exact BIC gives
$U_i(x)=\int_{\xlo}^{x}Q_i(s)\,ds\ge0$, so
$xQ_i(x)-\widetilde T_i(x)=U_i(x)-e_i(x)\ge-\eta_i$, proving (ii). Finally
$|\E[\widetilde T_i(X_i)-T_i(X_i)]|=|\E[e_i(X_i)]|\le\E|e_i(X_i)|\le\eta_i$, and
summing gives \eqref{supp:eq_revenue_error}.
\end{proof}

\begin{corollary}[Exact implementation as the zero-residual case]
\label{supp:cor_zero_residual}
If $\eta_i=0$, Theorem~\ref{supp:thm_approx_implementation} yields exact BIC,
exact interim IR, and exact reduced-form revenue.
\end{corollary}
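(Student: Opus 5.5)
The plan is to specialize Theorem~\ref{supp:thm_approx_implementation} to $\eta_i=0$, so that each of its three approximate conclusions becomes exact. Nothing beyond that theorem, Proposition~\ref{supp:prop_exact_range}, and Proposition~\ref{prop:app_interim_bic} is needed.

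\emph{Step 1 (exact range membership).} With $\eta_i=0$, the uniform residual condition \eqref{supp:eq_uniform_residual} reads $\sup_{x\in\X}|(\mathcal K g_i)(x)-T_i^{q}(x)|\le 0$. Hence $\mathcal K g_i=T_i^{q}$ pointwise on $\X$, the error $e_i$ vanishes identically, and $\widetilde T_i=T_i^{q}$. By Proposition~\ref{supp:prop_exact_range}, the transfer $t_i(y)=g_i(y_i)$ is therefore an exact signal-measurable implementation of the envelope payment, so $T_i^{q}\in\Range(\mathcal K)$.

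\emph{Step 2 (the three conclusions).} Substituting $\eta_i=0$ into \eqref{supp:eq_approx_bic} gives $xQ_i(x)-\widetilde T_i(x)\ge xQ_i(z)-\widetilde T_i(z)$ for all $x,z$, which is exact BIC. Substituting into \eqref{supp:eq_approx_ir} gives $xQ_i(x)-\widetilde T_i(x)\ge 0$, which is exact interim IR. Substituting into \eqref{supp:eq_revenue_error} gives equality of expected transfers, $\E[\sum_i\widetilde T_i(X_i)]=\E[\sum_iT_i(X_i)]$. By the zero-rent identity (Lemma~\ref{lem:expected_transfer_identity} and Corollary~\ref{cor:app_zero_rent_revenue}), this common value equals $\E[q(Y)\sum_i\widehat J_i(Y_i)]$, the exact reduced-form revenue.

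\emph{Consistency check.} Since $\widetilde T_i=T_i$, the BIC and IR conclusions also follow directly from Proposition~\ref{prop:app_interim_bic}, applied to the monotone $Q_i$ with $U_i(\xlo)=0$. The only point needing care is that the residual bound is a supremum over every $x\in\X$, not an almost-everywhere statement. This guarantees that the identity holds for every type and every deviation, so no step is a real obstacle.
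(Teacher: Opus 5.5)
Your proposal is correct and is the same argument the paper intends: the corollary is stated without a separate proof because it is Theorem~\ref{supp:thm_approx_implementation} with $\eta_i=0$, so $\widetilde T_i=T_i$ and the $2\eta_i$-BIC, $\eta_i$-IR, and $\eta_i$ revenue bounds become exact. Your additional step identifying the common expected transfer with $\E[q(Y)\sum_i\widehat J_i(Y_i)]$ via Corollary~\ref{cor:app_zero_rent_revenue} is correct, and it relies on $Q_i$ being the interim allocation $Q_i^q$ induced by $q$, which is the setting of the residual definition.
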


\begin{remark}[Norm choice]
\label{supp:rmk_norm_choice}
The uniform residual is used because it yields type-uniform incentive and
participation bounds. An $L^2(F)$ residual controls average transfer error but
does not, without further regularity, give a uniform BIC bound over all
type--report pairs. A numerical procedure should therefore report both its
weighted least-squares residual and a dense-grid approximation to
$\sup_{x\in\X}|(\mathcal K g_i)(x)-T_i(x)|$.
\end{remark}

\subsection{Regularization and the meaning of the numerical solution}

A common numerical estimator is the Tikhonov solution
\begin{equation}
\label{supp:eq_tikhonov}
g_{\alpha}
\in
\argmin_{g\in\mathcal G}
\Bigl\{
\|\mathcal K g-T\|_{\mathcal H_X}^{2}
+
\alpha\|g\|_{\mathcal H_Y}^{2}
\Bigr\},
\end{equation}
with regularization parameter $\alpha>0$ and chosen spaces
$\mathcal H_X,\mathcal H_Y$. The role of $\alpha$ is stability, not exact
implementation: for $\alpha>0$ the residual generally does not vanish even when
an exact solution exists, and a small discretized residual need not place the
continuous target in the exact range.

\begin{proposition}[Economic interpretation of a regularized solution]
\label{supp:prop_regularized_interpretation}
If a regularized solution $g_\alpha$ satisfies
$\sup_{x\in\X}|(\mathcal K g_\alpha)(x)-T(x)|\le\eta_\alpha$, then the mechanism
$t_i^\alpha(y)=g_\alpha(y_i)$ is $2\eta_\alpha$-BIC and $\eta_\alpha$-IR, with
per-agent expected-revenue error at most $\eta_\alpha$. The economically
meaningful output is therefore the uniform residual $\eta_\alpha$, not the
condition number or the positivity of the discretized singular values.
\end{proposition}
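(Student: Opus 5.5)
The plan is to reduce the statement to Theorem~\ref{supp:thm_approx_implementation}, since the regularized solution is simply one particular opponent-averaged transfer with a known uniform residual. Fix agent \(i\) and let \(Q_i=Q_i^{q}\) be the weakly increasing interim allocation, with \(T=T_i^{q}\) its normalized envelope transfer \eqref{supp:eq_target_transfer}. Define the ex-post transfer by \(t_i^\alpha(y)=g_\alpha(y_i)\).

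First, this transfer has to be admissible, so that it is integrable under every report. Given that, its interim payment after report \(r\) is computed by integrating out \(y_{-i}\) against \(f_{Y_{-i}}\), which integrates to one. This gives \(\widetilde T_i(r)=\int_\Y g_\alpha(y)k(y\mid r)\,dy=(\mathcal K g_\alpha)(r)\). So the induced interim transfer is exactly \(\mathcal K g_\alpha\), and the hypothesis reads \(\sup_x|\widetilde T_i(x)-T(x)|\le\eta_\alpha\). This is precisely Definition~\ref{supp:def_residual} with \(\eta_i=\eta_\alpha\).

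Second, I apply Theorem~\ref{supp:thm_approx_implementation} verbatim. Part (i) gives \(2\eta_\alpha\)-BIC, part (ii) gives \(\eta_\alpha\)-IR (using zero lowest-type rent, so that \(U_i(x)=\int_{\xlo}^x Q_i\ge0\)), and part (iii) bounds the per-agent expected-revenue error by \(\eta_\alpha\). The allocation itself is unchanged, so \(Q_i\) is the same under the approximate mechanism and the deviation payoffs take the form \(xQ_i(r)-\widetilde T_i(r)\). That is the setting of the theorem.

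The final sentence of the statement is interpretive. Its justification is that the three bounds depend on \(g_\alpha\) only through \(\eta_\alpha\), not on any spectral quantity. To support this I would note that two transfers with equal uniform residual deliver identical guarantees, whatever their conditioning.

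The only real obstacle is the admissibility step. The statement silently requires \(g_\alpha\) to be integrable against \(k(\cdot\mid r)\) for every \(r\), so that \(\mathcal K g_\alpha\) is defined pointwise on all of \(\X\) and is not just an \(\mathcal H_X\)-element. I would handle this by assuming \(g_\alpha\in\mathcal H\), so that the domination condition \eqref{eq:transfer_space_domination} applies, or by taking \(g_\alpha\) bounded, as it is when \(\mathcal G\) is finite-dimensional.
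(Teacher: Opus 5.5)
Your proposal is correct and takes the same approach as the paper, whose proof consists solely of applying Theorem~\ref{supp:thm_approx_implementation} to the transfer $t_i^\alpha(y)=g_\alpha(y_i)$. Two steps the paper leaves implicit are made explicit in your version: the computation that integrating out $y_{-i}$ yields the induced interim transfer $\mathcal K g_\alpha$, and the integrability of $g_\alpha$ against every $k(\cdot\mid r)$.
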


\begin{proof}
Apply Theorem~\ref{supp:thm_approx_implementation}.
\end{proof}

\section{Revelation through an Exogenous Privacy Channel}
\label{supp:revelation}

The strategic action of agent $i$ is not the realized signal $Y_i$, generated by
the exogenous channel, but the submitted report $r_i\in\X$. A signal-measurable
mechanism $(q,t)$ induces, for each submission $r_i$ and given truthful
opponents, the interim schedules
\begin{equation}
\label{supp:eq_submission_reduced_form}
Q_i(r_i)=\E[q(Y)\mid R_i=r_i],
\qquad
T_i(r_i)=\E[t_i(Y)\mid R_i=r_i],
\end{equation}
and a true type $x_i$ submitting $r_i$ obtains
\begin{equation}
\label{supp:eq_submission_utility}
u_i(x_i,r_i)=x_iQ_i(r_i)-T_i(r_i).
\end{equation}
All strategic comparisons are summarized by the one-dimensional menu
$r_i\mapsto(Q_i(r_i),T_i(r_i))$.

\begin{proposition}[Channel-induced taxation principle]
\label{supp:prop_taxation}
Fix the opponents' truthful strategy and the distribution of their signals. Every
signal-measurable mechanism induces a one-dimensional menu
$\{(Q_i(r),T_i(r)):r\in\X\}$, and truthful submission is Bayesian optimal if and
only if
\begin{equation}
\label{supp:eq_channel_bic}
xQ_i(x)-T_i(x)
\ge
xQ_i(r)-T_i(r)
\qquad
\forall x,r\in\X.
\end{equation}
Hence the usual one-dimensional envelope characterization applies to the
channel-induced reduced form. Conversely, a monotone reduced-form pair
$(Q_i,T_i)$ is implementable by the original privacy-channel mechanism only if
there is a signal-measurable allocation and transfer whose conditional
expectations equal $Q_i$ and $T_i$.
\end{proposition}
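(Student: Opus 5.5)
The plan is to reduce everything to agent \(i\)'s expected payoff from each possible submission, holding the opponents fixed at truthful reporting. First, for a fixed signal-measurable mechanism \((q,t)\) and a submission \(r\in\X\), I would write the joint signal law as \(K(\cdot\mid r)\otimes\prod_{j\ne i}m(y_j)\,dy_j\). This uses conditional independence of the mediator's randomizations, \eqref{eq:truthful_joint_channel}, together with the opponents' truthful path. Tonelli/Fubini, applicable by the integrability maintained in Assumption~\ref{ass:compact_spaces}, then gives \(\E[q(Y)\mid R_i=r]=Q_i(r)\) and \(\E[t_i(Y)\mid R_i=r]=T_i(r)\), exactly as in \eqref{eq:interim_allocation}--\eqref{eq:interim_transfer}. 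By quasilinearity \eqref{eq:utility}, the expected payoff of type \(x\) submitting \(r\) is \(xQ_i(r)-T_i(r)\). So the mechanism induces the menu \(r\mapsto(Q_i(r),T_i(r))\). This works because the agent's only strategic lever is \(r\): it neither observes nor affects the noise, so no contingent deviations exist beyond choosing \(r\), possibly at random.

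Second, truthful submission is Bayesian optimal if and only if \(x\) maximizes \(r\mapsto xQ_i(r)-T_i(r)\) for every type \(x\), which is precisely \eqref{supp:eq_channel_bic}. Mixed submissions need a short remark. A randomized report yields the average of pure-report payoffs, since the payoff is linear in the report distribution. Such a report therefore cannot beat the best pure report, and the pure-report inequalities suffice. With \eqref{supp:eq_channel_bic} in hand, Proposition~\ref{prop:app_interim_bic} applies verbatim to the measurable pair \((Q_i,T_i)\). It gives monotonicity of \(Q_i\), the envelope formula, and the payment identity, which is the claimed applicability of the one-dimensional characterization.

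Third, the converse direction is essentially definitional, and I would argue it by contraposition of what the planner can use. Any mechanism in the original environment is, by \eqref{eq:mechanism_def}, a signal-measurable \((q,t)\), and it induces a reduced form only through the conditional expectations computed in the first step. Hence a monotone pair \((Q_i,T_i)\) arises from the privacy-channel mechanism only if some signal-measurable \((q,t)\) has \(\E[q(Y)\mid R_i=x]=Q_i(x)\) and \(\E[t_i(Y)\mid R_i=x]=T_i(x)\) for all \(x\). I would add a pointer that, for the transfer component, this necessary condition is the range condition of Proposition~\ref{supp:prop_exact_range}/Proposition~\ref{prop:fredholm}.

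The main obstacle is the measure-theoretic justification in the first step. Conditioning on \(R_i=r\) must be defined for every \(r\), not just almost every \(r\), and the interchange of integrals must hold uniformly. I would handle this by defining \(Q_i,T_i\) directly through the kernel density formulas, which are defined pointwise thanks to the jointly measurable density. Transfers are integrable under every admissible report profile by assumption, so the interchange is justified for each \(r\) separately.
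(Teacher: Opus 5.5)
Your proposal is correct and follows the paper's own argument. Because the agent controls only the submission and not the channel noise, expected utility reduces to the one-dimensional menu $r\mapsto xQ_i(r)-T_i(r)$, so truthfulness is optimal exactly when the displayed inequality holds, the standard one-dimensional envelope argument then applies, and the converse is definitional. Your Fubini justification through the kernel density formulas and your remark on randomized submissions only make explicit details that the paper leaves implicit.
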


\begin{proof}
Conditional on $r_i$, the agent cannot affect the realized channel noise, so
expected utility is exactly \eqref{supp:eq_submission_utility} and truthfulness is
optimal precisely when \eqref{supp:eq_channel_bic} holds. The standard
one-dimensional incentive argument then forces $Q_i$ weakly increasing and the
envelope identity. The converse separates abstract reduced-form implementability
from implementation through the given kernel.
\end{proof}

\begin{remark}[No direct revelation of the privatized signal]
\label{supp:rmk_no_signal_report}
The planner does not ask the agent to report $Y_i$. The local randomizer produces
$Y_i$ after the agent selects $r_i$, and the realized signal is observed only by
the planner. The directness of the mechanism refers to the submitted channel
input $r_i$, not to control of the random output.
\end{remark}

\section{Why Interim IR Binds at the Lowest Type}
\label{supp:bottom_ir}

For weakly increasing nonnegative $Q_i$, BIC gives
$U_i(x)=U_i(\xlo)+\int_{\xlo}^{x}Q_i(s)\,ds\ge U_i(\xlo)$, so interim IR is
equivalent to $U_i(\xlo)\ge0$.

\begin{proposition}[Revenue-maximizing IR normalization]
\label{supp:prop_bottom_ir}
Fix a BIC interim allocation $Q_i$. Among all interim transfer schedules
implementing $Q_i$ and satisfying interim IR, expected revenue is maximized by
$U_i(\xlo)=0$. This holds regardless of the sign of $\xlo$.
\end{proposition}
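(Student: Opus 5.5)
The plan is to reduce everything to the envelope representation in Proposition~\ref{prop:app_interim_bic} and Corollary~\ref{cor:app_payment_uniqueness}. Fix the BIC interim allocation \(Q_i\), which is weakly increasing. By Corollary~\ref{cor:app_payment_uniqueness}, every BIC interim transfer schedule implementing \(Q_i\) has the form
\[
T_i(x)=xQ_i(x)-\int_{\xlo}^{x}Q_i(z)\,dz-c_i,\qquad c_i=U_i(\xlo),
\]
for some constant \(c_i\). The whole admissible family is therefore indexed by the single scalar \(c_i\).

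First I will identify which values of \(c_i\) satisfy interim IR. Since \(Q_i\ge0\), the envelope identity \eqref{eq:app_envelope} gives \(U_i(x)=c_i+\int_{\xlo}^{x}Q_i\ge c_i\) for every \(x\), with equality at \(x=\xlo\). Hence interim IR holds if and only if \(c_i\ge0\). This step uses only \(Q_i\ge0\) and the fact that integration runs upward from \(\xlo\), so the sign of \(\xlo\) plays no role.

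Second, I will compute expected revenue as a function of \(c_i\). By Lemma~\ref{lem:expected_transfer_identity},
\[
\E[T_i(X_i)]=\E[Q_i(X_i)\virt(X_i)]-c_i.
\]
The first term depends only on \(Q_i\). Expected revenue is therefore strictly decreasing and affine in \(c_i\) over the feasible set \(\{c_i\ge0\}\), so it is uniquely maximized at \(c_i=0\), which is \(U_i(\xlo)=0\).

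Finally, I will address the clause ``regardless of the sign of \(\xlo\).'' When \(\xlo<0\), the lowest type derives negative value from the project, and one might worry that the binding participation constraint sits at some interior type. The envelope argument rules this out. Because \(U_i\) is weakly increasing whenever \(Q_i\ge0\), the minimum of \(U_i\) is always at \(\xlo\), even if \(xQ_i(x)\) is negative there. The transfer \(T_i(\xlo)=\xlo Q_i(\xlo)\) may then be negative, meaning the lowest type is subsidized, but this does not affect the argument. The only delicate point is confirming that Lemma~\ref{lem:expected_transfer_identity} uses no sign condition; its Tonelli step only needs \(Q_i\ge0\) and \(1-F\ge0\). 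I expect no substantive obstacle beyond this check.
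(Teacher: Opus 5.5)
Your proposal is correct and follows essentially the same route as the paper. You parametrize all BIC schedules by the lowest-type rent $U_i(\xlo)$, note that interim IR reduces to $U_i(\xlo)\ge0$ because $Q_i\ge0$ makes $U_i$ weakly increasing whatever the sign of $\xlo$, and observe that expected revenue falls one-for-one in that rent.

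The only difference is that you compute expected revenue through the virtual-surplus identity (Lemma~\ref{lem:expected_transfer_identity}). The paper simply notes that raising $U_i(\xlo)$ subtracts the same constant from every type's transfer, so your Tonelli check is not needed.
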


\begin{proof}
Every BIC transfer implementing $Q_i$ has the form
$T_i(x)=xQ_i(x)-U_i(\xlo)-\int_{\xlo}^{x}Q_i(s)\,ds$. Interim IR requires
$U_i(\xlo)\ge0$, and raising $U_i(\xlo)$ subtracts the same positive constant from
every type's transfer, lowering expected revenue one-for-one. Hence $U_i(\xlo)=0$
is revenue-maximizing. The argument uses only $Q_i\ge0$, so utility is minimized
at the lowest type irrespective of the sign of $\xlo$.
\end{proof}

\section{Completeness: What Can Be Claimed}
\label{supp:completeness}

\subsection{Injectivity and uniqueness}

The completeness assumption in the main paper is an injectivity condition for
$\mathcal K g(x)=\int_\Y g(y)k(y\mid x)\,dy$: if $\mathcal K g\equiv0$ on $\X$
implies $g=0$ a.e., then two opponent-averaged transfers implementing the same
interim schedule coincide. This is a uniqueness statement only; it does not assert
$\Range(\mathcal K)$ equals the entire target space.

\subsection{The Fourier condition for full location families}

For an additive channel $Y=X+\varepsilon$ with noise density $h$,
$(\mathcal K g)(x)=\int_\R g(y)h(y-x)\,dy$, which up to reflection is convolution.

\begin{proposition}[Fourier injectivity on the full location family]
\label{supp:prop_fourier_injectivity}
Let $g\in L^1(\R)$, $h\in L^1(\R)$, and suppose the characteristic function
$\widehat h(t)=\int_\R e^{ity}h(y)\,dy$ is nowhere zero. If
$(g*\widetilde h)(x)=0$ for a.e.\ $x\in\R$, where $\widetilde h(u)=h(-u)$, then
$g=0$ a.e.
\end{proposition}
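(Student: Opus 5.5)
The plan is a Fourier argument. Since $g,h\in L^1(\R)$, the reflected kernel $\widetilde h$ is also in $L^1(\R)$. Young's inequality then gives $g*\widetilde h\in L^1(\R)$, with $\|g*\widetilde h\|_1\le\|g\|_1\|h\|_1$. The convolution theorem for integrable functions applies, giving
\[
\widehat{g*\widetilde h}(t)=\widehat g(t)\,\widehat{\widetilde h}(t)\qquad\forall t\in\R .
\]
To justify this, I will expand the left-hand side as a double integral and use Fubini--Tonelli; absolute integrability follows from $\int\!\int|g(y)||h(y-x)|\,dy\,dx=\|g\|_1\|h\|_1<\infty$. The change of variables $u\mapsto-u$ gives $\widehat{\widetilde h}(t)=\widehat h(-t)=\overline{\widehat h(t)}$ when $h$ is real, and in general $\widehat h(-t)$. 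In either case this factor is nowhere zero, because $\widehat h$ is assumed nowhere zero on all of $\R$.

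Next I use the hypothesis that $g*\widetilde h=0$ almost everywhere. Its Fourier transform therefore vanishes identically, so $\widehat g(t)\widehat h(-t)=0$ for every $t$. Dividing by the nonvanishing factor gives $\widehat g\equiv0$ on $\R$.

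The final step is the uniqueness theorem for Fourier transforms on $L^1(\R)$: an integrable function whose transform vanishes identically is zero almost everywhere. If a self-contained argument is wanted, I will show $\int g\phi=0$ for every Schwartz function $\phi$. Writing $\phi$ as the inverse transform of $\widehat\phi\in L^1$ and applying Fubini gives $\int g\phi=(2\pi)^{-1}\int\widehat g(t)\widehat\phi(-t)\,dt=0$. Density of Schwartz functions in $C_c$, together with the Lebesgue differentiation theorem or duality, then yields $g=0$ a.e.

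There is no real obstacle here; the only point needing care is the Fubini justification in the convolution theorem. I also note that the conclusion is stated for location families indexed by all $x\in\R$. On a bounded type space $\X$ this argument does not apply directly, and I will not claim that it does; that is precisely the scope limitation the surrounding section flags.
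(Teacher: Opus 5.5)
Your proposal is correct and follows the same route as the paper's proof: take Fourier transforms of the convolution, use that $\widehat{\widetilde h}(t)=\widehat h(-t)$ never vanishes to get $\widehat g\equiv 0$, and conclude by uniqueness of the Fourier transform on $L^1(\R)$. Your version is slightly more careful than the paper's. The paper writes $\widehat{\widetilde h}=\overline{\widehat h}$, which holds only for real $h$ (as for a noise density), and it does not spell out the Fubini justification or the uniqueness step.
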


\begin{proof}
Taking Fourier transforms gives $\widehat g(t)\,\widehat{\widetilde h}(t)=0$ for
all $t$. Since $\widehat{\widetilde h}(t)=\overline{\widehat h(t)}\ne0$, we get
$\widehat g\equiv0$, and uniqueness of the Fourier transform on $L^1(\R)$ gives
$g=0$ a.e.
\end{proof}

The centered Gaussian, Laplace, and logistic characteristic functions are
\[
\widehat h_{\mathrm G}(t)=e^{-\sigma^2t^2/2},
\qquad
\widehat h_{\mathrm{Lap}}(t)=\frac{1}{1+b^2t^2},
\qquad
\widehat h_{\mathrm{Log}}(t)=\frac{\pi\beta t}{\sinh(\pi\beta t)},
\]
with the logistic value extended continuously to one at $t=0$. None vanishes on
$\R$, so the corresponding \emph{full} location families are injective on
$L^1(\R)$.

\subsection{Why a compact type interval requires care}

The main model indexes the family only by $x\in[\xlo,\xhi]$. Knowing
$(\mathcal K g)(x)=0$ for $x\in[\xlo,\xhi]$ only is weaker than knowing the
convolution vanishes for every $x\in\R$, and the nowhere-vanishing
characteristic-function argument alone does not bridge this gap.

For the Gaussian kernel, additional analyticity supplies unique continuation: the
Gaussian convolution is real analytic under standard integrability bounds, so
vanishing on a nonempty interval forces vanishing everywhere, after which Fourier
injectivity gives $g=0$. For Laplace and logistic kernels, completeness over a
compact parameter interval depends on the function class and on additional
continuation properties. The paper therefore does not infer compact-interval
completeness from the characteristic function alone; it states completeness as an
assumption wherever compact-interval uniqueness is needed.

\begin{remark}[Correct interpretation]
\label{supp:rmk_completeness_correct}
Keep separate: (i) a nowhere-vanishing characteristic function gives injectivity
for the full additive location family on standard convolution spaces; (ii)
Gaussian analyticity can extend interval-wise vanishing to global vanishing under
suitable integrability; (iii) compact-interval completeness for a particular
channel and function space needs its own verification; and (iv) none of these
injectivity statements implies surjectivity or exact range membership.
\end{remark}

\section{The Hierarchical Extension: Exact Scope}
\label{supp:hierarchical_scope}

With a latent common parameter $\theta$, the hierarchical posterior score
\[
\Psi_i(y,\lambda)
=
\E_{\theta\mid y}
\!\bigl[
\widehat x^\theta(y_i)+\lambda\widehat J^\theta(y_i)
\bigr]
\]
depends on the entire signal vector through the posterior $p(\theta\mid y)$.
Changing $y_i$ moves both the conditional posterior of $X_i$ and the posterior of
$\theta$. Consequently, MLRP of the individual channel does not by itself make
$y_i\mapsto\sum_{j}\Psi_j(y,\lambda)$ increasing.

\subsection{The high-level monotonicity condition}

The main hierarchical theorem assumes coordinatewise monotonicity of the
aggregate hierarchical score. This is sufficient for the acceptance region to be
an upper set in every signal coordinate; conditional MLRP then makes every
$\theta$-conditional interim allocation monotone. The condition is not claimed to
be primitive: an increase in one signal can move the common-state posterior in a
way that lowers other agents' posterior virtual values.

\begin{remark}[What the hierarchical theorem resolves]
\label{supp:rmk_hierarchical_resolves}
Conditional on coordinatewise score monotonicity, the theorem provides (i) the
posterior-score Lagrangian, (ii) the pointwise threshold structure, (iii)
conditional reduced-form BIC, (iv) the family of conditional Fredholm equations,
and (v) the hierarchical reduced-form revenue representation and complementary
slackness condition. It does not give a primitive
characterization of when coordinatewise monotonicity holds, nor of
Cr\'emer--McLean extraction under privatized common-state signals.
\end{remark}

\subsection{A primitive sufficient condition}

\begin{proposition}[A sufficient derivative condition]
\label{supp:prop_hierarchical_derivative}
For $j=1,\ldots,n$, define
\[
a_j^\theta(y_j,\lambda)
=
\widehat x^\theta(y_j)+\lambda\widehat J^\theta(y_j).
\]
Suppose the aggregate hierarchical score is differentiable in each signal
coordinate and the required differentiation-under-the-integral and
dominated-derivative conditions hold. If, for every $i$, $y$, and
$\lambda\ge0$,
\begin{equation}
\label{supp:eq_hierarchical_derivative_condition}
\E_{\theta\mid y}
\!\left[
\frac{\partial}{\partial y_i}
a_i^\theta(y_i,\lambda)
\right]
+
\Cov_{\theta\mid y}
\!\left(
\sum_{j=1}^{n}a_j^\theta(y_j,\lambda),\,
\frac{\partial}{\partial y_i}\log p(\theta\mid y)
\right)
\ge0,
\end{equation}
then the aggregate hierarchical score is coordinatewise weakly increasing, and the
hierarchical threshold allocation induces weakly increasing conditional interim
allocations. Equivalently, the covariance term in
\eqref{supp:eq_hierarchical_derivative_condition} may be written as
\[
\sum_{j=1}^{n}
\Cov_{\theta\mid y}
\!\left(
a_j^\theta(y_j,\lambda),\,
\frac{\partial}{\partial y_i}\log p(\theta\mid y)
\right).
\]
\end{proposition}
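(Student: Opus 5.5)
The plan is to differentiate the aggregate hierarchical score $G^{\mathrm H}_\lambda(y)=\sum_j\Psi_j(y,\lambda)$ in a single coordinate $y_i$. Write $G^{\mathrm H}_\lambda(y)=\int \sum_j a_j^\theta(y_j,\lambda)\,p(\theta\mid y)\,\pi_0(d\theta)$, with $p(\theta\mid y)$ the posterior density relative to a dominating measure. The assumed differentiation-under-the-integral and dominated-derivative conditions license differentiating inside the integral. Only $a_i^\theta$ depends on $y_i$ among the summands, so the product rule gives
\[
\frac{\partial}{\partial y_i}G^{\mathrm H}_\lambda(y)
=
\E_{\theta\mid y}\!\left[\frac{\partial}{\partial y_i}a_i^\theta(y_i,\lambda)\right]
+
\E_{\theta\mid y}\!\left[\sum_j a_j^\theta(y_j,\lambda)\,\frac{\partial}{\partial y_i}\log p(\theta\mid y)\right].
\]

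The next step is to turn the second expectation into a covariance. This uses the score identity $\E_{\theta\mid y}[\partial_{y_i}\log p(\theta\mid y)]=\partial_{y_i}\int p(\theta\mid y)\,\pi_0(d\theta)=0$, which follows from differentiating the normalization under the same dominance conditions. Subtracting the product of means therefore costs nothing, and the derivative is exactly the left-hand side of \eqref{supp:eq_hierarchical_derivative_condition}. The hypothesis then gives $\partial_{y_i}G^{\mathrm H}_\lambda\ge0$ everywhere. Integrating along the segment in coordinate $y_i$ (absolute continuity follows from differentiability with dominated derivative) shows $G^{\mathrm H}_\lambda$ is coordinatewise weakly increasing. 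The ``equivalently'' form is just bilinearity of covariance applied to the sum over $j$.

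For the allocation claim, I would follow the proof of Theorem~\ref{thm:hierarchical_allocation}. Coordinatewise monotonicity makes $\ind\{G^{\mathrm H}_\lambda(y)>0\}$ coordinatewise weakly increasing in $y$. Fix $\theta$ and the opponents' signals, which under $\theta$ are independent of $Y_i$ with law $\prod_{j\neq i}m_\theta$. The section $y_i\mapsto q(y_i,y_{-i})$ is an increasing function. MLRP implies $K(\cdot\mid x)$ is first-order stochastically increasing in $x$, so its conditional expectation is weakly increasing in $x_i$. Averaging over $y_{-i}$ yields a weakly increasing $Q^q_{i,\theta}$.

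The main obstacle is the careful justification of the interchanges: differentiating $p(\theta\mid y)$, whose normalizing constant also depends on $y_i$, and the score identity. I would handle this by writing $\partial_{y_i}\log p(\theta\mid y)=\partial_{y_i}\log k_\theta$-type terms minus $\partial_{y_i}\log m^{\mathrm H}(y)$, and invoke the stated dominated-derivative hypotheses rather than derive primitive bounds.
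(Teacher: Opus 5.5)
Your proposal is correct and follows essentially the same route as the paper: you differentiate the posterior expectation under the integral, note that only $a_i^\theta$ has a direct derivative in $y_i$, and use the vanishing posterior mean of $\partial_{y_i}\log p(\theta\mid y)$ to turn the remaining term into a covariance before invoking conditional MLRP for the interim allocations. You also fill in two steps the paper leaves implicit: the passage from a nonnegative partial derivative to coordinatewise monotonicity, and the first-order-dominance argument for each fixed $\theta$ and $y_{-i}$. For the first of these the mean value theorem already suffices, so the absolute-continuity remark is unnecessary.
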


\begin{proof}
The aggregate hierarchical score is
\[
G_\lambda^{\mathrm H}(y)
=
\sum_{j=1}^{n}
\E_{\theta\mid y}
\!\left[a_j^\theta(y_j,\lambda)\right].
\]
For a posterior expectation of a $y$-dependent integrand $a_\theta(y)$, the
score-function identity gives
\[
\frac{\partial}{\partial y_i}
\E_{\theta\mid y}[a_\theta(y)]
=
\E_{\theta\mid y}
\!\left[\frac{\partial a_\theta(y)}{\partial y_i}\right]
+
\Cov_{\theta\mid y}
\!\left(
a_\theta(y),\,
\frac{\partial}{\partial y_i}\log p(\theta\mid y)
\right),
\]
because $\E_{\theta\mid y}\!\left[\frac{\partial}{\partial y_i}\log
p(\theta\mid y)\right]=0$. For $j\ne i$, the integrand $a_j^\theta(y_j,\lambda)$
has no direct dependence on $y_i$, so its $\E_{\theta\mid y}[\partial_{y_i}\,\cdot\,]$
contribution vanishes and only its covariance term survives. Summing over $j$,
\[
\frac{\partial}{\partial y_i}
G_\lambda^{\mathrm H}(y)
=
\E_{\theta\mid y}
\!\left[\frac{\partial}{\partial y_i}a_i^\theta(y_i,\lambda)\right]
+
\Cov_{\theta\mid y}
\!\left(
\sum_{j=1}^{n}a_j^\theta(y_j,\lambda),\,
\frac{\partial}{\partial y_i}\log p(\theta\mid y)
\right).
\]
Condition~\eqref{supp:eq_hierarchical_derivative_condition} therefore gives
$\partial G_\lambda^{\mathrm H}/\partial y_i\ge0$ for every $i$, proving
coordinatewise weak monotonicity. Conditional MLRP then yields weakly increasing
conditional interim allocations.
\end{proof}

The covariance term is the source of the difficulty. It vanishes under a known
prior and under a degenerate meta-prior, but is generally nonzero in a genuine
hierarchical model.

\subsection{The unknown-endpoint simulation}

The hierarchical numerical experiment uses supports of the form $[\,0.3,\theta\,]$.
This violates the common-support assumption of the formal theorem and is included
as a numerical extension, not a verification of it. Because $\theta$ is a support
endpoint, the model is nonregular before convolution, and privacy convolution
changes the identification and concentration problem. The supplement therefore
reports the observed posterior contraction descriptively and does not invoke an
unproved Bernstein--von Mises theorem.

\section{Revenue Envelopes and Economic Interpretation}
\label{supp:revenue_scope}

\subsection{Reduced-form revenue versus implementable revenue}

For the known-prior model, let $V_n(Y)=\sum_{i=1}^{n}\widehat J_i(Y_i)$. The
maximum reduced-form revenue is
\[
R_n^{*,\mathrm{red}}(K)=\E[(V_n(Y))_+],
\]
which is exact for the reduced-form problem because the pointwise
revenue-maximizing allocation is $q^{\mathrm{rev}}(Y)=\ind\{V_n(Y)>0\}$, with an
arbitrary value on the null tie set $\{V_n=0\}$. Let $R_n^{*,\mathrm{sig}}(K)$ be
the maximum revenue among mechanisms whose allocation and transfers are
signal-measurable and satisfy exact BIC and interim IR. Then
\begin{equation}
\label{supp:eq_revenue_upper_bound}
R_n^{*,\mathrm{sig}}(K)\le R_n^{*,\mathrm{red}}(K),
\end{equation}
Equality holds if a reduced-form revenue maximizer has envelope transfers
belonging to the relevant Fredholm ranges. Conversely, if the reduced-form
revenue maximizer is unique up to truthful-law almost-sure equality, then equality
in \eqref{supp:eq_revenue_upper_bound} requires that maximizer to admit
signal-measurable transfer implementation.

\begin{corollary}[Approximate revenue attainability]
\label{supp:cor_approx_revenue}
If the revenue-maximizing reduced form has envelope transfers
$T_i^{\mathrm{rev}}$ and, for every $i$, there is $g_i$ with
$\sup_{x\in\X}|\mathcal K g_i(x)-T_i^{\mathrm{rev}}(x)|\le\eta_i$, then there is a
signal-measurable mechanism that is $2\eta_i$-BIC and $\eta_i$-IR for each agent
and whose expected revenue is within $\sum_i\eta_i$ of $R_n^{*,\mathrm{red}}(K)$.
\end{corollary}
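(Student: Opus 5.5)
The plan is to combine Theorem~\ref{supp:thm_approx_implementation} with the reduced-form revenue identity of the main paper. Take the revenue-maximizing allocation $q^{\mathrm{rev}}(y)=\ind\{V_n(y)>0\}$. Its interim allocations $Q_i^{\mathrm{rev}}$ are weakly increasing, because $q^{\mathrm{rev}}$ is coordinatewise weakly increasing and MLRP transfers this to the interim level. Let $T_i^{\mathrm{rev}}$ be the zero-rent envelope transfers associated with $Q_i^{\mathrm{rev}}$. By Corollary~\ref{cor:app_zero_rent_revenue} and \eqref{eq:max_reduced_revenue},
\[
\sum_i\E[T_i^{\mathrm{rev}}(X_i)]=\E[q^{\mathrm{rev}}(Y)V_n(Y)]=R_n^{*,\mathrm{red}}(K).
\]

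Next, for each agent use the hypothesized approximate solution $g_i$ and set $t_i(y)=g_i(y_i)$. Keep the allocation $q^{\mathrm{rev}}$ unchanged. Since $t_i$ depends only on $y_i$, its interim transfer under any report $r$ is
\[
\widetilde T_i(r)=\int g_i(y)k(y\mid r)\,dy=(\mathcal K g_i)(r).
\]
This holds because the opponents' signals do not enter $t_i$. Apply Theorem~\ref{supp:thm_approx_implementation} agent by agent with $Q_i=Q_i^{\mathrm{rev}}$ and $T_i=T_i^{\mathrm{rev}}$. It gives $2\eta_i$-BIC and $\eta_i$-IR, and it gives $|\E[\widetilde T_i(X_i)]-\E[T_i^{\mathrm{rev}}(X_i)]|\le\eta_i$. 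Summing over $i$ shows that expected revenue lies within $\sum_i\eta_i$ of $R_n^{*,\mathrm{red}}(K)$.

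The main point needing care is that the mechanism's realized revenue must equal $\sum_i\E[\widetilde T_i(X_i)]$. This holds by iterated expectations under the truthful law, provided each $g_i$ is integrable under every report law. That integrability is implicit in $\mathcal K g_i$ being defined on all of $\X$, or it follows from the domination condition \eqref{eq:transfer_space_domination} when $g_i\in\mathcal H$. Everything else is direct bookkeeping.
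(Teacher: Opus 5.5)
Your proposal is correct and follows the paper's proof, which simply applies Theorem~\ref{supp:thm_approx_implementation} to each agent's envelope target with $t_i(y)=g_i(y_i)$. Your extra bookkeeping (monotonicity of $Q_i^{\mathrm{rev}}$, the identification of reduced-form revenue with $R_n^{*,\mathrm{red}}(K)$, and the integrability caveat) makes explicit what that one line leaves implicit. One small citation nit: the identity $\sum_i\E[T_i^{\mathrm{rev}}(X_i)]=\E[q^{\mathrm{rev}}(Y)V_n(Y)]$ is better sourced from Lemma~\ref{lem:expected_transfer_identity} together with $\E[Q_i(X_i)\virt(X_i)]=\E[q(Y)\widehat J_i(Y_i)]$ than from Corollary~\ref{cor:app_zero_rent_revenue}, because that corollary is stated for an already signal-measurable transfer $t_i(Y)$, which is exactly what is not assumed to exist here.
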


\begin{proof}
Apply Theorem~\ref{supp:thm_approx_implementation} to each agent's envelope
target.
\end{proof}

\subsection{The large-deviation rate function}

The negative-drift regime below is stated through the Cram\'er rate function of
the per-agent posterior virtual value. We define this object precisely before
using it.

Set
\begin{equation}
\label{supp:eq_Z_def}
Z_i=\widehat J_i(Y_i),
\qquad
S_n=\sum_{i=1}^{n}Z_i.
\end{equation}
Under the regularity and MLRP assumptions of the main paper, $\widehat J_i(Y_i)$
is bounded, say $|Z_i|\le M$, and the $Z_i$ are i.i.d.\ with common mean
\begin{equation}
\label{supp:eq_Z_mean}
\E[Z_i]=\E[\widehat J_i(Y_i)]=\xlo,
\end{equation}
the lower-endpoint identity of the main paper.

\begin{definition}[Cumulant generating function and Cram\'er rate]
\label{supp:def_cramer_rate}
The cumulant generating function (log moment generating function) of $Z_1$ is
\begin{equation}
\label{supp:eq_cgf}
\cgf(t)=\log\E\!\bigl[e^{tZ_1}\bigr],
\qquad t\in\R,
\end{equation}
which is finite for all $t\in\R$ because $Z_1$ is bounded. The associated
\emph{Cram\'er rate function} is its Legendre--Fenchel transform
\begin{equation}
\label{supp:eq_rate_function}
I_K(a)=\sup_{t\in\R}\bigl\{ta-\cgf(t)\bigr\},
\qquad a\in\R.
\end{equation}
\end{definition}

The function $I_K$ is convex, lower semicontinuous, and nonnegative, with
$I_K(\E[Z_1])=I_K(\xlo)=0$ (the supremum in \eqref{supp:eq_rate_function} at
$a=\xlo$ is attained at $t=0$, since $\cgf'(0)=\E[Z_1]=\xlo$). Because $\cgf$ is
convex with $\cgf(0)=0$ and $\cgf'(0)=\xlo$, the value at $a=0$ specializes to
\begin{equation}
\label{supp:eq_rate_at_zero}
I_K(0)=\sup_{t\in\R}\bigl\{-\cgf(t)\bigr\}=-\inf_{t\in\R}\cgf(t).
\end{equation}
If $\xlo<0$ and $0$ lies in the interior of the convex hull of $\supp(Z_1)$, then
the tilting equation $\cgf'(t^*)=0$ has a solution $t^*>0$, and
\begin{equation}
\label{supp:eq_rate_positive}
I_K(0)=-\cgf(t^*)>0,
\end{equation}
strictly positive because $\cgf$ is strictly decreasing at $t=0$
($\cgf'(0)=\xlo<0$), so $\inf_t\cgf(t)<\cgf(0)=0$.

\begin{lemma}[Cram\'er rate for the build event]
\label{supp:lem_cramer}
Suppose $\xlo<0$ and $0$ lies in the interior of the convex hull of
$\supp(Z_1)$. Then
\begin{equation}
\label{supp:eq_cramer_limit}
\lim_{n\to\infty}\frac1n\log\Prob\!\left(\frac{S_n}{n}\ge0\right)
=
-\,\inf_{a\ge0}I_K(a)
=
-\,I_K(0),
\end{equation}
with $I_K(0)$ given by \eqref{supp:eq_rate_at_zero}--\eqref{supp:eq_rate_positive}.
\end{lemma}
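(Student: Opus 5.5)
The plan is to apply Cram\'er's theorem on \(\R\) to the i.i.d. bounded variables \(Z_i\). Since \(\cgf\) is finite everywhere, the theorem holds in its full form: the upper bound applies to closed sets and the lower bound to open sets, with the good convex rate function \(I_K\). The argument has three steps: obtain the upper bound from the closed set \([0,\infty)\), obtain the lower bound from open intervals just to the right of \(0\), and then identify \(\inf_{a\ge0}I_K(a)\) with \(I_K(0)\).

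\textbf{Upper bound.} I would not invoke the abstract theorem here; the Chernoff bound is cleaner. For every \(t\ge0\), Markov's inequality gives \(\Prob(S_n\ge0)\le \E[e^{tS_n}]=e^{n\cgf(t)}\). Hence
\[
\limsup_{n\to\infty}\tfrac1n\log\Prob(S_n/n\ge0)\le\inf_{t\ge0}\cgf(t).
\]
Next I show that \(\inf_{t\ge0}\cgf(t)=\inf_{t\in\R}\cgf(t)\). This follows from convexity of \(\cgf\) together with \(\cgf'(0)=\xlo<0\): for \(t<0\) we have \(\cgf(t)\ge\cgf(0)+t\cgf'(0)>0=\cgf(0)\). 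By \eqref{supp:eq_rate_at_zero} the right-hand side therefore equals \(-I_K(0)\). This is the same computation that shows \(\inf_{a\ge0}I_K(a)=I_K(0)\). Indeed, \(I_K\) is convex with \(I_K(\xlo)=0\) and \(\xlo<0\), so it is nondecreasing on \([\xlo,\infty)\supset[0,\infty)\). That gives the middle equality in \eqref{supp:eq_cramer_limit}.

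\textbf{Lower bound.} For small \(\eta>0\), write
\[
\Prob(S_n/n\ge0)\ge\Prob\bigl(S_n/n\in(0,\eta)\bigr).
\]
Cram\'er's lower bound for open sets then gives
\[
\liminf_{n\to\infty}\tfrac1n\log\Prob(S_n/n\ge0)\ge-\inf_{a\in(0,\eta)}I_K(a)\ge -I_K(\eta/2).
\]
It remains to let \(\eta\downarrow0\), which requires continuity of \(I_K\) at \(0\) from the right.

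\textbf{Main obstacle: continuity of \(I_K\) at \(0\).} A convex function is continuous on the interior of its effective domain, so I need \(0\in\operatorname{int}\operatorname{dom}I_K\). The effective domain of \(I_K\) contains the interior of the convex hull of \(\supp(Z_1)\), and by hypothesis \(0\) lies in that interior, so \(I_K\) is finite near \(0\). A concrete route is as follows. If \(a\) lies in \((\operatorname{ess\,inf}Z_1,\operatorname{ess\,sup}Z_1)\), then \(ta-\cgf(t)\to-\infty\) as \(|t|\to\infty\), because \(\cgf(t)/t\) tends to \(\operatorname{ess\,sup}Z_1\) as \(t\to\infty\) and to \(\operatorname{ess\,inf}Z_1\) as \(t\to-\infty\). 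The supremum defining \(I_K(a)\) is therefore attained and finite, and convexity yields continuity on that open interval.

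The same argument supplies the minimizer \(t^*>0\) of \(\cgf\) quoted in \eqref{supp:eq_rate_positive}. The hypothesis gives \(\operatorname{ess\,sup}Z_1>0\), so \(\cgf(t)\to\infty\) as \(t\to\infty\). Together with \(\cgf'(0)<0\) and strict convexity (\(Z_1\) is nondegenerate since \(\sigma_J^2>0\)), this forces a unique zero of \(\cgf'\) at some \(t^*>0\).

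Combining the upper and lower bounds yields the limit \(-I_K(0)\).
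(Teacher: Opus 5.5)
Your proof is correct and follows the same route as the paper: Cram\'er's theorem for the bounded i.i.d.\ summands, combined with the observation that the convex rate function \(I_K\) vanishes at \(\xlo<0\). It is therefore nondecreasing on \([\xlo,\infty)\supseteq[0,\infty)\), so the infimum over \(a\ge0\) is attained at \(a=0\). Your version is more careful than the supplement's one-line appeal to Cram\'er: you obtain the upper bound directly by Chernoff, and you close the gap between the closed-set upper bound and the open-set lower bound by using the convex-hull hypothesis to place \(0\) in the interior of \(\operatorname{dom} I_K\), where \(I_K\) is continuous (the same device the paper uses in its appendix proof of the negative-drift revenue regime).
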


\begin{proof}
Cram\'er's theorem for i.i.d.\ bounded summands gives
$\lim_n\frac1n\log\Prob(S_n/n\ge0)=-\inf_{a\ge0}I_K(a)$. The function $I_K$ is
convex with unique minimizer $a=\xlo$ (value zero) and is nondecreasing on
$[\xlo,\infty)$. Since $0>\xlo$, the infimum over $a\ge0$ is attained at $a=0$,
giving $-I_K(0)$. Positivity of $I_K(0)$ is \eqref{supp:eq_rate_positive}.
\end{proof}

\subsection{The three regimes}

Using Definition~\ref{supp:def_cramer_rate}, the known-prior reduced-form theorem
gives the following behavior of $R_n^{*,\mathrm{red}}(K)=\E[(S_n)_+]$, with
$\sigma_J(K)^2=\Var(Z_1)\in(0,\infty)$.

\begin{enumerate}[label=(\roman*),leftmargin=2em,itemsep=3pt]
\item If $\xlo>0$, then $R_n^{*,\mathrm{red}}(K)=n\xlo+o(n)$; more precisely the
remainder is exponentially small.
\item If $\xlo=0$, then
$R_n^{*,\mathrm{red}}(K)=\dfrac{\sigma_J(K)}{\sqrt{2\pi}}\sqrt n+o(\sqrt n)$.
\item If $\xlo<0$ and $0$ lies in the interior of the convex hull of
$\supp(Z_1)$, then
$\dfrac1n\log R_n^{*,\mathrm{red}}(K)\to-I_K(0)<0$, with $I_K(0)$ as in
Definition~\ref{supp:def_cramer_rate}. By Lemma~\ref{supp:lem_cramer} the build
probability $\Prob(V_n\ge0)$ has the same exponential rate $-I_K(0)$.
\end{enumerate}

These statements describe the reduced-form envelope; by
\eqref{supp:eq_revenue_upper_bound} they are upper bounds for exact
signal-measurable revenue unless the range condition is verified. Under a sequence
of approximations with residuals $\eta_{i,n}$, the same rates transfer to
approximately BIC mechanisms whenever $\sum_{i=1}^{n}\eta_{i,n}$ is asymptotically
negligible relative to the relevant revenue scale: it suffices that
$\sum_i\eta_{i,n}=o(n)$ in the linear regime, $\sum_i\eta_{i,n}=o(\sqrt n)$ at the
knife edge, and that $\sum_i\eta_{i,n}$ be exponentially small to preserve the
exact exponential rate in the negative-drift regime.

\subsection{Relation to public-good impossibility}

The negative-endpoint regime reflects the same force as classical public-good
impossibility results: when average virtual surplus is negative, a
revenue-compatible build event sits in an increasingly rare upper tail. The
present paper measures the decay of maximum reduced-form revenue, whereas the
classical literature often emphasizes provision probabilities, efficiency, or
budget balance. A precise theorem equating $I_K(0)$ with a specific
provision-probability exponent in the classical model would require matching the
allocation, transfer normalization, and information structure, so the connection
is treated as an economic analogy rather than a formal equivalence.

\section{Additional Related Literature}
\label{supp:related}

This section expands the literature discussion in the main text.

\citet{McSherryTalwar2007} use differential privacy of an outcome rule to obtain
approximate incentive compatibility. The present paper takes a different route:
the local privacy channel is exogenous, and the planner solves an exact
reduced-form incentive problem conditional on the information that survives the
channel; exact implementation through signal-measurable transfers is then a
separate inverse problem (Section~\ref{supp:implementation}).

\citet{HuangKannan2012} obtain exact truthfulness with VCG-style payments in a
central-privacy setting in which a trusted curator observes raw reports before
perturbing the output. This differs from the local model, where the planner never
observes the raw valuation or submitted channel input without randomization.

\citet{GhoshRoth2011} study markets in which privacy loss is itself purchased;
\citet{PaiRoth2013} emphasize the multidimensional screening issue that arises
when willingness to accept privacy loss correlates with the private type; and
\citet{NissimOrlandiSmorodinsky2012} and related work treat agents who directly
value privacy. Here privacy enters through a fixed information channel rather than
through a privacy term in utility.

The local model originates in randomized response \citep{warner1965randomized},
with its modern formalization in \citet{kasiviswanathan2011what}; the statistical
contraction it induces is studied by \citet{duchi2013local,duchi2018minimax}. The
present paper studies the corresponding contraction of feasible surplus and
revenue in a mechanism-design problem.

The transfer problem is a first-kind inverse problem; the distinction among
injectivity, range membership, and stable inversion is standard for integral
equations \citep{kress2014linear}. Statistical completeness provides injectivity
but not surjectivity \citep{lehmann2005testing}, and bounded versus full
completeness can differ for location families, which is why compact-interval
completeness is stated explicitly rather than inferred from a characteristic
function. The Blackwell comparison of experiments is due to
\citet{blackwell1951comparison,blackwell1953equivalent}; the paper uses it only
when an explicit garbling is available, as in the equal-scale Laplace-over-logistic
convolution identity, and infers no general Gaussian-versus-pure-LDP ordering from
noise variance alone.

\section{Mathematical and Statistical Foundations}
\label{supp:math_foundations}

\subsection{Envelope theorem and virtual values}

For a scalar type, a reduced form is BIC if and only if its allocation
probability is weakly increasing and truthful utility satisfies the envelope
formula. Under $U_i(\xlo)=0$,
\[
U_i(x)=\int_{\xlo}^{x}Q_i(s)\,ds,
\qquad
T_i(x)=xQ_i(x)-\int_{\xlo}^{x}Q_i(s)\,ds,
\]
and integrating over types gives the one-dimensional Myerson reduction
$\E[T_i(X_i)]=\E[Q_i(X_i)\virt(X_i)]$ \citep{myerson1981optimal}.

\subsection{MLRP and posterior monotonicity}

If $k(y\mid x)$ satisfies MLRP and $a(\cdot)$ is weakly increasing, then
$y\mapsto\E[a(X)\mid Y=y]$ is weakly increasing. With $a(x)=x+\lambda\virt(x)$,
regularity of the virtual value makes the posterior score weakly increasing. For
Laplace noise the likelihood ratio is flat in the global tails, so the posterior
score is constant there; this is pooling, not nonmonotonicity, the acceptance set
remains an upper set, and the interim allocation remains weakly increasing. No
ironing is required: ironing is the device that replaces a nonmonotone virtual
value by its convexified version to restore implementability, whereas here
regularity and MLRP already make the score weakly increasing and Laplace tail
pooling only flattens it.

\subsection{Central limit behavior at the boundary}

When the per-agent score has mean zero and positive finite variance $\sigma_Z^2$,
the positive part of its sum is asymptotically half-normal:
\[
\E\!\left[\Bigl(\textstyle\sum_{i=1}^{n}Z_i\Bigr)_+\right]
=
\frac{\sigma_Z}{\sqrt{2\pi}}\sqrt n+o(\sqrt n),
\]
since $S_n/(\sigma_Z\sqrt n)\Rightarrow\Normal(0,1)$, the positive parts are
uniformly integrable (their second moments are bounded uniformly in $n$), and,
where $N\sim\Normal(0,1)$,
\[
\E[N_+]=\frac{1}{\sqrt{2\pi}}.
\]

\subsection{Large deviations}

When the per-agent mean is negative, the event $\{S_n\ge0\}$ is a large deviation.
Its exponential rate is governed by the Cram\'er rate function of
Definition~\ref{supp:def_cramer_rate}; the precise statement and the
specialization to $I_K(0)$ are given in Lemma~\ref{supp:lem_cramer}.

\subsection{Blackwell order}

If $K_1\succeq_{\mathrm B}K_2$, every decision rule under $K_2$ can be replicated
under $K_1$ by garbling the more informative signal, so the planner achieves at
least as much constrained reduced-form welfare under $K_1$. The order is partial:
variance alone does not imply Blackwell dominance.

\section{Notation}
\label{supp:notation}

The symbols below match the main paper. Where the supplement uses a distinct local
symbol for a main-text object, the main-text symbol is given in parentheses.

\subsection{Spaces and primitive objects}

\begin{center}
\renewcommand{\arraystretch}{1.2}
\begin{tabular}{@{}p{0.20\linewidth}p{0.70\linewidth}@{}}
\toprule
\textbf{Symbol} & \textbf{Meaning} \\
\midrule
$\X=[\xlo,\xhi]$ & Type and submission space \\
$\Y$ & Privatized signal space \\
$\Th$ & Hyperparameter space \\
$X_i,x_i$ & Agent $i$'s valuation \\
$R_i,r_i$ & Report submitted to the privacy channel \\
$Y_i,y_i$ & Privatized signal observed by the planner \\
$F,f$ & Prior distribution and density \\
$F_\theta,f_\theta$ & Conditional prior indexed by $\theta$ \\
$\pi$ & Meta-prior on $\theta$ \\
$K(\cdot\mid z)$ & Privacy-channel kernel \\
$k(y\mid z)$ & Privacy-channel density \\
$m(y)$ & One-signal marginal density \\
$m_\theta(y)$ & Conditional marginal signal density \\
\bottomrule
\end{tabular}
\end{center}

\subsection{Mechanism and implementation objects}

\begin{center}
\renewcommand{\arraystretch}{1.2}
\begin{tabular}{@{}p{0.20\linewidth}p{0.70\linewidth}@{}}
\toprule
\textbf{Symbol} & \textbf{Meaning} \\
\midrule
$q(y)$ & Probability of implementing the project after signal $y$ \\
$t_i(y)$ & Signal-measurable ex-post transfer \\
$Q_i(x)$ & Interim allocation induced by submitted value $x$ \\
$T_i(x)$ & Interim expected transfer (envelope target) \\
$U_i(x)$ & Truthful interim utility \\
$g_i(y_i)$ & Opponent-averaged transfer (main text: $\bar t_i$) \\
$\mathcal K g_i$ & Conditional expectation $\E[g_i(Y_i)\mid X_i=x]$ \\
$\Range(\mathcal K)$ & Exact Fredholm range \\
$\eta_i,\eta$ & Uniform Fredholm residual (bound) \\
$\widetilde T_i(x)$ & Interim transfer induced by $g_i$ \\
$e_i(x)$ & Pointwise transfer error $\widetilde T_i(x)-T_i(x)$ \\
$g_\alpha$ & Tikhonov-regularized opponent average \\
$\alpha$ & Regularization parameter \\
$\mathcal H_X,\mathcal H_Y$ & Tikhonov regularization spaces \\
$\mathcal Q^{\mathrm{mon}}$ & Known-prior monotone reduced-form allocation class \\
$\mathcal Q_{\mathrm H}^{\mathrm{mon}}$ & Hierarchical conditional-monotonicity class \\
\bottomrule
\end{tabular}
\end{center}

\subsection{Posterior, revenue, and large-deviation objects}

\begin{center}
\renewcommand{\arraystretch}{1.2}
\begin{tabular}{@{}p{0.20\linewidth}p{0.70\linewidth}@{}}
\toprule
\textbf{Symbol} & \textbf{Meaning} \\
\midrule
$\virt(x)$ & Myerson virtual value $x-(1-F(x))/f(x)$ \\
$\virtt(x)$ & Conditional virtual value under $F_\theta$ \\
$\widehat x_i(y_i)$ & Posterior mean of $X_i$ \\
$\widehat J_i(y_i)$ & Posterior mean of $\virt(X_i)$ \\
$Z_i$ & $\widehat J_i(Y_i)$; i.i.d., bounded, mean $\xlo$ \\
$S_i(y_i,\lambda)$ & Known-prior posterior generalized virtual score \\
$\mu_S(\lambda)$ & Per-agent mean score $\E[X]+\lambda\xlo$ \\
$\lamcrit$ & Mean-score boundary $-\E[X]/\xlo$ \\
$\Psi_i(y,\lambda)$ & Hierarchical posterior score \\
$V_n(Y)$ & Aggregate posterior virtual surplus $\sum_i\widehat J_i(Y_i)$ \\
$G_\lambda$ & Aggregate posterior score $\sum_iS_i(y_i,\lambda)$ \\
$R_n^{*,\mathrm{red}}(K)$ & Maximum reduced-form revenue \\
$R_n^{*,\mathrm{sig}}(K)$ & Maximum exactly implementable signal-measurable revenue \\
$W^{\mathrm{red}}(R;K)$ & Constrained reduced-form welfare value \\
$\cgf(t)$ & Cumulant generating function $\log\E[e^{tZ_1}]$ \\
$I_K(a)$ & Cram\'er rate function $\sup_t\{ta-\cgf(t)\}$; $I_K(0)$ its value at $0$ \\
$\sigma_J(K)$ & Standard deviation of $\widehat J_i(Y_i)$ \\
$\sigma_S(\lambda)$ & Standard deviation of $S_i(Y_i,\lambda)$ \\
$\widehat x^\theta(y_i)$ & Conditional posterior mean of $X_i$ given $\theta$ \\
$\widehat J^\theta(y_i)$ & Conditional posterior virtual value given $\theta$ \\
\bottomrule
\end{tabular}
\end{center}

\subsection{Privacy, calibration, and channel-comparison objects}

\begin{center}
\renewcommand{\arraystretch}{1.2}
\begin{tabular}{@{}p{0.20\linewidth}p{0.70\linewidth}@{}}
\toprule
\textbf{Symbol} & \textbf{Meaning} \\
\midrule
$\mu$ & Gaussian differential privacy (GDP) parameter \\
$\epsilon$ & Pure-LDP privacy budget \\
$\delta$ & Approximate-LDP slack \\
$\sigma$ & Gaussian noise scale \\
$b_L$ & Laplace noise scale \\
$\beta$ & Logistic noise scale \\
$\Delta_{\X}$ & Type-space width $\xhi-\xlo$ \\
$\varepsilon$ & Additive channel noise variable, $Y=X+\varepsilon$ \\
$\widehat h(t)$ & Characteristic function of the noise density \\
$K_{\cdot,s}$ & Privacy channel at scale $s$ \\
$\succeq_{\mathrm B}$ & Blackwell informativeness order \\
$\mathcal T(\alpha)$ & Endpoint trade-off (ROC) function at testing level $\alpha$ \\
$W_{\mathrm{FB}}$ & Full-information welfare benchmark \\
\bottomrule
\end{tabular}
\end{center}

\section{Laplace and Logistic Noise at Equal Pure-LDP Scale}
\label{supp:variance}

On a type space of width $\Delta_{\X}=\xhi-\xlo$, both additive channels satisfy
pure $\epsilon$-LDP at common scale $b=\beta=\Delta_{\X}/\epsilon$. Their variances
are $\Var(\Lap(0,b))=2b^2$ and $\Var(\Logistic(0,\beta))=\pi^2\beta^2/3$, so at
equal scale
\[
\frac{\Var(\Logistic(0,\beta))}{\Var(\Lap(0,b))}
=
\frac{\pi^2}{6}
\approx
1.645.
\]
This variance comparison is only descriptive. The substantive result is the
Blackwell ordering at equal scale,
\[
K_{\mathrm{Lap},\beta}\succeq_{\mathrm B}K_{\mathrm{Log},\beta},
\]
established by the convolution identity of the next section: a logistic noise
variable with scale $\beta$ is an independent sum of a $\Lap(0,\beta)$ variable
and additional noise, so the logistic experiment is a garbling of the Laplace
experiment. The Laplace posterior score is weakly increasing and flat in the
global signal tails; these flat regions do not create nonmonotonicity and require
no ironing.

\section{Complete Numerical Methodology}
\label{supp:sim_methodology}

This section documents the computations behind the numerical illustrations of the
main paper and its numerical appendix. All experiments share the same primitives:
a prior $F$ on $[\xlo,\xhi]$, an additive privacy channel, and the posterior
objects $\widehat x$, $\widehat J$, and $S(\cdot,\lambda)$.

\subsection{Posterior moments and channel densities}

Each additive channel has the form $Y=X+\eta$ with density $k(y\mid x)=h(y-x)$,
where $h$ is the centered noise density: Gaussian $\Normal(0,\sigma^2)$, Laplace
$\Lap(0,b)$, or logistic $\Logistic(0,\beta)$. For a given prior,
\[
m(y)=\int_{\xlo}^{\xhi}f(x)k(y\mid x)\,dx,
\qquad
\widehat x(y)=\frac{1}{m(y)}\int_{\xlo}^{\xhi}x\,f(x)k(y\mid x)\,dx,
\]
\[
\widehat J(y)=\frac{1}{m(y)}\int_{\xlo}^{\xhi}\virt(x)\,f(x)k(y\mid x)\,dx,
\qquad
\virt(x)=x-\frac{1-F(x)}{f(x)},
\]
and $S(y,\lambda)=\widehat x(y)+\lambda\widehat J(y)$. These integrals are
evaluated by numerical quadrature on a fine grid over $[\xlo,\xhi]$; the signal
argument $y$ is taken on a dense grid wide enough to capture the relevant noise
tails.

\subsection{Priors and calibrations}

The priors match the regime each experiment probes:
\[
\text{posterior score: } X\sim\Unif[-1,1];
\qquad
\text{square-root revenue: } X\sim\Unif[0,1]\ (\xlo=0);
\]
\[
\text{phase transition, budget trade-off, approximate LDP: } X\sim\Unif[-0.5,1.5],
\]
for which $\E[X]=0.5$, $\xlo=-0.5$, and $\lamcrit=-\E[X]/\xlo=1$. The
exponential-regime experiment uses a prior with $\xlo<0$.

Channel scales are calibrated in two ways. The common tight-$\mu$ calibration uses
\[
\sigma=\frac{\Delta_{\X}}{\mu},
\qquad
b_L=\frac{\Delta_{\X}}{-2\log(2\Phi(-\mu/2))},
\qquad
\beta=\frac{\Delta_{\X}}{2\log\!\big(\Phi(\mu/2)/\Phi(-\mu/2)\big)},
\]
so each channel satisfies $\mu$-GDP with least-private report pair $(\xlo,\xhi)$.
The equal-scale calibration uses a common pure-$\epsilon$ frontier,
$b_L=\beta=\Delta_{\X}/\epsilon$. The approximate-LDP experiment uses a common
$(\epsilon,\delta)$-LDP frontier with $\delta=10^{-5}$.

\subsection{Monte Carlo and common random numbers}

Implementation frequencies, revenue, and welfare are estimated by Monte Carlo. For
each population size $n\in\{25,\ldots,500\}$, types are drawn $X_i\sim F$ and
signals are generated through each channel from the same underlying uniform draws,
so that channel comparisons share randomness (common random numbers). Pairing
removes first-order sampling noise from cross-channel differences. The number of
replications is chosen so that reported Monte Carlo standard errors fall below
plotting resolution; error bands, where shown, are nonparametric across
replications.

\subsection{Estimands}

\begin{itemize}[leftmargin=1.5em,itemsep=3pt]
\item \emph{Implementation probability.} $\Prob(G_\lambda>0)$ with
$G_\lambda=\sum_iS_i$, and its standardized form $\Phi(-u)$, where
\[
u=-\frac{\sqrt n\,\mu_S(\lambda)}{\sigma_S(\lamcrit)},
\qquad
\mu_S(\lambda)=\E[X]+\lambda\xlo,
\qquad
\sigma_S^2(\lamcrit)=\Var\!\bigl(S_i(Y_i,\lamcrit)\bigr).
\]
\item \emph{Maximum reduced-form revenue.} $R_n^{*,\mathrm{red}}=\E[(V_n)_+]$ with
$V_n=\sum_i\widehat J_i$, and its normalization $R_n^{*,\mathrm{red}}/\sqrt n$ at
$\xlo=0$.
\item \emph{Large-deviation rate.} $(1/n)\log\Prob(V_n\ge0)$ and
$(1/n)\log R_n^{*,\mathrm{red}}$, compared with $-I_K(0)$ when $\xlo<0$, with
$I_K$ as in Definition~\ref{supp:def_cramer_rate}.
\item \emph{Welfare gaps.} The equal-scale gap
$(W_{\mathrm{Lap}}-W_{\mathrm{Log}})/W_{\mathrm{FB}}$ and the common-$\mu$ endpoint
gap $(W_{\mathrm{Log}}-W_{\mathrm{Lap}})/W_{\mathrm{FB}}$, with $W_{\mathrm{FB}}$
the full-information welfare benchmark.
\item \emph{Endpoint trade-off gap.}
$\mathcal T_{\mathrm{Lap}}(\alpha)-\mathcal T_{\mathrm{Log}}(\alpha)$ on the binary
endpoint experiment $\{\xlo,\xhi\}$.
\end{itemize}

\subsection{Transfer inversion and the reported residual}

The implementing transfer solves the Fredholm equation $\mathcal K g_i=T_i^{q}$ of
Section~\ref{supp:implementation}. We discretize $\mathcal K$ on the signal and
type grids and solve the Tikhonov problem \eqref{supp:eq_tikhonov}, selecting the
regularization parameter $\alpha$ by a standard stability criterion. Consistent
with Remark~\ref{supp:rmk_norm_choice} and
Proposition~\ref{supp:prop_regularized_interpretation}, the reported diagnostic is
the uniform residual
\[
\eta=\sup_{x\in\X}\bigl|(\mathcal K g_\alpha)(x)-T_i^{q}(x)\bigr|
\]
on a dense grid, which bounds the incentive, participation, and revenue errors by
$2\eta$, $\eta$, and $\eta$ respectively. The regularized solution is therefore
presented as an approximate implementation with an explicit error guarantee, not
as a proof of exact range membership.

\subsection{Reproducibility}

The computations are carried out in \textsf{R}. Each experiment is generated by a
self-contained script; together they produce the figures of the main paper and its
numerical appendix. Random seeds are fixed within each script, so the reported
figures are exactly reproducible.

\section{Further Remarks on Channel Comparisons}
\label{supp:channel_ordering}

\subsection{Variance is not a general informativeness order}

A smaller additive-noise variance does not by itself imply Blackwell dominance,
which requires an explicit type-independent garbling. Two noise laws can have
ordered variances yet be Blackwell-incomparable. The numerical welfare ordering
among Gaussian, Laplace, and logistic channels should therefore not be presented
as a general variance theorem.

\subsection{Laplace versus logistic}

At equal scale, $K_{\mathrm{Lap},\beta}\succeq_{\mathrm B}K_{\mathrm{Log},\beta}$.
This is a genuine general result. Euler's product
$\sinh(\pi z)/(\pi z)=\prod_{k\ge1}(1+z^2/k^2)$ gives, for the centered logistic
and Laplace characteristic functions,
\[
\phi_{\mathrm{Log},\beta}(t)
=
\prod_{k=1}^{\infty}\frac{1}{1+\beta^2t^2/k^2},
\qquad
\phi_{\mathrm{Lap},\beta}(t)
=
\frac{1}{1+\beta^2t^2},
\]
so that
\[
\phi_{\mathrm{Log},\beta}(t)
=
\phi_{\mathrm{Lap},\beta}(t)\,
\prod_{k=2}^{\infty}\frac{1}{1+\beta^2t^2/k^2}.
\]
The remaining product is the characteristic function of an independent sum
$\sum_{k\ge2}L_k$ with $L_k\sim\Lap(0,\beta/k)$, which converges in $L^2$ since
$\sum_{k\ge2}\Var(L_k)=2\beta^2\sum_{k\ge2}k^{-2}<\infty$. Hence logistic noise
equals Laplace noise plus independent noise, the logistic experiment is a garbling
of the Laplace experiment, and the ordering holds. Because both channels satisfy
pure $\epsilon$-LDP at $\beta=\Delta_{\X}/\epsilon$, the ordering applies directly
at a common pure-LDP budget.

\section{Scope and Limitations}
\label{supp:limitations}

The main paper establishes a sharp reduced-form allocation characterization. Its
strongest unconditional implementation statements are:

\begin{enumerate}[label=(\roman*),leftmargin=2em,itemsep=3pt]
\item monotone posterior-score allocations are exactly BIC at the reduced-form
level;
\item the normalized interim transfer is given exactly by the envelope formula;
\item every exact signal-measurable implementation must solve the Fredholm
equation;
\item completeness identifies the opponent-averaged transfer when a solution
exists;
\item exact signal-measurable implementation is equivalent to a range condition;
\item polynomial targets under additive channels form a nontrivial class of exact
implementations; and
\item an approximation residual $\eta$ yields explicit $2\eta$-BIC and $\eta$-IR
guarantees.
\end{enumerate}

The paper does not claim that every optimal reduced-form transfer belongs to the
range of a Gaussian, Laplace, or logistic channel operator, nor does it infer
exact implementation from a finite-dimensional regularized solution. The
hierarchical extension is conditional on a high-level coordinatewise-monotonicity
requirement, and the unknown-endpoint simulation lies outside the common-support
theorem and is presented as a numerical extension. The revenue taxonomy concerns
maximum reduced-form revenue and is an upper bound on exactly signal-measurable
revenue unless range membership is verified; approximate implementation transfers
the reduced-form conclusion to approximately BIC mechanisms with an explicit error
bound. Finally, the only general cross-channel welfare ordering proved among the
three reported families is the equal-scale Laplace-over-logistic Blackwell
ordering; other numerical rankings are specification dependent.
  \renewcommand{\bibsection}{\section*{References (Supplement)}}
  \putbib[reference]
\end{bibunit}
\endgroup
\end{document}